\makeatletter\@ifpackageloaded{mathpazo}\@tempswatrue\@tempswafalse
  \DeclareFontFamily{OT1}{pzc}{}
  \DeclareFontShape{OT1}{pzc}{m}{it}{<-> s * [1.15] pzcmi7t}{}
  \DeclareMathAlphabet{\mathpzc}{OT1}{pzc}{m}{it}
\makeatletter\@ifpackageloaded{biblatex}{%
  \bibliography{references}
  \renewbibmacro{in:}{%
    \ifentrytype{incollection}{\printtext{\bibstring{in}\intitlepunct}}{}}
  \renewbibmacro{publisher+location+date}{%
    \iflistundef{publisher}
      {\setunit*{\addcomma\space}}
      {\setunit*{\addcomma\space}}%
    \printlist{publisher}%
    \setunit*{\addcomma\space}%
    \printlist{location}%
    \setunit*{\addcomma\space}%
    \usebibmacro{date}%
    \newunit}
  \DeclareFieldFormat[article]{pages}{#1\isdot}
  \DeclareFieldFormat[article,incollection,inproceedings,unpublished]{title}{#1\isdot}
  \DeclareFieldFormat[article]{journaltitle}{\mkbibemph{#1\isdot}}
  
}{}\makeatother
\declaretheorem[numberwithin=section,refname={theorem,theorems},Refname={Theorem,Theorems}]{theorem}
\declaretheorem[sibling=theorem,style=definition]{definition}
\declaretheorem[sibling=theorem,name=Lemma]{lemma}
\declaretheorem[sibling=theorem,name=Proposition]{proposition}
\declaretheorem[sibling=theorem,name=Corollary]{corollary}
\declaretheorem[numberwithin=theorem,name=Claim,Refname={Claim,Claims}]{claim}
\declaretheorem[numbered=no,name=Question]{question}
\makeatletter\@ifpackageloaded{hyperref}{%
  \usepackage{xcolor}
  \definecolor{dark-red}{rgb}{0.4,0.15,0.15}
  \definecolor{dark-blue}{rgb}{0.15,0.15,0.4}
  \definecolor{medium-blue}{rgb}{0,0,0.5}
  \hypersetup{
    colorlinks,
    linkcolor={dark-red},
    citecolor={dark-blue},
    urlcolor={medium-blue}%
  }

}{}\makeatother
\newcommand{\address}[1]{\vspace{-2em}\begin{center}{\footnotesize #1}\end{center}}
\newcommand{\mirror}[1]{\widetilde{#1}}
\newcommand{\N}{\mathbb{N}}
\newcommand{\Z}{\mathbb{Z}}
\newcommand{\T}{\mathbb{T}}
\newcommand{\Lang}{\mathcal{L}}
\newcommand{\st}[1]{Stand(#1)}
\newcommand{\sst}[1]{Stand^+\!(#1)}
\newcommand{\rst}[1]{RStand(#1)}
\newcommand{\rsst}[1]{RStand^+\!(#1)}
\newcommand{\B}{\mathcal{B}}
\newcommand{\oa}{\mathfrak{a}}
\newcommand{\ob}{\mathfrak{b}}
\newcommand{\keywords}[1]{\par\noindent{\footnotesize{\em Keywords\/}: #1}}
\begin{document}
  \title{A square root map on Sturmian words}
  \author{Jarkko Peltomäki\textsuperscript{a} and Markus Whiteland\textsuperscript{b}\\
          \small \href{mailto:jspelt@utu.fi}{jspelt@utu.fi}, \quad \href{mailto:mawhit@utu.fi}{mawhit@utu.fi}}
  \date{}
  \maketitle
  \address{\textsuperscript{a}Turku Centre for Computer Science TUCS, 20520 Turku, Finland\\
  \textsuperscript{a,b}University of Turku, Department of Mathematics and Statistics, 20014 Turku, Finland}

  \noindent
  \hrulefill
  \begin{abstract}
    \vspace{-1em}
    \noindent

    \noindent
    We introduce a square root map on Sturmian words and study its properties. Given a Sturmian word of slope $\alpha$,
    there exists exactly six minimal squares in its language (a minimal square does not have a square as a proper
    prefix). A Sturmian word $s$ of slope $\alpha$ can be written as a product of these six minimal squares:
    $s = X_1^2 X_2^2 X_3^2 \cdots$. The square root of $s$ is defined to be the word $\sqrt{s} = X_1 X_2 X_3 \cdots$.
    The main result of this paper is that that $\sqrt{s}$ is also a Sturmian word of slope $\alpha$. Further, we
    characterize the Sturmian fixed points of the square root map, and we describe how to find the intercept of
    $\sqrt{s}$ and an occurrence of any prefix of $\sqrt{s}$ in $s$. Related to the square root map, we characterize
    the solutions of the word equation $X_1^2 X_2^2 \cdots X_n^2 = (X_1 X_2 \cdots X_n)^2$ in the language of Sturmian
    words of slope $\alpha$ where the words $X_i^2$ are minimal squares of slope $\alpha$.

    We also study the square root map in a more general setting. We explicitly construct an infinite set of
    non-Sturmian fixed points of the square root map. We show that the subshifts $\Omega$ generated by these words have
    a curious property: for all $w \in \Omega$ either $\sqrt{w} \in \Omega$ or $\sqrt{w}$ is periodic. In particular,
    the square root map can map an aperiodic word to a periodic word.
    \vspace{1em}
    \keywords{sturmian word, standard word, optimal squareful word, word equation, continued fraction}
    \vspace{-1em}
  \end{abstract}
  \hrulefill

  \section{Introduction}
  Kalle Saari studies in \cite{diss:kalle_saari,2010:everywhere_alpha-repetitive_sequences_and_sturmian_words} optimal
  squareful words which are aperiodic words containing the least number of minimal squares (that is, squares with no
  proper square prefixes) such that every position starts a square. Saari proves that an optimal squareful word always
  contains exactly six minimal squares, and he characterizes these squares; less than six minimal squares forces a word
  to be ultimately periodic. Moreover, he shows that Sturmian words are a proper subclass of optimal squareful words.
  
  We propose a square root map for Sturmian words. Let $s$ be a Sturmian word of slope $\alpha$, and write it as a
  product of the six minimal squares in its language $\Lang(\alpha)$: $s = X_1^2 X_2^2 X_3^2 \cdots$. The square root
  of $s$ is defined to be the word $\sqrt{s} = X_1 X_2 X_3 \cdots$. The main result of this paper is that the word
  $\sqrt{s}$ is also a Sturmian word of slope $\alpha$. More precisely, we prove that the square root of the Sturmian
  word $s_{x,\alpha}$ of intercept $x$ and slope $\alpha$ is $s_{\psi(x),\alpha}$ where
  $\psi(x) = \frac{1}{2}(x + 1 - \alpha)$. In addition to proving that the square root map preserves the language of a
  Sturmian word $s$, we show how to locate any prefix of $\sqrt{s}$ in $s$. We also characterize the Sturmian words of
  slope $\alpha$ which are fixed points of the square root map; they are the two Sturmian words $01c_\alpha$ and
  $10c_\alpha$ where $c_\alpha$ is the infinite standard Sturmian word of slope $\alpha$. The majority of the proofs of
  results on Sturmian words rely heavily on the interpretation of Sturmian words as rotation words. Continued fractions
  and results from Diophantine approximation theory play a key role in several proofs.

  Solutions of the word equation $X_1^2 \cdots X_n^2 = (X_1 \cdots X_n)^2$ where the words $X_i^2$ are among the six
  minimal squares in $\Lang(\alpha)$ for some fixed irrational $\alpha$ are closely linked to the square root map. The
  study of these solutions to this word equation arises naturally from the study of fixed points of the square root
  map. The Sturmian fixed points of the square root map are fixed because they have arbitrarily long prefixes
  $X_1^2 \cdots X_n^2$ which satisfy the word equation. We characterize these specific solutions, i.e., those primitive
  words $w$ such that $w^2 \in \Lang(\alpha)$ and $w^2$ can be written as a product of minimal squares
  $X_1^2 \cdots X_n^2$ satisfying the word equation. On the circle $[0,1)$, the interval $[w]$ of such a word $w$ can
  be seen to satisfy the square root condition $\psi([w^2]) \subseteq [w]$, so we instead study and characterize the
  primitive words satisfying this square root condition. The result is that the specific solutions to the word equation
  (or, equivalently, the primitive words satisfying the square root condition) are the reversals of standard and
  semistandard words of slope $\alpha$ (see \autoref{ssec:sturmian_words} for a definition) and the reversed standard
  words with the first two letters exchanged. In particular, all of these specific solutions are nonperiodic. It was
  known that the word equation $(X_1^2 \cdots X_n^2) = X_1^2 \cdots X_n^2$ has nonperiodic solutions
  \cite{2000:in_search_of_a_word_with_special_combinatorial_properties}, but according to our knowledge no large
  families of nonperiodic solutions have been identified until our result. Word equations of the type
  $X_1^k \cdots X_n^k = (X_1 \cdots X_n)^k$ have been considered by \v{S}t\v{e}p\'{a}n Holub
  \cite{1999:a_solution_of_the_equation,2000:in_search_of_a_word_with_special_combinatorial_properties,2001:local_and_global_cyclicity_in_free_semigroups}.
 
  The final central topic of this paper concerns the square root map in a more general setting. The square root map can
  be defined not only for Sturmian words but for any optimal squareful word. We construct an infinite family of
  non-Sturmian, linearly recurrent optimal squareful words $\Gamma$ with properties similar to Sturmian words. The
  words $\Gamma$ are fixed points of the square root map. They are constructed by finding non-Sturmian solutions of the
  word equation $X_1^2 \cdots X_n^2 = (X_1 \cdots X_n)^2$ and by building infinite words having arbitrarily long
  squares of such solutions as prefixes. The subshifts $\Omega$ generated by the words $\Gamma$ exhibit behavior
  similar to Sturmian subshifts. The square root map preserves the language of several but not every word in $\Omega$.
  Curiously, if the language of a word in $\Omega$ is not preserved under the square root map, then the image must be
  periodic. This result is very surprising since it is contrary to the plausible hypothesis that the square root of an
  aperiodic word is aperiodic.

  The paper is organized as follows. In \autoref{sec:square_root_map} we prove that the square root map preserves the
  language of a Sturmian word. As a corollary we obtain a description of those Sturmian words which are fixed points of
  the square root map. In \autoref{sec:square_root_map} we observe that the intervals of the minimal squares in
  $\Lang(\alpha)$ satisfy the square root condition. In \autoref{sec:square_root_condition} we characterize all words
  $w^2 \in \Lang(\alpha)$ satisfying the square root condition. The result is that $w^2$ with $w$ primitive satisfies
  the square root condition if and only if $w$ is a reversed standard or semistandard word or a reversed standard word
  with the first two letters exchanged. \autoref{sec:word_equation_characterization} contains a proof of the
  characterization of the specific solutions of the word equation $X_1^2 \cdots X_n^2 = (X_1 \cdots X_n)^2$ mentioned
  earlier. We show that a primitive word $w$ satisfies the square root condition if and only if $w^2$ can be written as
  a product of minimal squares satisfying the word equation. In \autoref{sec:combinatorial_version} we show how to
  locate prefixes of $\sqrt{s}$ in $s$. As an important step in proving this, we provide necessary and sufficient
  conditions for a Sturmian word to be a product of squares of reversed standard and semistandard words. We give a
  formula describing the square root of the Fibonacci word in \autoref{sec:fibonacci}. \autoref{sec:counter_example} is
  devoted to constructing the non-Sturmian fixed points $\Gamma$ mentioned above and to demonstrating that the
  languages of the words in their subshifts are preserved or they are mapped to periodic words. We conclude the paper
  by giving some remarks on possible generalizations in \autoref{sec:generalizations} and by discussing a few open
  problems in \autoref{sec:open_problems}.

  A short version of this paper was published as an extended abstract in the proceedings of WORDS 2015
  \cite{2015:a_square_root_map_on_sturmian_words}.

  \section{Notation and Preliminary Results}\label{sec:preliminaries}
  In this section we review notation and basic concepts and results of word combinatorics, optimal squareful words,
  continued fractions, and Sturmian words. Most of the definitions and results provided here about words can be found
  in Lothaire's book \cite{2002:algebraic_combinatorics_on_words}.

  \emph{An alphabet} $A$ is a finite non-empty set of \emph{letters}, or \emph{symbols}. A (finite) \emph{word} over
  $A$ is a finite sequence of letters of $A$ obtained by concatenation. The concatenation of two words
  $u = a_0 \cdots a_{n-1}$ and $v = b_0 \cdots b_{m-1}$ is the word
  $u \cdot v = uv = a_0 \cdots a_{n-1} b_0 \cdots b_{m-1}$. In this paper we consider only \emph{binary words}, that
  is, words over an alphabet of size two. Most of the time we take $A$ to be the set $\{0,1\}$. \emph{The set of
  nonempty words over $A$} is denoted by $A^+$. We denote \emph{the empty word} by $\varepsilon$ and set
  $A^* = A^+ \cup \{\varepsilon\}$. A nonempty subset of $A^*$ is called a \emph{language}. Let
  $w = a_0 a_1 \cdots a_{n-1}$ be a word of $n$ letters. We denote \emph{the length} $n$ of $w$ by $|w|$; by convention
  $|\varepsilon| = 0$. The set of proper powers of a word $w$ is denoted by $w^+$.
  
  An \emph{infinite word} $w$ over the alphabet $A$ is a function from the nonnegative integers to $A$. We write concisely
  $w = a_0 a_1 a_2 \cdots$ with $a_i \in A$. The set of infinite words over $A$ is denoted by $A^\omega$. An infinite
  word $w$ is said to be \emph{ultimately periodic} if we can write it in the form $w = uv^\omega = uvvv \cdots$ for
  some words $u,v \in A^*$. If $u = \varepsilon$, then $w$ is said to be \emph{periodic}, or \emph{purely periodic}. An
  infinite word which is not ultimately periodic is \emph{aperiodic}. The \emph{shift operator $T$} acts on infinite
  words as follows: $T(a_0 a_1 a_2 \ldots) = a_1 a_2 \cdots$.

  A finite word $u$ is a \emph{factor} of the finite or infinite word $w$ if we can write $w = vuz$ for some
  $v \in A^*$ and $z \in A^* \cup A^\omega$. If $v = \varepsilon$, then the factor $u$ is called a \emph{prefix} of
  $w$. If $z = \varepsilon$, then we say that $u$ is a \emph{suffix} of $w$. The set of factors of $w$,
  the \emph{language of $w$}, is denoted by $\Lang(w)$.
  If $w = a_0 a_1 \cdots a_{n-1}$, then we let $w[i,j] = a_i \cdots a_j$ whenever the choices of positions $i$ and $j$
  make sense. This notion is extended to infinite words in a natural way. An \emph{occurrence} of $u$ in $w$ is a
  position $i$ such that $w[i,i+|u|-1] = u$. If such a position exists, then we say that $u$ \emph{occurs} in $w$.

  A positive integer $p$ is a \emph{period} of $w = a_0 \cdots a_{n-1}$ if $a_i = a_{i+p}$ for $0 \leq i \leq n-p-1$.
  If the finite word $w$ has period $p$ and $|w|/p \geq \alpha$ for some real $\alpha$ such that $\alpha \geq 1$, then
  $w$ is called an \emph{$\alpha$-repetition}. An $\alpha$-repetition is \emph{minimal} if it does not have an
  $\alpha$-repetition as a proper prefix. If $w = u^2$, then $w$ is a \emph{square} with \emph{square root} $u$. A
  square is \emph{minimal} if it does not have a square as a proper prefix. A word $w$ is \emph{primitive} if it is of
  the form $z^n$ if and only if $n = 1$. Equivalently, a word $w$ is primitive if and only if $w$ occurs in $w^2$
  exactly twice. The \emph{primitive root of $w$} is the unique primitive word $u$ such that $w = u^n$ for some
  $n \geq 1$. Let $w = v^\omega$ be a periodic infinite word. The \emph{minimal period of $w$} is defined to be the
  primitive root of $v$.
  
  Let $w = a_0 a_1 \cdots a_{n-1}$ be a word. \emph{The reversal} $\mirror{w}$ of $w$ is the word $a_{n-1} \cdots a_1
  a_0$. If $w = \mirror{w}$, then we call $w$ \emph{a palindrome}. Let $C$ be the \emph{cyclic shift operator} defined
  by the formula $C(a_0 a_1 \cdots a_{n-1}) = a_1 \cdots a_{n-1} a_0$. The words $w,C(w),C^2(w),\ldots,C^{|w|-1}(w)$
  are the \emph{conjugates of $w$}. If $u$ is a conjugate of $w$, then we say that $u$ is conjugate to $w$.
  
  An infinite word $w$ is \emph{recurrent} if each of its factors occurs in it infinitely often. Let $(i_n)_{n\geq1}$
  be the sequence of consecutive occurrences of a factor $u$ in a recurrent word $w$. The \emph{return time} of $u$ is
  the quantity
  \begin{align*}
    \sup\{i_{j+1} - i_j\colon j \in \{1,2,\ldots\}\},
  \end{align*}
  which can be infinite. The factors $w[i_j,i_{j+1}-1]$, $j \geq 1$ are the \emph{returns to $u$} in $w$. If the return
  time of each factor of $w$ is finite, then the word $w$ is \emph{uniformly recurrent}. Equivalently, $w$ is uniformly
  recurrent if for each factor $u$ of $w$ there exists an integer $R$ such that every factor of $w$ of length $R$
  contains an occurrence of $u$. If there exists a global constant $K$ such that the return time of any factor $u$ of
  $w$ is at most $K|u|$, then we say that $w$ is \emph{linearly recurrent}. Clearly a linearly recurrent word is
  uniformly recurrent. The \emph{index} of a factor $u$ of an infinite word $w$ is defined to be
  \begin{align*}
    \sup\{n\colon u^n \in \Lang(w)\}.
  \end{align*}
  If $w$ is uniformly recurrent and aperiodic, then the index of every factor of $w$ is finite.

  A \emph{subshift} $\Omega$ is a subset of $A^\omega$ such that
  \begin{align*}
    \Omega = \{w \in A^\omega\colon \Lang(w) \subseteq \Lang\}
  \end{align*}
  for some language $\Lang$ such that $\Lang \subseteq A^*$. If we set above $\Lang = \Lang(w)$ where $w$ is an
  infinite word, then we say that the subshift $\Omega$ is generated by $w$. Subshifts are clearly shift-invariant. If
  every word in a subshift is aperiodic, then we call the subshift \emph{aperiodic}. A subshift is \emph{minimal} if it
  does not contain nonempty subshifts as proper subsets. A nonempty subshift is minimal if and only if it is generated
  by a uniformly recurrent word.

  \subsection{Optimal Squareful Words}
  In \cite{2010:everywhere_alpha-repetitive_sequences_and_sturmian_words} Kalle Saari considers $\alpha$-repetitive
  words. An infinite word is \emph{$\alpha$-repetitive} if every position in the word starts an $\alpha$-repetition and
  the number of distinct minimal $\alpha$-repetitions occurring in the word is finite. If $\alpha = 2$, then
  $\alpha$-repetitive words are called \emph{squareful words}. This means that every position of a squareful word
  begins with a minimal square. Saari proves that if the number of distinct minimal squares occurring in a squareful
  word is at most $5$, then the word must be ultimately periodic. On the other hand, if a squareful word contains at
  least $6$ distinct minimal squares, then aperiodicity is possible. Saari calls the aperiodic squareful words
  containing exactly $6$ minimal squares \emph{optimal squareful words}. Further, he shows that optimal squareful words
  are always binary and that the six minimal squares must take a very specific form:

  \begin{proposition}\label{prp:min_square_roots}
    Let $w$ be an optimal squareful word. If $10^i 1$ occurs in $w$ for some $i > 1$, then
    the roots of the six minimal squares in $w$ are
    \begin{alignat}{2}\label{eq:min_squares}
      &S_1 = 0,                 && S_4 = 10^\oa, \nonumber \\
      &S_2 = 010^{\oa-1}, \quad && S_5 = 10^{\oa+1}(10^\oa)^\ob, \\
      &S_3 = 010^\oa,           && S_6 = 10^{\oa+1}(10^\oa)^{\ob+1}, \nonumber
    \end{alignat}
    for some $\oa \geq 1$ and $\ob \geq 0$.
  \end{proposition}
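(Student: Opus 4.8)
The plan is to use the elementary but decisive observation that a square is minimal precisely when it is the shortest square beginning at its first position; hence the six minimal squares are exactly the distinct words among the shortest squares starting at each position, and since $w$ is squareful every position produces one of them. Because $10^i1$ with $i>1$ occurs, $00$ is a factor, so $0$ is a minimal square root, which is $S_1$. The first genuine task is to show that $11$ does not occur, so that every $1$ is isolated and $w$ factors uniquely into blocks $10^k$ with $k\geq\oa$, where $\oa\geq 1$ denotes the least length of a maximal run of $0$'s.

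I expect this no-$11$ step to be the main obstacle. If $11$ occurred, then $1$ would be a second minimal root and both letters would appear in runs of length $\geq 2$; I would then classify the boundary positions (the $01$- and $10$-positions) and compute their shortest squares, arguing that each distinct run length of each letter contributes a distinct root, so that long runs of both letters simultaneously force at least seven minimal squares (or else periodicity). This contradicts optimality, ruling out $11$. The delicate point is controlling how far a shortest square must extend before it can close, and showing that long runs of both letters are genuinely incompatible with a budget of six squares.

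With $1$'s isolated, every position has exactly one of four local shapes, and I would read off the shortest square in each. A non-final $0$ of a run gives $S_1=0$. For the final $0$ of a block, the shortest square is $010^{\ell-1}$, where $\ell$ is the length of the following run; its square is $010^{\ell}10^{\ell-1}$, so it can close only if every run has length $\geq\ell-1$, and this completability condition forces the run lengths of $0$'s to take at most two values, necessarily consecutive, say $\oa$ and $\oa+1$, yielding $S_2=010^{\oa-1}$ and $S_3=010^{\oa}$ (aperiodicity forcing both values to occur). The leading $1$ of a shortest block $A=10^{\oa}$ gives $S_4=10^{\oa}$, because $(10^{\oa})^2$ always closes: every block begins with $10^{\oa}$, i.e. $A$ is a prefix of every block.

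Finally, the leading $1$ of a long block $B=10^{\oa+1}$ is handled by passing to the word induced over the two-letter alphabet $\{A,B\}$ and repeating the previous analysis one level up. It shows that the runs of $A$ between consecutive $B$'s take two consecutive values $\ob$ and $\ob+1$ (with $\ob\geq 0$, which allows adjacent $B$'s), giving $S_5=10^{\oa+1}(10^\oa)^\ob$ and $S_6=10^{\oa+1}(10^\oa)^{\ob+1}$. The decisive point is again the prefix relation $A=10^\oa\prec 10^{\oa+1}=B$: since $A$ prefixes every block, $(BA^{\ob+1})^2$ closes exactly when the two $A$-run lengths are consecutive, which is what caps the count at six and keeps the renormalization from cascading into further squares. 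Since the four position types exhaust $w$ and aperiodicity makes all six words occur, $S_1,\dots,S_6$ are precisely the minimal squares.
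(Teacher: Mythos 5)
The first thing to note is that the paper does not prove this proposition at all: it is imported from Saari's work on optimal squareful words and stated as a known result, so your argument has to stand entirely on its own merits --- and as written it does not. The genuine gap is the step you yourself flag as the main obstacle: ruling out the factor $11$. What you offer there is a plan, not a proof, and the plan's key mechanism fails as stated. You propose to argue that ``each distinct run length of each letter contributes a distinct root, so that long runs of both letters simultaneously force at least seven minimal squares.'' But count what that mechanism actually yields: if runs of $0$ take two lengths and runs of $1$ take two lengths, it produces $0$, $1$, two roots beginning with $01$, and two roots beginning with $10$ --- exactly six, so no contradiction with optimality follows. Worse, the short squares this mechanism implicitly relies on do not close in precisely the situation you need to exclude: at the last $1$ of a $1$-run, reading $10^k1^m0\cdots$, the candidate $(10^k)^2$ matches only if $m=1$, and symmetrically $(01^m)^2$ at a last-$0$ position matches only if the next $0$-run has length $1$. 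So when both letters have all runs of length at least $2$, the minimal squares at boundary positions are longer words crossing several runs, whose structure and number you have not analyzed. That case analysis --- showing that two-sided long runs force at least seven minimal squares or else periodicity --- is the technical core of the result, and it is absent.

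A second, smaller gap of the same kind: even granted that $1$'s are isolated, your claim that completability ``forces the run lengths of $0$'s to take at most two values, necessarily consecutive'' is asserted rather than derived. The condition you write down only constrains a run against the run two steps later; to cap the number of distinct run lengths you must also account for positions where the short candidate $(010^{\ell-1})^2$ fails to close, since those positions generate new, longer minimal square roots that eat into the budget of six, and you must invoke aperiodicity to force at least two values to occur. The same issue recurs one level up in your renormalization to the alphabet $\{A,B\}$ with $A=10^{\oa}$, $B=10^{\oa+1}$ --- which is otherwise a sound idea, justified by the observation (implicit in your write-up) that any square root starting at a $1$ must be a concatenation of whole blocks because its second copy also starts with $1$. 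In summary, your skeleton --- isolated $1$'s, then $0$-runs of lengths $\oa$ and $\oa+1$, then block-runs of lengths $\ob$ and $\ob+1$ --- correctly describes the answer, but the two counting arguments that are supposed to enforce it are missing, and the first one, as sketched, would not work.
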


  The optimal squareful words containing the minimal square roots of \eqref{eq:min_squares} are called \emph{optimal
  squareful words with parameters $\oa$ and $\ob$}. For the rest of this paper we reserve this meaning for the symbols
  $\oa$ and $\ob$. Furthermore, we agree that the symbols $S_i$ always refer to the minimal square roots
  \eqref{eq:min_squares}.

  Saari completely characterizes optimal squareful words
  \cite[Theorem~17]{2010:everywhere_alpha-repetitive_sequences_and_sturmian_words}.

  \begin{proposition}\label{prp:optimal_squareful_characterization}
    An aperiodic infinite word $w$ is optimal squareful if and only if (up to renaming of letters) there exists
    integers $\oa \geq 1$ and $\ob \geq 0$ such that $w$ is an element of the language
    \begin{align*}
      0^* (10^\oa)^* (10^{\oa+1}(10^\oa)^\ob + 10^{\oa+1}(10^\oa)^{\ob+1})^\omega = S_1^* S_4^* (S_5 + S_6)^\omega.
    \end{align*}
  \end{proposition}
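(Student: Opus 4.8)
The plan is to prove both implications directly, treating $\oa$ and $\ob$ as fixed constants: since $w$ is optimal squareful, Proposition~\ref{prp:min_square_roots} already supplies the six minimal square roots $S_1,\dots,S_6$ for some fixed $\oa \ge 1$ and $\ob \ge 0$ (after the letter renaming that makes $10^i1$ with $i>1$ occur), so I work throughout with these roots and the convention they fix. For the forward implication I would first record that aperiodicity forces $w$ to contain infinitely many $1$'s, so that $w = 0^k 1 \cdots$. The first key step is a local analysis at the positions carrying a $1$: each must begin a minimal square, and the only minimal squares whose roots begin with $1$ are $S_4^2$, $S_5^2$, $S_6^2$, whose roots begin with $10^\oa$ or $10^{\oa+1}$; hence every maximal block of $0$'s occurring after the first $1$ has length exactly $\oa$ or $\oa+1$ (call these short and long blocks, and note in particular that no $11$ occurs). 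Writing $w$ as $0^k$ followed by a sequence of short and long blocks, I would then group the blocks around the long ones. The crux is to show that between two consecutive long blocks there are exactly $\ob$ or $\ob+1$ short blocks: at a position beginning a long block the minimal square is $S_5^2$ or $S_6^2$; if fewer than $\ob$ short blocks followed, neither $S_5$ nor $S_6$ would even be a prefix, while if more than $\ob+1$ followed, the second occurrence of $S_5$ (resp.\ $S_6$) inside the square would pit a leading long block against a short block and mismatch. In both cases no minimal square could start there, contradicting squarefulness, so each long block together with its trailing short blocks is precisely $S_5$ or $S_6$. Finally, aperiodicity excludes only finitely many long blocks (which would leave an ultimately periodic tail $(10^\oa)^\omega = S_4^\omega$), so the short blocks preceding the first long block form a finite prefix $S_4^*$ and the remainder lies in $(S_5+S_6)^\omega$, giving $w \in S_1^* S_4^* (S_5+S_6)^\omega$.

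For the converse I would assume $w$ aperiodic with $w \in S_1^* S_4^* (S_5+S_6)^\omega$ and verify the defining properties; aperiodicity being granted, it remains to check that every position begins a minimal square and that exactly $S_1^2,\dots,S_6^2$ occur as minimal squares. Both reduce to a finite case analysis on the type of position — interior of a $0$-run, the last $0$ before a short or a long block, or a $1$ beginning a short or a long block — where in each case one exhibits the appropriate $S_i^2$ as the minimal square starting there and checks that no shorter square intervenes. Aperiodicity guarantees that both $S_5$ and $S_6$ appear infinitely often, so all six squares genuinely occur, and the fact that $0$-blocks have length only $\oa$ or $\oa+1$ prevents any further minimal square from arising.

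The main obstacle I anticipate is the block-counting step in the forward direction: pinning down that the number of short blocks between consecutive long blocks is exactly $\ob$ or $\ob+1$ requires simultaneously ruling out too few and too many short blocks, playing the minimality of $S_5^2$ and $S_6^2$ against the requirement that every position — in particular each one starting a long block — begin one of the six minimal squares. Once the admissible block lengths and counts are fixed, the remaining steps are local and essentially bookkeeping.
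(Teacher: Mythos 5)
This proposition is never proved in the paper: it is imported by citation as Saari's characterization (Theorem~17 of his paper on everywhere $\alpha$-repetitive sequences and Sturmian words), so there is no internal proof to compare yours against. Your reconstruction is, as far as I can verify, correct, and it genuinely derives the statement from \autoref{prp:min_square_roots} (which the paper also imports without proof, and which your argument uses non-circularly). The forward direction works as you describe: since every position of an optimal squareful word begins with one of the six minimal squares, every $1$ is followed by a $0$-block of length exactly $\oa$ or $\oa+1$; at a position opening a long block only $S_5^2$ or $S_6^2$ can be the minimal square, and reading that square off forces exactly $\ob$, respectively $\ob+1$, short blocks before the next long block; aperiodicity then guarantees infinitely many long blocks, since otherwise the tail would be $S_4^\omega$. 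The converse case analysis also goes through; the facts doing the work are that $S_5$ is a prefix of $S_6$ (so $S_5^2$ is a prefix at every $S_5$-unit and $S_6^2$ at every $S_6$-unit, both followed by another unit) and that $10^{\oa+1}$ occurs inside a unit only at its start, which excludes shorter squares at long-block positions. Two points you should make explicit in a full write-up: first, the claim that some renaming of letters makes $10^i1$ with $i>1$ occur needs the (easy) observation that otherwise all maximal runs of both letters eventually have length one, so $w$ would be ultimately $(01)^\omega$, contradicting aperiodicity; second, to conclude that \emph{exactly} six minimal squares occur you need the remark that any minimal square occurring in $w$ is the shortest square beginning at its starting position, so that your position-by-position analysis really enumerates the whole set. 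Compared with the paper's one-line citation, your route buys self-containedness (modulo \autoref{prp:min_square_roots}) at the cost of the deferred minimality verifications, which are routine but constitute real work; it is presumably close in spirit to Saari's original proof.
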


  \subsection{Continued Fractions and Rational Approximations}\label{ssec:continued_fractions}
  In this section we review results on continued fractions and best rational approximations of irrational numbers
  needed in this paper. Good references on these subjects are the books of Khinchin \cite{1997:continued_fractions} and
  Cassels \cite{1957:an_introduction_to_diophantine_approximation}.

  Every irrational real number $\alpha$ has a unique infinite continued fraction expansion
  \begin{align}\label{eq:cf}
    \alpha = [a_0; a_1, a_2, a_3, \ldots] = a_0 + \dfrac{1}{a_1 + \dfrac{1}{a_2 + \dfrac{1}{a_3 + \cdots}}}
  \end{align}
  with $a_0 \in \Z$ and $a_k \in \N$ for all $k \geq 1$. The numbers $a_i$ are called \emph{the partial quotients} of
  $\alpha$. We focus here only on irrational numbers, but we note that with small tweaks much of what follows also
  holds for rational numbers, which have finite continued fraction expansions.
  
  The \emph{convergents} $c_k = \frac{p_k}{q_k}$ of $\alpha$ are defined by the recurrences
  \begin{alignat*}{4}
    p_0 = a_0, &\qquad p_1 = a_1 a_0 + 1, &\qquad p_k = a_k p_{k-1} + p_{k-2}, \qquad& k \geq 2, \\
    q_0 = 1,   &\qquad q_1 = a_1,         &\qquad q_k = a_k q_{k-1} + q_{k-2}, \qquad& k \geq 2.
  \end{alignat*}
  The sequence $(c_k)_{k \geq 0}$ converges to $\alpha$. Moreover, the even convergents are less than $\alpha$ and form
  an increasing sequence and, on the other hand, the odd convergents are greater than $\alpha$ and form a decreasing
  sequence.

  If $k \geq 2$ and $a_k > 1$, then between the convergents $c_{k-2}$ and $c_k$ there are \emph{semiconvergents}
  (called intermediate fractions in Khinchin's book \cite{1997:continued_fractions}) which are of the form
  \begin{align*}
    \frac{p_{k,\ell}}{q_{k,\ell}} = \frac{\ell p_{k-1} + p_{k-2}}{\ell q_{k-1} + q_{k-2}}
  \end{align*}
  with $1 \leq \ell < a_k$. When the semiconvergents (if any) between $c_{k-2}$ and $c_k$ are ordered by the size of
  their denominators, the sequence obtained is increasing if $k$ is even and decreasing if $k$ is odd.

  Note that we make a clear distinction between convergents and semiconvergents, i.e., convergents are not a specific
  subtype of semiconvergents.

  A rational number $\frac{a}{b}$ is a \emph{best approximation} of the real number $\alpha$ if for every fraction
  $\frac{c}{d}$ such that $\frac{c}{d} \neq \frac{a}{b}$ and $d \leq b$ it holds that
  \begin{align*}
    \left|b\alpha - a\right| < \left|d\alpha - c\right|.
  \end{align*}
  In other words, any other multiple of $\alpha$ with a coefficient at most $b$ is further away from the nearest
  integer than $b\alpha$ is. The next important proposition shows that the best approximations of an irrational number
  are connected to its convergents (for a proof see Theorems 16 and 17 of \cite{1997:continued_fractions}).

  \begin{proposition}\label{prp:best_approximation_convergent}
    The best rational approximations of an irrational number are exactly its convergents.
  \end{proposition}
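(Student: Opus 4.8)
The plan is to prove the two inclusions separately, relying throughout on a few elementary facts about the convergents that follow from the recurrences and the alternating convergence recorded above. The first is that consecutive convergents are unimodularly related, $p_k q_{k-1} - p_{k-1} q_k = (-1)^{k-1}$, which is an immediate induction on the recurrences (with the usual conventions $p_{-1}=1$, $q_{-1}=0$). The second is that, setting $\theta_k = q_k\alpha - p_k$, the signed errors alternate in sign---$\theta_k$ positive for even $k$ and negative for odd $k$, directly from the stated fact that even convergents lie below $\alpha$ and odd ones above---while the absolute errors $|\theta_k|$ form a strictly decreasing sequence. The engine of both directions is that, by the unimodular identity, any integer pair $(p,q)$ can be written as $p = u p_k + v p_{k-1}$ and $q = u q_k + v q_{k-1}$ with $u,v \in \Z$, whence $q\alpha - p = u\theta_k + v\theta_{k-1}$.

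For the forward direction (each convergent is a best approximation), I would fix $k$, take any $c/d \neq p_k/q_k$ with $1 \le d \le q_k$, and expand $(c,d)$ in the basis $\{(p_k,q_k),(p_{k-1},q_{k-1})\}$ to obtain integers $u,v$. A short case analysis shows $v \neq 0$ (otherwise $c/d = p_k/q_k$) and that $u$ and $v$ cannot share a sign, since if they did then $d = u q_k + v q_{k-1}$ would either exceed $q_k$ or be nonpositive, in both cases violating $1 \le d \le q_k$. Because $\theta_k$ and $\theta_{k-1}$ have opposite signs while $u$ and $v$ do too, the two summands in $u\theta_k + v\theta_{k-1}$ point the same way and cannot cancel, so $|d\alpha - c| = |u|\,|\theta_k| + |v|\,|\theta_{k-1}| \ge |\theta_{k-1}| > |\theta_k| = |q_k\alpha - p_k|$, the strict last step being the monotonicity of $|\theta_k|$ (and covering both the subcase $u=0$ and the subcase $u,v\neq 0$). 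This is exactly the strict inequality demanded by the definition of best approximation.

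For the converse (each best approximation is a convergent), let $a/b$ be a best approximation. One first notes it must be in lowest terms, since otherwise reducing the fraction yields a strictly better competitor with a smaller denominator. Choosing the largest $k$ with $q_k \le b$, so that $q_k \le b < q_{k+1}$, I would split into two cases. If $b = q_k$, then substituting $c/d = p_k/q_k$ into the best-approximation inequality and invoking the forward direction forces $a/b = p_k/q_k$, a convergent. If instead $q_k < b < q_{k+1}$, I expand $(a,b)$ in the basis $\{(p_{k+1},q_{k+1}),(p_k,q_k)\}$; the constraint $0 < b < q_{k+1}$ pins down the signs of the two coefficients exactly as before, and one finds in every subcase that $|b\alpha - a| \ge |\theta_{k+1}| + |\theta_k| > |\theta_k| = |q_k\alpha - p_k|$ (or $\ge 2|\theta_k|$ when the coefficient of $(p_{k+1},q_{k+1})$ vanishes). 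Since $q_k < b$ and $p_k/q_k \neq a/b$, this contradicts $a/b$ being a best approximation, so this case is impossible and $b$ must equal some $q_k$, making $a/b$ a convergent.

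The main obstacle is the second case of the converse: one must verify that for every integer pair $(a,b)$ with $q_k < b < q_{k+1}$ the error $|b\alpha - a|$ is genuinely no smaller than $|q_k\alpha - p_k|$, and this is precisely where the sign bookkeeping in the unimodular expansion does the real work and where the alternation and strict monotonicity of the $\theta_k$ are indispensable. By contrast, the supporting ingredients---the unimodular identity and the sign and monotonicity properties of $\theta_k$---are routine consequences of the recurrences and the alternating convergence already established.
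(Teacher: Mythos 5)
The paper offers no internal proof of this proposition to compare against: it is imported verbatim from Khinchin's book (Theorems 16 and 17 of the cited reference). Your argument is, in substance, exactly that classical proof: expand an arbitrary competitor $(c,d)$ over the unimodular basis $\{(p_k,q_k),(p_{k-1},q_{k-1})\}$, use the alternating signs of $\theta_k=q_k\alpha-p_k$ together with the denominator constraint to rule out cancellation, and for the converse trap $b$ between consecutive denominators and play the pair $(a,b)$ off against $p_k/q_k$. The converse direction is correct as written (choosing the largest $k$ with $q_k\le b$ even sidesteps the degenerate index discussed below), and the forward direction is correct for every $k\ge 1$.

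There is, however, one concrete gap: the forward direction at $k=0$. Your exclusion of equal signs rests on $q_{k-1}\ge 1$ (both coefficients positive forces $d\ge q_k+q_{k-1}>q_k$), and this fails at $k=0$ because $q_{-1}=0$: the pair $(c,d)=(a_0+v,1)$ has expansion $u=1$, $v\ge 1$ with equal signs, yet $d=q_0$. This is not a removable blemish in the argument, because the claim itself is false there in general: if $a_1=1$, i.e. $\{\alpha\}>\tfrac12$, then $(a_0+1)/1$ has the same denominator as the convergent $p_0/q_0=a_0/1$ and satisfies $|\alpha-(a_0+1)|=1-\{\alpha\}<\{\alpha\}=|\alpha-a_0|$, so $p_0/q_0$ is \emph{not} a best approximation; this is precisely the exceptional case that Khinchin excludes in his statement. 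The proposition as used in this paper is safe because of the standing convention $a_1\ge 2$, equivalently $0<\alpha<\tfrac12$, under which $k=0$ is a one-line direct check: every competitor has $d=1$ and $|\alpha-c|\ge 1-\alpha>\alpha$ for every integer $c\ne 0$. So you should restrict the basis argument to $k\ge 1$, handle $k=0$ by this direct check under the convention $a_1\ge 2$ (or state the proposition with the classical exception); with that patch your converse, which reduces the case $b=q_k$ to the forward direction, goes through unchanged.
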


  We identify the unit interval $[0,1)$ with the unit circle $\T$. Let $\alpha \in (0,1)$ be irrational. The map
  \begin{align*}
    R: [0, 1) \to [0, 1), \, x \mapsto \{x + \alpha\},
  \end{align*}
  where $\{x\}$ stands for the fractional part of the number $x$, defines a rotation on $\T$. The circle partitions
  into the intervals $(0,\frac{1}{2})$ and $(\frac{1}{2},1)$. Points in the same interval of the partition are said to
  be on the same side of $0$ and points in different intervals are said to be on the opposite sides of $0$. (We are not
  interested in the location of the point $\frac{1}{2}$.) The points $\{q_k \alpha\}$ and $\{q_{k-1}\alpha\}$ are
  always on the opposite sides of $0$. The points $\{q_{k,\ell}\alpha\}$ with $0 < \ell \leq a_k$ always lie between
  the points $\{q_{k-2}\alpha\}$ and $\{q_k \alpha\}$; see \eqref{eq:distance_difference}.

  We measure the shortest distance to $0$ on $\T$ by setting
  \begin{align*}
    \|x\| = \min\{\{x\}, 1 - \{x\}\}. 
  \end{align*}
  We have the following facts for $k \geq 2$ and for all $l$ such that $0 < l \leq a_k$:
  \begin{align}\label{eq:distance_formula}
    \|q_{k,\ell} \alpha\| &= (-1)^k(q_{k,\ell} \alpha - p_{k,\ell}),\\
    \|q_{k,\ell}\alpha\|  &= \|q_{k,\ell-1}\alpha\| - \|q_{k-1}\alpha\|. \label{eq:distance_difference}
  \end{align}
  We can now interpret \autoref{prp:best_approximation_convergent} as
  \begin{align}\label{eq:min_distance}
    \min_{0 < n < q_k} \|n\alpha\| = \|q_{k-1}\alpha\|, \quad \text{for } k \geq 1.
  \end{align}
  Note that rotating preserves distances; a fact we will often use without explicit mention. In particular, the
  distance between the points $\{n\alpha\}$ and $\{m\alpha\}$ is $\||n-m|\alpha\|$. Thus by \eqref{eq:min_distance} the
  minimum distance between the distinct points $\{n\alpha\}$ and $\{m\alpha\}$ with $0 \leq n,m < q_k$ is at least
  $\|q_{k-1}\alpha\|$. Formula \eqref{eq:min_distance} tells what is the point closest to $0$ among the points
  $\{n\alpha\}$ for $1 \leq n \leq q_k - 1$. We are also interested in knowing the point closest to $0$ on the side
  opposite to $\{q_{k-1}\alpha\}$. The next result is very important and concerns this; see
  \cite[Proposition~2.2.]{2015:characterization_of_repetitions_in_sturmian_words_a_new}.

  \begin{proposition}\label{prp:closest}
    Let $\alpha$ be an irrational number. Let $n$ be an integer such that $0 < n < q_{k,\ell}$ with $k \geq 2$ and
    $0 < \ell \leq a_k$. If $\|n\alpha\| < \|q_{k,\ell-1}\alpha\|$, then $n = mq_{k-1}$ for some integer $m$ such that
    $1 \leq m \leq \min\{\ell, a_k - \ell + 1\}$.
  \end{proposition}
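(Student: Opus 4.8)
The plan is to realise the statement as a fact about the rotation orbit $\{n\alpha\}$ on $\T$ and to prove the two halves of the conclusion separately: first that $\{n\alpha\}$ must lie on the same side of $0$ as $\{q_{k-1}\alpha\}$, and then that on that side the only admissible indices are the multiples of $q_{k-1}$. Throughout I would use three ingredients: the closed form $\|q_{k,j}\alpha\| = \|q_{k-2}\alpha\| - j\|q_{k-1}\alpha\|$ for $0 \le j \le a_k$, obtained by iterating \eqref{eq:distance_difference} from $q_{k,0}=q_{k-2}$; the sign information of \eqref{eq:distance_formula}, which places every $\{q_{k,j}\alpha\}$ on the side of $0$ of sign $(-1)^{k}$ while $\{q_{k-1}\alpha\}$ lies on the opposite side; and \eqref{eq:min_distance}, which together with \autoref{prp:best_approximation_convergent} makes $q_{k-1}$ the \emph{unique} index in $(0,q_k)$ attaining the minimal distance $\|q_{k-1}\alpha\|$ (uniqueness because $\|m\alpha\|=\|m'\alpha\|$ would force $(m\pm m')\alpha\in\Z$, impossible for irrational $\alpha$ with $m\neq m'$).

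To pin down the side, I would argue by contradiction, assuming $\{n\alpha\}$ lies on the side of sign $(-1)^k$, and run a finite descent using the reflections $n\mapsto q_{k,j}-n$ for $j=\ell,\ell+1,\dots,a_k$. Each reflected index lies in $(0,q_{k,j})\subseteq(0,q_k)$ since $n<q_{k,\ell}\le q_{k,j}$, and a one-line signed-distance computation shows $\{(q_{k,j}-n)\alpha\}$ sits at signed position $(-1)^k(\|q_{k,j}\alpha\|-d)$, where $d=\|n\alpha\|$. Starting from the hypothesis $d<\|q_{k,\ell-1}\alpha\|=\|q_{k,\ell}\alpha\|+\|q_{k-1}\alpha\|$, the case $d>\|q_{k,\ell}\alpha\|$ immediately places $q_{k,\ell}-n$ on the \emph{opposite} side within distance $d-\|q_{k,\ell}\alpha\|<\|q_{k-1}\alpha\|$, contradicting \eqref{eq:min_distance}; otherwise $d<\|q_{k,\ell}\alpha\|$ (equality is excluded since $n\neq q_{k,\ell}$), and then \eqref{eq:min_distance} applied to $q_{k,j}-n$ forces $\|q_{k,j}\alpha\|-d\ge\|q_{k-1}\alpha\|$, i.e. $d\le\|q_{k,j+1}\alpha\|$, which feeds the next step. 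After at most $a_k-\ell+1$ steps this yields $d\le\|q_{k,a_k}\alpha\|-\|q_{k-1}\alpha\|=\|q_k\alpha\|-\|q_{k-1}\alpha\|<0$, an absurdity. Hence $\{n\alpha\}$ lies on the side of $\{q_{k-1}\alpha\}$.

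On that side I would perform the genuine descent by repeatedly replacing $n$ with $n-q_{k-1}$. Since $\{q_{k-1}\alpha\}$ sits on this side at distance $\|q_{k-1}\alpha\|$, a signed-distance computation shows that whenever the current distance exceeds $\|q_{k-1}\alpha\|$ the point stays on this side and the distance drops by exactly $\|q_{k-1}\alpha\|$. Two bookkeeping facts keep the iteration legal: every intermediate index stays below $q_{k,\ell}\le q_k$, and no intermediate index can fall below $q_{k-1}$, for such an index would, by \eqref{eq:min_distance} at level $k-1$, have distance at least $\|q_{k-2}\alpha\|$, whereas every intermediate distance is at most $d<\|q_{k,\ell-1}\alpha\|\le\|q_{k-2}\alpha\|$. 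Writing $m=\lfloor d/\|q_{k-1}\alpha\|\rfloor$, after $m$ steps the distance lands in $[0,\|q_{k-1}\alpha\|)$; were it positive we would have produced an index in $(0,q_k)$ at distance $<\|q_{k-1}\alpha\|$, again contradicting \eqref{eq:min_distance}. Thus $d=m\|q_{k-1}\alpha\|$ exactly, the descent terminates at $q_{k-1}$, and $n=mq_{k-1}$. The two bounds on $m$ then fall out by substitution: $n<q_{k,\ell}=\ell q_{k-1}+q_{k-2}$ together with $q_{k-2}\le q_{k-1}$ gives $m\le\ell$, while $\|n\alpha\|<\|q_{k,\ell-1}\alpha\|$, rewritten through $\|q_{k-2}\alpha\|=a_k\|q_{k-1}\alpha\|+\|q_k\alpha\|$ and $\|q_k\alpha\|<\|q_{k-1}\alpha\|$, gives $m\le a_k-\ell+1$, so that $1\le m\le\min\{\ell,a_k-\ell+1\}$.

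I expect the delicate part to be the bookkeeping in the two descents rather than any single inequality: one must check at every stage that the index produced remains strictly inside $(0,q_k)$ and on the intended side of $0$, so that \eqref{eq:min_distance} is actually applicable, and the whole argument leans on the irrationality fact that distinct indices give distinct distances. The reflection descent of the second paragraph is the least obvious ingredient, since the single reflection $n\mapsto q_{k,\ell}-n$ does not by itself close the argument; it is the iteration down to $q_k$ that produces the contradiction.
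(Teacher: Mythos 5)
The paper offers no proof of this proposition to compare yours against: \autoref{prp:closest} is imported wholesale from \cite[Proposition~2.2]{2015:characterization_of_repetitions_in_sturmian_words_a_new}, with no argument given in the text. Judged on its own, your proof is correct, and it has the merit of being self-contained in exactly the toolkit the paper states in \autoref{ssec:continued_fractions}: the closed form $\|q_{k,j}\alpha\| = \|q_{k-2}\alpha\| - j\|q_{k-1}\alpha\|$ obtained by iterating \eqref{eq:distance_difference}, the sign information from \eqref{eq:distance_formula}, the minimality property \eqref{eq:min_distance}, and the distinctness of $\|m\alpha\|$ over distinct positive indices. I checked both descents. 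In the reflection descent, each index $q_{k,j}-n$ does lie in $(0,q_k)$, the dichotomy between $d > \|q_{k,j}\alpha\|$ and $d < \|q_{k,j}\alpha\|$ (equality excluded since $n \neq q_{k,j}$ at every stage, as $n < q_{k,\ell} \leq q_{k,j}$) either produces an index at distance less than $\|q_{k-1}\alpha\|$, contradicting \eqref{eq:min_distance}, or improves the bound to $d < \|q_{k,j+1}\alpha\|$, and at $j = a_k$ the bound becomes negative; so $\{n\alpha\}$ indeed lies on the side of $\{q_{k-1}\alpha\}$. The subtraction descent then correctly extracts $n = mq_{k-1}$, and both bounds on $m$ follow by the substitutions you indicate.

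One sentence needs tightening rather than repair: ``no intermediate index can fall below $q_{k-1}$'' must also exclude indices becoming nonpositive, not merely landing in $(0,q_{k-1})$, because \eqref{eq:min_distance} applies only to positive indices, and for a nonpositive index the distance computation no longer yields the contradiction you want. The patch uses only ingredients you already have: an index can become nonpositive only by first lying in $(0,q_{k-1}]$; the open interval is ruled out by your level-$(k-1)$ application of \eqref{eq:min_distance} (every intermediate distance is at most $d < \|q_{k,\ell-1}\alpha\| \leq \|q_{k-2}\alpha\|$), while the endpoint $q_{k-1}$ itself has distance exactly $\|q_{k-1}\alpha\|$, which is precisely the terminating case $d = m\|q_{k-1}\alpha\|$ and $n = mq_{k-1}$. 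With that made explicit, the argument is complete, and it would in fact serve as a proof making the paper self-contained on this point.
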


  \subsection{Sturmian Words}\label{ssec:sturmian_words}
  Sturmian words are a well-known class of infinite, aperiodic binary words with minimal factor complexity. They are
  defined as the infinite words having $n+1$ factors of length $n$ for every $n \geq 0$. For our purposes it is more
  convenient to view Sturmian words as the infinite words obtained as codings of orbits of points in an irrational
  circle rotation with two intervals; see
  \cite{2002:substitutions_in_dynamics_arithmetics_and_combinatorics,2002:algebraic_combinatorics_on_words}.
  Let us make this more precise. The frequency $\alpha$ of letter $1$ (called the \emph{slope}) in a Sturmian words
  exists, and it is irrational. Divide the circle $\T$ into two intervals $I_0$ and $I_1$ defined by the points $0$ and
  $1-\alpha$, and define the coding function $\nu$ by setting $\nu(x) = 0$ if $x \in I_0$ and $\nu(x) = 1$ if
  $x \in I_1$. The coding of the orbit of a point $x$ is the infinite word $s_{x,\alpha}$ obtained by setting its
  $n^\text{th}, n \geq 0,$ letter to equal $\nu(R^n(x))$ where $R$ is the rotation by angle $\alpha$. This word is
  Sturmian with slope $\alpha$, and conversely every Sturmian word with slope $\alpha$ is obtained this way. To make the
  definition proper, we need to define how $\nu$ behaves in the endpoints $0$ and $1-\alpha$. We have two options:
  either take $I_0 = [0,1-\alpha)$ and $I_1 = [1-\alpha,1)$ or $I_0 = (0,1-\alpha]$ and $I_1 = (1-\alpha,1]$. The
  difference is seen in the codings of the orbits of the special points $\{-n\alpha\}$, and both options are needed to
  be able to obtain every Sturmian word of slope $\alpha$ as a coding of a rotation. However, in this paper we are not
  concerned about this choice. We make the convention that $I(x, y)$ with $x \neq y$ and $x, y \neq 0$ is either of the half-open
  intervals of $\T$ separated by the points $x$ and $y$ (taken modulo $1$ if necessary) not containing the point $0$ as
  an interior point. The interval $I(x,0) = I(0,x)$ is either of the half-open intervals separated by the points $0$
  and $x$ having smallest length (the case $x = \frac{1}{2}$ is not important in this paper). Since the sequence
  $(\{n\alpha\})_{n\geq 0}$ is dense in $[0,1)$---as is well-known---every Sturmian word of slope $\alpha$ has the same
  language (that is, the set of factors); this language is denoted by $\Lang(\alpha)$. Further, all Sturmian words are
  uniformly recurrent.

  For every factor $w = a_0 a_1 \cdots a_{n-1}$ of length $n$ there exists a unique subinterval $[w]$ of $\T$ such that
  $s_{x,\alpha}$ begins with $w$ if and only if $x \in [w]$. Clearly
  \begin{align*}
    [w] = I_{a_0} \cap R^{-1}(I_{a_1}) \cap \ldots \cap R^{-(n-1)}(I_{a_{n-1}}).
  \end{align*}
  We denote the length of the interval $[w]$ by $|[w]|$. The points
  $0, \{-\alpha\}, \{-2\alpha\}, \ldots, \{-n\alpha\}$ partition the circle into $n+1$ intervals, which have one-to-one
  correspondence with the words of $\Lang(\alpha)$ of length $n$. Among these intervals the interval containing the
  point $\{-(n+1)\alpha\}$ corresponds to the right special factor of length $n$. A factor $w$ is \emph{right special}
  if both $w0, w1 \in \Lang(\alpha)$. Similarly a factor is \emph{left special} if both $0w, 1w \in \Lang(\alpha)$. In
  a Sturmian word there exists a unique right special and a unique left special factor of length $n$ for all
  $n \geq 0$. The language $\Lang(\alpha)$ is mirror-invariant, that is, for every $w \in \Lang(\alpha)$ also
  $\mirror{w} \in \Lang(\alpha)$. It follows that the right special factor of length $n$ is the reversal of the left
  special factor of length $n$. Sturmian words are also \emph{balanced}; that is, the number of occurrences of the
  letter $1$ in any two factors of the same length differ at most by $1$.

  Given the continued fraction expansion of an irrational $\alpha \in (0,1)$ as in \eqref{eq:cf}, we define the
  corresponding \emph{standard sequence} $(s_k)_{k\geq 0}$ of words by
  \begin{alignat*}{5}
    s_{-1} = 1, \qquad& s_0 = 0, \qquad& s_1 = s_0^{a_1-1}s_{-1}, \qquad& s_k = s_{k-1}^{a_k}s_{k-2}, \qquad&k \geq 2.
  \end{alignat*}
  As $s_k$ is a prefix of $s_{k+1}$ for $k \geq 1$, the sequence $(s_k)$ converges to a unique infinite word $c_\alpha$
  called the infinite standard Sturmian word of slope $\alpha$, and it equals $s_{\alpha,\alpha}$. Inspired by the
  notion of semiconvergents, we define \emph{semistandard} words for $k \geq 2$ by
  \begin{align*}
    s_{k,\ell} = s_{k-1}^\ell s_{k-2}
  \end{align*}
  with $1 \leq \ell < a_k$. Clearly $|s_k| = q_k$ and $|s_{k,\ell}| = q_{k,\ell}$. Instead of writing ``standard or
  semistandard'', we often simply write ``(semi)standard''. The set of standard words of slope $\alpha$ is denoted by
  $\st{\alpha}$, and the set of standard \emph{and} semistandard words of slope $\alpha$ is denoted by $\sst{\alpha}$.
  (Semi)standard words are left special as prefixes of the word $c_\alpha$. Every (semi)standard word is primitive
  \cite[Proposition~2.2.3]{2002:algebraic_combinatorics_on_words}. An important property of standard words is that the
  words $s_k$ and $s_{k-1}$ almost commute; namely $s_k s_{k-1} = wxy$ and $s_{k-1} s_k = wyx$ for some word $w$ and
  distinct letters $x$ and $y$. For more on standard words see
  \cite{2002:algebraic_combinatorics_on_words,1999:on_the_index_of_sturmian_words}.
  
  The only difference between the words $c_\alpha$ and $c_{\overline{\alpha}}$ where
  $\alpha = [0; 1, a_2, a_3, \ldots]$ and $\overline{\alpha} = [0; a_2 + 1, a_3, \ldots]$ is that the roles of the
  letters $0$ and $1$ are reversed. We may thus assume without loss of generality that $a_1 \geq 2$. For the rest of
  this paper we make the convention that $\alpha$ stands for an irrational number in $(0,1)$ having the continued
  fraction expansion as in \eqref{eq:cf} with $a_1 \geq 2$, i.e., we assume that $0 < \alpha < \frac12$. The numbers
  $q_k$ and $q_{k,\ell}$ refer to the denominators of the convergents of $\alpha$, and the words $s_k$ and $s_{k,\ell}$
  refer to the standard or semistandard words of slope $\alpha$.

  \subsection{Powers in Sturmian Words}
  In this section we review some known results on powers in Sturmian words, and prove helpful results for the next
  section.

  If a square $w^2$ occurs in a Sturmian word of slope $\alpha$, then the length of the word $w$ must be a really
  specific number, namely a denominator of a convergent or a semiconvergent of $\alpha$. The proof can be found in
  \cite[Theorem~1]{2003:powers_in_sturmian_sequences} or
  \cite[Proposition~4.1]{2015:characterization_of_repetitions_in_sturmian_words_a_new}.

  \begin{proposition}\label{prp:square_length}
    If $w^2 \in \Lang(\alpha)$ with $w$ nonempty and primitive, then $|w| = q_0$, $|w| = q_1$ or $|w| = q_{k,\ell}$ for
    some $k \geq 2$ with $0 < \ell \leq a_k$.
  \end{proposition}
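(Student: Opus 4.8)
The plan is to translate the combinatorial statement into one about the circle rotation and then read off the admissible lengths from continued fraction data. Write $n = |w|$. Using the interval description, I would first observe that $w^2 \in \Lang(\alpha)$ precisely when $[w^2] = [w] \cap R^{-n}([w])$ is nonempty, i.e.\ when some $x$ satisfies $x \in [w]$ and $R^n(x) \in [w]$. Since $R^{-n}([w])$ is the translate of $[w]$ by the circular distance $\|n\alpha\|$, two arcs of equal length $|[w]|$ overlap exactly when the shift is smaller than that length. Hence the entire content of the hypothesis is the inequality
\begin{align*}
  \|n\alpha\| < |[w]|.
\end{align*}
The assumption that $w$ is primitive guarantees that $n$ is genuinely the relevant shift and that $[w]$ is a single gap of the partition used below.

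Next I would bound $|[w]|$ in terms of $n$. The interval $[w]$ is one of the $n+1$ gaps cut out by the points $0, \{-\alpha\}, \ldots, \{-n\alpha\}$, so its endpoints are consecutive members $\{-i\alpha\}, \{-j\alpha\}$ of this set and $|[w]| = \|d\alpha\|$ with $d := |i-j|$ and $1 \le d \le n$; in particular $|[w]| \le M_n$, where $M_n$ denotes the length of the largest gap. Combining with the previous step, the occurrence of $w^2$ forces $\|n\alpha\| < M_n$. By the three-distance theorem the admissible index-differences $d$ and the maximal gap $M_n$ are governed by the convergents: if $q_k$ is the largest convergent denominator not exceeding $n$, then $M_n = \|q_{k-1}\alpha\| - (m-1)\|q_k\alpha\|$ for a suitable $m$, which by \eqref{eq:distance_difference} is again one of the semiconvergent distances $\|q_{k+1,\ell'}\alpha\|$. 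Conceptually the inequality says that $\{-n\alpha\}$ is a \emph{one-sided record} for closeness to $0$, and the denominators of such one-sided best approximations are exactly the convergents and semiconvergents.

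The remaining and main task is the arithmetic step: to show that $\|n\alpha\| < M_n$ can hold only when $n$ is itself a convergent or semiconvergent denominator. I would argue by contraposition. If $n$ is not of the form $q_0$, $q_1$, or $q_{k,\ell}$, then $n$ lies strictly between two consecutive semiconvergent denominators $q_{k,\ell}$ and $q_{k,\ell+1} = q_{k,\ell} + q_{k-1}$, so $n = q_{k,\ell} + r$ with $0 < r < q_{k-1}$. The point $\{-n\alpha\}$ is then the rotation of the near-optimal point $\{-q_{k,\ell}\alpha\}$ by $\{-r\alpha\}$, and since $\min_{0 < m < q_{k-1}} \|m\alpha\| = \|q_{k-2}\alpha\|$ by \eqref{eq:min_distance}, this displacement is comparatively large. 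Using \eqref{eq:distance_formula}--\eqref{eq:distance_difference} to track signs and \autoref{prp:closest} to control the position of $\{-n\alpha\}$ relative to the neighbours of $0$, one obtains $\|n\alpha\| \ge M_n$, contradicting the inequality above. Therefore $n \in \{q_0, q_1\} \cup \{q_{k,\ell} : k \ge 2,\ 0 < \ell \le a_k\}$.

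I expect this last continued fraction bookkeeping---pinning down $\|n\alpha\|$ and $M_n$ sharply enough to exclude every non-milestone $n$, with correct attention to which side of $0$ each point lies on---to be the main obstacle. \autoref{prp:closest} together with the distance formulas \eqref{eq:distance_formula}--\eqref{eq:min_distance} is precisely the tool tailored for it. Finally, the edge cases $n = q_0 = 1$ and $n = q_1 = a_1$, where semiconvergents are not yet defined, must be checked separately by hand.
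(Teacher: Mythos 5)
The paper itself does not prove \autoref{prp:square_length}; it only cites Damanik--Lenz and Peltomäki's 2015 paper for it, so your proposal has to be judged on its own merits. Judged that way, it has a genuine gap: your argument never actually uses primitivity, yet the statement is false without that hypothesis, so the central arithmetic claim you reduce everything to cannot be true. Your first two steps are fine: $w^2 \in \Lang(\alpha)$ is indeed equivalent to $\|n\alpha\| < |[w]|$, and $|[w]| \le M_n$. But the claim of your third step --- that $\|n\alpha\| < M_n$ forces $n$ to be a convergent or semiconvergent denominator, proved by contraposition via $\|n\alpha\| \ge M_n$ --- is false. Take $n = 2q_k$ with $k \ge 2$ and $a_{k+1} \ge 2$. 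By \autoref{prp:standard_index} the index of $s_k$ in $\Lang(\alpha)$ is $a_{k+1}+2 \ge 4$, so the (non-primitive) word $w = s_k^2$ satisfies $w^2 = s_k^4 \in \Lang(\alpha)$; your own steps then give $\|2q_k\alpha\| < |[s_k^2]| \le M_{2q_k}$, even though $2q_k$ is not a milestone: it lies strictly between the consecutive milestones $q_{k+1,1} = q_k + q_{k-1}$ and $q_{k+1,2} = 2q_k + q_{k-1}$. One can also see the failure purely arithmetically: $\|2q_k\alpha\| = 2\|q_k\alpha\| < a_{k+1}\|q_k\alpha\| + \|q_{k+1}\alpha\| = \|q_{k-1}\alpha\| = M_{2q_k}$, since the level-$2q_k$ partition still contains gaps of length $\|q_{k-1}\alpha\|$. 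Concretely, for $\alpha = \sqrt{2}-1 = [0;2,2,2,\ldots]$ one has $s_2 = 01010$, $(01010)^4 \in \Lang(\alpha)$, and $n = 10$ is not among the milestones $1,2,3,5,7,12,17,\ldots$.

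The root cause is that your only appeal to primitivity --- that it ``guarantees that $[w]$ is a single gap of the partition'' --- is vacuous: \emph{every} factor of length $n$, primitive or not, has $[w]$ equal to a single gap of the partition by $0, \{-\alpha\}, \ldots, \{-n\alpha\}$. Consequently nothing in your outline distinguishes $s_k^2$ from a primitive word, and any argument that uses only the inequality $\|n\alpha\| < |[w]|$ (let alone the weaker $\|n\alpha\| < M_n$) together with three-distance data is doomed. A correct proof must make primitivity do real work: for instance, use that $w$ is primitive if and only if $w$ occurs exactly twice in $w^2$, i.e.\ $R^{m}([w^2]) \not\subseteq [w]$ for all $0 < m < n$, and then show that for non-milestone $n$ the inequality $\|n\alpha\| < |[w]|$ forces such an intermediate occurrence; alternatively one must pin down the actual endpoints $\{-i\alpha\}, \{-j\alpha\}$ of the specific gap $[w]$ and apply \autoref{prp:closest} and \eqref{eq:min_distance} to those, rather than to the maximal gap. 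This is in essence what the proofs cited in the paper do, and it is precisely the step your outline replaces by an implication that does not hold.
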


  Next we need to know when conjugates of (semi)standard words occur as squares in a Sturmian word.

  \begin{proposition}\label{prp:conjugate_square}
    The following holds:
    \begin{enumerate}[(i)]
      \item A factor $w \in \Lang(\alpha)$ is conjugate to $s_k$ for some $k \geq 0$ if and only if $|w| = |s_k|$ and $w^2 \in \Lang(\alpha)$.
      \item Let $w$ be a conjugate of $s_{k,\ell}$ with $k \geq 2$ and $0 < \ell < a_k$. Then $w^2 \in \Lang(\alpha)$ if
            and only if the intervals $[w]$ and $[s_{k,\ell}]$ have the same length.
      \item Let $n = q_0$, $n = q_1$, or $n = q_{k,\ell}$ with $k \geq 2$ and $0 < \ell \leq a_k$, and let $s$ be the
            (semi)standard word of length $n$. A factor $w \in \Lang(\alpha)$ of length $n$ is conjugate to $s$ if and
            only if $w$ and $s$ have equally many occurrences of the letter $0$.
    \end{enumerate}
  \end{proposition}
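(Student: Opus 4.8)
The plan is to reduce every part to the rotation picture. For any factor $w$ of length $n$, the square $w^2$ lies in $\Lang(\alpha)$ exactly when some orbit realizes $w$ at two positions $n$ apart, i.e.\ when $[w]\cap R^{-n}([w])\neq\emptyset$. Since $R^{-n}$ is translation by $-n\alpha$ and $[w]$ is a half-open arc, this intersection is nonempty precisely when the shift $\|n\alpha\|$ is strictly smaller than the arc length $|[w]|$; the boundary case $\|n\alpha\|=|[w]|$ produces half-open arcs that merely touch and is excluded. So I would first record
\[
  w^2\in\Lang(\alpha)\iff \|n\alpha\| < |[w]|,\qquad n=|w|,
\]
and throughout write $\eta_j=\|q_j\alpha\|$ and, for fixed $k$, $\beta_j=\|q_{k,j}\alpha\|$, so that $\beta_{j-1}=\beta_j+\eta_{k-1}$ by \eqref{eq:distance_difference}.

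For part (iii) I would argue combinatorially. The forward implication is immediate, as conjugation preserves the number of $0$s. For the converse I need two facts: (a) all $n$ conjugates of the (semi)standard word $s$ are factors, and (b) both abelian classes actually occur among the length-$n$ factors. For (a) it suffices that $s^2\in\Lang(\alpha)$, for then each conjugate of $s$, being primitive, occurs inside $s^2$; for a standard word this is the statement that the index of $s_k$ is at least $2$, and for a semistandard word $s_{k,\ell}=s_{k-1}^\ell s_{k-2}$ with $\ell<a_k$ I would exhibit $s_{k,\ell}^2$ inside $s_k^2=(s_{k-1}^{a_k}s_{k-2})^2$ by reading from position $(a_k-\ell)q_{k-1}$ and using that $s_{k,\ell}$ is the length-$q_{k,\ell}$ prefix of $s_k$. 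For (b) I would integrate the number of $1$s over the circle: its average is $n\alpha\notin\Z$, while by balance the count takes only the two values $\lfloor n\alpha\rfloor$ and $\lceil n\alpha\rceil$, so both occur. Combining (a), (b) and primitivity of $s$, the $n$ distinct conjugates all share the abelian class of $s$, there are $n+1$ factors in all, and at least one has the other class; hence exactly one factor lies outside the conjugacy class, and a factor has the same number of $0$s as $s$ iff it is a conjugate.

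For part (i) I would combine the criterion with \eqref{eq:min_distance}. Among the partition points $\{-j\alpha:0\le j\le q_k\}$ the minimal pairwise distance is $\min_{0<m\le q_k}\|m\alpha\|=\eta_k$, attained only by the difference $m=q_k$, i.e.\ by the single pair $\{0\},\{-q_k\alpha\}$, since $\|m\alpha\|\ge\eta_{k-1}>\eta_k$ for $0<m<q_k$. Thus the length-$q_k$ partition has a unique shortest arc, of length exactly $\eta_k=\|q_k\alpha\|$, every other gap being strictly longer, so the criterion identifies the $q_k$ factors of the longer arcs as precisely those with square in $\Lang(\alpha)$. It remains to match the unique excluded factor with the unique non-conjugate of $s_k$ from (iii). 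Here I would use that an occurrence of $v^2$ with $|v|=q_k$ exhibits all $q_k$ cyclic conjugates of $v$ as the length-$q_k$ factors read at its consecutive positions: if the unique minority-class factor had its square in $\Lang(\alpha)$, these conjugates would give $q_k$ distinct minority-class factors, contradicting the uniqueness from (iii) once $q_k\ge2$. Hence the minority-class factor has its square outside $\Lang(\alpha)$ and so equals the excluded shortest-arc factor; consequently the conjugates of $s_k$ are exactly the length-$q_k$ factors whose square lies in $\Lang(\alpha)$. The cases $k=0,1$ I would check directly, and by \autoref{prp:square_length} restrict to primitive $w$.

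Part (ii) is where the real work lies. For $n=q_{k,\ell}$ with $0<\ell<a_k$ the minimal pairwise distance among the partition points is $\eta_{k-1}$ (attained by $m=q_{k-1}$, as $q_{k-1}\le q_{k,\ell}<q_k$), and I would invoke \autoref{prp:closest} to control all short distances, every $m<q_{k,\ell}$ with $\|m\alpha\|<\beta_{\ell-1}$ being a multiple of $q_{k-1}$. Together with a gap analysis based on the identities \eqref{eq:distance_formula}–\eqref{eq:distance_difference}, this should pin the gap lengths to the three values $\eta_{k-1}<\beta_\ell<\beta_{\ell-1}$, where $\beta_\ell=\|q_{k,\ell}\alpha\|$ and $\beta_{\ell-1}=\beta_\ell+\eta_{k-1}$; note $\beta_\ell>\eta_{k-1}$ holds exactly because $\ell<a_k$. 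The criterion then selects as square-roots precisely the factors occupying a maximal arc of length $\beta_{\ell-1}$. The main obstacle, and the step I expect to be most delicate, is to prove $|[s_{k,\ell}]|=\beta_{\ell-1}$, i.e.\ that the left-special word $s_{k,\ell}$ (a prefix of $c_\alpha$, so $\alpha\in[s_{k,\ell}]$) sits in a longest arc. Granting this, a conjugate $w$ of $s_{k,\ell}$ has $|[w]|\in\{\eta_{k-1},\beta_\ell,\beta_{\ell-1}\}$, and $w^2\in\Lang(\alpha)$ holds iff $|[w]|>\beta_\ell$, that is, iff $|[w]|=\beta_{\ell-1}=|[s_{k,\ell}]|$, which is the assertion.
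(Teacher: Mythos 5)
Your route is genuinely different from the paper's: the paper disposes of (i) and (ii) by citing two external results on powers in Sturmian words and only argues (iii) in a short self-contained way, whereas you rebuild everything inside the rotation picture from the single criterion $w^2\in\Lang(\alpha)\iff\|n\alpha\|<|[w]|$ (which is correct, including the exclusion of the boundary case). Your part (iii) is complete and, unlike the paper's, independent of (i) and (ii): the embedding of $s_{k,\ell}^2$ into $s_k^2$ starting at position $(a_k-\ell)q_{k-1}$, combined with \autoref{prp:standard_index}, supplies $s^2\in\Lang(\alpha)$ directly, and your integration argument replaces the paper's right-special-factor observation. Part (i) is also essentially complete; the one slip is that ``$q_k$ distinct minority-class factors'' presupposes that the minority factor $v$ is primitive. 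This is harmless: two distinct conjugates already give the contradiction, and the only words of length $q_k\ge 2$ lacking two distinct conjugates are $0^{q_k}$ and $1^{q_k}$, neither of which is a factor for $k\ge 2$ (we have $11\notin\Lang(\alpha)$ since $\alpha<\tfrac12$, and the index of $0$ is $a_1=q_1<q_k$).

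The genuine gap is in part (ii), exactly where you flagged it, and it is not a routine verification: the two statements you leave unproved --- (a) that the gaps of the partition by $0,\{-\alpha\},\dots,\{-q_{k,\ell}\alpha\}$ take only the three values $\eta_{k-1},\beta_\ell,\beta_{\ell-1}$, and (b) that $|[s_{k,\ell}]|=\beta_{\ell-1}$ --- are precisely the content of the two theorems the paper imports for this part, so ``granting'' them grants the proposition. Two remarks on closing the gap. First, (b) follows from (a) together with work you have already done: your part (iii) shows $s_{k,\ell}^2\in\Lang(\alpha)$ via the occurrence inside $s_k^2$, so your criterion gives $|[s_{k,\ell}]|>\beta_\ell$, and (a) then forces $|[s_{k,\ell}]|=\beta_{\ell-1}$; thus the step you single out as most delicate dissolves, and the whole burden falls on (a). Second, (a) is a three-distance statement that the paper's toolkit can prove, but it needs an actual neighbor analysis, not just the distance estimates you list: by \autoref{prp:closest}, every $m<q_{k,\ell}$ with $\|m\alpha\|<\beta_{\ell-1}$ is a multiple $jq_{k-1}$ with $j\le\min\{\ell,a_k-\ell+1\}$, and by iterating \eqref{eq:distance_difference} all these multiples displace a point to the same side (their signed errors are $j$ times that of $q_{k-1}$, of absolute value $j\eta_{k-1}<\eta_{k-2}$); one must then check, for each partition point $\{-i\alpha\}$, which of the candidate neighbors $i\pm jq_{k-1}$, $i\pm q_{k,\ell}$, $i\pm q_{k,\ell-1}$ have indices inside $\{0,\dots,q_{k,\ell}\}$, and it is this index bookkeeping that pins every gap exceeding $\beta_\ell$ to the exact value $\beta_{\ell-1}$. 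As submitted, part (ii) asserts its two key facts rather than proving them, so the proposal is incomplete there.
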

  \begin{proof}
    Claim \emph{(i)} is a direct consequence of \cite[Theorem~3]{2003:powers_in_sturmian_sequences} or alternatively
    \cite[Theorem~4.5]{2015:characterization_of_repetitions_in_sturmian_words_a_new}. Claim \emph{(ii)} can be inferred
    from Theorems 4.3 and 4.5 of \cite{2015:characterization_of_repetitions_in_sturmian_words_a_new}. Finally, claim
    \emph{(iii)} is evident from the proof of
    \cite[Theorem~4.3]{2015:characterization_of_repetitions_in_sturmian_words_a_new}, but a short proof can be given:
    the idea is that every factor of length $n$ except one exceptional factor $v$ is conjugate to $s$ since $s^2$
    occurs in $\Lang(\alpha)$ by \emph{(i)} and $\emph{(ii)}$. As not every factor of length $n$ may have the same
    number of letters $0$ (a right special factor always extends to two factors having different number of letters
    $0$), it must be that $v$ has a different number of letters $0$ than any conjugate of $s$.
  \end{proof}

  We also need to know the index of certain factors of Sturmian words. The following proposition follows directly from
  Theorems 3 and 4 of \cite{2003:powers_in_sturmian_sequences} or from
  \cite[Theorem~4.5]{2015:characterization_of_repetitions_in_sturmian_words_a_new}.

  \begin{proposition}\label{prp:standard_index}
    The index of the standard word $s_k$ in $\Lang(\alpha)$ is $a_{k+1} + 2$ for $k \geq 2$ and $a_2+1$ for $k = 1$.
    The index of the semistandard word $s_{k,\ell}$ in $\Lang(\alpha)$ with $k \geq 2$ and $0 < \ell < a_k$ is $2$.
  \end{proposition}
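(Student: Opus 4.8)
The plan is to translate the combinatorial notion of index into a statement about lengths of intervals on the circle $\T$ and then to evaluate those lengths using the distance estimates of \autoref{ssec:continued_fractions}. First I would reduce everything to a single length computation. For a factor $s$ with $q = |s|$ the word $s^n$ lies in $\Lang(\alpha)$ exactly when $[s^n] \neq \emptyset$, and since $s_{x,\alpha}$ begins with $s^n$ if and only if $R^{jq}(x) \in [s]$ for all $0 \leq j < n$, we have $[s^n] = \bigcap_{j=0}^{n-1} R^{-jq}([s])$. The maps $R^{-jq}$ are translations by the same small amount $\|q\alpha\|$ in a common direction, so this intersection is an interval of length $|[s]| - (n-1)\|q\alpha\|$ whenever that quantity is positive. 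Hence the index of $s$ equals $\max\{n \colon (n-1)\|q\alpha\| < |[s]|\}$, and the whole problem reduces to computing $|[s]|$ and comparing it with $\|q\alpha\|$.

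The key lemma is then the length computation: for a (semi)standard word of slope $\alpha$ I claim
\[
  |[s_k]| = \|q_{k-1}\alpha\| + \|q_k\alpha\| \ \ (k \geq 2), \qquad |[s_{k,\ell}]| = \|q_{k-1}\alpha\| + \|q_{k,\ell}\alpha\|.
\]
Since $s$ is left special and is a prefix of $c_\alpha = s_{\alpha,\alpha}$, the intercept $\alpha$ lies in the interior of $[s]$ (it is never a partition point), so $R^{-1}([s])$ is the arc around $0$ delimited by the two partition points $\{-j\alpha\}$, $2 \leq j \leq q+1$, nearest to $0$ on either side; rotation invariance gives $|[s]| = |R^{-1}([s])|$. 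The point $\{-q_{k-1}\alpha\}$ realizes the minimal distance $\|q_{k-1}\alpha\|$ by \eqref{eq:min_distance} and lies on one side of $0$, while \autoref{prp:closest} shows that on the opposite side no point $\{-j\alpha\}$ with $j < q_{k,\ell}$ is closer to $0$ than $\{-q_{k,\ell}\alpha\}$ (for $s_k$ one takes $\ell = a_k$, so $q_{k,a_k} = q_k$). A short separate check disposes of the single remaining index $j = q+1$, whose point is pushed a distance roughly $\alpha$ away from $0$ by the extra rotation and so falls outside the arc. This pins down the endpoints and yields the stated length.

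Finally I would substitute. Writing \eqref{eq:distance_difference} in the form $\|q_{k-1}\alpha\| = a_{k+1}\|q_k\alpha\| + \|q_{k+1}\alpha\|$ with $0 < \|q_{k+1}\alpha\| < \|q_k\alpha\|$ gives $|[s_k]|/\|q_k\alpha\| \in (a_{k+1}+1, a_{k+1}+2)$, so the index of $s_k$ is $a_{k+1}+2$ for $k \geq 2$; indeed $[s_k^{a_{k+1}+2}]$ then has length exactly $\|q_{k+1}\alpha\| > 0$, while the next power forces a negative length. For $s_{k,\ell}$ with $\ell < a_k$, iterating \eqref{eq:distance_difference} gives $\|q_{k,\ell}\alpha\| \geq \|q_{k,a_k-1}\alpha\| = \|q_k\alpha\| + \|q_{k-1}\alpha\| > \|q_{k-1}\alpha\|$, so $|[s_{k,\ell}]|/\|q_{k,\ell}\alpha\| \in (1,2)$ and the index is $2$; note this is precisely where the hypothesis $\ell < a_k$ enters. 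The case $k = 1$ has to be treated on its own, because the nearest available partition point on one side is $\{-(q_1+1)\alpha\}$ rather than $\{-q_0\alpha\}$; a direct computation gives $|[s_1]| = \|q_1\alpha\| + \|(q_1+1)\alpha\| = \alpha$, whence the ratio lies in $(a_2, a_2+1)$ and the index is $a_2+1$.

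I expect the interval-length lemma to be the main obstacle: specifically, justifying that the arc $R^{-1}([s])$ is delimited exactly by the points indexed $q_{k-1}$ and $q_{k,\ell}$ (this is exactly the job of \autoref{prp:closest}), and then cleanly disposing of the boundary index $q+1$ as well as the anomalous case $k=1$, where the uniform description breaks down.
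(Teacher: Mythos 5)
Your proposal is correct, but note that the paper contains no proof of \autoref{prp:standard_index} to compare against: the proposition is imported from the literature (Theorems 3 and 4 of the cited 2003 paper on powers in Sturmian sequences, or Theorem 4.5 of the cited 2015 paper on repetitions). Relative to the paper your argument is therefore a genuinely different route — a self-contained proof assembled from the toolkit the paper sets up in its preliminaries — and it is, in essence, the standard argument used in those references. All three of your steps are sound: the reduction of the index to $\max\{n\colon (n-1)\|q\alpha\| < |[s]|\}$ via $[s^n] = \bigcap_{j=0}^{n-1} R^{-jq}([s])$ works (the translates move by the same small amount in a common direction, and no wrap-around occurs at the relevant scale); the interval-length lemma is correct, and is in fact implicitly confirmed by the paper itself, since the proof of \autoref{prp:endpoint_alpha_characterization} records $|[s_{k,\ell}]| = \|q_{k,\ell-1}\alpha\|$, which by \eqref{eq:distance_difference} equals your $\|q_{k-1}\alpha\| + \|q_{k,\ell}\alpha\|$; and the final arithmetic with $\|q_{k-1}\alpha\| = a_{k+1}\|q_k\alpha\| + \|q_{k+1}\alpha\|$ gives $a_{k+1}+2$, $2$, and $a_2+1$ exactly as stated. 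The delicate points you flag are real but fillable with tools already in the paper: $0$ is interior to $R^{-1}([s])$ because $\alpha$ is irrational and hence never a partition point; \eqref{eq:min_distance} and \autoref{prp:closest} pin down the two delimiting points, the key observation being that the multiples $mq_{k-1}$ produced by \autoref{prp:closest} all lie on the same side of $0$ as $\{-q_{k-1}\alpha\}$, which excludes them from the opposite side; and the stray index $j = q+1$ is disposed of by $\|(q+1)\alpha\| \geq \alpha - \|q\alpha\|$ combined with $\|q_{k-1}\alpha\| + \|q_{k,\ell}\alpha\| = \|q_{k,\ell-1}\alpha\| \leq \|q_{k-2}\alpha\| \leq \alpha$ and a short check of which side of $0$ the point lands on (this is where the standing assumption $\alpha < \frac{1}{2}$ enters). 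As for what each route buys: the paper's citation keeps the exposition short, while your proof makes the statement self-contained, relies only on \eqref{eq:min_distance}, \eqref{eq:distance_difference}, and \autoref{prp:closest}, and yields as a by-product the exact intervals of the (semi)standard words — information the paper needs anyway in \autoref{sec:square_root_condition}.
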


  \section{The Square Root Map}\label{sec:square_root_map}
  In \cite{2010:everywhere_alpha-repetitive_sequences_and_sturmian_words} Saari observed that every Sturmian word with
  slope $\alpha = [0;a_1,a_2,\ldots]$ is an optimal squareful word with parameters $\oa = a_1 - 1$ and $\ob = a_2 - 1$.
  The assumption $0 < \alpha < \frac12$ implies that $0^2 \in \Lang(\alpha)$, so the six minimal squares in
  $\Lang(\alpha)$ are the same as in \eqref{eq:min_squares}. In particular, Saari's result means that every Sturmian
  word can be (uniquely) written as a product of the six \emph{minimal squares of slope $\alpha$}
  \eqref{eq:min_squares}. Thus the square root map introduced next is well-defined.

  \begin{definition}
    Let $s$ be a Sturmian word with slope $\alpha$ and factorize it as a product of minimal squares
    $s = X_1^2 X_2^2 X_3^2 \cdots.$ The \emph{square root of $s$} is then defined to be the word
    $\sqrt{s} = X_1 X_2 X_3 \cdots$.
  \end{definition}

  Let us consider as an example the famous Fibonacci word $f$. The Fibonacci word is a Sturmian word of slope
  $[0;2,1,1,\ldots]$, so it has parameters $\oa = 1$ and $\ob = 0$. It is also the fixed point of the substitution
  $0\mapsto 01, 1\mapsto 0$. For more information, see for instance \cite{2002:algebraic_combinatorics_on_words}. We
  have that
  \begin{align*}
    f        &= (010)^2 (100)^2 (10)^2 (01)^2 0^2 (10010)^2 (01)^2 \cdots \, \text{ and } \\
    \sqrt{f} &= 010\cdot100\cdot10\cdot01\cdot0\cdot10010\cdot01\cdots.
  \end{align*}

  Note that a square root map can be defined for any optimal squareful word. However, now we only focus on Sturmian
  words; we study later the square root map for other optimal squareful words in \autoref{sec:counter_example}.

  We aim to prove the surprising fact that given a Sturmian word $s$ the word $\sqrt{s}$ is also a Sturmian word
  having the same slope as $s$. Moreover, knowing the intercept of $s$, we can compute the intercept of $\sqrt{s}$.

  In the proof we need a special function $\psi: \T \to \T$ defined as follows. For $x \in (0,1)$ we set
  \begin{align*}
    \psi(x) = \frac12 (x+1-\alpha),
  \end{align*}
  and we set
  \begin{align*}
    \psi(0) = \begin{cases}
                \frac12 (1-\alpha), &\text{ if } I_0 = [0,1-\alpha), \\
                1-\frac\alpha 2,    &\text{ if } I_0 = (0,1-\alpha].
              \end{cases}
  \end{align*}
  The mapping $\psi$ moves a point $x$ on the circle $\T$ towards the point $1-\alpha$ by halving the distance between
  the points $x$ and $1-\alpha$. The distance to $1-\alpha$ is measured in the interval $I_0$ or $I_1$ depending on
  which of these intervals the point $x$ belongs to.

  We can now state the result.

  \begin{theorem}\label{thm:square_root}
    Let $s_{x,\alpha}$ be a Sturmian word with slope $\alpha$ and intercept $x$. Then
    $\sqrt{s_{x,\alpha}} = s_{\psi(x),\alpha}$. In particular, $\sqrt{s_{x,\alpha}}$ is a Sturmian word with slope
    $\alpha$.
  \end{theorem}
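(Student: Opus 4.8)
The plan is to recast everything in terms of the rotation dynamics and reduce the theorem to two facts: a commutation relation between $\psi$ and the rotation $R$, and the ``square root condition'' for the six minimal squares. Write $s_{x,\alpha} = X_1^2 X_2^2 X_3^2 \cdots$ for the minimal-square factorization, and let $y_1 = x$ and $y_{i+1} = R^{2|X_i|}(y_i)$ be the orbit point at which the $i$-th square begins; thus $y_i \in [X_i^2]$ and $X_i \in \{S_1,\ldots,S_6\}$ is determined by which minimal-square interval $y_i$ lies in. Setting $z_i = \psi(y_i)$, I want to show that $s_{\psi(x),\alpha}$ reads off exactly $X_1 X_2 X_3 \cdots$. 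This follows once I prove (a) $z_{i+1} = R^{|X_i|}(z_i)$ and (b) $z_i \in [X_i]$, i.e.\ the length-$|X_i|$ prefix of $s_{z_i,\alpha}$ is $X_i$: then $s_{\psi(x),\alpha}$ starts at $z_1 = \psi(x)$, its first $|X_1|$ letters are $X_1$ by (b), its continuation starts at $z_2$ by (a) and contributes $X_2$, and so on, giving $s_{\psi(x),\alpha} = X_1 X_2 \cdots = \sqrt{s_{x,\alpha}}$ by induction on $i$.

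The commutation relation (a) is the point where reading \emph{squares} is essential. Algebraically $\psi(y) + |X_i|\alpha = \tfrac12\bigl((y + 2|X_i|\alpha) + 1 - \alpha\bigr)$, so $\psi(\{y + 2n\alpha\}) = \{\psi(y) + n\alpha\}$ holds precisely when $\lfloor y + 2n\alpha\rfloor$ is even. But $\lfloor y_i + 2|X_i|\alpha\rfloor$ counts the wraps of the orbit over the first $2|X_i|$ steps, which equals the number of occurrences of the letter $1$ in the prefix read, namely in $X_i^2$; and $X_i^2$ has evenly many $1$'s. Hence the parity is even and (a) follows. I would phrase this using the fact that the number of $1$'s in $s_{y,\alpha}[0,m-1]$ equals $\lfloor y + m\alpha\rfloor$ for $y \in (0,1)$, pausing to check the two boundary conventions for $I_0, I_1$ and the piecewise value of $\psi$ at $0$.

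Condition (b) is the square root condition $\psi([X^2]) \subseteq [X]$ restricted to $X \in \{S_1,\ldots,S_6\}$, and this is where the real work lies. For each of the six minimal squares of \autoref{prp:min_square_roots} I would compute the endpoints of $[S_i]$ and $[S_i^2]$ explicitly in terms of the rotation; these endpoints are points $\{-m\alpha\}$ whose positions and mutual distances are governed by the denominators $q_k, q_{k,\ell}$ and the quantities $\|q_k\alpha\|$, and then verify that $\psi$ (a contraction by $\tfrac12$ towards $1-\alpha$) sends $[S_i^2]$ inside $[S_i]$. The squares $S_1,\ldots,S_4$ are short and reduce to elementary inequalities in $\alpha$ (for $S_1 = 0$ one gets $\psi([00]) = [\tfrac{1-\alpha}{2}, 1-\tfrac{3\alpha}{2}) \subseteq [0,1-\alpha) = [S_1]$), whereas $S_5$ and $S_6$ involve the parameter $\ob = a_2 - 1$ and require the Diophantine estimates of \autoref{prp:closest} and \eqref{eq:distance_difference} to locate their intervals precisely.

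I expect (b) --- the six explicit interval containments, and especially those for $S_5$ and $S_6$ --- to be the main obstacle, both because it is computational and because the endpoint and boundary cases (the special points $\{-m\alpha\}$ and the two choices of half-open intervals defining $I_0, I_1$) must be tracked carefully, so that the inclusion $\psi([X^2]) \subseteq [X]$, and with it the identity $z_i \in [X_i]$, holds without exception rather than merely off the countable set of orbit points where the coding is ambiguous.
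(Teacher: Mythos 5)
Your proposal is correct, and its overall skeleton is the same as the paper's: the paper's proof of \autoref{thm:square_root} likewise reduces everything to exactly your facts (a) and (b) for the six minimal squares (this is \autoref{lem:min_square_root_formula}) and then iterates them block by block. The genuine difference is in how you establish the commutation relation (a). The paper proves $\psi(\{x+2|S_i|\alpha\}) = \{\psi(x)+|S_i|\alpha\}$ on all of $[S_i^2]$ by a case-by-case Diophantine computation: for each $i$ it shows, using \eqref{eq:distance_difference} and estimates on $\|q_k\alpha\|$, that $x+2|S_i|\alpha$ lies in an interval of the form $(2p,2p+1]$, so that the relevant floor is even. You replace this with a uniform counting observation: $\lfloor y+2n\alpha\rfloor$ is the number of $1$'s in the length-$2n$ prefix of $s_{y,\alpha}$, and that prefix is the square $X_i^2$, which has evenly many $1$'s. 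This is slicker and more general --- it works for any square prefix, not only the six minimal ones, and it needs the relation only at the actual orbit points $y_i$ rather than on the whole interval $[X_i^2]$ --- at the cost of verifying the counting identity under both coding conventions, which is exactly where the paper, too, must treat the special intercepts $\{-n\alpha\}$ and the piecewise value $\psi(0)$ separately; you flag this correctly. For (b) you do the same thing as the paper (compute $[S_i]$ and $[S_i^2]$ explicitly and use that $\psi$ contracts towards $1-\alpha$ without crossing it), though your expectation that (b) is the main obstacle is slightly miscalibrated relative to the paper: there, once the intervals are seen to have $1-\alpha$ as an endpoint, the square root condition is immediate, and the genuine Diophantine effort sits in the commutation --- precisely the part your parity argument makes trivial. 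In effect your plan redistributes the computational burden, isolating all continued-fraction work in the interval computations, and loses nothing.
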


  For a combinatorial version of the above theorem see \autoref{thm:combinatorial_version} in
  \autoref{sec:combinatorial_version}.

  The main idea of the proof is to demonstrate that the square root map is actually the symbolic counterpart of the
  function $\psi$. We begin with a definition.

  \begin{definition}
    A square $w^2 \in \Lang(\alpha)$ satisfies the \emph{square root condition} if $\psi([w^2]) \subseteq [w]$.
  \end{definition}

  Note that if the interval $[w]$ in the above definition has $1-\alpha$ as an endpoint, then $w$ automatically
  satisfies the square root condition. This is because $\psi$ moves points towards the point $1-\alpha$ but does not
  map them over this point. Actually, if $w$ satisfies the square root condition, then necessarily the interval $[w]$
  has $1-\alpha$ as an endpoint (see \autoref{cor:interval_endpoint}).

  We will only sketch the proof of the following lemma.

  \begin{figure}
  \centering
  \begin{tikzpicture}
    \draw (0,0) circle (1.5);
    \filldraw[red] (0:1.5) circle(0.8pt);
    \filldraw[black] (85:1.5) circle(0.8pt);
    \filldraw[black] (170:1.5) circle(0.8pt);
    \filldraw[red] (222.5:1.5) circle(0.8pt);
    \filldraw[black] (254.95:1.5) circle(0.8pt);
    \filldraw[black] (307.5:1.5) circle(0.8pt);
    \draw (85:1.3) arc (85:307.5:1.3);
    \filldraw[black] (85.5:1.3) circle(0.8pt);
    \filldraw[black] (222.5:1.3) circle(0.8pt);
    \filldraw[black] (307.5:1.3) circle(0.8pt);
    \node at (1.8,0) {\footnotesize{$0$}};
    \node at (-1.6,-1.1) {\footnotesize{$1-\alpha$}};
    \node at (1.5,1.2) {\footnotesize{$[S_1^2]$}};
    \node at (-1.4,1.3) {\footnotesize{$[S_3^2]$}};
    \node at (-1.8,-0.45) {\footnotesize{$[S_2^2]$}};
    \node at (-1.0,-1.6) {\footnotesize{$[S_5^2]$}};
    \node at (0.55,-1.7) {\footnotesize{$[S_6^2]$}};
    \node at (1.7,-0.8) {\footnotesize{$[S_4^2]$}};
    \node at (-0.65,0.7) {\footnotesize{$[S_2]$}};
    \node at (-0.1,-1.0) {\footnotesize{$[S_5]$}};
  \end{tikzpicture}
  \caption{The positions of the intervals on the circle in the proof sketch of
           \autoref{lem:min_square_root_formula}.}
  \label{fig:locations}
  \end{figure}
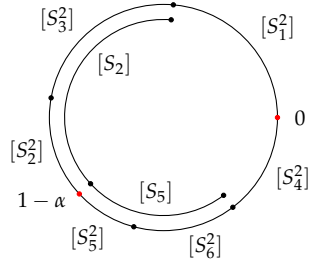


  \begin{lemma}\label{lem:min_square_root_formula}
    For every $i \in \{1,\ldots,6\}$ the minimal square root $S_i$ of slope $\alpha$ satisfies the square root
    condition and $\psi(\{x+2|S_i|\alpha\}) = \{\psi(x) + |S_i|\alpha\}$ for all $x \in [S_i^2]$.
  \end{lemma}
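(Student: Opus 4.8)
The plan is to treat the two assertions separately, recording first some elementary features of $\psi$. On $(0,1)$ the map $\psi(x)=\tfrac12(x+1-\alpha)$ is affine with slope $\tfrac12$; it fixes $1-\alpha$, sends every point strictly toward $1-\alpha$ by halving the distance, and maps $(0,1)$ into $(\tfrac{1-\alpha}{2},1-\tfrac\alpha2)\subset(0,1)$, so it crosses neither $0$ nor $1-\alpha$. I also record the following wrap-counting fact, which I will use twice: for $y\in(0,1)$ and $m\geq1$ with $y+m\alpha\notin\Z$, the integer $\lfloor y+m\alpha\rfloor$ equals the number of occurrences of the letter $1$ among the first $m$ letters of $s_{y,\alpha}$. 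Indeed, $\lfloor y+m\alpha\rfloor$ counts the integers in $(y,y+m\alpha]$, i.e. the number of steps $R^{j-1}(y)\mapsto R^{j}(y)$ with $1\leq j\leq m$ that wrap past $0$; such a wrap occurs exactly when $R^{j-1}(y)\in I_1=[1-\alpha,1)$, that is, exactly when the letter in position $j-1$ is a $1$.

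For the square root condition I would first show that each interval $[S_i]$ has $1-\alpha$ as an endpoint. This is a direct computation: using $[w]=I_{a_0}\cap R^{-1}(I_{a_1})\cap\cdots\cap R^{-(|w|-1)}(I_{a_{|w|-1}})$ one locates the two endpoints of $[S_i]$ among the partition points $0,\{-\alpha\},\ldots,\{-|S_i|\alpha\}$ and checks that $\{-\alpha\}=1-\alpha$ is one of them; the positions of all the relevant intervals are recorded in \autoref{fig:locations}. Granting this, the condition follows formally: $[S_i^2]\subseteq[S_i]$ trivially, and because $[S_i]\subseteq I_{a_0}$ is a half-open interval with $1-\alpha$ as one endpoint, $\psi$ pulls every one of its points toward that endpoint and hence maps $[S_i]$ into itself. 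Therefore $\psi([S_i^2])\subseteq\psi([S_i])\subseteq[S_i]$.

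For the commutation formula, fix $x\in[S_i^2]$, write $n=|S_i|$, and let $h$ be the number of letters $1$ in $S_i$. Since $s_{x,\alpha}$ begins with $S_i^2$, its first $2n$ letters contain $2h$ ones, so the wrap-counting fact gives $\lfloor x+2n\alpha\rfloor=2h$, an \emph{even} integer; thus $\{x+2n\alpha\}=x+2n\alpha-2h$ and the affine formula yields $\psi(\{x+2n\alpha\})=\tfrac12\big((x+2n\alpha-2h)+1-\alpha\big)=\psi(x)+n\alpha-h$. On the other hand, the square root condition just proved gives $\psi(x)\in[S_i]$, so $s_{\psi(x),\alpha}$ begins with $S_i$, its first $n$ letters contain $h$ ones, and the wrap-counting fact gives $\lfloor\psi(x)+n\alpha\rfloor=h$; hence $\{\psi(x)+n\alpha\}=\psi(x)+n\alpha-h$. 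Comparing the two expressions proves $\psi(\{x+2n\alpha\})=\{\psi(x)+n\alpha\}$.

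The main obstacle is the first step of the square root condition, namely verifying by hand that $1-\alpha$ is an endpoint of each of the six intervals $[S_i]$; this is a finite but genuinely case-dependent computation in the parameters $\oa$ and $\ob$ (equivalently $a_1$ and $a_2$), which is why it is only sketched and summarized by the figure. One must also track the conventions for $I_0,I_1$ and the exceptional points $x=0$ and $x+m\alpha\in\Z$, though these do not affect the generic argument. By contrast the commutation step is uniform: its entire content is that the winding number $\lfloor x+2n\alpha\rfloor$ equals twice the number of ones in $S_i$ and is therefore even, which is exactly what is needed to reconcile the slope-$\tfrac12$ map $\psi$ with reduction modulo $1$.
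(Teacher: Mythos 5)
Your handling of the square root condition coincides with the paper's: both reduce it to the finite computation that every $[S_i]$ has $1-\alpha$ as an endpoint (\autoref{fig:locations}), together with the observation that $\psi$ never moves a point across $1-\alpha$. The commutation formula is where you genuinely diverge, and your route is correct in the generic case and more uniform than the paper's. The paper establishes, separately for each $i$ and via continued-fraction inequalities (for instance $x+2q_2\alpha\in(2p_2,2p_2+1]$ when $i=5$), that $\lfloor x+2|S_i|\alpha\rfloor$ is an even integer $2r$, after which the affine formula for $\psi$ gives the claim. Your wrap-counting fact --- that $\lfloor y+m\alpha\rfloor$ is the number of $1$s among the first $m$ letters of $s_{y,\alpha}$ --- yields this evenness for all six squares simultaneously, namely $\lfloor x+2|S_i|\alpha\rfloor=2h$ with $h$ the number of $1$s in $S_i$, eliminating the case analysis over $i$ entirely; this is essentially the mechanical-word description of Sturmian words and is a real simplification. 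One small repair: your second invocation of wrap counting, applied to $s_{\psi(x),\alpha}$, tacitly assumes $\psi(x)+|S_i|\alpha\notin\Z$; it is cleaner to drop it and note instead that $\psi(x)+|S_i|\alpha-h=\psi(\{x+2|S_i|\alpha\})\in[0,1)$, which already identifies this number as $\{\psi(x)+|S_i|\alpha\}$ (this is what the paper means by ``since $\psi$ is a function from $\T$ to $\T$'').

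However, waving off the exceptional points is a genuine gap, not bookkeeping. The points $x=0$ and $x=\{-2|S_i|\alpha\}$ do lie in $[S_i^2]$ for certain $i$, depending on which endpoint convention is in force: for example $0\in[S_1^2]$ when $I_0=[0,1-\alpha)$, while $0\in[S_4^2]$ and $1-2\alpha\in[S_1^2]$ when $I_0=(0,1-\alpha]$, and $\{-(q_{3,1}+1)\alpha\}\in[S_5^2]$ when $a_2=1$. At these points your central claim is simply false: for $x=0\in[S_4^2]$ (so $I_1=(1-\alpha,1]$) one computes $\lfloor 2|S_4|\alpha\rfloor=\lfloor 2a_1\alpha\rfloor=1$, which is odd, not $2h=2$. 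The lemma still holds there, but only because the special value $\psi(0)=1-\frac{\alpha}{2}$ prescribed by that convention compensates exactly for the odd winding number; under the other convention $\psi(0)=\frac12(1-\alpha)$ and a different compensation occurs. Verifying these finitely many points is precisely the content of the paper's case analysis for $x=0$ and $x=1-2\alpha$ (case $i=1$) and for $a_2=1$, $x=\|(q_{3,1}+1)\alpha\|$ (case $i=5$), and your proof needs the analogous supplement before it covers all of $[S_i^2]$ as the statement requires.
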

  \begin{proof}[Proof Sketch]
    It is straightforward to verify that
    \begin{alignat*}{2}
      &[S_1] = I(0,1-\alpha),               && [S_4] = I(1-\alpha,1), \\
      &[S_2] = I(-2\alpha,1-\alpha),        && [S_5] = I(1-\alpha,-q_{2,1}\alpha), \\
      &[S_3] = I(-2\alpha,1-\alpha), \quad  && [S_6] = I(1-\alpha,-q_{2,1}\alpha)
    \end{alignat*}
    and
    \begin{alignat*}{2}
      &[S_1^2] = I(0,-2\alpha),                        && [S_4^2] = I(-q_{2,1}\alpha,1), \\
      &[S_2^2] = I(-(q_{2,1}+1)\alpha,1-\alpha),       && [S_5^2] = I(1-\alpha,-(q_{3,1}+1)\alpha), \\
      &[S_3^2] = I(-2\alpha,-(q_{2,1}+1)\alpha), \quad && [S_6^2] = I(-(q_{3,1}+1)\alpha,-q_{2,1}\alpha),
    \end{alignat*}
    see \autoref{fig:locations}. Since $\psi$ does not map points over the point $1-\alpha$, it is evident that every
    minimal square root satisfies the square root condition.

    Consider then the latter claim. Let $i \in \{1,\ldots,6\}$. Suppose that $x \in [S_i^2]\setminus\{0\}$,
    $\{x+2|S_i|\alpha\} \neq 0$, and $\lfloor x+2|S_i|\alpha \rfloor = 2r$ for some $r \geq 0$. Then
    \begin{align}\label{eq:psi_formula}
      \psi(\{x+2|S_i|\alpha\}) = \frac12 (x+2|S_i|\alpha - 2r + 1 - \alpha) = \psi(x)+|S_i|\alpha-r = \{\psi(x)+|S_i|\alpha\}
    \end{align}
    since $\psi$ is a function from $\T$ to $\T$. We consider next the cases $i = 1$ and $i = 5$; the other cases are
    similar.

    Suppose that $S_i = S_1$. Now $x+2\alpha \geq 2\alpha > 0 = 2p_0$ and
    $x+2\alpha \leq 1-2\alpha + 2\alpha = 1 = 2p_0 + 1$, so $x+2\alpha \in (2p_0, 2p_0+1]$. The claim is thus clear
    as in \eqref{eq:psi_formula} if $x \neq 0$ and $x \neq 1-2\alpha$. If $x = 0$, then $I_0 = [0,1-\alpha)$
    and $\{\psi(x)+\alpha\} = \{\frac12 (1-\alpha) + \alpha\} = \frac12 (1+\alpha) = \psi(\{x+2\alpha\})$. If
    $x = 1-2\alpha$, then $I_0 = (0,1-\alpha]$ and $\psi(\{x+2\alpha\}) = 1-\frac\alpha 2 = \{\psi(x)+\alpha\}$.

    Assume then that $S_i = S_5$. Note that $|S_5| = q_2$. Using \eqref{eq:distance_difference} we obtain that
    \begin{align*}
      x+2q_2\alpha &\leq \|(q_{3,1}+1)\alpha\| + 2q_2\alpha \\
                   &= 1-\alpha+\|q_{3,1}\alpha\|+2p_2+2\|q_2\alpha\| \\
                   &= 1-\alpha+\|q_1\alpha\|-\|q_2\alpha\|+2p_2+2\|q_2\alpha\| \\
                   &= 1-\alpha+\|q_1\alpha\|+\|q_2\alpha\|+2p_2 \\
                   &\leq 2p_2+1,
    \end{align*}
    where equality holds only if $x = \|(q_{3,1}+1)\alpha\|$ and $a_2 = 1$. The length of the interval $[S_5^2]$ is
    $\|q_{3,1}\alpha\|$. Since $1-\alpha \geq \alpha + \|q_1\alpha\|$ and $\alpha > \|q_1\alpha\| > \|q_2\alpha\|$, it
    follows from the preceding inequalities that $x+2q_2\alpha > 2p_2$. Therefore $x+2q_2\alpha \in (2p_2, 2p_2+1]$. If
    $a_2 > 1$ or $x \neq \|(q_{3,1}+1)\alpha\|$, then the conclusion follows as in \eqref{eq:psi_formula}. Suppose
    finally that $a_2 = 1$ and $x = \|(q_{3,1}+1)\alpha\|$. Now $I_0 = (0,1-\alpha]$, so
    $\psi(\{x+2q_2\alpha\}) = \psi(0) = 1-\frac\alpha 2$. On the other hand,
    \begin{align*}
      \psi(x)+q_2\alpha &= \frac12 (1-\alpha+\|q_{3,1}\alpha\|+1-\alpha) + p_2 + \|q_2\alpha\| \\
                        &= \frac12 (1-\alpha+\|q_1\alpha\|-\|q_2\alpha\|+1-\alpha+2\|q_2\alpha\|)+p_2 \\
                        &= 1-\frac\alpha 2 + p_2,
    \end{align*}
    so the conclusion holds also in this case.
  \end{proof}

  \begin{proof}[Proof of \autoref{thm:square_root}]
    Write $s_{x,\alpha} = X_1^2 X_2^2 X_3^2 \cdots$ as a product of minimal squares. Since the minimal square $X_1^2$
    satisfies the square root condition by \autoref{lem:min_square_root_formula}, we have that $\psi(x) \in [X_1]$.
    Hence both $\sqrt{s_{x,\alpha}}$ and $s_{\psi(x),\alpha}$ begin with $X_1$. \autoref{lem:min_square_root_formula}
    implies that $\psi(\{x + 2|X_1|\alpha\}) = \{\psi(x) + |X_1|\alpha\}$ for all $x \in [X_1^2]$. Thus by shifting
    $s_{x,\alpha}$ the amount $2|X_1|$ and by applying the preceding reasoning, we conclude that $s_{\psi(x),\alpha}$
    shifted by the amount $|X_1|$ begins with $X_2$. Therefore the words $\sqrt{s_{x,\alpha}}$ and $s_{\psi(x),\alpha}$
    agree on their first $|X_1| + |X_2|$ letters. By repeating this procedure, we conclude that
    $\sqrt{s_{x,\alpha}} = s_{\psi(x),\alpha}$.
  \end{proof}

  \autoref{thm:square_root} allows us to effortlessly characterize the Sturmian words which are fixed points of the
  square root map.

  \begin{corollary}\label{cor:fixed_points}
    The only Sturmian words of slope $\alpha$ which are fixed by the square root map are the two words $01c_\alpha$ and
    $10c_\alpha$, both having intercept $1-\alpha$.
  \end{corollary}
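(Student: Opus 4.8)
The plan is to use \autoref{thm:square_root} to translate the fixed-point condition on words into a fixed-point condition on intercepts. A Sturmian word $s_{x,\alpha}$ is a fixed point of the square root map precisely when $\sqrt{s_{x,\alpha}} = s_{x,\alpha}$, which by \autoref{thm:square_root} reads $s_{\psi(x),\alpha} = s_{x,\alpha}$. Since two Sturmian words of the same slope with distinct intercepts are distinct, the equality forces $\psi(x) = x$, and conversely $\psi(x)=x$ makes the word a fixed point via the theorem. Thus the first step is simply to determine the fixed points of the map $\psi$ on $\T$.

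Second, I would solve $\psi(x) = x$ directly. For $x \in (0,1)$ the equation $\tfrac12(x+1-\alpha) = x$ has the unique solution $x = 1-\alpha$. The point $x = 0$ must be treated separately because $\psi$ is defined there by a different formula; since $\psi(0)$ equals either $\tfrac12(1-\alpha)$ or $1-\tfrac\alpha2$, and $0 < \alpha < \tfrac12$, neither value is $0$, so $0$ is not a fixed point. Hence $x = 1-\alpha$ is the unique fixed point of $\psi$.

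The third and most delicate step is to identify the Sturmian word(s) with intercept $1-\alpha$. The subtlety is that $1-\alpha = \{-\alpha\}$ is one of the special points $\{-n\alpha\}$ whose coding depends on the choice of the half-open intervals $I_0, I_1$; consequently there are two Sturmian words of slope $\alpha$ with this intercept, one for each convention. I would compute the orbit of $1-\alpha$ explicitly: for $n \geq 1$ one has $\{(1-\alpha) + n\alpha\} = \{(n-1)\alpha\}$, so the letters of $s_{1-\alpha,\alpha}$ from position $2$ onward coincide with those of $c_\alpha = s_{\alpha,\alpha}$, whose $m$-th letter is $\nu(\{(m+1)\alpha\})$. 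This shows $s_{1-\alpha,\alpha} = ab\,c_\alpha$, where $a = \nu(1-\alpha)$ and $b = \nu(0)$ are the codings of the two boundary points $1-\alpha$ and $0$. Evaluating $a$ and $b$ under each of the two conventions ($I_0 = [0,1-\alpha)$ versus $I_0 = (0,1-\alpha]$) yields $ab = 10$ and $ab = 01$ respectively, giving exactly the two words $10c_\alpha$ and $01c_\alpha$.

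I expect the main obstacle to be precisely this boundary bookkeeping in the third step: the unique $\psi$-fixed point lands exactly on a special point of the rotation, so one cannot avoid carefully tracking the coding function $\nu$ at the two endpoints $0$ and $1-\alpha$ for each convention. Once this is done, both words are seen to be genuine fixed points, since $\psi(1-\alpha) = 1-\alpha$ is obtained from the interior formula (which does not branch on the convention), and the characterization is complete.
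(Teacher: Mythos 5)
Your proposal is correct and follows essentially the same route as the paper: reduce via \autoref{thm:square_root} to the fixed points of $\psi$, observe that $1-\alpha$ is the unique one, and note that this intercept yields the two words $01c_\alpha$ and $10c_\alpha$ depending on the choice of the half-open intervals $I_0$ and $I_1$. The paper's proof is just a terser version of this argument; your extra bookkeeping (checking $x=0$, computing the orbit of $1-\alpha$ explicitly) only fills in details the paper leaves implicit.
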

  \begin{proof}
    The only fixed point of the map $\psi$ is the point $1-\alpha$. Having this point as an intercept, we obtain two
    Sturmian words: either $01c_\alpha$ or $10c_\alpha$, depending on which of the intervals $I_0$ and $I_1$ the point
    $1-\alpha$ belongs to.
  \end{proof}

  The set $\{01c_\alpha, 10c_\alpha\}$ is not only the set of fixed points but also the unique attractor of the square
  root map in the set of Sturmian words of slope $\alpha$. When iterating the square root map on a fixed Sturmian word
  $s_{x,\alpha}$, the obtained word has longer and longer prefixes in common with either of the words $01c_\alpha$ and
  $10c_\alpha$ because $\psi^n(x)$ tends to $1-\alpha$ as $n$ increases.

  \section{One Characterization of Words Satisfying the Square Root Condition}\label{sec:square_root_condition}
  In the previous section we saw that the minimal squares, which satisfy the square root condition, were crucial in
  proving that the square root of a Sturmian word is again Sturmian with the same slope. The minimal squares of slope
  $\alpha$ are not the only squares in $\Lang(\alpha)$ satisfying the square root condition; in this section we will
  characterize combinatorially such squares. To be able to state the characterization, we need to define
  \begin{align*}
    \rst{\alpha} = \{\mirror{w}\colon w \in \st{\alpha}\},
  \end{align*}
  the set of reversed standard words of slope $\alpha$. Similarly we set
  \begin{align*}
    \rsst{\alpha} = \{\mirror{w}\colon w \in \sst{\alpha}\}.
  \end{align*}
  We also need the operation $L$ which exchanges the first two letters of a word (we do not apply this operation to too
  short words).

  The main result of this section is the following.

  \begin{theorem}\label{thm:square_root_condition_iff_rsst}
    A square $w^2 \in \Lang(\alpha)$ satisfies the square root condition if and only if
    $w \in \rsst{\alpha} \cup L(\rst{\alpha})$.
  \end{theorem}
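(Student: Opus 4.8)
The plan is to translate the square root condition into a statement about the location of the interval $[w]$ on the circle and then to read off the admissible words combinatorially. By the remark following the definition of the square root condition together with \autoref{cor:interval_endpoint}, a square $w^2 \in \Lang(\alpha)$ satisfies the square root condition if and only if the interval $[w]$ has $1-\alpha$ as an endpoint. Now $1-\alpha = \{-\alpha\}$ is one of the points $0, \{-\alpha\}, \ldots, \{-n\alpha\}$ partitioning $\T$ into the length-$n$ intervals, where $n = |w|$; hence there are exactly two factors of length $n$ whose interval abuts $1-\alpha$, namely the word $u^-$ with interval on the $I_0$ side and the word $u^+$ with interval on the $I_1$ side. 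Since \autoref{prp:square_length} forces $n \in \{q_0, q_1\}$ or $n = q_{k,\ell}$, the whole theorem reduces to identifying $u^-, u^+$ for each such length and deciding which of them has a square in $\Lang(\alpha)$.

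I would first record two structural facts. As $1-\alpha$ is at once the $I_0\mid I_1$ boundary (governing the letter in position $0$) and the point $R^{-1}(0)$ (governing the letter in position $1$), crossing $1-\alpha$ flips both the first and the second letter; because $11 \notin \Lang(\alpha)$ this forces $u^- = 01\cdots$ and $u^+ = 10\cdots$, so $u^+ = L(u^-)$ (this fails only for the degenerate length $q_0 = 1$, where the unique length-$1$ square root is $0 = \mirror{s_0}$ and $L$ does not apply). Secondly, reversal of factors is realized on intervals by the reflection $g(x) = \{-x-n\alpha\}$, i.e.\ $[\mirror{v}] = g([v])$ for every factor $v$ of length $n$; this follows from $[v] = \bigcap_j R^{-j}(I_{a_j})$ and $\nu(\{1-\alpha-t\}) = \nu(t)$. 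For each admissible $n$ let $s$ be the (semi)standard word of that length. Being a prefix of $c_\alpha$, it is the left special factor of length $n$, so $\mirror{s}$ is the right special factor of length $n$ and $[\mirror{s}]$ is the interval containing $\{-(n+1)\alpha\}$.

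The geometric heart is to show that $1-\alpha$ is an endpoint of $[\mirror{s}]$, equivalently that $\{-\alpha\}$ is one of the two partition points neighbouring $\{-(n+1)\alpha\}$. Since the distance from $\{-(n+1)\alpha\}$ to $\{-i\alpha\}$ is $\|(n+1-i)\alpha\|$, the point $\{-\alpha\}$ sits at distance $\|n\alpha\|$ on the side of $\{n\alpha\}$, and it suffices to prove that no partition point on that side is nearer. A nearer point would force an index $j = n+1-i$ with $1 \le j \le n+1$ and $\|j\alpha\| < \|n\alpha\|$; when $n = q_k$ this is impossible for $j < q_{k+1}$ by \eqref{eq:min_distance}, and when $n = q_{k,\ell}$ with $\ell < a_k$ any such $j$ is a multiple of $q_{k-1}$ by \autoref{prp:closest}. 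In both situations the resulting point lies on the side of $\{q_{k-1}\alpha\}$, which is opposite to the side of $\{n\alpha\}$ (the side of $\{q_{k,\ell}\alpha\}$, resp.\ $\{q_k\alpha\}$); hence $\{-\alpha\}$ is the nearest partition point on its own side and is an endpoint of $[\mirror{s}]$. Consequently $\{u^-, u^+\} = \{\mirror{s}, L(\mirror{s})\}$.

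It remains to decide which of $\mirror{s}, L(\mirror{s})$ carries a square, and here the standard and semistandard cases diverge; I expect this to be the main obstacle. When $n = q_k$, mirror-invariance of $\Lang(\alpha)$ and $s_k^2 \in \Lang(\alpha)$ (\autoref{prp:standard_index}) give $\mirror{s_k}^2 = \mirror{s_k^2} \in \Lang(\alpha)$, so $\mirror{s_k}$ is conjugate to $s_k$ by \autoref{prp:conjugate_square}(i); since $L(\mirror{s_k})$ has the same number of $0$'s, it too is conjugate to $s_k$ by \autoref{prp:conjugate_square}(iii) and hence has a square by (i). Thus both candidates lie in $\rst{\alpha} \cup L(\rst{\alpha})$ and both qualify. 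When $n = q_{k,\ell}$ with $\ell < a_k$, the reflection $g$ is an isometry, so $[\mirror{s_{k,\ell}}]$ has the same length as $[s_{k,\ell}]$ and $\mirror{s_{k,\ell}}^2 \in \Lang(\alpha)$ by \autoref{prp:conjugate_square}(ii); on the other hand the analysis above gives $[\mirror{s_{k,\ell}}]$ the length $\|q_{k-1}\alpha\| + \|q_{k,\ell}\alpha\|$, whereas its neighbour $[L(\mirror{s_{k,\ell}})]$ has the strictly smaller length $\|q_{k-1}\alpha\|$ (the distance from $1-\alpha$ to its nearest partition point, again by \eqref{eq:min_distance} and \autoref{prp:closest}), so $L(\mirror{s_{k,\ell}})^2 \notin \Lang(\alpha)$ by \autoref{prp:conjugate_square}(ii). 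Hence for semiconvergent lengths only $\mirror{s_{k,\ell}} \in \rsst{\alpha}$ qualifies. Combining the two directions---every word in $\rsst{\alpha} \cup L(\rst{\alpha})$ has its interval abutting $1-\alpha$ and a square in $\Lang(\alpha)$, while conversely any square satisfying the condition is one of the two words neighbouring $1-\alpha$---yields the stated characterization.
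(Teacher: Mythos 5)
Your proposal is correct, and it shares the paper's overall skeleton: reduce the square root condition to the statement that $[w]$ has $1-\alpha$ as an endpoint (the remark plus \autoref{cor:interval_endpoint}), identify the two factors of each admissible length whose intervals abut $1-\alpha$, and then decide which of the two occurs as a square in $\Lang(\alpha)$. Where you genuinely diverge from the paper is in the middle step, which the paper packages as \autoref{prp:endpoint_alpha_characterization}. There, the paper proves $v = L(u)$ by contradiction: assuming the two abutting words first disagree at a position $\geq 2$, it extracts incompatible distance estimates from \autoref{prp:closest}. You instead observe that the letter in position $j$ changes only when the starting point crosses $\{-j\alpha\}$ or $\{-(j+1)\alpha\}$, so crossing $1-\alpha = \{-\alpha\}$ flips exactly the letters in positions $0$ and $1$; this yields $u^+ = L(u^-)$ in one stroke and is cleaner than the paper's argument. (Do say explicitly that positions $j \geq 2$ are unaffected because $\{-j\alpha\}, \{-(j+1)\alpha\} \neq \{-\alpha\}$ by irrationality; that is precisely what your ``so $u^+ = L(u^-)$'' rests on.) Symmetrically, the paper starts from the two abutting intervals and shows that one of them contains $\{-(n+1)\alpha\}$, hence is right special and equals $\mirror{s}$; you run the implication the other way, starting from the right special word $\mirror{s}$ and showing its interval touches $1-\alpha$ via \eqref{eq:min_distance}, \autoref{prp:closest}, and the side-of-zero constraint. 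This works, and your side argument also absorbs the boundary case $j = n+1$ (possible when $q_{k+1} = q_k + 1$), which \eqref{eq:min_distance} alone does not exclude. The concluding bookkeeping—mirror invariance and \autoref{prp:conjugate_square} to get $\mirror{s}^{\,2}, L(\mirror{s}_k)^2 \in \Lang(\alpha)$, and the interval-length comparison $\|q_{k-1}\alpha\| \neq \|q_{k,\ell-1}\alpha\| = \|q_{k-1}\alpha\| + \|q_{k,\ell}\alpha\|$ excluding $L(\mirror{s}_{k,\ell})^2$ for $\ell < a_k$—coincides with the paper's; note only that invoking \autoref{prp:conjugate_square}(ii) for $\mirror{s}_{k,\ell}$ and $L(\mirror{s}_{k,\ell})$ presupposes they are conjugates of $s_{k,\ell}$, which you should first obtain from part (iii) via equality of letter counts. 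Finally, like the paper's own proof, your ``only if'' direction tacitly assumes $w$ primitive (both \autoref{cor:interval_endpoint} and \autoref{prp:square_length} are stated for primitive words), so you are no further from full generality than the original.
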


  As we remarked in \autoref{sec:square_root_map}, a square $w^2 \in \Lang(\alpha)$ trivially satisfies the square
  root condition if its interval $[w]$ has $1-\alpha$ as an endpoint. Our aim is to prove that the converse is also
  true. We begin with a technical lemma.

  \begin{lemma}\label{lem:maps_out}
    Let $n = q_1$ or $n = q_{k,l}$ for some $k \geq 2$ with $0 < l \leq a_k$, and let $i$ be an integer such that
    $1 < i \leq n$.
    \begin{enumerate}[(i)]
      \item If $\{-i\alpha\} \in I_0$ and $\{-(i+n)\alpha\} < \{-i\alpha\}$, then
            $\psi(-(i+n)\alpha) > \{-i\alpha\}$.
      \item If $\{-i\alpha\} \in I_1$ and $\{-(i+n)\alpha\} > \{-i\alpha\}$, then $\psi(-(i+n)\alpha) < \{-i\alpha\}$.
    \end{enumerate}
  \end{lemma}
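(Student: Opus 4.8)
The plan is to convert the statement about $\psi$ into a single inequality between $\|n\alpha\|$ and a directed distance to the point $1-\alpha$ on $\T$, which I will then read off from the closest-point result \autoref{prp:closest}. First I would set $z=\{-i\alpha\}$ and $y=\{-(i+n)\alpha\}$, so that $y$ arises from $z$ by the rotation $R^{-n}$, i.e.\ by adding $\{-n\alpha\}$ modulo $1$. Writing $n=q_{k,\ell}$, the displacement $\{-n\alpha\}$ equals $\|n\alpha\|$ when $k$ is odd and $1-\|n\alpha\|$ when $k$ is even (the value $n=q_1$ will produce no admissible configuration and can be discarded). In case (i) the point $z$ lies in $I_0=(0,1-\alpha)$, bounded away from $1$, so $y<z$ can hold only for the backward displacement $y=z-\|n\alpha\|$, which forces $k$ even; symmetrically, in case (ii) the hypothesis $y>z$ with $z\in I_1=(1-\alpha,1)$ forces $y=z+\|n\alpha\|$ and $k$ odd. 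In either case $y\in(0,1)$, so $\psi(y)=\tfrac12(y+1-\alpha)$, and substituting the value of $y$ reduces the two conclusions to
\[
  \psi(y)>z \iff (1-\alpha)-z>\|n\alpha\|, \qquad \psi(y)<z \iff z-(1-\alpha)>\|n\alpha\|.
\]

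Next I would identify the left-hand sides as directed distances to $1-\alpha=\{-\alpha\}$. In case (i) put $U=(1-\alpha)-\{-i\alpha\}$ and in case (ii) put $U'=\{-i\alpha\}-(1-\alpha)$; both are positive by the side hypothesis. Reducing modulo $1$ and using $1-\alpha=\{-\alpha\}$ gives $U=\{(i-1)\alpha\}\in(0,1-\alpha)$ and $U'=\{-(i-1)\alpha\}\in(0,\alpha)$. Thus the lemma becomes the inequalities $\{(i-1)\alpha\}>\|n\alpha\|$ (case (i), $k$ even) and $\{-(i-1)\alpha\}>\|n\alpha\|$ (case (ii), $k$ odd), where $1\le i-1\le n-1<q_{k,\ell}$. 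The discarded case $n=q_1$ is genuinely vacuous: for $2\le i\le a_1$ one computes $\{-i\alpha\}=1-i\alpha\in(0,1-2\alpha]$, which lies in $I_0$ and always satisfies $\{-(i+q_1)\alpha\}=\{-i\alpha\}+\|q_1\alpha\|>\{-i\alpha\}$, so the hypotheses of neither (i) nor (ii) can hold.

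The remaining inequality is the heart of the matter, and I would argue by contradiction using \autoref{prp:closest}. Suppose, in case (i), that $\{(i-1)\alpha\}\le\|n\alpha\|$. Since $\|(i-1)\alpha\|\le\{(i-1)\alpha\}\le\|n\alpha\|$ and $\|n\alpha\|=\|q_{k,\ell}\alpha\|<\|q_{k,\ell-1}\alpha\|$ by \eqref{eq:distance_difference}, the bound $\|(i-1)\alpha\|<\|q_{k,\ell-1}\alpha\|$ holds; as $0<i-1<q_{k,\ell}$, \autoref{prp:closest} forces $i-1=mq_{k-1}$ with $1\le m\le a_k$. But $k$ even makes $k-1$ odd, so $\{q_{k-1}\alpha\}=1-\|q_{k-1}\alpha\|$, whence $\{(i-1)\alpha\}=\{mq_{k-1}\alpha\}=1-m\|q_{k-1}\alpha\|$, with no wrap-around because $m\|q_{k-1}\alpha\|\le a_k\|q_{k-1}\alpha\|<\|q_{k-2}\alpha\|\le\alpha$. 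Hence $\{(i-1)\alpha\}>1-\alpha$, contradicting $\{(i-1)\alpha\}\in(0,1-\alpha)$. Case (ii) is entirely symmetric: the same steps yield $i-1=mq_{k-1}$, but now $k-1$ is even, giving $\{-(i-1)\alpha\}=1-m\|q_{k-1}\alpha\|>1-\alpha>\alpha$, contradicting $\{-(i-1)\alpha\}\in(0,\alpha)$.

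The main obstacle is precisely this last step. The naive lower bound furnished by \eqref{eq:min_distance}, namely $\|(i-1)\alpha\|\ge\|q_{k-1}\alpha\|$, is too weak, since $\|q_{k-1}\alpha\|$ may be strictly smaller than $\|q_{k,\ell}\alpha\|$ when $\ell<a_k$; one therefore cannot conclude $\{(i-1)\alpha\}>\|q_{k,\ell}\alpha\|$ outright. The whole force of \autoref{prp:closest} is that the only indices below $q_{k,\ell}$ making $\|(i-1)\alpha\|$ this small are the multiples of $q_{k-1}$, and for exactly those indices the parity of $k$ pushes the directed distance past $1-\alpha$, i.e.\ out of the interval dictated by the side hypothesis. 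Aligning this interplay between the closest-point proposition and the parity of $k$ is where I expect the care to be needed; once the displacement has been signed correctly, the rest is routine.
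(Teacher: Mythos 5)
Your proof is correct and follows essentially the same route as the paper's: both arguments reduce the claim (by negating the conclusion) to the inequality $\|(i-1)\alpha\| \le \|n\alpha\|$, invoke \autoref{prp:closest} to force $i-1 = mq_{k-1}$, and then derive a contradiction with the side hypothesis from the fact that $\{-q_{k-1}\alpha\}$ lies on the opposite side of $0$ from $\{-n\alpha\}$ (your parity-of-$k$ bookkeeping is just a different phrasing of this). The only cosmetic difference is the case $n = q_1$, which you dispose of by showing the hypotheses are vacuous there, while the paper shows the derived inequality cannot hold; both are valid.
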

  \begin{proof}
    We prove \emph{(i)}, the second assertion is symmetric. Suppose $\{-i\alpha\} \in I_0$
    and $\{-(i+n)\alpha\} <\allowbreak \{-i\alpha\}$. Note that the distance between the points $\{-i\alpha\}$ and
    $\{-(i+n)\alpha\}$ is less than $\alpha$. It follows that $\{-n\alpha\} \in I_1$. Assume on the contrary that 
    $\psi(-(i+n)\alpha) \leq \{-i\alpha\}$, that is,
    \begin{align*}
      \{-(i+n)\alpha\} + \frac{1}{2}(\{1-\alpha\} - \{-(i+n)\alpha\}) \leq \{-i\alpha\}.
    \end{align*}
    Since $0 < \{-(i+n)\alpha\} < \{-i\alpha\}$, the distance between $\{-(i+n)\alpha\}$ and $\{-i\alpha\}$ is the same
    as the distance between $1$ and $\{-n\alpha\}$. Thus by substituting
    $\{-(i+n)\alpha\} = \{-i\alpha\} - (1-\{-n\alpha\})$ to the above and rearranging, we have that
    \begin{align*}
      \{1-\alpha\} - \{-i\alpha\} \leq 1 - \{-n\alpha\}.
    \end{align*}
    Since $\{-n\alpha\} \in I_1$, we obtain that
    \begin{align}\label{eq:maps_over_distance}
      \|-(i-1)\alpha\| \leq \|-n\alpha\|.
    \end{align}
    Suppose now first that $n = q_{k,l}$ for some $k \geq 2$ and $0 < l \leq a_k$. Since $i - 1 < n$,
    \autoref{prp:closest} and \eqref{eq:maps_over_distance} imply that $i-1 = mq_{k-1}$ for some
    $1 \leq m \leq \min\{l, a_k - l + 1\}$. As $\{-n\alpha\} \in I_1$, the point $\{-q_{k-1}\alpha\}$ must lie on the
    opposite side of $0$ in the interval $I_0$. Therefore $\{-(i-1)\alpha\} \in I_0$. Then by
    \eqref{eq:maps_over_distance}, the point $\{-i\alpha\}$ must lie in $I_1$. This is a contradiction. Suppose then
    that $n = q_1$. It is easy to see that \eqref{eq:maps_over_distance} cannot hold for any $i$ greater than $1$.
    This concludes the proof.
  \end{proof}


  \begin{corollary}\label{cor:interval_endpoint}
    If $w^2 \in \Lang(\alpha)$ with $w$ primitive satisfies the square root condition, then the interval $[w]$ has
    $1-\alpha$ as an endpoint.
  \end{corollary}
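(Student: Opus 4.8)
The plan is to argue by contraposition: assuming that $1-\alpha$ is not an endpoint of $[w]$, I would show that the square root condition must fail. Writing $n=|w|$, primitivity of $w$ lets me invoke \autoref{prp:square_length}, so $n=q_0=1$, $n=q_1$, or $n=q_{k,\ell}$. The case $n=q_0=1$ is immediate: since $0<\alpha<\frac12$ we have $11\notin\Lang(\alpha)$, so $w=0$ and $[w]=I_0$ already has $1-\alpha$ as an endpoint. In every remaining case $n$ has exactly the form required by \autoref{lem:maps_out}, which will be the engine of the proof. Since both $0$ and $1-\alpha=\{-\alpha\}$ are among the partition points $0,\{-\alpha\},\dots,\{-n\alpha\}$, neither can lie in the interior of $[w]$; hence if $1-\alpha$ is not an endpoint of $[w]$, then $[w]$ lies entirely inside $I_0$ or entirely inside $I_1$. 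I would treat the case $[w]\subseteq I_0$, the case $[w]\subseteq I_1$ being symmetric, using part (ii) of \autoref{lem:maps_out} in place of part (i).

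Next I would reduce the square root condition to a single scalar inequality. Let $Q=\{-j\alpha\}$ be the endpoint of $[w]$ nearer to $1-\alpha$. Because $[w]\subseteq I_0$ and $1-\alpha$ is assumed not to be an endpoint, we have $0<Q<1-\alpha$, whence $1<j\le n$. On $I_0$ the map $\psi$ is continuous, increasing, and moves every point toward (but never past) $1-\alpha$; consequently $\psi([w^2])\subseteq[w]$ is equivalent to the statement that $\psi$ does not push the right end past $Q$, i.e. $\psi(\sup[w^2])\le Q$. Writing $[w^2]=[w]\cap R^{-n}([w])$, the interval $[w^2]$ is a subinterval of $[w]$ sharing exactly one endpoint with it; its right endpoint $\sup[w^2]$ is therefore either $Q$ itself or the translate $\{-(j+n)\alpha\}$ of $Q$ contributed by $R^{-n}([w])$, and in the latter case $\{-(j+n)\alpha\}<Q$.

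The final step dispatches both possibilities. If $\sup[w^2]=Q$, then $\psi(Q)=\frac12(Q+1-\alpha)>Q$ because $Q<1-\alpha$, so $\psi$ already pushes the right end of $[w^2]$ out of $[w]$. If instead $\sup[w^2]=\{-(j+n)\alpha\}<Q$, then the hypotheses of \autoref{lem:maps_out}(i) hold with $i=j$ (we have $1<j\le n$ and $\{-j\alpha\}\in I_0$), and the lemma gives $\psi(-(j+n)\alpha)>\{-j\alpha\}=Q$. Either way $\sup\psi([w^2])>Q$, so by continuity some point of $\psi([w^2])$ lies in $(Q,1-\alpha)$, outside $[w]$, contradicting the square root condition. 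I expect the main obstacle to be the bookkeeping of the middle paragraph: one must confirm that the binding constraint is genuinely the right-hand endpoint and that, whenever it is not shared with $[w]$, it is precisely the translate $\{-(j+n)\alpha\}$, so that the hypotheses of \autoref{lem:maps_out} match on the nose (in particular that the relevant index is $j$ with $1<j\le n$). The half-open and $\psi(0)$ conventions for $[w]$ and $[w^2]$ demand a little care but do not affect the limiting argument, since only points approaching $\sup[w^2]$ are needed.
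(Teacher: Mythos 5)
Your proposal is correct and takes essentially the same route as the paper's proof: after handling $|w|=q_0$ and invoking \autoref{prp:square_length}, both arguments split on whether $[w^2]$ shares with $[w]$ the endpoint nearest $1-\alpha$, disposing of the shared-endpoint case by the elementary inequality $\psi(Q)>Q$ and the other case by \autoref{lem:maps_out}, with the $I_1$ case by symmetry. The only differences are presentational: you argue by contraposition and make explicit some details (why $[w]\subseteq I_0$ or $I_1$, monotonicity of $\psi$, which endpoint of $[w^2]$ is binding) that the paper leaves implicit.
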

  \begin{proof}
    Let $n = |w|$. \autoref{prp:square_length} implies that $n = q_0$, $n = q_1$, or $n = q_{k,l}$ for some $k \geq 2$
    with $0 < l \leq a_k$. Say $n = q_0 = 1$. As the only factor of length $1$ occurring as a square is $0$, the claim
    holds as $[0] = I_0 = I(0, 1-\alpha)$. Suppose then that $n = q_1$ or $n = q_{k,l}$.

    Let $[w] = I(-i\alpha, -j\alpha)$. Then either $[w^2] = I(-i\alpha, -(j + |w|)\alpha)$ or
    $[w^2] = I(-(i + |w|)\alpha, -j\alpha)$. Suppose first that $[w] \subseteq I_0$. By symmetry we may assume that
    $\{-j\alpha\} > \{-i\alpha\}$. Now $[w^2] = [-(i+|w|)\alpha, -j\alpha)$ if and only if $j = 1$. Namely, if
    $j \neq 1$, then it is clear that it is possible to find a point $x \in I(-i\alpha, -j\alpha)$ close to
    $\{-j\alpha\}$ such that $\psi(x) > \{-j\alpha\}$, so the condition $\psi([w^2]) \subseteq [w]$ cannot be
    satisfied. If $[w^2] = [-i\alpha, -(j+|w|)\alpha)$ and $j \neq 1$, then by \autoref{lem:maps_out}
    $\psi(-(j+|w|)\alpha) > \{-j\alpha\}$, so the condition $\psi([w^2]) \subseteq [w]$ cannot be satisfied.
    Thus also in this case necessarily $j = 1$. The case where $[w] \subseteq I_1$ is proven symmetrically
    using the latter symmetric assertion of \autoref{lem:maps_out}.
  \end{proof}

  Next we study in more detail the properties of squares $w^2 \in \Lang(\alpha)$ whose interval has $1-\alpha$ as an
  endpoint.

  \begin{proposition}\label{prp:endpoint_alpha_characterization}
    Consider the intervals of factors in $\Lang(\alpha)$ of length $n = q_1$ or $n = q_{k,l}$ with $k \geq 2$ and
    $0 < l \leq a_k$. Let $u$ and $v$ be the two distinct words of length $n$ having intervals with endpoint
    $1-\alpha$. Then the following holds.
    \begin{enumerate}[(i)]
      \item There exists a word $w$ such that $u = xyw$ and $v = yxw = L(u)$ for distinct letters $x$ and $y$.
      \item Either $u$ or $v$ is right special.
      \item If $\mu$ is the right special word among the words $u$ and $v$, then $\mu^2 \in \Lang(\alpha)$.
      \item If $\lambda$ is the word among the words $u$ and $v$ which is not right special, then
            $\lambda^2 \in \Lang(\alpha)$ if and only if $n = q_1$ or $l = a_k$.
    \end{enumerate}
  \end{proposition}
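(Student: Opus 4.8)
The plan is to prove the four claims in order, working throughout in the rotation picture and writing $s$ for the unique (semi)standard word of length $n$ (so $s = s_{k,l}$, or $s = s_1$ when $n = q_1$). Recall that $s$ is the left special factor of length $n$ — it is a left special prefix of $c_\alpha$ of the right length, and left special factors are unique — so its reversal $\mirror{s}$ is the right special factor of length $n$.

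For \emph{(i)} I argue geometrically. Since $1-\alpha = \{-\alpha\}$, the words $u$ and $v$ are the codings of the orbits of the points lying immediately on either side of $1-\alpha$. A point just below $1-\alpha$ lies in $I_0$ and is sent by $R$ just below $0$, hence into $I_1$; a point just above $1-\alpha$ lies in $I_1$ and is sent just above $0$, into $I_0$. Thus one of the words begins with $01$ and the other with $10$. For the remaining positions, at position $j$ with $2 \le j \le n-1$ both orbits lie near $\{(j-1)\alpha\}$ on opposite sides, and $\{(j-1)\alpha\} \notin \{0, 1-\alpha\}$ for $1 \le j-1 \le n-2$ because $\alpha$ is irrational; letting the two points tend to $1-\alpha$ forces identical codings there. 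Hence $u = xyw$ and $v = yxw = L(u)$ with $\{x,y\} = \{0,1\}$ and $|w| = n-2$.

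Claim \emph{(ii)} is the geometric heart. The right special factor is the one whose interval contains $\{-(n+1)\alpha\}$, so \emph{(ii)} says that $1-\alpha = \{-\alpha\}$ is an endpoint of that interval, i.e. that $\{-\alpha\}$ (the case $i=1$, at distance $\|n\alpha\| = \|q_{k,l}\alpha\|$ from $\{-(n+1)\alpha\}$) is the partition point $\{-i\alpha\}$ closest to $\{-(n+1)\alpha\}$ on one side. By \eqref{eq:distance_formula} the point $\{-\alpha\}$ and the globally nearest candidate $\{-(n+1-q_{k-1})\alpha\}$ (at distance $\|q_{k-1}\alpha\|$) fall on opposite sides of $\{-(n+1)\alpha\}$, so it remains to exclude any closer competitor on the $\{-\alpha\}$ side. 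This is where \autoref{prp:closest} enters: any $m$ with $0 < m < q_{k,l}$ and $\|m\alpha\| < \|q_{k,l-1}\alpha\|$ is a multiple of $q_{k-1}$, and all such multiples lie on the opposite side; hence every competitor on the $\{-\alpha\}$ side has distance at least $\|q_{k,l-1}\alpha\| > \|q_{k,l}\alpha\|$, and $\{-\alpha\}$ wins. (The case $n = q_1$ is read off directly from \eqref{eq:min_distance}.) I expect the main obstacle to be the sign/side bookkeeping — deciding on which side of $\{-(n+1)\alpha\}$ each candidate lands — which is handled via \eqref{eq:distance_formula} and \eqref{eq:distance_difference}. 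The same analysis records the two interval lengths: the right special neighbour of $1-\alpha$ has length $\|q_{k,l}\alpha\| + \|q_{k-1}\alpha\|$, while the opposite neighbour is a single gap of length $\|q_{k-1}\alpha\|$.

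Granting \emph{(ii)}, $\mu = \mirror{s}$, and \emph{(iii)} is immediate from mirror-invariance of $\Lang(\alpha)$: as $s^2 \in \Lang(\alpha)$ (the index of $s$ is at least $2$ by \autoref{prp:standard_index}), we get $\mu^2 = \mirror{s}\,\mirror{s} = \mirror{s^2} \in \Lang(\alpha)$. For \emph{(iv)}, $\lambda = L(\mu)$ has the same number of $0$'s as $s$ (reversal and the swap $L$ preserve letter counts), so $\lambda$ is conjugate to $s$ by \autoref{prp:conjugate_square}(iii). If $n = q_1$ or $l = a_k$ then $s$ is standard and \autoref{prp:conjugate_square}(i) gives $\lambda^2 \in \Lang(\alpha)$ at once. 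If instead $0 < l < a_k$, then $s$ is a proper semistandard word and \autoref{prp:conjugate_square}(ii) says $\lambda^2 \in \Lang(\alpha)$ iff $|[\lambda]| = |[s]|$; but \emph{(iii)} together with \autoref{prp:conjugate_square}(ii) forces $|[s]| = |[\mu]|$, and the lengths from \emph{(ii)} give $|[\lambda]| = \|q_{k-1}\alpha\| < \|q_{k,l}\alpha\| + \|q_{k-1}\alpha\| = |[\mu]|$. Hence $|[\lambda]| \ne |[s]|$ and $\lambda^2 \notin \Lang(\alpha)$, which completes \emph{(iv)}.
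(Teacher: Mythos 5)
Your proposal is correct, and for parts \emph{(ii)}--\emph{(iv)} it follows essentially the paper's own route: right-specialness comes from \autoref{prp:closest} together with the observation that the small multiples of $q_{k-1}$ all land on one side of $\{-(n+1)\alpha\}$; then $\mu = \mirror{s}$ gives $\mu^2 \in \Lang(\alpha)$ via mirror-invariance, and the case $0 < l < a_k$ of \emph{(iv)} is settled by comparing $|[\lambda]| = \|q_{k-1}\alpha\|$ with $|[\mu]| = |[s_{k,l}]| = \|q_{k,l-1}\alpha\|$ through \autoref{prp:conjugate_square}~\emph{(ii)} --- exactly the paper's computation. Where you genuinely diverge is part \emph{(i)}. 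The paper proves \emph{(i)} by a comparatively long contradiction argument: it first pins down $[u] = I(-(q_{k,l-1}+1)\alpha,1-\alpha)$ and $[v] = I(1-\alpha,-(q_{k-1}+1)\alpha)$, assumes $u$ and $v$ first disagree at some position beyond the second letter, and rules this out with two further applications of \autoref{prp:closest} and \eqref{eq:min_distance}. Your argument instead exploits continuity of the coding: for $2 \le j \le n-1$ the limit point $\{(j-1)\alpha\}$ is not an endpoint of the coding partition, so all points sufficiently close to $1-\alpha$, on either side, receive identical letters from position $2$ onward. This is shorter, more elementary, and decouples \emph{(i)} from the interval computations, which you then only need inside \emph{(ii)} and \emph{(iv)}; the paper's extra work, on the other hand, hands it the explicit endpoints of $[u]$ and $[v]$ for reuse later in the paper's proof. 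One spot to patch in your \emph{(ii)}: the competitors $m = n+1-i$ range over $\{1,\ldots,n-1\} \cup \{n+1\}$, while \autoref{prp:closest} only covers $0 < m < q_{k,l}$, so the case $i = 0$ (the partition point $0$, i.e.\ $m = n+1$) is not excluded by your citation. It needs the one-line remark that $0$ cannot lie strictly between $\{-(n+1)\alpha\}$ and $1-\alpha$, since their mutual distance is $\|q_{k,l}\alpha\| < \|q_0\alpha\| = \alpha$, whereas $0$ is at distance exactly $\alpha$ from $1-\alpha$. The paper is comparably terse at this precise point, so this is a patch rather than a flaw in strategy.
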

  \begin{proof}
    Suppose first that $n = q_1$. Then it is straightforward to see that the factors $u$ and $v$ of length $n$ having
    intervals with endpoint $1-\alpha$ are $010^{a_1 - 2} = S_2$ and $10^{a_1 - 1} = S_4$. Clearly $S_4$ is right
    special and $L(S_4) = S_2$. Moreover $S_2^2, S_4^2 \in \Lang(\alpha)$.

    Assume that $n = q_{k,l}$ for some $k \geq 2$ with $0 < l \leq a_k$. By \autoref{prp:closest} the point
    $\{-n\alpha\}$ is the point closest to $0$ on the side opposite to the point $\{-q_{k-1}\alpha\}$. Thus either
    $\{-(n+1)\alpha\} \in [u]$ or $\{-(n+1)\alpha\} \in [v]$. Assume by symmetry that $\{-(n+1)\alpha\} \in [u]$. This
    means that the word $u$ is right special, proving \emph{(ii)}. Further, the endpoint of $[u]$ which is not
    $1-\alpha$ must be after a rotation the next closest point to $0$ on the side opposite to the point
    $\{-q_{k-1}\alpha\}$. Thus by \autoref{prp:closest} $[u] = I(-(q_{k,l-1} + 1)\alpha, 1-\alpha)$ and consequently
    $[v] = I(1-\alpha, -(q_{k-1} + 1)\alpha)$.

    Since the points $x = \{(-(q_{k,l-1}+1)\alpha\}$ and $y = \{-(q_{k-1}+1)\alpha\}$ are on the opposite sides of the
    point $1-\alpha$ and the points $\{x + \alpha\}$ and $\{y + \alpha\}$ are on the opposite sides of the point $0$,
    it follows that $u$ begins with $cd$ and $v$ begins with $dc$ for distinct letters $c$ and $d$. Assume on the
    contrary that $u = cdzeu'$ and $v = dczfv'$ for distinct letters $e$ and $f$. In particular, $|z| \leq n - 3$. This
    means that the point $x' = \{x + (|z|+2)\alpha\}$ is in $[e]$ and the point $y' = \{y + (|z|+2)\alpha\}$ is in
    $[f]$. It must be that $e = c$ and $f = d$ as otherwise the point $x' - \alpha$ would be in $[c]$ and the point
    $y' - \alpha$ would be in $[d]$ contradicting the choice of $z$. Since $\alpha$ is irrational, either $x'$ is
    closer to $1-\alpha$ than $x$ or $y'$ is closer to $1-\alpha$ than $y$.
    
    Suppose that $x'$ is closer to $1-\alpha$ than $x$. Since $x'$ is on the same side of the point $1-\alpha$ as $x$,
    it follows that
    \begin{align*}
      \|x' + \alpha\| = \|(q_{k,l-1} - |z| - 2)\alpha\|< \|q_{k,l-1}\alpha\| = \|x + \alpha\|.
    \end{align*}
    Since $q_{k,l-1} - |z| - 2 < q_{k,l-1}$, by \autoref{prp:closest} it must be that $q_{k,l-1} - |z| - 2 \leq 0$.
    However, as $\|q_{k,l-1}\alpha\| = \|-q_{k,l-1}\alpha\|$, it follows by \autoref{prp:closest} that
    $|z| + 2 - q_{k,l-1} = mq_{k-1}$ for some $m \geq 1$. Thus $|z| + 2 \geq q_{k,l-1} + q_{k-1} = q_{k,l} = n$.
    This is, however, a contradiction as $|z| \leq n - 3$.

    Suppose then that $y'$ is closer to $1-\alpha$ than $y$. Similar to above, it follows that
    \begin{align*}
      \|y' + \alpha\| = \|(q_{k-1} - |z| - 2)\alpha\| < \|q_{k-1}\alpha\| = \|y + \alpha\|.
    \end{align*}
    Again, it must be that $q_{k-1} - |z| - 2 \leq 0$. Since $\|q_{k-1}\alpha\| = \|-q_{k-1}\alpha\|$, it follows from
    \eqref{eq:min_distance} that $|z| + 2 - q_{k-1} \geq q_k$. Therefore $|z| + 2 \geq q_k + q_{k-1} > n$. This is
    again a contradiction with the fact that $|z| \leq n - 3$.

    Thus we conclude that $u = cdw$ and $v = dcw$ for some word $w$ proving \emph{(i)}.
    As $n = q_{k,l}$, it must be that the right special word of length $n$
    equals $\mirror{s}_{k,l}$. Since $u$ and $v$ are conjugate by \autoref{prp:conjugate_square} (iii), \autoref{prp:conjugate_square} implies that if
    $l = a_k$, then $u^2, v^2 \in \Lang(\alpha)$. Suppose that $l \neq a_k$. By \autoref{prp:conjugate_square}, the
    word $s_{k,l}$ occurs as a square in $\Lang(\alpha)$. Since $\Lang(\alpha)$ is mirror-invariant, also
    $u^2 = \mirror{s}_{k,l}^{\,2} \in \Lang(\alpha)$. Therefore from \autoref{prp:conjugate_square} it follows that
    $|[u]| = \|q_{k,l-1}\alpha\| = |[s_{k,l}]|$. Now $[v] = I(1-\alpha, -(q_{k-1} + 1)\alpha)$, so
    $|[v]| = \|q_{k-1}\alpha\| \neq |[u]|$. Thus \autoref{prp:conjugate_square} implies that
    $v^2 \notin \Lang(\alpha)$. Hence \emph{(iii)} and \emph{(iv)} are proved.
  \end{proof}

  \begin{proof}[Proof of \autoref{thm:square_root_condition_iff_rsst}]
    If $|w| = 1$, then clearly $w = 0 = \mirror{s}_0$, so the claim holds. We may thus focus on the case that $|w| > 1$.

    Suppose that $w^2 \in \Lang(\alpha)$ satisfies the square root condition. By \autoref{cor:interval_endpoint} the
    interval $[w]$ has $1-\alpha$ as an endpoint. Moreover, \autoref{prp:square_length} implies that $|w| = q_1$ or
    $|w| = q_{k,l}$ for some $k \geq 2$ with $0 < l \leq a_k$. Thus from \autoref{prp:endpoint_alpha_characterization}
    it follows that $w = \mirror{s}$ or $w = L(\mirror{s}^{\,})$ where $s$ is the (semi)standard word of length $|w|$.
    By \autoref{prp:endpoint_alpha_characterization} we have that $\mirror{s}^{\,2} \in \Lang(\alpha)$. Moreover, by
    \autoref{prp:endpoint_alpha_characterization} we have that $L(\mirror{s}^{\,})^2 \in \Lang(\alpha)$ if and only if
    $|w| = q_k$ for some $k \geq 1$. Thus $w \in \rsst{\alpha} \cup L(\rst{\alpha})$.

    Suppose then that $w \in \rsst{\alpha} \cup L(\rst{\alpha})$. Note first that $L(w)$ has the same number of letters
    $0$ as $w$, so $w$ is conjugate to $L(w)$ by \autoref{prp:conjugate_square}. Thus it follows from
    \autoref{prp:conjugate_square} that $w^2 \in \Lang(\alpha)$. Let $u$ and $v$ be the factors of length $|w|$ having
    endpoint $1-\alpha$. By \autoref{prp:endpoint_alpha_characterization} the word $u$ must be right special and
    $v = L(u)$. Since the right special factor of length $|w|$ is unique, either $w = u$ or $L(w) = u$. Thus the
    interval $[w]$ has $1-\alpha$ as an endpoint. Then clearly $w^2$ satisfies the square root condition.
  \end{proof}

  \section{Characterization by a Word Equation}\label{sec:word_equation_characterization}
  It turns out that the squares of slope $\alpha$ satisfying the square root condition have also a different
  characterization in terms of specific solutions of the word equation
  \begin{align}\label{eq:square}
    X_1^2 X_2^2 \cdots X_n^2 = (X_1 X_2 \cdots X_n)^2
  \end{align}
  in the language $\Lang(\alpha)$. We are interested only in the solutions of \eqref{eq:square} where all words $X_i$
  are \emph{minimal square roots} \eqref{eq:min_squares}, i.e., primitive roots of minimal squares. Thus we give the
  following definition.

  \begin{definition}
    A nonempty word $w$ is a \emph{solution to \eqref{eq:square}} if $w$ can be written as a product of minimal square
    roots $w = X_1 X_2 \cdots X_n$ which satisfy the word equation \eqref{eq:square}. The solution is \emph{trivial} if
    $X_1 = X_2 = \ldots = X_n$ and \emph{primitive} if $w$ is primitive. The word $w$ is a \emph{solution to
    \eqref{eq:square} in $\Lang(\alpha)$} if $w$ is a solution to \eqref{eq:square} and $w^2 \in \Lang(\alpha)$.
  \end{definition}

  All minimal square roots of slope $\alpha$ are trivial solutions to \eqref{eq:square}. One example of a nontrivial
  solution is $w = S_2S_1S_4$ in the language of the Fibonacci word (i.e., in the language of slope
  $[0;2,1,1,\ldots]$) since $w^2 = (01010)^2 = (01)^2 \cdot 0^2 \cdot (10)^2 = S_2^2 S_1^2 S_4^2$. Note that in the
  language of any Sturmian word there are only finitely many trivial solutions as the index of every factor is finite.

  Note that the factorization of a word as product of minimal squares is unique. Indeed, if
  $X_1^2 \cdots X_n^2 = Y_1^2 \cdots Y_m^2$, where the squares $X_i^2$ and $Y_i^2$ are minimal, then either $X_1^2$ is
  a prefix of $Y_1^2$ or vice versa. Therefore by minimality $X_1^2 = Y_1^2$, that is, $X_1 = Y_1$. The uniqueness of
  the factorization follows.

  Our aim is to complete the characterization of \autoref{thm:square_root_condition_iff_rsst} as follows.

  \begin{theorem}\label{thm:complete_characterization}
    Let $w \in \Lang(\alpha)$. The following are equivalent:
    \begin{enumerate}[(i)]
      \item $w$ is a primitive solution to \eqref{eq:square} in $\Lang(\alpha)$,
      \item $w^2$ satisfies the square root condition,
      \item $w \in \rsst{\alpha} \cup L(\rst{\alpha})$.
    \end{enumerate}
  \end{theorem}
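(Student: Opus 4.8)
The plan is to prove the theorem by establishing the implication chain $(i) \Rightarrow (ii) \Rightarrow (iii) \Rightarrow (i)$, since the equivalence $(ii) \Leftrightarrow (iii)$ is already fully contained in \autoref{thm:square_root_condition_iff_rsst}. This means the only genuinely new work is the two implications linking the word-equation formulation $(i)$ to the square root condition $(ii)$. I would therefore organize the proof so that \autoref{thm:square_root_condition_iff_rsst} supplies $(ii) \Leftrightarrow (iii)$ for free, and concentrate on $(i) \Rightarrow (ii)$ and $(iii) \Rightarrow (i)$.

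For $(i) \Rightarrow (ii)$, suppose $w$ is a primitive solution with $w = X_1 X_2 \cdots X_n$ where the $X_i$ are minimal square roots satisfying $w^2 = X_1^2 X_2^2 \cdots X_n^2$. The key idea is that the square root condition is preserved under this kind of concatenation. By \autoref{lem:min_square_root_formula} each minimal square $X_i^2$ satisfies the square root condition and obeys the conjugacy relation $\psi(\{x + 2|X_i|\alpha\}) = \{\psi(x) + |X_i|\alpha\}$. I would iterate this exactly as in the proof of \autoref{thm:square_root}: starting from $x \in [w^2]$, the fact that $w^2 = X_1^2 \cdots X_n^2$ means $x$ lies in $[X_1^2]$, so $\psi(x) \in [X_1]$; shifting by $2|X_1|\alpha$ moves $\psi(x)$ by $|X_1|\alpha$ and lands in $[X_2^2]$, and so on. After $n$ steps this shows $\psi([w^2]) \subseteq [X_1 X_2 \cdots X_n] = [w]$, which is precisely the square root condition for $w$. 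The essential point making the telescoping valid is that $w^2$ factors \emph{as written} into the squares $X_i^2$, i.e., that $w$ solves \eqref{eq:square}.

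For $(iii) \Rightarrow (i)$, suppose $w \in \rsst{\alpha} \cup L(\rst{\alpha})$. By the implications already proved, $w^2$ satisfies the square root condition, and $w^2 \in \Lang(\alpha)$ with $w$ primitive (primitivity of (semi)standard words, hence of their reversals and of $L(\mirror{s})$, follows from the cited \cite[Proposition~2.2.3]{2002:algebraic_combinatorics_on_words} together with \autoref{prp:conjugate_square}). It remains to exhibit a factorization $w = X_1 \cdots X_n$ into minimal square roots with $w^2 = X_1^2 \cdots X_n^2$. Since $w^2 \in \Lang(\alpha)$, Saari's result guarantees a \emph{unique} factorization of $w^2$ into minimal squares; write $w^2 = Y_1^2 Y_2^2 \cdots Y_m^2$. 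The task is to show this factorization respects the midpoint of $w^2$, i.e., that some prefix $Y_1^2 \cdots Y_j^2$ equals exactly $w$, so that by primitivity $Y_{j+1} \cdots Y_m$ is a second copy of $w = Y_1 \cdots Y_j$ and the word equation holds with $X_i = Y_i$.

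The main obstacle is precisely this last alignment step: proving that the minimal-square factorization of $w^2$ breaks cleanly at the position $|w|$. I expect to argue this using the square root condition itself, which forces $\psi([w^2]) \subseteq [w]$, together with the dynamical reading from \autoref{lem:min_square_root_formula}: tracking a point $x \in [w^2]$ through the minimal squares $Y_i^2$ shows that after consuming squares of total root-length $|w|$ the image $\psi$ has reached $[w]$, and that the running position returns to a copy of the same interval after $2|w|$ shifts. Concretely, I would show that the partial products of root-lengths $|Y_1| + \cdots + |Y_j|$ must hit the value $|w|$ exactly—if the factorization straddled the midpoint, some $Y_j^2$ would have to span across position $|w|$, which one rules out by comparing interval endpoints (the endpoint $1-\alpha$ of $[w]$ supplied by \autoref{cor:interval_endpoint}) with the location of $\{-|w|\alpha\}$ dictating where minimal squares can begin. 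Once the clean split at $|w|$ is secured, $(i)$ follows immediately, completing the cycle and hence the equivalence of all three conditions.
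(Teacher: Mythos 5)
Your reduction to the two implications $(i) \Rightarrow (ii)$ and $(iii) \Rightarrow (i)$ is sound, and your argument for $(i) \Rightarrow (ii)$ is essentially the paper's: take $x \in [w^2]$, note that $s_{x,\alpha}$ begins with $X_1^2 \cdots X_n^2$, and telescope via \autoref{lem:min_square_root_formula} (equivalently, invoke \autoref{thm:square_root}) to get $\psi(x) \in [X_1 \cdots X_n] = [w]$. The problem is your plan for $(iii) \Rightarrow (i)$. The claim you set as ``the task'' --- that the minimal-square factorization $w^2 = Y_1^2 \cdots Y_m^2$ breaks cleanly at the midpoint, i.e.\ that some prefix $Y_1^2 \cdots Y_j^2$ equals $w$ --- is false, and the paper's own example of a nontrivial solution refutes it. Take the Fibonacci slope and $w = L(\mirror{s}_3) = 01010$: then $w^2 = (01)^2 \cdot 0^2 \cdot (10)^2 = S_2^2 S_1^2 S_4^2$, whose square lengths have partial sums $4, 6, 10$, so no prefix consisting of whole minimal squares has length $|w| = 5$; the square $0^2$ straddles the midpoint. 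Being a solution to \eqref{eq:square} does not mean the squares align with the midpoint; it means the \emph{roots} concatenate to $w$, i.e.\ $Y_1 Y_2 \cdots Y_m = w$, which is exactly what holds in this example. Worse, if your alignment claim did hold, it would refute rather than establish $(i)$: from $w = Y_1^2 \cdots Y_j^2$ and uniqueness of minimal-square factorizations one gets $m = 2j$ and $Y_{j+i} = Y_i$, and the word equation with $X_i = Y_i$ would then force $w = Y_1 \cdots Y_m = (Y_1 \cdots Y_j)^2$, contradicting the primitivity of $w$.

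So the hard direction is genuinely missing from your proposal. In the paper, $(iii) \Rightarrow (i)$ is \autoref{prp:standard_solutions}, proved by induction on the (semi)standard structure: writing $w = \mirror{s}_{k,\ell}$, one uses the near-commutation $\mirror{s}_{k-1}\mirror{s}_{k-2} = L(\mirror{s}_{k-2})\mirror{s}_{k-1}$ to decompose $w^2 = \mirror{s}_{k,\ell-1} L(\mirror{s}_{k,\ell-1}) \cdot \mirror{s}_{k-1}^{\,2}$ (and similarly for $w = L(\mirror{s}_k)$), and then the two technical lemmas \autoref{lem:exchange_squares} and \autoref{lem:product_of_squares} show that the letter swap $L$ perturbs a product of minimal squares only locally without changing its square root, so $\sqrt{w^2} = w$ follows from the induction hypothesis. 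Nothing in your interval-endpoint plan produces this: the dynamical tools ($\psi$, \autoref{cor:interval_endpoint}) give you $(ii)$ and the endpoint $1-\alpha$ of $[w]$, but the passage from there to the word equation is precisely the combinatorial content of \autoref{sec:word_equation_characterization} and cannot be obtained by midpoint alignment.
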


  For later use in \autoref{sec:counter_example} we define the language $\Lang(\oa,\ob)$.

  \begin{definition}
    The language $\Lang(\oa,\ob)$ consists of all factors of the infinite words in the language
    \begin{align*}
      (10^{\oa+1}(10^\oa)^\ob + 10^{\oa+1}(10^\oa)^{\ob+1})^\omega = (S_5 + S_6)^\omega.
    \end{align*}
  \end{definition}

  Observe that by \autoref{prp:optimal_squareful_characterization} every factor in $\Lang(\oa,\ob)$ is a factor of some
  optimal squareful word with parameters $\oa$ and $\ob$. Moreover, if $\alpha = [0; \oa+1, \ob+1, \ldots]$, then
  $\Lang(\alpha) \subseteq \Lang(\oa,\ob)$.

  \begin{definition}
    The language $\Pi(\oa,\ob)$ consists of all nonempty words in $\Lang(\oa,\ob)$ which can be written as products of
    the minimal squares \eqref{eq:min_squares}.
  \end{definition}

  Let $w \in \Pi(\oa,\ob)$, that is, $w = X_1^2 \cdots X_n^2$ for minimal square roots $X_i$. Then we can define the
  square root of $w$ by setting $\sqrt{w} = X_1 \cdots X_n$.

  We need two technical lemmas. Their proofs are straightforward case-by-case analysis. The statement of
  \autoref{lem:exchange_squares} has a technical condition for later use in \autoref{sec:counter_example}, which is
  perhaps better understood if the reader first reads the proof of \autoref{lem:product_of_squares} up to the point
  where \autoref{lem:exchange_squares} is invoked.
 
  \begin{lemma}\label{lem:exchange_squares}
    Let $u$ and $v$ be words such that
    \begin{itemize}
      \item $u$ is a nonempty suffix of $S_6$,
      \item $|v| \geq |S_5 S_6|$,
      \item $v$ begins with $xy$ for distinct letters $x$ and $y$,
      \item $uv \in \Lang(\oa,\ob)$ and $L(v) \in \Lang(\oa,\ob)$.
    \end{itemize}
    Suppose there exists a minimal square $X^2$ such that $|X^2| > |u|$ and $X^2$ is a prefix of $uv$ or $uL(v)$. Then
    there exist minimal squares $Y_1^2, \ldots, Y_n^2$ such that $X^2$ and $Y_1^2 \cdots Y_n^2$ are prefixes of $uv$
    and $uL(v)$ of the same length and $X = Y_1 \cdots Y_n$.
  \end{lemma}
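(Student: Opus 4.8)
The plan is to exploit the fact that $uv$ and $uL(v)$ are identical except for the transposition of the two letters in positions $|u|$ and $|u|+1$: writing $v = xyv'$, the word $L(v) = yxv'$, and everything from position $|u|+2$ onward coincides. Since $S_6$ ends in the block $0^{\oa}$ and $\oa \geq 1$, every nonempty suffix $u$ of $S_6$ ends in the letter $0$; hence the boundary pattern is $\cdots 0\,xy$ in $uv$ and $\cdots 0\,yx$ in $uL(v)$, with $\{x,y\}=\{0,1\}$. I would first assume that $X^2$ is a prefix of $uv$; the case where it is a prefix of $uL(v)$ is handled analogously. Because $|X^2| > |u|$, the transposed positions lie strictly inside $X^2$, so the two length-$2|X|$ prefixes of $uv$ and $uL(v)$ genuinely differ, and the prefix of $uL(v)$ must be re-factored into minimal squares.

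Next I would pin down $X$. By \autoref{prp:min_square_roots} the root $X$ is one of $S_1,\ldots,S_6$, and since the factorization of a word into minimal squares is unique, the minimal square occurring as a prefix is determined on each side by reading off the leading letters. The engine of the proof is a short list of structural identities among the minimal square roots \eqref{eq:min_squares} that recombine the refactored product back to $X$; the prototype is $S_6 = S_5 S_4$ (indeed $S_5 S_4 = 10^{\oa+1}(10^\oa)^\ob \cdot 10^\oa = S_6$). For instance, when $u = S_6$ and $v$ begins with $10$, the prefix $S_6^2$ of $uv$ becomes, after the swap, the prefix $S_5^2 S_4^2$ of $uL(v)$, and one has $X = S_6 = S_5 S_4 = Y_1 Y_2$. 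Thus in each admissible configuration one exhibits $Y_1^2\cdots Y_n^2$ as the length-$2|X|$ prefix of the swapped word and verifies the identity $X = Y_1\cdots Y_n$ directly.

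The bulk of the argument is then a finite case analysis, organized first by $x \in \{0,1\}$ and then by the form of the suffix $u$, i.e.\ by how far into the tail $\cdots 10^\oa 10^\oa$ of $S_6$ the word $u$ reaches. The constraint that a run of $0$'s in $\Lang(\oa,\ob)$ never exceeds length $\oa+1$ rules out all but finitely many local patterns and fixes the admissible value of $X$ in each branch. The hypothesis $|v| \geq |S_5 S_6|$ guarantees that both $X^2$ and the factorization $Y_1^2\cdots Y_n^2$ fit inside $uv$ and $uL(v)$ and that the relevant minimal squares are determined, while the hypotheses $uv \in \Lang(\oa,\ob)$ and $L(v) \in \Lang(\oa,\ob)$ are exactly what is needed to ensure that the two relevant prefixes are legitimate products of minimal squares, even though $uL(v)$ need not itself lie in $\Lang(\oa,\ob)$.

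The main obstacle is bookkeeping: ensuring the enumeration of boundary patterns is exhaustive, and, within each branch, correctly handling the position of the transposition relative to the midpoint of $X^2$ — whether it lands in the first copy of $X$, in the second copy, or straddles them — since this changes which copy is perturbed and hence the shape of the recombination. In every branch the verification reduces to one of the minimal-square identities such as $S_6 = S_5 S_4$, so once the cases are laid out the individual checks are immediate; the difficulty lies entirely in completeness and in confirming that the swapped-side prefix really factors into minimal squares.
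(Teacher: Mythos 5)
Your overall strategy---reduce by symmetry, run a finite case analysis on the local configuration around the transposed pair, and recombine via identities among the minimal square roots---is the same as the paper's, and your worked example ($u = S_6$, giving $S_6^2$ on one side versus $S_5^2 S_4^2$ on the other, with $S_6 = S_5 S_4$) is precisely the paper's Case F. The genuine gap is in the shape you impose on the conclusion. You insist that the single minimal square $X^2$ sits on the unswapped side $uv$ and that the matching product $Y_1^2 \cdots Y_n^2$ is the prefix of $uL(v)$ \emph{of length exactly $2|X|$}, where $X$ is the square given by the hypothesis. In four of the paper's six cases this is impossible. Take $u = 0$ and $v$ beginning with $010^\oa 10^\oa$ (all hypotheses of the lemma are met by such a configuration): the minimal square prefix of $uv$ is $X^2 = S_1^2 = 00$, but the length-$2$ prefix of $uL(v) = 010^{\oa+1}10^\oa\cdots$ is $01$, which is not a product of minimal squares, so no factorization of the required length exists on the swapped side. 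The correct matching pairs the \emph{product} $S_1^2 S_4^2$ (prefix of $uv$) with the \emph{single} square $S_3^2$ (prefix of $uL(v)$), via $S_3 = S_1 S_4$: the common length is $2|S_3| > 2|S_1|$, and the lone minimal square lives on the swapped side, not the side carrying the hypothesized square. The same phenomenon occurs whenever the given square prefix of $uv$ is $S_2^2$, $S_4^2$, or $S_5^2$ (there the lone square on the swapped side is $S_3^2$ or $S_6^2$). This is exactly why the paper's proof immediately renames the hypothesized square to $Z$ and lets the $X$ of the conclusion be a possibly different, longer square on either of the two sides; your plan, executed as written, attempts to prove a false statement in those branches.

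Two smaller points. First, your claim that the transposed positions lie strictly inside $X^2$ already fails in the configuration above (there $|X^2| = |u|+1$, so the second transposed position lies just outside $X^2$); consequently your taxonomy ``first copy / second copy / straddles the midpoint'' misses the case where the swap pokes out past the end of $X^2$, and it is precisely in that situation that the matching length must grow beyond $2|X|$. Second, the recombination identities are not all of bounded size: the paper's Cases D and F need $S_6 = S_4 S_2^{(\ob+1)/2} S_1 S_4^{(\ob+1)/2}$ for odd $\ob$ and $S_6 = S_4 S_2^{\ob/2} S_3 S_4^{\ob/2}$ for even $\ob$, so the number $n$ of squares on the product side grows with $\ob$. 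This second point does not break your approach, but it is part of the bookkeeping your sketch underestimates; the first point, and the misidentification of the conclusion's $X$ with the hypothesized square, are what must be repaired.
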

  \begin{proof}
    Let $Z^2$ be a minimal square such that $|Z^2| > |u|$ and $Z^2$ is a prefix of $uv$ or $uL(v)$. It is not obvious
    at this point that $Z$ exists but its existence becomes evident as this proof progresses.
    By symmetry we may assume that $Z^2$ is a prefix of $uv$. To prove the claim we consider different cases depending
    on the word $Z$.

    \textbf{Case A.} $Z = S_1 = 0$. Since $u$ is a nonempty suffix of $S_6$ and $|Z^2| > |u|$, it must be that $u = 0$.
    As $v$ begins with $0$, we have that $v$ begins with $01$ by assumption. Since $v \in \Lang(\oa,\ob)$ and
    $|v| \geq |S_6|$, the word $v$ begins with either $010^\oa10^\oa$ or $010^{\oa+1}10^\oa$. In the latter case $L(v)$
    would begin with $10^{\oa+2}1$ contradicting the assumption $L(v) \in \Lang(\oa,\ob)$. Hence $v$ begins with
    $010^\oa 10^\oa$. It follows that $uv$ has $0010^\oa10^\oa$ as a prefix, that is, $uv$ begins with $S_1^2 S_4^2$.
    On the other hand, the word $uL(v)$ has the word $S_3^2 = 010^{\oa+1}10^\oa$ as a prefix. Since $S_3 = S_1 S_4$,
    the conclusion of the claim holds.

    \textbf{Case B.} $Z = S_2 = 010^{\oa-1}$. If $u = 0$, then $v$ has $10^\oa 10^\oa$ as a prefix and, consequently,
    $L(v)$ has $10^{\oa-1}10^\oa$ as a prefix contradicting the fact that $L(v) \in \Lang(\oa,\ob)$. Therefore by
    the assumptions that $u$ is a nonempty suffix of $S_6$ and $|Z^2| > |u|$, it follows that $u = 010^\oa$. Thus $v$
    has $10^\oa$ as a prefix. Using the fact that $L(v) \in \Lang(\oa,\ob)$, we see that $v$ begins with $10^{\oa+1}$
    and $L(v)$ begins with $010^\oa$. Hence $uv$ has $S_2^2 S_1^2$ as a prefix, and $uL(v)$ has $S_3^2$ as a prefix.
    Since $S_2 S_1 = S_3$, we conclude, as in the previous case, that the conclusion holds.

    \textbf{Case C.} $Z = S_3 = 010^\oa$. Using again the fact that $u$ is a suffix of $S_6$ and $|Z^2| > |u|$, we see
    that either $u = 0$ or $u = 010^\oa$. In the first case $v$ begins with $10^{\oa+1}10^\oa$ and $L(v)$ begins with
    $010^\oa 10^\oa$. Hence the word $uL(v)$ has $S_1^2 S_4^2$ as a prefix. As $S_1 S_4 = S_3$, the conclusion follows.
    Let us then consider the other case. Now $L(v)$ begins with $10^{\oa+1}$, so the word $uL(v)$ has $S_2^2 S_1^2$ as
    a prefix. Again, the conclusion follows since $S_2 S_1 = S_3$.

    \textbf{Case D.} $Z = S_4 = 10^\oa$. Now the only option is that $u = 10^\oa$. Using the fact that
    $v \in \Lang(\oa,\ob)$, we see that $v$ cannot begin with $10^\oa1$, so $v$ must have $10^{\oa+1}$ as a prefix.
    Further, since $|v| \geq |S_6|$, it must be that $S_6$ is a prefix of $v$. If $S_6 1$ would be a prefix of $v$,
    then the word $L(v)$ would have the word $(10^\oa)^{\ob+2}1$ as a factor contradicting the fact that
    $L(v) \in \Lang(\oa,\ob)$. Thus $S_6 0$ is a prefix of $v$. Since $v \in \Lang(\oa,\ob)$ and $|v| \geq |S_5S_6|$,
    we have that $S_6 0(10^\oa)^{\ob+1} = S_5^2 10^\oa$ is a prefix of $v$. Consequently, the word $L(v)$ begins with
    $0(10^\oa)^{\ob+1}10^{\oa+1}(10^\oa)^{\ob+1}$, so $uL(v)$ has $S_6^2$ as a prefix. Assume first that $\ob$ is odd.
    It is straightforward to see that in this case
    \begin{align*}
      0(10^\oa)^\ob 10^{\oa+1}(10^\oa)^{\ob+1} = (S_2^2)^{(\ob+1)/2} S_1^2 (S_4^2)^{(\ob+1)/2}.
    \end{align*}
    Thus for the prefix $10^\oa S_5 10^\oa$ of $uv$ we have that
    \begin{align*}
      10^\oa S_5^2 10^\oa = S_4^2 (S_2^2)^{(\ob+1)/2} S_1^2 (S_4^2)^{(\ob+1)/2}.
    \end{align*}
    As $S_6 = S_4 S_2^{(\ob+1)/2} S_1 S_4^{(\ob+1)/2}$, the conclusion follows as before. Assume then that $\ob$ is
    even. It is now easy to show that
    \begin{align*}
      0(10^\oa)^\ob 10^{\oa+1} (10^\oa)^{\ob+1} = (S_2^2)^{\ob/2} S_3^2 (S_4^2)^{\ob/2}.
    \end{align*}
    Therefore
    \begin{align*}
      10^\oa S_5^2 10^\oa = S_4^2 (S_2^2)^{\ob/2} S_3^2 (S_4^2)^{\ob/2}.
    \end{align*}
    Since $S_6 = S_4 S_2^{\ob/2} S_3 S_4^{\ob/2}$, the conclusion again follows.

    \textbf{Case E.} $Z = S_5 = 10^{\oa+1}(10^\oa)^\ob$. Now either $u = 10^\oa$ or $u = 10^{\oa+1}(10^\oa)^{\ob+1}$.
    In the first case $v$ must begin with $0(10^\oa)^\ob10^{\oa+1}(10^\oa)^\ob$. However, this implies that $L(v)$
    begins with $10^{\oa+1}(10^\oa)^{\ob-1}10^{\oa+1}(10^\oa)^\ob$ contradicting the fact that
    $L(v) \in \Lang(\oa,\ob)$. Consider then the latter case where $v$ begins with $0(10^\oa)^\ob$. As
    $L(v) \in \Lang(\oa,\ob)$ and $|v| \geq |S_6|$, it must be that $L(v)$ begins with $10^{\oa+1}(10^\oa)^{\ob+1}$.
    Hence the word $uL(v)$ has $S_6^2$ as a prefix. Since the word $v$ begins with $0(10^\oa)^{\ob+2}$, the word $uv$
    has $S_5^2 S_4^2$ as a prefix. The conclusion follows as $S_5 S_4 = S_6$.

    \textbf{Case F.} $Z = S_6 = 10^{\oa+1}(10^\oa)^{\ob+1}$. Now there are two possibilities: either $u = 10^\oa$ or
    $u = 10^{\oa+1}(10^\oa)^{\ob+1}$. In the first case $v$ begins with $0(10^\oa)^{\ob+1}10^{\oa+1}(10^\oa)^{\ob+1}$,
    so $L(v)$ begins with $10^{\oa+1}(10^\oa)^\ob 10^{\oa+1}(10^\oa)^{\ob+1}$. The word $uL(v)$ has
    $S_4^2 0(10^\oa)^\ob 10^{\oa+1}(10^\oa)^{\ob+1}$ as a prefix. Proceeding as in the Case D depending on the parity
    of $\ob$, we see that the conclusion holds. Consider then the latter case $u = 10^{\oa+1}(10^\oa)^{\ob+1}$. The
    word $v$ must begin with $u$, so $L(v)$ has $0(10^\oa)^{\ob+2}$ as a prefix. Clearly the word $uL(v)$ has
    $S_5^2 S_4^2$ as a prefix. As $S_6 = S_5 S_4$, the conclusion follows.
  \end{proof}

  A more intuitive way of stating \autoref{lem:exchange_squares} is that under the assumptions of the lemma swapping
  two adjacent and distinct letters which do not occur as a prefix of a minimal square affects a product of minimal
  square only locally and does not change its square root.

  \begin{lemma}\label{lem:product_of_squares}
    Let $w$ be a primitive solution to \eqref{eq:square} having the word $S_6 = 10^{\oa+1}(10^\oa)^{\ob+1}$ as a suffix
    such that $w^2, L(w) \in \Lang(\oa,\ob)$. Then $w L(w) \in \Pi(\oa,\ob)$ and $\sqrt{wL(w)} = w$.
  \end{lemma}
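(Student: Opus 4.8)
The plan is to view $wL(w)$ as a two-letter perturbation of $w^2$ and to carry the minimal-square factorization of $w^2$ across this perturbation using \autoref{lem:exchange_squares}. Since $w$ is a primitive solution to \eqref{eq:square}, we may write $w = X_1 \cdots X_n$ with each $X_i$ a minimal square root and $w^2 = X_1^2 \cdots X_n^2$; by the uniqueness of the minimal-square factorization this already gives $w^2 \in \Pi(\oa,\ob)$ with $\sqrt{w^2} = w$. Now $w^2 = w\cdot w$ and $wL(w) = w\cdot L(w)$ share the prefix $w$ and differ only in the two letters at positions $|w|$ and $|w|+1$, i.e.\ the first two letters of the second copy of $w$, which are interchanged. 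If these coincide then $L(w) = w$ and there is nothing to prove, so I assume they are distinct; as $\Lang(\oa,\ob)$ has no factor $11$, this is the only nontrivial situation.

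The next step is to locate the swap inside the factorization $X_1^2 \cdots X_n^2$. Let $X_m^2$ be the minimal square of this factorization containing the midpoint $|w|$, and let $u$ be the portion of $X_m^2$ lying in the first half, so that $X_m^2$ is a prefix of $uv$ with $v = w$ the second copy and $|X_m^2| > |u|$. The crucial structural claim, which uses that $w$ ends in $S_6$, is that $u$ is a nonempty suffix of $S_6$; equivalently that $|u| \le |S_6|$, so that the midpoint is not buried deep inside a long straddling square. Establishing this claim is where the real work lies and is the main obstacle: it requires the positional bookkeeping of pinning down which $X_m^2$ straddles the midpoint and bounding the length of its pre-midpoint part, drawing on the fact that the first half terminates in $S_6$.

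Granting the claim, the remaining hypotheses of \autoref{lem:exchange_squares} are immediate: $v = w$ begins with two distinct letters, $uv$ is a suffix of $w^2 \in \Lang(\oa,\ob)$, $L(v) = L(w) \in \Lang(\oa,\ob)$ by assumption, and $|v| = |w| \ge |S_5S_6|$ (the short exception being treated separately below). Applying \autoref{lem:exchange_squares} with $X^2 = X_m^2$ produces minimal squares $Y_1^2,\ldots,Y_k^2$ forming a prefix of $uL(v) = uL(w)$ of the same length as $X_m^2$ and with $X_m = Y_1\cdots Y_k$. Since the lemma matches prefixes of both $uv$ and $uL(v)$, the re-factored window covers the swapped letters, while outside it $w^2$ and $wL(w)$ coincide; hence the squares $X_1^2,\ldots,X_{m-1}^2$ before the window and $X_{m+1}^2,\ldots,X_n^2$ after it are left intact. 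Splicing yields
\begin{align*}
  wL(w) = X_1^2 \cdots X_{m-1}^2\, Y_1^2 \cdots Y_k^2\, X_{m+1}^2 \cdots X_n^2 \in \Pi(\oa,\ob),
\end{align*}
and passing to square roots gives $\sqrt{wL(w)} = X_1\cdots X_{m-1}(Y_1\cdots Y_k)X_{m+1}\cdots X_n = X_1\cdots X_n = w$.

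It remains to handle the short case $|w| < |S_5S_6|$, where \autoref{lem:exchange_squares} does not apply; the finitely many such solutions are checked directly, the basic instance being $w = S_6$. Here $L(S_6) = 0(10^\oa)^{\ob+2}$, and a direct computation gives $S_6L(S_6) = S_5^2S_4^2 \in \Pi(\oa,\ob)$ with $\sqrt{S_6L(S_6)} = S_5S_4 = S_6$, using the identity $S_6 = S_5S_4$. I expect essentially all the difficulty to reside in the second paragraph: once it is known that the swap sits inside a single minimal square whose pre-midpoint part is a suffix of $S_6$, the conclusion follows formally from \autoref{lem:exchange_squares} together with the uniqueness of minimal-square factorizations.
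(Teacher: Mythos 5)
Your proposal follows the paper's own strategy (locate the minimal square straddling the midpoint of $w^2$, show that its pre-midpoint part $u$ is a nonempty suffix of $S_6$, then invoke \autoref{lem:exchange_squares} and splice), but it contains a genuine gap: the central claim that $u$ is a nonempty suffix of $S_6$ is stated, declared to be ``where the real work lies,'' and then simply granted. That claim is exactly the content of the lemma that requires proof, and the paper establishes it in several steps, none of which appear in your proposal: first, $X_1 \neq S_1$ (otherwise every $X_i = S_1$ and $w$ could not end in $S_6$); second, $|X_1^2| < |w|$ (otherwise $S_6$ occurs inside the single minimal square $X_1^2$, forcing $X_1 \in \{S_5, S_6\}$ or $X_1 = S_3$ when $\ob = 0$, and each case contradicts either the primitivity of $w$ or the fact that $S_6$ is a \emph{proper} suffix); third, taking $r$ maximal with $X_1^2 \cdots X_r^2$ a prefix of $w$, this prefix is proper --- otherwise $w = (X_1 \cdots X_r)^2$, contradicting primitivity --- so $u \neq \varepsilon$ (your proposal never addresses the possibility that the midpoint falls exactly on a square boundary); and fourth, if $|u| > |S_6|$, then $u$, being simultaneously a suffix of $w$ and a proper prefix of $X_{r+1}^2$, has $S_6$ as a proper suffix, which forces $X_{r+1} \in \{S_5, S_6\}$ --- impossible for $S_5$ by inspection and for $S_6$ by the maximality of $r$. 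Without this chain of arguments the proof does not get off the ground.

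Two smaller defects. The short case $|w| < |S_5S_6|$ cannot be dismissed as ``finitely many solutions checked directly'': $\oa$ and $\ob$ are unbounded parameters, so this is not a finite verification. The paper instead proves, using primitivity and $w^2 \in \Lang(\oa,\ob)$, that $w = S_6$ is the \emph{only} solution with suffix $S_6$ of length less than $|S_5S_6|$, so that the generic argument covers everything else; your base computation $S_6L(S_6) = S_5^2S_4^2$ is correct but does not substitute for that argument. Also, your reading of \autoref{lem:exchange_squares} is slightly off: its conclusion is symmetric in $uv$ and $uL(v)$, so the matched window on the $uv$ side may consist of several consecutive squares $X_m^2 \cdots X_{m+t-1}^2$ rather than $X_m^2$ alone (this is why the paper writes $Y_1 \cdots Y_m = X_{r+1} \cdots X_{r+t}$); the splicing still works, but your displayed factorization of $wL(w)$ needs the corresponding adjustment. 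On the positive side, your observation that equal first two letters give $L(w) = w$ and make the statement trivial is a clean shortcut that the paper replaces by proving $w$ begins with distinct letters.
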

  \begin{proof}
    If $w = S_6$, then it is easy to see that $w L(w) = S_5^2 S_4^2$ and $w = S_5 S_4$, so the claim holds. We may thus
    suppose that $S_6$ is a proper suffix of $w$.

    Since $w$ is a solution to \eqref{eq:square}, we have that $w^2 = X_1^2 \cdots X_n^2$ and $w = X_1 \cdots X_n$ for
    some minimal square roots $X_i$. It must be that $n > 1$ as if $n = 1$ then $w = X_1$, and it is not possible for
    $S_6$ to be a proper suffix of $w$. Assume for a contradiction that $X_1 = S_1$. Since $X_1 X_2$ is a prefix of
    $w^2$, it follows that $X_2$ begins with the letter $0$. If $X_2 \neq S_1$, then $X_1 X_2$ begins with $001$ but
    $X_1^2 X_2^2$ begins with $000$, which is impossible. Hence $X_2 = S_1$, and by repeating the argument it follows
    that $X_k = S_1$ for all $k$ such that $1 \leq k \leq n$. Thus $w$ cannot have $S_6$ as a suffix, so we conclude
    that $X_1 \neq S_1$. Hence $w$ always begins with $01$ or $10$.
    
    We show that $|X_1^2| < |w|$. Assume on the contrary that $|X_1^2| \geq |w|$. Since $w$ has the word $S_6$ as a
    suffix, it follows that $S_6$ is a factor of $X_1^2$. It follows that $X_1$ is one of the words $S_5$, $S_6$ or
    $S_3$ (if $\ob = 0$). If $X_1 = S_5$, then $S_6$ occurs in $X_1^2 = 10^{a+1}(10^a)^b 10^{a+1}(10^a)^b$ only as a
    prefix. Thus $w = S_6$ contradicting the fact that $S_6$ is a proper suffix of $w$. If $X_1 = S_6$, then $S_6$
    occurs in $X_1^2 = 10^{\oa+1}(10^\oa)^{\ob+1} 10^{\oa+1}(10^\oa)^{\ob+1}$ as a prefix and as a suffix. Since
    $w \neq S_6$, it must be that $w = X_1^2$ contradicting the primitivity of $w$. Let finally $\ob = 0$ and
    $X_1 = S_3$. Then $S_6$ occurs in $X_1^2 = 010^{\oa+1}10^\oa$ as a suffix. Hence $w = X_1^2$ contradicting again
    the primitivity of $w$.

    Now there exists a maximal $r$ such that $1 \leq r < n$ and $X_1^2 \cdots X_r^2$ is a prefix of $w$. Actually
    $X_1^2 \cdots X_r^2$ is a proper prefix of $w$, as otherwise
    $w^2 = (X_1^2 \cdots X_r^2)^2 = (X_1 \cdots X_r X_1 \cdots X_r)^2$, so $w = (X_1 \cdots X_r)^2$ contradicting the
    primitivity of $w$. Thus when factorizing $w L(w)$ and $w^2$ as products of minimal squares, the first $r$ squares
    are equal. Let $u$ be the nonempty word such that $w = X_1^2 \cdots X_r^2 u$. By the definition of the number $r$,
    we have that $u$ is a proper prefix of $X_{r+1}^2$. Suppose for a contradiction that $|u| > |S_6|$. It follows that
    $u$ has $S_6$ as a proper suffix. This leaves only the possibilities that $X_{r+1}$ is either of the words $S_5$ or
    $S_6$. However, if $X_{r+1} = S_5$, then $S_6$ cannot be a proper suffix of $u$, and if $X_{r+1} = S_6$, then $r$
    is not maximal. We conclude that $|u| \leq |S_6|$.
    
    Next we show that $w$ must satisfy $|w| \geq |S_5S_6|$. Suppose first that $w$ begins with the letter $0$. Then as
    $S_6$ is a proper suffix of $w$ and $w^2 \in \Lang(\oa,\ob)$, it must be that $w$ begins with $0(10^\oa)^{\ob+1}$.
    Suppose that this prefix overlaps with the suffix $S_6$. Then clearly
    $w = 0(10^\oa)^\ob 10^{\oa+1}(10^\oa)^{\ob+1} = (0(10^\oa)^{\ob+1})^2$ contradicting the primitivity of $w$. If the
    prefix $0(10^\oa)^{\ob+1}$ does not overlap with the suffix $S_6$, then $|w| \geq |S_5S_6|$. Assume then that $w$
    begins with the letter $1$. Similar to above, the word $w$ must begin with $10^{\oa+1}(10^\oa)^{\ob+1}$. In this
    case necessarily $|w| \geq |S_5S_6|$.

    Finally, we can apply \autoref{lem:exchange_squares} to the words $u$ and $w$ with $X = X_{r+1}$. We obtain minimal
    squares $Y_1^2, \ldots, Y_m^2$ such that $Y_1^2 \cdots Y_m^2$ is a prefix of $uL(w)$ and
    and $Y_1 \cdots Y_m = X_{r+1} \cdots X_{r+t}$ for some $t \geq 1$. Thus
    \begin{align*}
      w L(w) &= X_1^2 \cdots X_r^2 Y_1^2 \cdots Y_m^2 X_{r+t+1}^2 \cdots X_n^2 \quad \text{and} \\
      w      &= X_1 \cdots X_n = X_1 \cdots X_r Y_1 \cdots Y_m X_{r+t+1} \cdots X_n.
    \end{align*}
    The claim is proved.
  \end{proof}

  \begin{proposition}\label{prp:standard_solutions}
    Let $w \in \rsst{\alpha} \cup L(\rst{\alpha})$. Then the word $w$ is a primitive solution to \eqref{eq:square} in
    $\Lang(\alpha)$.
  \end{proposition}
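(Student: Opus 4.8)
The plan is to recast the statement so that the only genuinely combinatorial content is a single closure property, and then to obtain that property by induction on the index of the (semi)standard word. By definition $w$ is a primitive solution to \eqref{eq:square} in $\Lang(\alpha)$ precisely when $w$ is primitive, $w^2\in\Lang(\alpha)$, and $w^2$ can be written as a product of minimal squares $w^2 = X_1^2\cdots X_n^2$ with $X_1\cdots X_n = w$. Since the minimal-square factorization of a word is unique, this last condition says exactly that $w^2\in\Pi(\oa,\ob)$ (with $\alpha = [0;\oa+1,\ob+1,\ldots]$) and that $\sqrt{w^2}=w$. So I would split the proof into four items: (a) $w$ is primitive; (b) $w^2\in\Lang(\alpha)$; (c) $w^2\in\Pi(\oa,\ob)$; and (d) $\sqrt{w^2}=w$.

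Items (a) and (b) are quick. Every $w\in\rsst{\alpha}\cup L(\rst{\alpha})$ is conjugate to the standard word $s_k$ of the same length: reversal preserves the number of $0$'s, $L$ preserves it too, and both words lie in $\Lang(\alpha)$, so \autoref{prp:conjugate_square}(iii) applies; as $s_k$ is primitive, so is $w$, giving (a). For (b) I would quote \autoref{thm:square_root_condition_iff_rsst}, whose proof already shows $w^2\in\Lang(\alpha)$ for every such $w$. The real point is that (c) $\Rightarrow$ (d) is then automatic through the square root map. By \autoref{thm:square_root_condition_iff_rsst} the square $w^2$ satisfies the square root condition $\psi([w^2])\subseteq[w]$. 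Pick $x$ in the interior of $[w^2]$; the Sturmian word $s_{x,\alpha}$ begins with $w^2$, and if $w^2\in\Pi(\oa,\ob)$ this prefix is an exact product $X_1^2\cdots X_n^2$ of minimal squares, so $\sqrt{s_{x,\alpha}}$ begins with $X_1\cdots X_n$, a word of length $|w|$. On the other hand $\sqrt{s_{x,\alpha}} = s_{\psi(x),\alpha}$ by \autoref{thm:square_root}, and $\psi(x)\in[w]$ by the square root condition, so $\sqrt{s_{x,\alpha}}$ also begins with $w$. Comparing the two length-$|w|$ prefixes yields $X_1\cdots X_n = w$, i.e. (d). Thus the whole proposition reduces to establishing (c).

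For (c) I would argue by induction on $k$, using the reversed recursion $\mirror{s_{k+1,\ell}} = \mirror{s_{k-1}}\,\mirror{s_k}^{\,\ell}$ (the reversal of $s_{k+1,\ell}=s_k^\ell s_{k-1}$) together with the almost-commutation of standard words in the form $\mirror{s_{k-1}}\mirror{s_k} = L(\mirror{s_k}\mirror{s_{k-1}})$. The base cases ($k\le 2$, where the words are minimal squares or short explicit products) are checked directly. In the inductive step the smaller reversed (semi)standard words are already known to be primitive solutions, so \autoref{lem:product_of_squares} applies to them and supplies, together with \autoref{lem:exchange_squares}, the mechanism for gluing: starting from the minimal-square factorizations of $\mirror{s_{k-1}}^2$ and $\mirror{s_k}^2$, I would feed each internal boundary $\mirror{s_{k-1}}\mid\mirror{s_k}$ and $\mirror{s_k}\mid\mirror{s_k}$ of $w^2$ into \autoref{lem:exchange_squares}, verifying its hypotheses ($S_6$ as the relevant suffix, $L(v)\in\Lang(\oa,\ob)$, $|v|\ge|S_5S_6|$) at each boundary. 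Its conclusion lets me rewrite the straddling squares without disturbing the rest, so the factorization propagates and closes up exactly at length $2|w|$, giving $w^2\in\Pi(\oa,\ob)$. The word $L(\mirror{s_k})$ is handled in the same sweep, since $L(\mirror{s_k}) = \mirror{s_{k-1}}\mirror{s_{k-2}}\mirror{s_{k-1}}^{\,a_k-1}$ is itself a product of strictly shorter reversed standard words.

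The main obstacle is exactly this closing-up at the boundaries in (c): the greedy minimal-square factorization does not respect the $\mirror{s_{k-1}}\mid\mirror{s_k}$ boundaries verbatim, so one cannot simply concatenate the factorizations of the factors. The Fibonacci instance $\mirror{s_4}=S_3S_6$ already exhibits this, where the $S_3^2$ coming from the first factor is re-cut as $S_2^2S_1^2$ once the second factor is appended. \autoref{lem:exchange_squares} is precisely the tool asserting that swapping the two straddling letters (which is how the $L$ in the almost-commutation manifests) perturbs the factorization only locally while preserving its square root; the delicate part is checking its hypotheses uniformly, for every partial quotient $\ell$ and at every boundary, so that these local perturbations never accumulate into a leftover partial square at the end of $w^2$.
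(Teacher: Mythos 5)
Your proposal is correct, and its combinatorial core is the paper's own proof: the same induction along the (semi)standard hierarchy, the same almost-commutation identity, the same base cases, and the same two workhorses (\autoref{lem:product_of_squares} and \autoref{lem:exchange_squares}) doing the gluing. Where you genuinely depart from the paper is in how the identity $\sqrt{w^2}=w$ is obtained. The paper never needs your dynamical detour: writing $w^2 = u\,L(u)\cdot v^2$ with $u,v$ shorter reversed (semi)standard words, \autoref{lem:product_of_squares} and the induction hypothesis deliver not just membership in $\Pi(\oa,\ob)$ but also the square roots of the pieces, and since square roots multiply over concatenations of words in $\Pi(\oa,\ob)$, one gets $\sqrt{w^2}=\sqrt{u\,L(u)}\,\sqrt{v^2}=uv=w$ inside the same induction, entirely within word combinatorics. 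You instead prove only the bare factorization statement $w^2\in\Pi(\oa,\ob)$ and recover $\sqrt{w^2}=w$ from the square root condition via \autoref{thm:square_root_condition_iff_rsst} and \autoref{thm:square_root}; this is sound (neither result depends on the present proposition, so there is no circularity) and is exactly the paper's later argument for (i)$\Rightarrow$(ii) of \autoref{thm:complete_characterization}, run in reverse. What your shortcut buys is conceptual clarity: it isolates $w^2\in\Pi(\oa,\ob)$ as the only genuinely combinatorial content. What it costs is twofold. First, the savings are smaller than they appear, since \autoref{lem:product_of_squares} requires its input to be a \emph{solution} to \eqref{eq:square}, not merely a word whose square lies in $\Pi(\oa,\ob)$; so at every inductive stage you must still upgrade (c) to full solution-hood (which you do, via the shortcut). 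Second, the paper's purely combinatorial tracking of square roots is the template reused in \autoref{sec:counter_example} (see \autoref{prp:gamma_solutions}), where the words are not Sturmian and no rotation argument is available; your dynamical route would not transfer there.
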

  \begin{proof}
    Note that \autoref{prp:conjugate_square} implies that $w^2 \in \Lang(\alpha)$. Suppose first that $|w| < |S_6|$
    where $S_6 = \mirror{s}_{3,1} = 10^{\oa+1}(10^\oa)^{\ob+1}$. Clearly the minimal square root $S_1, \ldots, S_5$ are
    solutions to \eqref{eq:square}, so we are left with the case where $w = \mirror{s}_{2,\ell} = 0(10^\oa)^\ell$ for
    some $\ell$ such that $1 < \ell \leq \ob+1$. It is straightforward to see that if $\ell$ is even, then
    \begin{align*}
      w^2 = (S_2^2)^{\ell/2} S_1^2 (S_4^2)^{\ell/2} \ \text{ and } \ w = S_2^{\ell/2} S_1 S_4^{\ell/2}.
    \end{align*}
    If $\ell$ is odd, then
    \begin{align*}
      w^2 = (S_2^2)^{(\ell+1)/2} S_3^2 (S_4^2)^{(\ell+1)/2} \ \text{ and } \ w = S_2^{(\ell+1)/2}S_3 S_4^{(\ell+1)/2}.
    \end{align*}
    Hence $w$ is a solution to \eqref{eq:square}.
    
    We may thus suppose that $|w| \geq |S_6|$, so $w$ has $S_6$ as a suffix. We proceed by induction. Now either
    $w = \mirror{s}_{k,\ell}$ for some $k \geq 3$ with $0 < \ell \leq a_k$ or $L(w) = \mirror{s}_k$ for some
    $k \geq 3$. We assume that the claim holds for every word satisfying the hypotheses which are shorter than $w$.
    Consider first the case $w = \mirror{s}_{k,\ell}$ for some $k \geq 3$ with $0 < \ell \leq a_k$. By the fact that
    $\mirror{s}_{k-1}\mirror{s}_{k-2} = L(\mirror{s}_{k-2})\mirror{s}_{k-1}$ we obtain that
    \begin{align*}
      w^2 = \mirror{s}_{k-2}\mirror{s}_{k-1}^{\,\ell} \mirror{s}_{k-2}\mirror{s}_{k-1}^{\,\ell}
      = \mirror{s}_{k-2}\mirror{s}_{k-1}^{\,\ell-1} L(\mirror{s}_{k-2}) \mirror{s}_{k-1}^{\,\ell-1} \cdot \mirror{s}_{k-1}^{\,2}
      = \mirror{s}_{k,\ell-1} L(\mirror{s}_{k,\ell-1}) \cdot \mirror{s}_{k-1}^{\,2}.
    \end{align*}
    Now if $k = 3$ and $\ell = 1$, then the conclusion holds as $\mirror{s}_{3,1} = S_6$ is a minimal square root.
    Hence we may assume that either $k > 3$ or $k = 3$ and $\ell > 1$. Since $\mirror{s}_{k-1}$ is a solution to
    \eqref{eq:square}, we have that $\mirror{s}_{k-1}^{\,2} = X_1^2 \cdots X_n^2$ and $\mirror{s}_{k-1} = X_1 \cdots
    X_n$ for some minimal square roots $X_i$. In other words,
    \begin{align*}
      \mirror{s}_{k-1}^{\,2} \in \Pi(\oa,\ob) \ \text{ and } \ \sqrt{\mirror{s}_{k-1}^{\,2}} = \mirror{s}_{k-1}.
    \end{align*}
    Since $|\mirror{s}_{k,\ell-1}| \geq |S_6|$, with an application of \autoref{lem:product_of_squares} we obtain that
    \begin{align*}
      \mirror{s}_{k,\ell-1} L(\mirror{s}_{k,\ell-1}) \in \Pi(\oa,\ob) \ \text{ and } \
      \sqrt{\mirror{s}_{k,\ell-1} L(\mirror{s}_{k,\ell-1})} = \mirror{s}_{k,\ell-1}.
    \end{align*}
    Thus $w^2 \in \Pi(\oa,\ob)$ and
    \begin{align*}
      \sqrt{w^2} = \sqrt{\mirror{s}_{k,\ell-1} L(\mirror{s}_{k,\ell-1})} \sqrt{\mirror{s}_{k-1}^{\,2}} =
      \mirror{s}_{k,\ell-1}\mirror{s}_{k-1} = w,
    \end{align*}
    so $w$ is a solution to \eqref{eq:square}.
    
    Consider next the case $w = L(\mirror{s}_k)$ for some $k \geq 3$. Similar to above,
    \begin{align*}
      w^2 & = L(\mirror{s}_{k-2})\mirror{s}_{k-1}^{\,a_k} L(\mirror{s}_{k-2})\mirror{s}_{k-1}^{\,a_k}
      = L(\mirror{s}_{k-2})\mirror{s}_{k-1}^{\,a_k+1} \mirror{s}_{k-2}\mirror{s}_{k-1}^{\,a_k-1} \\
      & = L(\mirror{s}_{k-2})\mirror{s}_{k-1} \mirror{s}_{k-3} \mirror{s}_{k-2}^{\,a_{k-1}} \mirror{s}_{k-1}^{\,a_k-1} \mirror{s}_{k-2}\mirror{s}_{k-1}^{\,a_k-1}
      = L(\mirror{s}_{k-2})\mirror{s}_{k-1} \mirror{s}_{k-3} \mirror{s}_{k-2}^{\,a_{k-1} - 1} \cdot \mirror{s}_{k,a_k-1}^{\,2} \\
      & = \mirror{s}_{k-1}\mirror{s}_{k-2} \mirror{s}_{k-3} \mirror{s}_{k-2}^{\,a_{k-1} - 1} \cdot \mirror{s}_{k,a_k-1}^{\,2}
      = \mirror{s}_{k-1} L(\mirror{s}_{k-1}) \cdot \mirror{s}_{k,a_k-1}^{\,2}.
    \end{align*}
    If $k > 3$, then the claim follows using the induction hypothesis and \autoref{lem:product_of_squares} as above. In
    the case $k = 3$ we have that
    \begin{align*}
      \mirror{s}_{k-1} L(\mirror{s}_{k-1}) \in \Pi(\oa,\ob) \ \text{ and } \
      \sqrt{\mirror{s}_{k-1} L(\mirror{s}_{k-1})} = \mirror{s}_{k-1}.
    \end{align*}
    Namely, it is not difficult to see that if $\ob$ is even, then
    \begin{align*}
      \mirror{s}_{k-1} L(\mirror{s}_{k-1}) = (S_2^2)^{1+\ob/2} S_1^2 (S_4^2)^{\ob/2} \ \text{ and } \
      \mirror{s}_{k-1} = S_2^{1+\ob/2} S_1 S_4^{\ob/2}.
    \end{align*}
    If $\ob$ is odd, then
    \begin{align*}
      \mirror{s}_{k-1} L(\mirror{s}_{k-1}) = (S_2^2)^{(\ob+1)/2}S_3^2 (S_4^2)^{(\ob-1)/2} \ \text{ and } \
      \mirror{s}_{k-1} = S_2^{(\ob+1)/2} S_3 S_4^{(\ob-1)/2}.
    \end{align*}
    Thus $w$ is a solution to \eqref{eq:square} also in the case $k = 3$.
  \end{proof}

  Note that a word $w$ in the set $L(\rsst{\alpha})\setminus L(\rst{\alpha})$ is a solution to \eqref{eq:square} but
  not in the language $\Lang(\alpha)$. Rather, $w$ is a solution to \eqref{eq:square} in $\Lang(\beta)$ where
  $\beta$ is a suitable irrational such that $L(w)$ is a reversed standard word of slope $\beta$.

  From \autoref{prp:standard_solutions} we conclude the following interesting fact:

  \begin{corollary}
    There exist arbitrarily long primitive solutions of \eqref{eq:square} in $\Lang(\alpha)$.
  \end{corollary}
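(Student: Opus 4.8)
The plan is to exhibit an explicit infinite family of primitive solutions whose lengths grow without bound, and the natural candidates are the reversed standard words. First I would observe that for every $k \geq 0$ the reversed standard word $\mirror{s}_k$ lies in $\rst{\alpha}$, and hence in $\rsst{\alpha} \cup L(\rst{\alpha})$ since $\rst{\alpha} \subseteq \rsst{\alpha}$. By \autoref{prp:standard_solutions} this immediately yields that each $\mirror{s}_k$ is a primitive solution to \eqref{eq:square} in $\Lang(\alpha)$. Thus the entire combinatorial content---that $\mirror{s}_k$ is a product of minimal square roots satisfying the word equation, that it is primitive, and that $\mirror{s}_k^{\,2} \in \Lang(\alpha)$---has already been discharged by the preceding proposition, and what remains is only a length argument.

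The second step is to verify that these solutions are arbitrarily long. Since $|\mirror{s}_k| = |s_k| = q_k$, it suffices to note that the sequence $(q_k)$ of convergent denominators is strictly increasing and unbounded. This is immediate from the recurrence $q_k = a_k q_{k-1} + q_{k-2}$ together with $a_k \geq 1$ and $q_0 = 1$: the denominators form a strictly increasing sequence of positive integers, so $q_k \to \infty$. Consequently, given any target length $N$, one can choose $k$ large enough that $|\mirror{s}_k| = q_k > N$, and $\mirror{s}_k$ is then a primitive solution of length exceeding $N$.

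I expect no genuine obstacle here, as \autoref{prp:standard_solutions} already certifies primitivity and membership in $\Lang(\alpha)$ for every reversed standard word. The only points worth stating explicitly are the inclusion $\rst{\alpha} \subseteq \rsst{\alpha} \cup L(\rst{\alpha})$ needed to apply \autoref{prp:standard_solutions}, and the fact that $\mirror{s}_k$ is itself primitive; the latter follows from the primitivity of standard words recorded in \autoref{ssec:sturmian_words}, since reversal preserves primitivity (if $\mirror{s}_k = z^m$ then $s_k = \mirror{z}^{\,m}$, forcing $m = 1$). The corollary then follows at once.
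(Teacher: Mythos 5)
Your proposal is correct and matches the paper's approach: the corollary is stated there as an immediate consequence of \autoref{prp:standard_solutions}, exactly as you argue, with the unboundedness of the lengths $q_k$ of the reversed (semi)standard words supplying the ``arbitrarily long'' part. Your extra check that reversal preserves primitivity is harmless but redundant, since \autoref{prp:standard_solutions} already asserts that the solutions are primitive.
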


  We can now prove \autoref{thm:complete_characterization}.

  \begin{proof}[Proof of \autoref{thm:complete_characterization}]
    By \autoref{prp:standard_solutions} and \autoref{thm:square_root_condition_iff_rsst} it is sufficient to prove that
    \emph{(i)} implies \emph{(ii)}.

    Suppose that $w$ is a solution to \eqref{eq:square} in $\Lang(\alpha)$. Write $w^2$ as a product of minimal
    squares: $w^2 = X_1^2 X_2^2 \cdots X_n^2$. Let $x \in [w^2]$. Then the word $s_{x,\alpha}$ begins with
    $X_1^2 X_2^2 \cdots X_n^2$, so by \autoref{thm:square_root} the word $\sqrt{s_{x,\alpha}} = s_{\psi(x),\alpha}$
    begins with $X_1 X_2 \cdots X_n$.  Therefore $\psi(x) \in [X_1 X_2 \cdots X_n] = [w]$. Thus $w^2$ satisfies the
    square root condition.
  \end{proof}

  \section{A More Detailed Combinatorial Description of the Square Root Map}\label{sec:combinatorial_version}
  Recall from \autoref{sec:square_root_map} that the square root $\sqrt{s}$ of a Sturmian word $s$ has the same factors
  as $s$. The proofs were dynamical; we used the special mapping $\psi$ on the circle. In this section we describe
  combinatorially why the language is preserved; we give a location for any prefix of $\sqrt{s}$ in $s$. As a side
  product, we are able to describe when a Sturmian word is uniquely factorizable as a product of squares of reversed
  (semi)standard words.

  Let us begin with an introductory example. Recall from \autoref{sec:square_root_map} the square root of the Fibonacci
  word $f$:
  \begin{align*}
    f        &= (010)^2 (100)^2 (10)^2 (01)^2 0^2 (10010)^2 (01)^2 \cdots, \\
    \sqrt{f} &= 010\cdot100\cdot10\cdot01\cdot0\cdot10010\cdot01\cdots.
  \end{align*}
  Obviously the square root $X_1 = 010$ of $(010)^2$ occurs as a prefix of $f$. Equally clearly the word
  $010\cdot100 = \sqrt{(010)^2(100)^2}$ occurs, not as a prefix, but after the prefix $X_1$ of $f$. Thus the position
  of the first occurrence of $010\cdot100$ shifted $|X_1| = 3$ positions from the position of the first occurrence of
  $X_1$. However, when comparing the position of the first occurrence of $\sqrt{(010)^2(100)^2(10)^2}$ with the first
  occurrence of $010\cdot100$, we see that there is no further shift. By further inspection, the word
  $\sqrt{(010)^2(100)^2(10)^2(01)^20^2(10010)^2}$ occurs for the first time at position $|X_1|$ of $f$. This is no
  longer true for the first seven minimal squares; the first occurrence of
  $X_1 X_2 = 010\cdot100\cdot10\cdot01\cdot0\cdot10010\cdot01$ is at position $|X_1 X_2| = 16$ of $f$. The amount of
  shift from the previous position $|X_1| = 3$ is $|X_2| = 13$; observe that both of these numbers are Fibonacci
  numbers. Thus the amount of shift was exactly the length of the square roots added after observing the previous
  shift. As an observant reader might have noticed, both of the words $X_1$ and $X_2$ are reversed standard words, or
  equivalently, primitive solutions to \eqref{eq:square}. Repeating similar inspections on other Sturmian words
  suggests that there is a certain pattern to these shifts and that knowing the pattern would make it possible to
  locate prefixes of $\sqrt{s}$ in the Sturmian word $s$. Thus it makes very much sense to ``accelerate'' the square
  root map by considering squares of solutions to \eqref{eq:square} instead of just minimal squares. Next we make these
  somewhat vague observations more precise.

  Every Sturmian word has a solution of \eqref{eq:square} as a square prefix. Next we aim to characterize Sturmian
  words having infinitely many solutions of \eqref{eq:square} as square prefixes. The next two lemmas are key
  results towards such a characterization.

  \begin{lemma}\label{lem:square_union}
    Consider the reversed (semi)standard word $\mirror{s}_{k,l}$ of slope $\alpha$ with $k \geq 2$ and
    $0 < \ell \leq a_k$. The set $[\mirror{s}_{k,\ell}]\setminus\{1-\alpha\}$ equals the disjoint union
    \begin{align*}
      \left( \bigcup_{i = 0}^\infty \bigcup_{j = 1}^{a_{k+2i}}[\mirror{s}_{k+2i,j}^{\,2}] \right) \setminus \bigcup_{i = 1}^{l-1} [\mirror{s}_{k,i}^{\,2}].
    \end{align*}
    Analogous representations exist for the sets $[\mirror{s}_0]\setminus\{1-\alpha\}$ and
    $[\mirror{s}_1]\setminus\{1-\alpha\}$.
  \end{lemma}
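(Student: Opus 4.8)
The plan is to work entirely with the explicit coordinates of the intervals $[\mirror{s}_{k,\ell}]$ and $[\mirror{s}_{k,\ell}^{\,2}]$ on the circle, and to exhibit the claimed set as the result of repeatedly ``peeling off'' the square intervals from the strictly nested family $[\mirror{s}_{k,1}]\supsetneq[\mirror{s}_{k,2}]\supsetneq\cdots$. First I would record the location of $[\mirror{s}_{k,\ell}]$. Since $\mirror{s}_{k,\ell}$ is the right special factor of length $q_{k,\ell}$, \autoref{prp:endpoint_alpha_characterization} gives $[\mirror{s}_{k,\ell}] = I(-(q_{k,\ell-1}+1)\alpha,\,1-\alpha)$, an interval of length $\|q_{k,\ell-1}\alpha\|$ with $1-\alpha$ as an endpoint. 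Using \eqref{eq:distance_formula} to fix the side of $0$ on which $\{q_{k,\ell-1}\alpha\}$ lies, this reads, for $k$ even,
\[
  [\mirror{s}_{k,\ell}] = \big[(1-\alpha)-\|q_{k,\ell-1}\alpha\|,\ 1-\alpha\big),
\]
and symmetrically (lying above $1-\alpha$) for $k$ odd. Because $s_{k,\ell+1} = s_{k-1}s_{k,\ell}$, the word $\mirror{s}_{k,\ell}$ is a prefix of $\mirror{s}_{k,\ell+1}$, so these intervals are nested and share the endpoint $1-\alpha$; as $\|q_{k,\ell-1}\alpha\|$ strictly decreases in $\ell$ by \eqref{eq:distance_difference}, the nesting is strict with innermost member $[\mirror{s}_{k,a_k}] = [\mirror{s}_k]$. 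Moreover $\|q_{k+2,0}\alpha\| = \|q_k\alpha\| < \|q_{k,a_k-1}\alpha\|$, so $[\mirror{s}_{k+2,1}]\subsetneq[\mirror{s}_k]$ as well.

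The crux is to identify the square interval $[\mirror{s}_{k,\ell}^{\,2}]$ as the shell between consecutive members of this family. Writing $[\mirror{s}_{k,\ell}^{\,2}] = [\mirror{s}_{k,\ell}]\cap R^{-q_{k,\ell}}[\mirror{s}_{k,\ell}]$ and computing the image of the endpoint $1-\alpha$ under the shift by $-q_{k,\ell}\alpha$ — which displaces it by exactly $\|q_{k,\ell}\alpha\|$ into $[\mirror{s}_{k,\ell}]$, the direction again read off from \eqref{eq:distance_formula} — a short computation using $\|q_{k,\ell}\alpha\| = \|q_{k,\ell-1}\alpha\| - \|q_{k-1}\alpha\|$ from \eqref{eq:distance_difference} yields, for $k$ even,
\[
  [\mirror{s}_{k,\ell}^{\,2}] = \big[(1-\alpha)-\|q_{k,\ell-1}\alpha\|,\ (1-\alpha)-\|q_{k,\ell}\alpha\|\big).
\]
Comparing endpoints, this is exactly $[\mirror{s}_{k,\ell}]\setminus[\mirror{s}_{k,\ell+1}]$ for $1\leq\ell<a_k$, i.e.\ the portion of $[\mirror{s}_{k,\ell}]$ farthest from $1-\alpha$; and for $\ell = a_k$ the same computation gives $[\mirror{s}_k^{\,2}] = [\mirror{s}_k]\setminus[\mirror{s}_{k+2,1}]$, using $\|q_{k,a_k-1}\alpha\| = \|q_k\alpha\|+\|q_{k-1}\alpha\|$. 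I expect this coordinate computation, together with getting the two parities of $k$ (and hence the side of $1-\alpha$) consistently right, to be the main obstacle; everything else is bookkeeping.

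With these two facts the proof telescopes. Peeling the squares $\ell = 1,\ldots,a_k$ off $[\mirror{s}_{k,1}]$ leaves $[\mirror{s}_{k+2,1}]$, and iterating gives, for every $N$,
\[
  [\mirror{s}_{k,1}] = \Big(\bigsqcup_{i=0}^{N}\bigsqcup_{j=1}^{a_{k+2i}}[\mirror{s}_{k+2i,j}^{\,2}]\Big)\sqcup[\mirror{s}_{k+2(N+1),1}],
\]
a disjoint union by the nesting established above. Since $1-\alpha$ is an endpoint of each remainder $[\mirror{s}_{k+2(N+1),1}]$ and its length $\|q_{k+2N}\alpha\|$ tends to $0$, letting $N\to\infty$ and discarding the single point $1-\alpha$ settles the case $\ell = 1$. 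For general $\ell$ the same peeling shows $[\mirror{s}_{k,1}]\setminus[\mirror{s}_{k,\ell}] = \bigsqcup_{i=1}^{\ell-1}[\mirror{s}_{k,i}^{\,2}]$, and these are precisely the $i=0$, $j<\ell$ terms of the big union; subtracting them yields the stated identity. Finally, the base cases $k=0,1$, where $\mirror{s}_0 = 0$ and $\mirror{s}_1 = 10^{a_1-1}$ and no semistandard words exist below level $2$, follow by the identical peeling argument, the only change being that the initial level contributes the single square $[\mirror{s}_k^{\,2}]$ before the decomposition passes to level $k+2$.
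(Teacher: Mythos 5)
Your proposal is correct and follows essentially the same route as the paper's proof: both identify each square interval $[\mirror{s}_{k+2i,j}^{\,2}]$ as the difference between consecutive members of the nested family of reversed (semi)standard intervals sharing the endpoint $1-\alpha$ (obtained from \autoref{prp:endpoint_alpha_characterization}), and then telescope, letting the lengths of the remainder intervals tend to $0$ so that only the point $1-\alpha$ is missed. The single point of divergence is minor: where you locate $[\mirror{s}_{k,\ell}^{\,2}] = [\mirror{s}_{k,\ell}]\cap R^{-q_{k,\ell}}([\mirror{s}_{k,\ell}])$ by an explicit coordinate computation keyed to the parity of $k$ via \eqref{eq:distance_formula}, the paper instead rules out the alternative position of the square interval (the half adjacent to $1-\alpha$) by a short contradiction argument showing it would force $q_{k-1} = b_1 + b_2$.
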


  To put it more simply: for each $x \neq 1-\alpha$ there exists a unique reversed (semi)standard word $w$ such that
  $x \in [w^2]$. To illustrate the proof, we begin by giving a proof sketch.

  \begin{proof}[Proof Sketch]
    Consider as an example the interval $[0] = I(0, 1-\alpha)$. It is easy to see that
    $[0^2] = I(0, -2\alpha) = I(0, -(q_0 + 1)\alpha)$, so $[0] = [0^2] \cup I(-(q_0 + 1)\alpha, 1-\alpha)$. The
    interval $I(-(q_0 + 1)\alpha, 1-\alpha)$ is the interval of the factor $\mirror{s}_{2,1}$. Therefore
    $[0] = [\mirror{s}_0^{\,2}] \cup [\mirror{s}_{2,1}]$. Since $\mirror{s}_{2,1}^{\,2} \in \Lang(\alpha)$, the
    interval $[\mirror{s}_{2,1}^2]$ splits into two parts: $[\mirror{s}_{2,1}] = [\mirror{s}_{2,1}^{\,2}] \cup J$. It
    is straightforward to show that $J = I(-(q_{2,1} + 1)\alpha, 1-\alpha)$. Again, the interval $J$ is the interval of
    the factor $w$ which equals either $\mirror{s}_{2,2}$ or $\mirror{s}_{4,1}$ depending on the number $a_2$. So
    $[0] = [\mirror{s}_0^{\,2}] \cup [\mirror{s}_{2,1}^{\,2}] \cup [w]$. This process can be repeated for the interval
    $[w]$ and indefinitely after that. The very same idea can be applied to any interval $[\mirror{s}_{k,\ell}]$.
  \end{proof}

  \begin{proof}[Proof of \autoref{lem:square_union}]
    Consider the lengths of the reversed (semi)standard words beginning with the same letter as $\mirror{s}_{k,\ell}$. Out
    of these lengths we can form the unique increasing sequence $(b_n)$ such that $b_1 = q_{k,\ell-1}$. If we set
    $s_1 = \mirror{s}_{k,\ell}$ and $J_1 = I(-(b_1+1)\alpha, 1-\alpha)$, then based on the observations in the proof of
    \autoref{prp:endpoint_alpha_characterization} we see that $J_1 = [s_1]$. The interval $J_1$ is split by the point
    $\{-(q_{k,\ell}+1)\alpha\} = \{-(b_2+1)\alpha\}$. It must be that $[s_1^2] = I(-(b_1+1)\alpha, -(b_2+1)\alpha)$.
    Otherwise $[s_1^2] = [s_1] \cap R^{-b_2}([s_1]) = I(-(b_2+1)\alpha, 1-\alpha)$, so the points
    $\{-(b_1+b_2)\alpha\}$ and $\{-b_1\alpha\}$ are on the opposite sides of $0$. Furthermore, $\|(b_1+b_2)\alpha\|$
    equals the distance between the points $\{-b_1\alpha\}$ and $\{-b_2\alpha\}$, so
    $\|(b_1+b_2)\alpha\| = \|q_{k-1}\alpha\|$. Since also the point $\{-q_{k-1}\alpha\}$ is on the side opposite to
    $\{-b_1\alpha\}$, it follows that $q_{k-1} = b_1 + b_2$ which is obviously false. Thus
    $J_2 = J_1 \setminus [s_1^2] = I(-(b_2+1)\alpha,1-\alpha)$ is the interval of $s_2$, the unique reversed
    (semi)standard word of length $b_3$ beginning with the same letter as $s_1$. By repeating this when $n > 1$, we see
    that the interval $J_n$ is split by the point $\{-(b_{n+1}+1)\alpha\}$ and that
    $[s_n^2] = I(-(b_n+1)\alpha, -(b_{n+1}+1)\alpha)$. Then there is a unique reversed (semi)standard word $s_{n+1}$
    such that $[s_{n+1}] = I(-(b_{n+1}+1)\alpha,1-\alpha) = J_n \setminus [s_n^2]$; we set $J_{n+1} = [s_{n+1}]$. By
    the definition of the sequence $(b_n)$, the words $s_{n+1}$ and $s_1$ begin with the same letter. This yields a
    well-defined sequence $(J_n)$ of nested subintervals of $J_1$. It is clear that $|J_n| \to 0$ as $n \to \infty$. It
    follows that
    \begin{align*}
      [\mirror{s}_{k,\ell}] \cup \{1-\alpha\} = J_1 \cup \{1-\alpha\} = \bigcup_{n=1}^\infty [s_n^2] \cup \{1-\alpha\}.
    \end{align*}
    The sets $[s_n^2]$ are by definition disjoint. The claim follows since the indexing in the claim is just another
    way to express the reversed (semi)standard words having lengths from the sequence $(b_n)$.

    The above proof works as it is for the cases $\mirror{s}_0$ and $\mirror{s}_1$; only minor adjustments in notation
    are needed.
  \end{proof}

  \begin{lemma}\label{lem:not_square_prefix}
    Let $u \in \rsst{\alpha}$ and $v \in \rsst{\alpha} \cup L(\rsst{\alpha})$. Then $u^2$ is never a proper prefix of
    $v^2$.
  \end{lemma}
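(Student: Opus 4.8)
The plan is to assume for contradiction that $u^2$ is a proper prefix of $v^2$ and to extract two elementary consequences, then split on the shape of $v$. From $|u^2|<|v^2|$ we get $|u|<|v|$; comparing the first $|u|$ letters of $v^2$ with those of $v$ shows that $u$ is a prefix of $v$, and comparing positions $|u|,\dots,2|u|-1$ shows that $u$ occurs in $v^2$ at position $|u|$. The case $|u|=1$ is immediate: then $u=\mirror{s}_0=0$ and $u^2=00$, but by the almost-commutation property $s_ks_{k-1}=wxy$, $s_{k-1}s_k=wyx$ every (semi)standard word of length $\geq 2$ ends in two distinct letters, so every element of $\rsst{\alpha}\cup L(\rsst{\alpha})$ of length $\geq 2$ begins with $01$ or $10$; hence no such $v^2$ begins with $00$. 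So I may assume $|u|\geq 2$, whence $u$ and $v$ share their first two letters, say $yx$ with $x\neq y$.

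The case $v\in\rsst{\alpha}$ is the clean one and is the main engine. Here $u^2,v^2\in\Lang(\alpha)$ by \autoref{prp:conjugate_square}, so the intervals $[u^2]$ and $[v^2]$ are nonempty, and since $u^2$ is a prefix of $v^2$ every $s_{x,\alpha}$ beginning with $v^2$ begins with $u^2$, giving $[v^2]\subseteq[u^2]$. But $|u|<|v|$ forces $u\neq v$, so \autoref{lem:square_union} (for each $x\neq 1-\alpha$ there is a unique reversed (semi)standard word $w$ with $x\in[w^2]$) yields $[u^2]\cap[v^2]=\varnothing$. Thus $[v^2]=\varnothing$, a contradiction.

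The remaining case $v\in L(\rsst{\alpha})$, say $v=L(\mu)$ with $\mu\in\rsst{\alpha}$, is where the real work lies, because $v\notin\rsst{\alpha}$ and $v^2$ need not lie in $\Lang(\alpha)$, so \autoref{lem:square_union} is unavailable. I would first note that applying $L$ (which alters only the first two letters) to ``$u$ is a prefix of $v=L(\mu)$'' shows $L(u)$ is a prefix of $\mu$, and then split on lengths. If $2|u|\leq|v|=|\mu|$, then $u^2$ is a prefix of $v$, so $L(u^2)=L(u)\,u$ is a prefix of $\mu$; comparing this with the factor $u^2=u\,u$ (which also lies in $\Lang(\alpha)$) shows that the common part $w_0$, obtained by deleting the first two letters, is preceded in the two factors by $y$ and by $x$ respectively, i.e. $w_0$ is \emph{left special}. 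Writing $u=yxr$ I get $w_0=ryxr$, while $\mirror{u}=\mirror{r}\,x\,y$ is the (semi)standard prefix of $c_\alpha$ of length $|u|$; since left special factors are nested as prefixes (being reversals of the right special factors), the length-$|u|$ prefix of $w_0$ must be $\mirror{u}$, yet $w_0[0..|u|-1]=r\,y\,x$ ends in $yx$ whereas $\mirror{u}$ ends in $xy$ — a contradiction. If instead $|v|<2|u|$, the occurrence of $u$ at position $|u|$ forces $\mu=L(v)=L(u)\,u[0..p-1]$ with $p=|\mu|-|u|\in(0,|u|)$; computing the length-$|u|$ suffix of $\mu$ gives exactly the cyclic rotation $C^p(u)$ (with a direct check when $p=1$), but this suffix must equal the right special factor $u$ of its length (right special factors are nested as suffixes), so $C^p(u)=u$ with $0<p<|u|$, contradicting the primitivity of $u$.

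The hard part is precisely this last case: one must recognize that $v^2$ may escape $\Lang(\alpha)$ and therefore replace the interval argument by the uniqueness of the left special factor (for $2|u|\le|\mu|$) and by primitivity together with suffix-nesting of the right special factors (for $|v|<2|u|$). Once the reduction $L(u)\sqsubseteq\mu$ and the length dichotomy are set up, each branch collapses to a single mismatch of the last two letters of a special factor, or to a forbidden cyclic rotation; the only delicacy is the careful tracking of how $L$ perturbs the first two letters across the two copies of $L(\mu)$, which is exactly what the dichotomy on $2|u|$ versus $|\mu|$ isolates.
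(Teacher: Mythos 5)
Your proof is correct, and in the hard case it takes a genuinely different route from the paper's. For $v \in \rsst{\alpha}$ the two arguments coincide: both play the disjointness of $[u^2]$ and $[v^2]$ from \autoref{lem:square_union} against the inclusion $[v^2] \subseteq [u^2]$ forced by the prefix relation. For $v = L(\mu) \in L(\rsst{\alpha})$, however, the paper stays on the circle: it observes that $v^2$ is a minimal square when $|v| \leq |\mirror{s}_1|$, and otherwise uses $[v] = I(-(q_{k-1}+1)\alpha, 1-\alpha)$ (from \autoref{prp:endpoint_alpha_characterization}) together with the distance estimates from the proof of \autoref{lem:square_union} to conclude that $[v]$ and $[u^2]$ are disjoint, which is incompatible with $u^2$ being a prefix of $v^2$ (one of $u^2$, $v$ would be a prefix of the other, pushing a nonempty interval inside an interval disjoint from it). You instead argue purely combinatorially: in the branch $2|u| \leq |\mu|$ you manufacture the left special factor $w_0 = ryxr$ from the two factors $u\,u$ and $L(u)\,u$ and contradict uniqueness of left special factors, whose length-$|u|$ representative is $\mirror{u}$; in the branch $|v| < 2|u|$ you use suffix-nesting of right special factors to force $C^p(u) = u$ with $0 < p < |u|$, contradicting primitivity of $u$. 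This trades the rotation geometry for the special-factor formalism plus primitivity, so your version of this case needs none of the metric machinery (\autoref{prp:closest}, the $\|q_{k-1}\alpha\|$ estimates) and is arguably more elementary and self-contained; the paper's version is shorter given the interval apparatus already built in this section and keeps the whole development in a single dynamical language. One small inaccuracy to repair: when $p = 1$ the length-$|u|$ suffix of $\mu = L(u)\,u[0]$ is $u[0]u[2]\cdots u[|u|-1]u[0]$, which is \emph{not} $C^1(u)$; still, equating this suffix with $u$ forces $u[0] = u[1] = \cdots = u[|u|-1]$, contradicting the fact that every word of $\rsst{\alpha}$ of length at least $2$ begins with two distinct letters, so the ``direct check'' you flagged does close that subcase.
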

  \begin{proof}
    If $v \in \rsst{\alpha}$ and $|u| \neq |v|$, then by \autoref{lem:square_union}, the intervals $[u^2]$ and $[v^2]$
    are disjoint. Hence $u^2$ can never be a proper prefix of $v^2$. Assume then that $v \in L(\rsst{\alpha})$. If
    $|v| \leq |\mirror{s}_1|$, then $v^2$ is a minimal square, so it is not possible for $u^2$ to be a proper prefix of
    $v^2$. Suppose that $|v| = |\mirror{s}_{k,\ell}|$ for some $k \geq 2$ with $0 < \ell \leq a_k$. As in the proof of
    \autoref{prp:endpoint_alpha_characterization}, we have that $[v] = I(-(q_{k-1}+1)\alpha, 1-\alpha)$. If $u$ begins
    with the same letter as $v$ and $|u| < |v|$, then $|u| \leq |\mirror{s}_{k-1}|$. It follows, as in the proof of
    \autoref{lem:square_union}, that the distance between $1-\alpha$ and either of the endpoints of the interval
    $[u^2]$ must be at least $\|q_{k-1}\alpha\|$. Hence the intervals $[v]$ and $[u^2]$ are disjoint, so $u^2$ is not a
    proper prefix of $v^2$.
  \end{proof}

  %

  Let $s$ be a fixed Sturmian word of slope $\alpha$. Since the index of a factor of a Sturmian word is finite,
  \autoref{lem:not_square_prefix} and \autoref{thm:complete_characterization} imply that if $s$ has infinitely many
  solutions of \eqref{eq:square} as square prefixes then no word in $\rsst{\alpha}$ is a square prefix of $s$. We have
  now the proper tools to prove the following:

  \begin{proposition}\label{prp:begins_with_rsst}
    Let $s_{x,\alpha}$ be a Sturmian word of slope $\alpha$ and intercept $x$. Then $s_{x,\alpha}$ begins with a square
    of a word in $\rsst{\alpha}$ if and only if $x \neq 1-\alpha$.
  \end{proposition}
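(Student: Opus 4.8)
The plan is to recognize this proposition as essentially a reformulation of \autoref{lem:square_union}, so that almost all the work has already been done. Recall that $s_{x,\alpha}$ begins with a finite word $v$ if and only if $x \in [v]$; in particular $s_{x,\alpha}$ begins with the square $w^2$ of some $w \in \rsst{\alpha}$ if and only if $x \in [w^2]$ for some such $w$. Thus the statement is equivalent to the assertion that $x$ lies in some interval $[w^2]$ with $w \in \rsst{\alpha}$ precisely when $x \neq 1-\alpha$, which is exactly the consequence of \autoref{lem:square_union} recorded in the remark immediately following it. My task is therefore to assemble that remark into a clean two-sided argument, paying attention to the starting intervals and to the excluded point.

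First I would set up the covering of the circle. Since $\mirror{s}_0 = 0$ and $\mirror{s}_1 = 10^{\oa} = S_4$, we have $[\mirror{s}_0] = I_0$ and $[\mirror{s}_1] = I_1$, whence $[\mirror{s}_0] \cup [\mirror{s}_1] = I_0 \cup I_1 = \T$. Every reversed (semi)standard word begins with $0$ when its index is even and with $1$ when its index is odd, so the two ``analogous representations'' of \autoref{lem:square_union} for $[\mirror{s}_0]$ and $[\mirror{s}_1]$ between them account for the squares $[w^2]$ of all $w \in \rsst{\alpha}$. Applying the lemma to both intervals and taking the union then shows that the family $\{[w^2] : w \in \rsst{\alpha}\}$ is a disjoint cover of $([\mirror{s}_0] \cup [\mirror{s}_1]) \setminus \{1-\alpha\} = \T \setminus \{1-\alpha\}$. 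From this both implications fall out at once: if $x \neq 1-\alpha$, then $x$ lies in one of these covering squares $[w^2]$ and hence $s_{x,\alpha}$ begins with $w^2$; conversely, since \autoref{lem:square_union} exhibits each $[w^2]$ as a subset of $[\mirror{s}_0]\setminus\{1-\alpha\}$ or $[\mirror{s}_1]\setminus\{1-\alpha\}$, no such square contains $1-\alpha$, so if $x = 1-\alpha$ then $s_{x,\alpha}$ begins with no square of a word in $\rsst{\alpha}$.

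The only point demanding care, and the step I expect to be the main (if minor) obstacle, is the bookkeeping at the endpoints of $I_0$ and $I_1$. One must check, under either choice of the half-open convention for $I_0$ and $I_1$, that $[\mirror{s}_0] \cup [\mirror{s}_1]$ genuinely exhausts $\T$, so that the boundary point $0$ is covered, and that $1-\alpha$, which is the common endpoint of $I_0$ and $I_1$ and the accumulation point of the nested intervals built in \autoref{lem:square_union}, is the unique point left uncovered. Both checks are routine: in either convention $I_0 \cup I_1 = \T$, and the lemma itself already removes exactly the point $1-\alpha$ from each union, so that point belongs to no covering square. I would therefore spell out these endpoint cases briefly and leave the substance of the argument to the already-proved \autoref{lem:square_union}.
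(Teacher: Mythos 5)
Your proof is correct, and your forward direction is exactly the paper's: for $x \neq 1-\alpha$ one applies \autoref{lem:square_union} to $[\mirror{s}_0]\setminus\{1-\alpha\} = I_0\setminus\{1-\alpha\}$ or $[\mirror{s}_1]\setminus\{1-\alpha\} = I_1\setminus\{1-\alpha\}$. Where you genuinely diverge is the converse. You note that the set equalities of \autoref{lem:square_union} (the ``analogous representations'' for $[\mirror{s}_0]$ and $[\mirror{s}_1]$, which between them exhaust $\rsst{\alpha}$ since the first letter of a reversed (semi)standard word is determined by the parity of its index) force every interval $[w^2]$ with $w \in \rsst{\alpha}$ to lie inside $[\mirror{s}_0]\setminus\{1-\alpha\}$ or $[\mirror{s}_1]\setminus\{1-\alpha\}$; hence no such interval contains $1-\alpha$, and $s_{1-\alpha,\alpha}$ begins with no such square. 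The paper argues combinatorially instead: it identifies $s_{1-\alpha,\alpha}$ as $01c_\alpha$ or $10c_\alpha$, uses the palindromic structure $s_{2k} = P_{2k}10$, $s_{2k+1} = Q_{2k+1}01$ to write these two words as $\lim_{k} \mirror{s}_{2k}$ and $\lim_{k} \mirror{s}_{2k+1}$, and then invokes \autoref{lem:not_square_prefix}: a square $u^2$ with $u \in \rsst{\alpha}$ prefixing $s_{1-\alpha,\alpha}$ would be a proper prefix of some $\mirror{s}_{2k}^{\,2}$ or $\mirror{s}_{2k+1}^{\,2}$, which that lemma forbids. Your route is more economical, since both implications fall out of the single covering statement and neither \autoref{lem:not_square_prefix} nor the palindrome facts are needed; its cost is that it rests entirely on the geometric side, including the half-open-interval conventions you rightly flag as the only delicate point (and which the paper itself treats loosely). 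The paper's route buys combinatorial information that is reused immediately afterwards: exhibiting $01c_\alpha$ and $10c_\alpha$ as limits of reversed standard words is what underlies the type-B classification, and \autoref{lem:not_square_prefix} is exercised there in the same way it is in the unique-factorization theorem that follows. Both arguments are valid at the paper's level of rigor.
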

  \begin{proof}
    If $x \neq 1-\alpha$, then $x \in I_0\setminus\{1-\alpha\} = [\mirror{s}_0]\setminus\{1-\alpha\}$ or
    $x \in I_1\setminus\{1-\alpha\} = [\mirror{s}_1]\setminus\{1-\alpha\}$. Thus by applying \autoref{lem:square_union}
    to $I_0\setminus\{1-\alpha\}$ or $I_1\setminus\{1-\alpha\}$, we see that the word $s_{x,\alpha}$ begins with a
    square of a word in $\rsst{\alpha}$.

    Suppose then that $x = 1-\alpha$. Then $s_{x,\alpha} \in \{01c_\alpha, 10c_\alpha\}$. It is a well-known fact that
    $s_{2k} = P_{2k} 10$ and $s_{2k+1} = Q_{2k+1} 01$ for some palindromes $P_{2k}$ and $Q_{2k+1}$ for every $k \geq 1$
    (see e.g. \cite[Lemma~2.2.8]{2002:algebraic_combinatorics_on_words}). As $c_\alpha = \lim_{k\to\infty} s_k$, it
    follows that $01c_\alpha = \lim_{k\to\infty} \mirror{s}_{2k}$ and
    $10c_\alpha = \lim_{k\to\infty} \mirror{s}_{2k+1}$. Hence by \autoref{lem:not_square_prefix}, the
    word $s_{x,\alpha}$ cannot have as a prefix a square of a word in $\rsst{\alpha}$.
  \end{proof}

  It follows that if $s$ has infinitely many solutions of \eqref{eq:square} as square prefixes, then
  $s \in \{01c_\alpha, 10c_\alpha\}$.

  Next we take one extra step and characterize when $s$ can be written as a product of squares of words in
  $\rsst{\alpha}$.

  \begin{theorem}
    A Sturmian word $s$ of slope $\alpha$ can be written as a product of squares of words in $\rsst{\alpha}$ if and
    only if $s$ is not of the form $X_1^2 X_2^2 \cdots X_n^2 c$ where $X_i \in \rsst{\alpha}$ and
    $c \in \{01c_\alpha, 10c_\alpha\}$. If $s$ is a product of squares in $\rsst{\alpha}$, then this product is unique.
  \end{theorem}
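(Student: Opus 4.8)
The plan is to factorize $s = s_{x,\alpha}$ greedily from the left, peeling off one square of a reversed (semi)standard word at a time while tracking how the intercept moves. First I would settle uniqueness, which is the cleaner half. Suppose $s$ begins with both $u^2$ and $v^2$ for $u,v \in \rsst{\alpha}$. Both squares are prefixes of $s$, so one is a prefix of the other; by \autoref{lem:not_square_prefix} neither can be a \emph{proper} prefix of the other, whence $u = v$. (Equivalently, \autoref{lem:square_union} applied to $[\mirror{s}_0] = I_0$ and $[\mirror{s}_1] = I_1$ shows that the intervals $[w^2]$, $w \in \rsst{\alpha}$, partition $I_0\setminus\{1-\alpha\}$ and $I_1\setminus\{1-\alpha\}$, so the unique $w$ with $x \in [w^2]$ already determines the first factor.) Thus the first square in any factorization of $s$ into squares of $\rsst{\alpha}$-words is forced; passing to $T^{2|u|}(s)$, which is again Sturmian of slope $\alpha$, and inducting shows that the whole factorization, if it exists, is unique.

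Next I would run the greedy process explicitly to obtain the governing dichotomy. Set $x_0 = x$. Given $x_i$, if $x_i = 1-\alpha$ I halt; otherwise \autoref{prp:begins_with_rsst}, together with the uniqueness just noted, furnishes the unique $Y_{i+1} \in \rsst{\alpha}$ such that $s_{x_i,\alpha}$ begins with $Y_{i+1}^2$, and I set $x_{i+1} = \{x_i + 2|Y_{i+1}|\alpha\}$, the intercept of the shifted word $T^{2|Y_{i+1}|}(s_{x_i,\alpha})$. Since each $|Y_i| \geq 1$, the accumulated lengths tend to infinity, so exactly one of two mutually exclusive outcomes occurs: either (a) the process never halts, in which case every finite prefix of $s$ agrees with one of $Y_1^2 Y_2^2 \cdots$ and hence $s = Y_1^2 Y_2^2 \cdots$ is an (infinite) product of squares of $\rsst{\alpha}$-words; or (b) the process halts at some step $n \geq 0$ with $x_n = 1-\alpha$, in which case the remaining suffix $T^{2(|Y_1|+\cdots+|Y_n|)}(s) = s_{1-\alpha,\alpha}$ is by \autoref{cor:fixed_points} one of $01c_\alpha, 10c_\alpha$, so that $s = Y_1^2 \cdots Y_n^2\, c$ with $c \in \{01c_\alpha, 10c_\alpha\}$.

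Finally I would read off the equivalence. If $s$ is not of the form $X_1^2 \cdots X_n^2\, c$, then outcome (b) is impossible, so (a) holds and $s$ is a product of squares of $\rsst{\alpha}$-words. Conversely, if $s = Y_1^2 Y_2^2 \cdots$ is such a product, then by uniqueness the greedy process reproduces precisely this factorization and never halts, i.e.\ outcome (a) occurs; as (a) and (b) are mutually exclusive, $s$ cannot simultaneously be of the form $X_1^2 \cdots X_n^2\, c$. This yields both directions of the ``if and only if,'' and uniqueness of the product was already established in the first step. I expect no serious conceptual obstacle, since all the substance is carried by \autoref{lem:square_union}, \autoref{lem:not_square_prefix}, \autoref{prp:begins_with_rsst}, and \autoref{cor:fixed_points}. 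The only points requiring care are the bookkeeping of intercepts under shifting---verifying that removing the square $Y_{i+1}^2$ advances the intercept by exactly $2|Y_{i+1}|\alpha$---and confirming that halting happens precisely when the intercept reaches $1-\alpha$, which is exactly the content of \autoref{prp:begins_with_rsst}.
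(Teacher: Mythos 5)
Your proposal is correct and takes essentially the same route as the paper: the paper's proof is a one-line appeal to \autoref{prp:begins_with_rsst} and \autoref{lem:not_square_prefix}, and your greedy peeling argument---uniqueness of each successive square factor via \autoref{lem:not_square_prefix}, existence of the next factor precisely when the current intercept differs from $1-\alpha$ via \autoref{prp:begins_with_rsst}, and identification of the halting suffix with $01c_\alpha$ or $10c_\alpha$---is exactly the implicit content of that one-liner, spelled out with the intercept bookkeeping made explicit.
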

  \begin{proof}
    This is a direct consequence of \autoref{prp:begins_with_rsst} and \autoref{lem:not_square_prefix}.
  \end{proof}

  Suppose that $s \notin \{01c_\alpha, 10c_\alpha\}$. Then the word $s$ has only finitely many solutions of
  \eqref{eq:square} as square prefixes. We call the longest solution \emph{maximal}. Observe that the maximal solution is
  not necessarily primitive since any power of a solution to \eqref{eq:square} is also a solution. Sturmian words of
  slope $\alpha$ can be classified into two types.\vspace{\baselineskip}

  \textbf{Type A}. Sturmian words $s$ of slope $\alpha$ which can be written as products of maximal solutions to
  \eqref{eq:square}. In other words, it can be written that $s = X_1^2 X_2^2 \cdots$ where $X_i$ is the maximal
  solution occurring as a square prefix of the word $T^{h_i}(s)$ where $h_i = |X_1^2 X_2^2 \cdots X_{i-1}^2|$.

  \textbf{Type B}. Sturmian words $s$ of slope $\alpha$ which are of the form $s = X_1^2 X_2^2 \cdots X_n^2 c$ where
  $c \in \{01c_\alpha, 10c_\alpha\}$ and the words $X_i$ are maximal solutions as above.\vspace{\baselineskip}

  \autoref{prp:begins_with_rsst} and \autoref{lem:not_square_prefix} imply that the words $X_i$ in the above
  definitions are uniquely determined and that the primitive root of a maximal solution is in $\rsst{\alpha}$.
  Consequently, a maximal solution is always right special. When finding the factorization of a Sturmian word as a
  product of squares of maximal solutions, it is sufficient to detect at each position the shortest square of a word in
  $\rsst{\alpha}$ and take its largest even power occurring in that position.

  Keeping the Sturmian word $s$ of slope $\alpha$ fixed, we define two sequences $(\mu_k)$ and $(\lambda_k)$. We set
  $\mu_0 = \lambda_0 = \varepsilon$. Following the notation above, we define depending on the type of $s$ as follows.

  \textbf{(A)} If $s$ is of type A, then we set for all $k \geq 1$ that
  \begin{align*}
    \mu_k     &= X_1^2 X_2^2 \cdots X_k^2 \,\, \text{ and }\\
    \lambda_k &= X_1 X_2 \cdots X_k.
  \end{align*}

  \textbf{(B)} If $s$ is of type B, then we set for $1 \leq k \leq n$ that
  \begin{align*}
    \mu_k     &= X_1^2 X_2^2 \cdots X_k^2 \,\, \text{ and }\\
    \lambda_k &= X_1 X_2 \cdots X_k,
  \end{align*}
  and we let
  \begin{align*}
    \mu_{n+1}     &= X_1^2 X_2^2 \cdots X_n^2 c \,\, \text{ and }\\
    \lambda_{n+1} &= X_1 X_2 \cdots X_n c.
  \end{align*}

  Compare these definitions with the example in the beginning of this section; the words $X_1$ and $X_2$ are maximal
  solutions in the Fibonacci word (which is of type A).

  We are finally in a position to formulate precisely the observations made in the beginning of this section and state
  the main result of this section.

  \begin{theorem}\label{thm:combinatorial_version}
    Let $s$ be a Sturmian word with slope $\alpha$.

    \emph{(A)} If $s$ is of type A, then
    \begin{align*}
      \sqrt{s} = \lim_{k \to \infty} T^{|\lambda_k|}(s).
    \end{align*}
    Moreover, the first occurrence of the prefix $\lambda_{k+1}$ of $\sqrt{s}$ is at position $|\lambda_k|$ of $s$ for
    all $k \geq 0$.

    \emph{(B)} If $s$ is of type B, then
    \begin{align*}
      \sqrt{s} = T^{|\lambda_n|}(s).
    \end{align*}
    Moreover, the first occurrence of the prefix $\lambda_{k+1}$ with $0 \leq k \leq n-1$ is at position $|\lambda_k|$
    of $s$, and the first occurrence of any prefix of $\sqrt{s}$ having lenght greater than $|\lambda_n|$ is at
    position $|\lambda_n|$ of $s$.
    
    In particular $\sqrt{s}$ is a Sturmian word with slope $\alpha$.
  \end{theorem}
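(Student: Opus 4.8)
The plan is to isolate a single combinatorial fact about the block decomposition and then read off every assertion. Write $s = X_1^2 X_2^2 \cdots$ for the factorization into maximal solutions of \eqref{eq:square}; by the defining property of the $X_i$ (each $X_i^2$ is a product of minimal squares whose square root is $X_i$) the square root map gives $\sqrt{s} = X_1 X_2 \cdots$, so each $\lambda_k = X_1 \cdots X_k$ is a prefix of $\sqrt{s}$. Since $\mu_k = X_1^2 \cdots X_k^2$ has length $2|\lambda_k|$, the engine of the whole argument will be the claim, which I call (D), that $\lambda_k$ is a \emph{suffix} of $\mu_k$. Granting (D), write $\mu_k = \beta_k \lambda_k$ with $|\beta_k| = |\lambda_k|$; then $s = \beta_k \lambda_k\, T^{|\mu_k|}(s)$, and dropping the first $|\lambda_k| = |\beta_k|$ letters yields the clean identity
\[
  T^{|\lambda_k|}(s) = \lambda_k\, T^{|\mu_k|}(s) = \lambda_k X_{k+1}^2 X_{k+2}^2 \cdots .
\]

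From this identity the two cases are immediate. If $s$ is of type A, the right-hand side shows that $T^{|\lambda_k|}(s)$ and $\sqrt{s} = \lambda_k X_{k+1} X_{k+2} \cdots$ share the prefix $\lambda_{k+1} = \lambda_k X_{k+1}$; as $|\lambda_{k+1}| \to \infty$ this proves $\sqrt{s} = \lim_{k\to\infty} T^{|\lambda_k|}(s)$, and it exhibits an occurrence of $\lambda_{k+1}$ at position $|\lambda_k|$. If $s$ is of type B, then $T^{|\mu_n|}(s) = c$ with $c \in \{01c_\alpha, 10c_\alpha\}$, so the identity gives $T^{|\lambda_n|}(s) = \lambda_n c$; since $c$ is a fixed point of the square root map by \autoref{cor:fixed_points}, we have $\sqrt{c} = c$ and hence $\sqrt{s} = \lambda_n \sqrt{c} = \lambda_n c = T^{|\lambda_n|}(s)$, exactly as claimed. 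The concluding statement that $\sqrt{s}$ is Sturmian of slope $\alpha$ is then nothing but \autoref{thm:square_root}.

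It remains to upgrade the occurrences just found to \emph{first} occurrences. I would argue by induction on $k$: any occurrence of $\lambda_{k+1}$ is in particular an occurrence of its prefix $\lambda_k$, which by the inductive hypothesis does not appear before position $|\lambda_{k-1}|$, so it suffices to exclude occurrences in the range $[\,|\lambda_{k-1}|, |\lambda_k|\,)$. Here I would feed in the explicit form $T^{|\lambda_{k-1}|}(s) = \lambda_{k-1} X_k^2 X_{k+1}^2 \cdots$ together with the maximality of $X_k$ as a solution at that position and \autoref{lem:not_square_prefix}, which forbids the square of a reversed (semi)standard word from being a proper prefix of another such square; this forces the continuation after $\lambda_{k-1}X_k$ to be $X_k$ rather than $X_{k+1}$ and rules out the spurious occurrences. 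For type B the same reasoning locates $\lambda_{k+1}$ at $|\lambda_k|$ for $k < n$, while any longer prefix of $\sqrt{s}$ first occurs at $|\lambda_n|$ because $T^{|\lambda_n|}(s) = \sqrt{s}$ and, by \autoref{prp:begins_with_rsst}, $c$ carries no further square of a reversed (semi)standard word.

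The one substantial step is (D), which I expect to be the main obstacle; everything else is bookkeeping around it. I would prove it by induction on $k$, the base case $\mu_1 = X_1^2$ being trivial. For the step, $\lambda_k = \lambda_{k-1} X_k$ is a suffix of $\mu_k = \mu_{k-1} X_k^2$ if and only if $\lambda_{k-1}$ is a suffix of $\mu_{k-1} X_k$ (cancel the common trailing $X_k$); since the inductive hypothesis gives $\lambda_{k-1}$ as a suffix of $\mu_{k-1}$, what must be shown is that appending $X_k$ does not disturb this suffix, i.e.\ a compatibility between the accumulated word $\lambda_{k-1}$ and the maximal solution $X_k$. This is precisely where the special arithmetic of Sturmian words enters: using \autoref{thm:complete_characterization} to write each $X_i$ as a power of a reversed (semi)standard word and exploiting the almost-commutation $\mirror{s}_{k-1}\mirror{s}_{k-2} = L(\mirror{s}_{k-2})\mirror{s}_{k-1}$ already used in the proof of \autoref{prp:standard_solutions}, one rewrites the relevant product of squares and reads off the suffix relation. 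An alternative, more dynamical route to (D) is to prove the equivalent inclusion $R^{|\lambda_k|}([\mu_k]) \subseteq [\lambda_k]$ directly on the circle, using the accelerated form $\psi(\{x + 2|w|\alpha\}) = \{\psi(x) + |w|\alpha\}$ of \autoref{lem:min_square_root_formula} for solutions $w$ (obtained by iterating the lemma over the minimal squares composing $w^2$) together with the partition supplied by \autoref{lem:square_union}.
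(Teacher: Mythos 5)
Your skeleton is exactly the paper's: the paper isolates your claim (D) as \autoref{prp:max_solution_suffix}, derives the limit identity and the type-B identity from it just as you do (via $s = \beta_k\lambda_k\,T^{|\mu_k|}(s)$), and proves the first-occurrence statement by the same reduction to excluding occurrences in the window $[\,|\lambda_{k-1}|,|\lambda_k|\,)$. The trouble is that the two steps you defer are where all the difficulty sits, and your sketches for both have concrete gaps. For (D), the paper does \emph{not} induct on the suffix property alone: its induction hypothesis is the conjunction ``$\lambda_k$ is \emph{right special} and a suffix of $\mu_k$'', i.e.\ $\{-(|\lambda_k|+1)\alpha\} \in [\lambda_k]$ together with $[\mu_k] \subseteq R^{-|\lambda_k|}([\lambda_k])$. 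Right-specialness is the load-bearing invariant: since the primitive root of $X_{k+1}$ lies in $\rsst{\alpha}$, its interval has $1-\alpha$ as an endpoint, and it is the hypothesis $\{-(|\lambda_k|+1)\alpha\}\in[\lambda_k]$ that places $x=\{-(|\mu_k|+1)\alpha\}$ in $R^{-|\lambda_k|}([\lambda_k])$ and then forces $y=\{-(|\mu_kX_{k+1}|+1)\alpha\}$ there as well, making $\lambda_kX_{k+1}$ right special; the suffix relation for $\mu_{k+1}$ then follows from a two-case interval comparison. Your combinatorial route has nothing playing this role: reducing to ``$\lambda_{k-1}$ is a suffix of $\mu_{k-1}X_k$'' is fine, but asserting that almost-commutation plus \autoref{thm:complete_characterization} lets one ``read off'' this relation is not an argument --- almost-commutation relates \emph{consecutive} (semi)standard words, while consecutive maximal solutions can be powers of reversed (semi)standard words at unrelated levels of the hierarchy (cf.\ \autoref{prp:next_shorter}). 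Your dynamical alternative states the correct equivalent inclusion $R^{|\lambda_k|}([\mu_k]) \subseteq [\lambda_k]$, but the tools you cite yield only $\psi$-statements --- $\psi([\mu_k]) \subseteq [\lambda_k]$ and $\psi(R^{|\mu_k|}(x)) = R^{|\lambda_k|}(\psi(x))$ --- and neither these nor \autoref{lem:square_union} converts a fact about the contraction $\psi$ into the needed fact about the rotation $R^{|\lambda_k|}$ without the right-specialness invariant.

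The first-occurrence step has the same character: your reduction is the paper's first move, but the exclusion over the window is a genuine case analysis, not a one-liner. The paper writes $X_k = w^t$ with $w \in \rsst{\alpha}$ primitive, uses primitivity to synchronize a putative occurrence to position $|\mu_{k-1}| + r|w|$ with $0 \le r < t$, and then splits into ``$w$ is a prefix of $X_{k+1}$'' (refuted because $w$ being right special with $w^2 \in \Lang(\alpha)$ forces $X_{k+1}^2$ to begin with $w^2$, contradicting maximality of $X_k$) and ``$X_{k+1}$ is a proper prefix of $w$'' (refuted via \autoref{lem:not_square_prefix}, the identification $w = \mirror{s}_{h,\ell}$ and $X_{k+1} = \mirror{s}_{h,\ell'}$, and almost-commutation forcing the absurdity $\mirror{s}_{h-1} = L(\mirror{s}_{h-1})$). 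Your sentence that maximality and \autoref{lem:not_square_prefix} ``force the continuation after $\lambda_{k-1}X_k$ to be $X_k$ rather than $X_{k+1}$'' names the desired conclusion rather than proving it. By contrast, your treatment of type B ($\sqrt{s} = \lambda_n c = T^{|\lambda_n|}(s)$ via \autoref{cor:fixed_points}) and the final appeal to \autoref{thm:square_root} are correct and match the paper.
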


  The theorem only states where the prefixes $\lambda_k$ of $\sqrt{s}$ occur for the first time. For the first
  occurrence of other prefixes of $\sqrt{s}$ we do not have a guaranteed location.

  \begin{figure}
  \centering
  \begin{tikzpicture}
    \coordinate (A) at (-5.75,0.8);
    \coordinate (B) at ($ (A) + (0.87,0) $);
    \coordinate (X1) at ($ (A) + (0.5,0.4) $);
    \draw [decorate,decoration={brace}] (A) -- (B);
    \node at (X1) {\small{$X_1^2$}};
    \coordinate (C) at ($ (B) + (0.07,0) $);
    \coordinate (D) at ($ (C) + (1.53,0) $);
    \coordinate (X2) at ($ (C) + (0.8,0.4) $);
    \draw [decorate,decoration={brace}] (C) -- (D);
    \node at (X2) {\small{$X_2^2$}};
    \coordinate (E) at ($ (D) + (0.07,0) $);
    \coordinate (F) at ($ (E) + (2.45,0) $);
    \coordinate (X3) at ($ (D) + (1.35,0.4) $);
    \draw [decorate,decoration={brace}] (E) -- (F);
    \node at (X3) {\small{$X_3^2$}};
    \coordinate (G) at ($ (F) + (0.07,0) $);
    \coordinate (H) at ($ (G) + (4.05,0) $);
    \coordinate (X4) at ($ (G) + (2.10,0.4) $);
    \draw [decorate,decoration={brace}] (G) -- (H);
    \node at (X4) {\small{$X_4^2$}};
    \coordinate (I) at ($ (H) + (0.07,0) $);
    \coordinate (J) at ($ (I) + (4.05,0) $);
    \coordinate (X5) at ($ (I) + (2.10,0.4) $);
    \draw [decorate,decoration={brace}] (I) -- (J);
    \node at (X5) {\small{$X_5^2$}};

    \node (box) {%
    \begin{minipage}[t!]{0.5\textwidth}
      \setlength{\jot}{-0.2em}
      {\small
      \begin{align*}
        \tau      &:\, 01001010010100100101001001010010100100101001010010010100100101001010010010100100\cdots\\
        \lambda_1 &:\, 010\\
        \lambda_2 &:\, \phantom{010}01010010\\
        \lambda_3 &:\, \phantom{01001010}0101001001010010\\
        \lambda_4 &:\, \phantom{0100101001010010}01010010010100101001001010010\\
        \lambda_5 &:\, \phantom{01001010010100100101001001010}01010010010100101001001010010010100101001001010010
      \end{align*}
      }%
    \end{minipage}
    };

    \tikzstyle{muBar}=[line width=0.8pt,black,<->]
    \coordinate (B1) at ($ (B) + (0.06,-0.35) $);
    \draw[muBar] (B1) -- ($ (B1) + (0,-0.30) $);
    \coordinate (B2) at ($ (D) + (0.04,-0.35) $);
    \draw[muBar] (B2) -- ($ (B2) + (0,-0.60) $);
    \coordinate (B3) at ($ (F) + (0.05,-0.35) $);
    \draw[muBar] (B3) -- ($ (B3) + (0,-0.90) $);
    \coordinate (B4) at ($ (H) + (0.04,-0.35) $);
    \draw[muBar] (B4) -- ($ (B4) + (0,-1.25) $);

    \tikzstyle{lambdaBar}=[line width=0.8pt,gray,<->]
    \coordinate (D1) at ($ (B1) + 0.5*(B2) - 0.5*(B1) $);
    \draw[lambdaBar] (D1) -- ($ (D1) + (0,-0.30) $);
    \coordinate (D2) at ($ (B2) + 0.5*(B3) - 0.5*(B2) $);
    \draw[lambdaBar] (D2) -- ($ (D2) + (0,-0.60) $);
    \coordinate (D3) at ($ (B3) + 0.5*(B4) - 0.5*(B3) $);
    \draw[lambdaBar] (D3) -- ($ (D3) + (0,-0.90) $);
    \coordinate (D4) at ($ (B4) + (3.32,0) $);
    \draw[lambdaBar] (D4) -- ($ (D4) + (0,-1.25) $);

  \end{tikzpicture}
  \caption{The first occurrences of the words $\lambda_k$ in $\tau$. The eighth shift of the Fibonacci word was used
  since for the Fibonacci word the lengths $|\lambda_k|$ grow very rapidly.}
  \label{fig:tau}
  \end{figure}
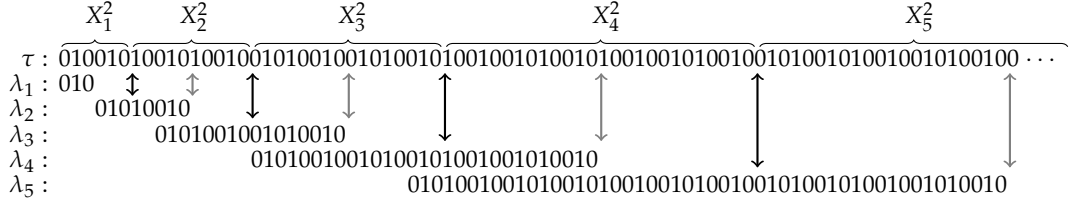

  To illustrate the theorem, consider next $\tau$, the eighth shift of the Fibonacci word. If we write under the word
  $\tau$ each of the corresponding words $\lambda_k$ at the position of their first occurrence we get the picture in
  \autoref{fig:tau}. \autoref{thm:combinatorial_version} shows that the nice pattern where the words $\lambda_k$
  overlap continues indefinitely and, moreover, that if we replace $\tau$ with any other Sturmian word (of type A) we
  obtain a similar picture. Most of the results of this paper were motivated by the discovery of this pattern.

  Before proving the theorem we need one more result.

  \begin{proposition}\label{prp:max_solution_suffix}
    Suppose that $s$ is a Sturmian word of type A. Then the word $\lambda_k$ is right special and a suffix of the word
    $\mu_k$ for all $k \geq 0$.
  \end{proposition}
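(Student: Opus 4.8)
The plan is to argue by induction on $k$, proving both assertions simultaneously; the base cases $k=0$ (where $\lambda_0=\mu_0=\varepsilon$) and $k=1$ (where $\mu_1=X_1^2$, $\lambda_1=X_1$, and $X_1$ is right special as a maximal solution) are immediate. Throughout I use the fact that the right special factor of length $n$ is $\mirror{p_n}$, where $p_n$ denotes the prefix of $c_\alpha$ of length $n$ (the unique left special factor of that length); consequently the right special factors form a chain under the suffix order, that is, the right special factor of length $m$ is a suffix of the right special factor of length $n$ whenever $m\le n$. Since each maximal solution is right special with primitive root in $\rsst{\alpha}$, we have $X_k=\mirror{p_{|X_k|}}$, and ``$\lambda_k$ is right special'' is equivalent to ``$\mirror{\lambda_k}$ is a prefix of $c_\alpha$''.

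Assume the claim for $k-1$, so that $\lambda_{k-1}=\mirror{p_{|\lambda_{k-1}|}}$ is right special and is a suffix of $\mu_{k-1}$. Writing $\mu_k=\mu_{k-1}X_kX_k$ and $\lambda_k=\lambda_{k-1}X_k$, I would first reduce both assertions to the single length inequality $|\lambda_{k-1}|\le|X_k|$. For the suffix assertion: the last $|X_k|$ letters of $\mu_k$ are $X_k$, and the preceding $|\lambda_{k-1}|$ letters form the suffix of $\mu_{k-1}X_k$ of length $|\lambda_{k-1}|$; granting $|\lambda_{k-1}|\le|X_k|$, this suffix lies inside the final $X_k=\mirror{p_{|X_k|}}$ and hence equals $\mirror{p_{|\lambda_{k-1}|}}=\lambda_{k-1}$ by the chain property, so the suffix of $\mu_k$ of length $|\lambda_k|$ is exactly $\lambda_{k-1}X_k=\lambda_k$. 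For the right special assertion: by the same inequality and the induction hypothesis, $\mirror{\lambda_{k-1}}=p_{|\lambda_{k-1}|}$ is a prefix of $\mirror{X_k}=p_{|X_k|}$, so the requirement that $\mirror{\lambda_k}=\mirror{X_k}\,\mirror{\lambda_{k-1}}$ be a prefix of $c_\alpha$ amounts to $c_\alpha$ having period $|X_k|$ on its prefix of length $|X_k|+|\lambda_{k-1}|$. As $\mirror{X_k}=\mirror{s}_j^{\,m_k}$ for the (semi)standard word $s_j=\mirror{Z_k}$ of length $q_j$ dividing $|X_k|$, the word $c_\alpha$ begins with $s_j^{\,m_k}$ and, because the index of $s_j$ is $a_{j+1}+2$ by \autoref{prp:standard_index}, it carries the period $q_j$ on a long prefix; combined with $|\lambda_{k-1}|\le|X_k|$ this covers the range $|X_k|+|\lambda_{k-1}|$ (with the usual overlap of $s_{j-1}$ inside $s_j$ handling the boundary), and right specialness follows.

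It therefore remains to establish the inequality $|\lambda_{k-1}|\le|X_k|$, which is the crux of the argument: intuitively it says that each maximal solution is at least as long as the product of all the square roots taken before it. The plan is to read this on the circle. By the induction hypothesis $\lambda_{k-1}=\mirror{p_{|\lambda_{k-1}|}}$ occupies positions $|\lambda_{k-1}|,\dots,2|\lambda_{k-1}|-1$ of $s$ and is immediately followed by the square $X_k^2$, so $\lambda_{k-1}X_k\in\Lang(\alpha)$ and, by mirror invariance, $\mirror{X_k}\,\mirror{\lambda_{k-1}}\in\Lang(\alpha)$ with $\mirror{X_k}=p_{|X_k|}$ as a prefix. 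Since the primitive root $Z_k$ of $X_k$ is the \emph{shortest} word of $\rsst{\alpha}$ whose square prefixes $T^{2|\lambda_{k-1}|}(s)$, \autoref{lem:square_union} together with \autoref{lem:not_square_prefix} places the intercept $R^{2|\lambda_{k-1}|}(x)$ in the small interval $[Z_k^2]$ that, among all reversed (semi)standard squares, sits closest to the point $1-\alpha$; the smallness of this interval is exactly what forces $Z_k$, and hence $X_k$, to be long.

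Making this quantitative---converting ``the intercept lies very near $1-\alpha$'' into the explicit bound $|X_k|\ge|\lambda_{k-1}|$---is the main obstacle, and it is here that the maximality of the solution (the largest even power of the shortest $\rsst{\alpha}$-root) is genuinely used. I would carry it out by tracking, through the nested decomposition of \autoref{lem:square_union}, the intervals $[\mirror{s}_{k,\ell}^{\,2}]$ generated at the earlier steps and comparing the distance from $R^{2|\lambda_{k-1}|}(x)$ to their common endpoint $1-\alpha$ with the interval lengths $|[\mirror{s}_{k,\ell}]|=\|q_{k,\ell-1}\alpha\|$ recorded in \autoref{prp:endpoint_alpha_characterization}; the monotone shrinking of these lengths toward $1-\alpha$ should yield $|\lambda_{k-1}|\le|X_k|$, closing the induction.
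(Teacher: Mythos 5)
Your reduction of both assertions to the single inequality $|\lambda_{k-1}| \le |X_k|$ is where the argument breaks down: that inequality is false in general, so the circle-rotation estimate you plan to carry out cannot succeed. The paper notes explicitly (just before \autoref{prp:next_shorter}) that there exist type A Sturmian words for which $|X_i| > |X_{i+1}|$ for some $i$, and \autoref{prp:next_shorter} analyses exactly this situation (there $X_i = \mirror{s}_{k,\ell}$ and the primitive root of $X_{i+1}$ is $\mirror{s}_{k-1}$). Whenever such a decrease occurs, $|\lambda_i| \ge |X_i| > |X_{i+1}|$, so your crux inequality fails at step $k = i+1$. Your heuristic also overstates what maximality buys: the intercept $\{x + |\mu_{k-1}|\alpha\}$ need not lie close to $1-\alpha$, so the next root $Z_k$ can be short --- shorter than $X_{k-1}$, let alone than all of $\lambda_{k-1}$. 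The parts of your argument that are conditional on the inequality (the suffix-chain observation, and the period argument for right-specialness) are reasonable as far as they go, but they only cover the case of non-decreasing lengths, which is precisely the easy case.

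The paper's proof is structured so as to avoid any comparison between $|\lambda_k|$ and $|X_{k+1}|$. Writing $\lambda = \lambda_k$, $\mu = \mu_k$, $X = X_{k+1}$, it first shows that the point $y = \{-(|\mu X|+1)\alpha\}$ lies in $R^{-|\lambda|}([\lambda])$ --- not via lengths, but because otherwise one would get $[\mu_{k+1}] \subseteq R^{-|\lambda|}([\lambda]) \cap R^{-|\mu X|}([X]) = \emptyset$, contradicting $\mu X^2 \in \Lang(\alpha)$. This immediately yields that $\lambda X$ is right special, since $R^{|\lambda|}(y) = \{-(|\lambda X|+1)\alpha\} \in [\lambda X]$. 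It then splits into two cases according to whether $R^{-|\mu|}([X]) \subseteq R^{-|\lambda|}([\lambda])$ or not --- these correspond roughly to your ``long $X$'' and ``short $X$'' situations --- and in each case deduces $[\mu_{k+1}] \subseteq R^{-|\lambda X|}([\lambda X])$, which is the suffix assertion. To salvage your approach you would have to replace the false inequality by a case analysis of this kind, at which point you would essentially be reproducing the paper's proof.
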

  \begin{proof}
    This proof might be tricky to follow. We advise the reader to keep the picture of \autoref{fig:intervals} in mind
    while reading the proof. This picture depicts only the Case A below but is surely helpful.

    The assertion is evident when $k=0$. Suppose that $k > 0$ and assume that $\lambda_k$ is right special and that
    $\lambda_k$ is a suffix of the word $\mu_k$. It is equivalent to say that
    $\{-(|\lambda_k|+1)\alpha\} \in [\lambda_k]$ and $[\mu_k] \subseteq R^{-|\lambda_k|}([\lambda_k])$ (evidently
    $2|\lambda_k| = |\mu_k|$). We write simply $\lambda = \lambda_k$, $\mu = \mu_k$, and $X = X_{k+1}$. This proof
    utilizes only the facts that $\mu X^2 \in \Lang(\alpha)$ and that $\lambda$ is right special and a suffix of the
    word $\mu$, not the structure of the words $\lambda$ and $\mu$ implied by their definitions. Thus without loss of
    generality, we may assume that $X$ is primitive. Consequently, $X \in \rsst{\alpha}$. It follows that
    \begin{align}\label{eq:x_intervals}
      [X]   &= I(-(q+1)\alpha,1-\alpha) \text{ and }\\
      [X^2] &= [X] \cap R^{-|X|}([X]) = I(-(q+1)\alpha, -(|X| + 1)\alpha). \nonumber
    \end{align}
    for some nonnegative integer $q$. Let $x = \{-(|\mu|+1)\alpha\}$. It follows from the hypothesis
    $\{-(|\lambda|+1)\alpha\} \in [\lambda]$ that $x \in R^{-|\lambda|}([\lambda])$. By \eqref{eq:x_intervals} the point
    $x$ is an endpoint of the interval $R^{-|\mu|}([X])$.
    
    Let then $y = \{-(|\mu X|+1)\alpha\}$. By \eqref{eq:x_intervals} the point $y$ is an endpoint of the interval
    $R^{-|\mu X|}([X])$ and an interior point of the interval $R^{-|\mu|}([X])$. Suppose for a contradiction that
    $y \notin R^{-|\lambda|}([\lambda])$. As $x \notin R^{-|\mu X|}([X])$ (otherwise it would follow that
    $1-\alpha \in R^{-|X|}([X])$ which contradicts \eqref{eq:x_intervals}), it follows that
    $R^{-|\lambda|}([\lambda]) \cap R^{-|\mu X|}([X]) = \emptyset$. Since
    \begin{align*}
      [\mu_{k+1}] = [\mu] \cap R^{-|\mu|}([X]) \cap R^{-|\mu X|}([X]),
    \end{align*}
    we have that $[\mu_{k+1}] \subseteq R^{-|\mu X|}([X])$. By assumption
    $[\mu_{k+1}] \subseteq [\mu] \subseteq R^{-|\lambda|}([\lambda])$. Thus
    $[\mu_{k+1}] \subseteq R^{-|\lambda|}([\lambda]) \cap R^{-|\mu X|}([X])$, so by the above we are forced to conclude
    that $[\mu_{k+1}] = \emptyset$. This is a contradiction since $X$ is chosen in such a way that
    $[\mu_{k+1}] = [\mu X^2] \neq \emptyset$. We conclude that $y \in R^{-|\lambda|}([\lambda])$.

    Now $R^{-|\lambda|}([\lambda X]) = R^{-|\lambda|}([\lambda]) \cap R^{-|\mu|}([X])$. Since
    $y \in R^{-|\lambda|}([\lambda]), R^{-|\mu|}([X])$, it follows that
    $y = \{-(|\mu X|+1)\alpha\} \in R^{-|\lambda|}([\lambda X])$. Thus
    $R^{|\lambda|}(y) = \{-(|\lambda X| + 1)\alpha\} \in [\lambda X]$, so the word $\lambda X$ is right special. We
    have two cases depending on the length of the interval $R^{-|\mu|}([X])$ compared to the length of the interval
    $R^{-|\lambda|}([\lambda])$.

    \textbf{Case A.} $R^{-|\mu|}([X]) \nsubseteq R^{-|\lambda|}([\lambda])$. In this case
    $R^{-|\lambda|}([\lambda X]) = I(x, z)$ where $z$ an endpoint of $R^{-|\lambda|}([\lambda])$. Since $y$ is an
    interior point of $R^{-|\lambda|}([\lambda X])$, $R^{-|X|}(x) = y$, and $x \notin R^{-|\lambda X|}([\lambda X])$, we
    obtain that $I(y, z) \subseteq R^{-|\lambda X|}([\lambda X])$. Since $y$ is also an interior point of
    $R^{-|\lambda|}([\lambda])$, we obtain similarly that $R^{-|\lambda|}([\lambda]) \cap R^{-|\mu|}([X^2]) = I(y, z)$.
    Thus
    \begin{align*}
      [\mu_{k+1}] = [\mu] \cap R^{-|\mu|}([X^2]) \subseteq R^{-|\lambda|}([\lambda]) \cap R^{-|\mu|}([X^2]) = I(y,z) \subseteq R^{-|\lambda X|}([\lambda X]).
    \end{align*}
    This proves that $\lambda X = \lambda_{k+1}$ is a suffix of $\mu_{k+1}$.

    \textbf{Case B.} $R^{-|\mu|}([X]) \subseteq R^{-|\lambda|}([\lambda])$. It follows that
    $R^{-|\lambda|}([\lambda X]) = R^{-|\mu|}([X])$, so $R^{-|\lambda X|}([\lambda X]) = R^{-|\mu X|}([X])$. Since
    $R^{-|\mu|}([X^2]) \subseteq R^{-|\mu X|}([X])$, we get that
    \begin{align*}
      [\mu_{k+1}] = [\mu] \cap R^{-|\mu|}([X^2]) \subseteq [\mu] \cap R^{-|\mu X|}([X]) \subseteq [\mu] \cap R^{-|\lambda X|}([\lambda X])
    \end{align*}
    proving that also in this case $\lambda X = \lambda_{k+1}$ is a suffix of $\mu_{k+1}$.
  \end{proof}

  Note that even though $\lambda_k$ is right special and always a suffix of $\mu_k$, it is not necessary for $\mu_k$ to
  be right special.

  \begin{figure}
  \centering
  \begin{tikzpicture}
    \definecolor{pink}{RGB}{236,0,140}
    \definecolor{bluish}{RGB}{46,49,146}
    \definecolor{cyanish}{RGB}{0,174,239}
    \definecolor{greenish}{RGB}{0,148,69}
    \tikzstyle{own}=[line width=0.8pt,|-|]
    \tikzstyle{own2}=[line width=0.8pt,black]
    \tikzstyle{type1}=[line width=0.8pt,bluish,text=bluish,font=\bf]
    \tikzstyle{type2}=[line width=0.8pt,greenish,text=greenish,font=\bf]
    \tikzstyle{type3}=[line width=0.8pt,cyanish,text=cyanish,font=\bf]

    \filldraw[own2] (2,0) circle(0.9pt);
    \node at (2,0.3) {\scriptsize{$x$}};
    \filldraw[own2] (3,0) circle(0.9pt);
    \node at (3,0.3) {\scriptsize{$y$}};
    \node at (7,0.3) {\scriptsize{$z$}};

    \draw[own] (0,0) -- (7,0);
    \node at (10,0) {\scriptsize{$R^{-|\lambda|}([\lambda])$}};

    \draw[own] (2,-0.5) -- (8,-0.5);
    \node at (10,-0.5) {\scriptsize{$R^{-|\mu|}([X])$}};

    \draw[own] (3,-1) -- (9,-1);
    \node at (10,-1) {\scriptsize{$R^{-|\mu X|}([X])$}};

    \draw[dotted] (2,0) -- (2,-1.5);
    \draw[dotted] (7,0) -- (7,-2.5);

    \draw[dotted] (3,0) -- (3,-2.5);
    \draw[dotted] (6,0.5) -- (6,-1);

    \draw[own] (2,-1.5) -- (7,-1.5);
    \node at (10,-1.5) {\scriptsize{$R^{-|\lambda|}([\lambda X])$}};

    \draw[own] (3,-2) -- (8,-2);
    \node at (10,-2) {\scriptsize{$R^{-|\lambda X|}([\lambda X])$}};

    \draw[own] (3,-2.5) -- (7,-2.5);
    \node at (10,-2.5) {\scriptsize{$R^{-|\mu|}([X^2])$}};

    \draw[own] (1.5,0.5) -- (6,0.5);
    \node at (10,0.5) {\scriptsize{$[\mu]$}};
    \draw[own2,|-,cyanish] (2,-0.5) -- (6,-0.5);
    \draw[own2,|-,pink] (3,-1) -- (6,-1);
  \end{tikzpicture}
  \caption{A possible arrangement for the intervals in the Case A of the proof of \autoref{prp:max_solution_suffix}. The blue
  color marks the interval $[\mu X]$ and magenta marks the interval $[\mu X^2] = [\mu_{k+1}]$.}
  \label{fig:intervals}
  \end{figure}
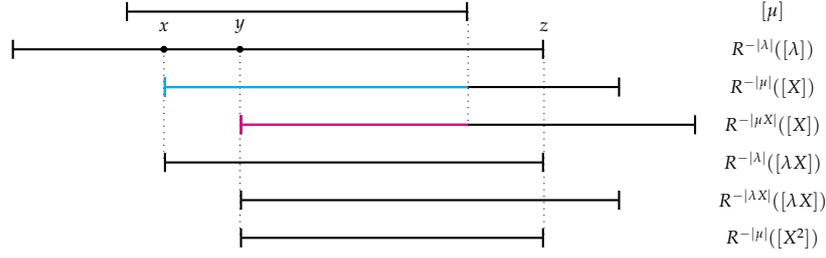

  \begin{figure}
  \centering
  \begin{tikzpicture}
    \tikzstyle{own}=[line width=0.8pt,|-|]
    \tikzstyle{own2}=[line width=0.8pt,black,-|]

    \coordinate (O) at (0,0);
    \coordinate (A) at (10,0);
    \coordinate (D) at (0,-0.7);
    \coordinate (U) at (0,0.3);
    \coordinate (M) at ($ 0.35*(A) - 0.35*(O) $);
    \coordinate (W) at ($ 0.07*(A) - 0.07*(O) $);
    \coordinate (X) at ($ 0.5*(A) - 0.5*(M) - 0.5*4*(W) $);

    \draw[own] (O) -- ($ (O) + (A) $);
    \draw[own2] (0,0) -- ($ (O) + (M) + (0.03,0) $);                      
    \draw[own2] ($ (O) + (M) $) -- ($ (O) + (M) + 2*(W) $);               
    \draw[own2] ($ (O) + (M) + 2*(W) $) -- ($ (O) + (M) + 4*(W) $);       
    \draw[own2] ($ (O) + (M) + 4*(W) $) -- ($ (O) + (M) + 4*(W) + (X) $); 
    \draw[own] ($ (O) + (D) + (M) $) -- ($ (O) + (D) + (M) + 4*(W) $);
    \foreach \i in {0,...,3} {
      \pgfmathsetmacro\r{\i + 1}
      \draw[own2] ($ (O) + (D) + (M) + \i*(W) $) -- ($ (O) + (D) + (M) + \r*(W) $);
    }
    \draw[own] ($ (O) + 2*(D) + 0.5*(M) $) -- ($ (O) + 2*(D) + (M) $);
    \draw[own] ($ (O) + 3*(D) + 0.5*(M) + (W) $) -- ($ (O) + 3*(D) + 0.5*(M) + (W) + 0.5*(M) + 2*(W) + (X) $);
    \draw[own] ($ (O) + 3*(D) + 0.5*(M) + (W) $) -- ($ (O) + 3*(D) + 0.5*(M) + (W) + 0.5*(M) $);
    \draw[own] ($ (O) + 3*(D) + 0.5*(M) + (W) $) -- ($ (O) + 3*(D) + 0.5*(M) + (W) + 0.5*(M) + 2*(W) $);
    \coordinate (E) at (-0.015,0);
    \draw[dotted] ($ (O) + (D) + (M) + (W) + (E) $) -- ($ (O) + 3*(D) + (M) + (W) + (E) $);
    \draw[dotted] ($ (O) + (D) + (M) + 3*(W) + (E) $) -- ($ (O) + 3*(D) + (M) + 3*(W) + (E) $);

    \node at ($ (O) + 0.5*(M) + (U) $) {$\mu_{k-1}$};
    \node at ($ (O) + (M) + 0.5*2*(W) + (U) $) {$X_k$};
    \node at ($ (O) + (M) + 1.5*2*(W) + (U) $) {$X_k$};
    \node at ($ (O) + (M) + 4*(W) + 0.5*(X) + (U) $) {$X_{k+1}$};
    \node at ($ (O) + (M) + 4*(W) + 1.5*(X) + (U) $) {$X_{k+1}$};
    \node at ($ (O) + (D) + (M) + 0.5*(W) + (U) $) {$w$};
    \node at ($ (O) + (D) + (M) + 1.5*(W) + (U) $) {$w$};
    \node at ($ (O) + (D) + (M) + 2.5*(W) + (U) $) {$w$};
    \node at ($ (O) + (D) + (M) + 3.5*(W) + (U) $) {$w$};
    \node at ($ (O) + 2*(D) + 0.75*(M) + (U) $) {$\lambda_{k-1}$};
    \node at ($ (O) + 3*(D) + 0.75*(M) + (W) + (U) $) {$\lambda_{k-1}$};
    \node at ($ (O) + 3*(D) + (M) + (W) + 0.5*2*(W) + (U) $) {$X_k$};
    \node at ($ (O) + 3*(D) + (M) + (W) + 2*(W) + 0.5*(X) + (U) $) {$X_{k+1}$};

  \end{tikzpicture}
  \caption{Possible locations for factors in the proof of \autoref{thm:combinatorial_version}.}
  \label{fig:factor_locations}
  \end{figure}
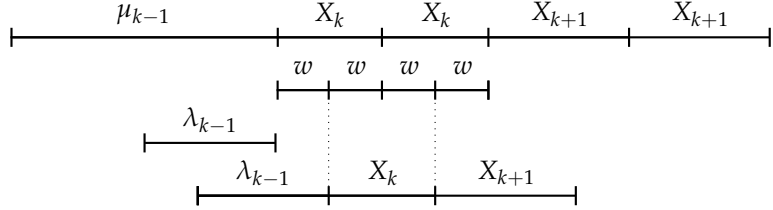

  \begin{proof}[Proof of \autoref{thm:combinatorial_version}]
    Since Sturmian words of type B differ from Sturmian words of type A essentially only by the fact that the sequence
    of maximal solutions is finite, it is in this proof enough to consider the case that $s$ is of type A.

    \autoref{prp:max_solution_suffix} says that $\lambda_k$ is always a suffix of $\mu_k$ for
    all $k \geq 0$. Since $|\mu_k| = 2|\lambda_k|$, it follows that the word $T^{|\lambda_k|}(s)$ has the word
    $\lambda_k$ as a prefix. Therefore $\sqrt{s} = \lim_{k\to\infty} T^{|\lambda_k|}(s)$.

    It remains to prove that the first occurrence of $\lambda_{k+1}$ in $s$ is at position $|\lambda_k|$ of $s$ for all
    $k \geq 0$. It is clear that the first occurrence of $\lambda_1 = X_1$ is at position $|\lambda_0| = 0$. Assume
    that $k > 0$, and suppose for a contradiction that $\lambda_{k+1}$ occurs before the position $|\lambda_k|$. Since
    $\lambda_k$ is a prefix of $\lambda_{k+1}$, by induction we see that $\lambda_{k+1}$ cannot occur before the
    position $|\lambda_{k-1}|$. This means that an occurrence of $X_k X_{k+1}$ begins in $s$ at position $\nu$ such
    that $|\mu_{k-1}| \leq \nu < |\mu_{k-1}X_k|$; see \autoref{fig:factor_locations}. Observe that $s$ has at position
    $|\mu_{k-1}|$ an occurrence of $X_k^2$. Write now $X_k = w^t$ with $w \in \rsst{\alpha}$. Since $w$ is primitive,
    we must have that $\nu = |\mu_{k-1}| + r|w|$ with $0 \leq r < t$. Thus $X_{k+1}$ occurs in $s$ at position
    $\nu + |X_k| = |\mu_{k-1}| + (r+t)|w|$. Since $r < t$, it follows that either $w$ is a prefix of $X_{k+1}$ or
    $X_{k+1}$ is a prefix of $w$.

    Suppose first that $w$ is a prefix of $X_{k+1}$. If $w = X_{k+1}$, then the prefix $\mu_{k-1} X_k^2$ of $s$ is
    followed by $w^2$. Now $w^{2t+2}$ is a solution to \eqref{eq:square} implying that $X_k$ is not a maximal solution
    to \eqref{eq:square}. Since this is contradictory, we infer that $|w| < |X_{k+1}|$. Since $X_{k+1}$ occurs at
    position $|\mu_{k-1}| + (r+t)|w| < |\mu_k|$ and $X_{k+1}$ has $w$ as a prefix, it must be that $X_{k+1}$ begins
    with $wa$ where $a$ is the first letter of $w$. Since $w$ is right special and $w^2 \in \Lang(\alpha)$, it follows
    that $X_{k+1}^2$ begins with $w^2$. Like above, this implies that $X_k$ is not maximal. This is a contradiction.

    Suppose then that $X_{k+1}$ is a proper prefix of $w$. First of all, $X_{k+1}$ must be primitive, as otherwise
    $X_{k+1}$ and consequently $w$ would have as a prefix a square of some word in $\rsst{\alpha}$ contradicting
    \autoref{lem:not_square_prefix}. The assumption that $X_{k+1}$ is a prefix of $w$ implies that $X_{k+1}$ and $w$
    begin with the same letter. Like above, since $w$ is right special and $w^2 \in \Lang(\alpha)$, it must be that $w$
    occurs after the prefix $\mu_k$ of $s$. Since also $X_{k+1}^2$ occurs after the prefix $\mu_k$, by
    \autoref{lem:not_square_prefix} we conclude that the word $w$ must be a proper prefix of $X_{k+1}^2$. Observe now
    that the assumption that $X_{k+1}$ is a proper prefix of $w$ excludes the possibilities that $w = \mirror{s}_0 = 0$
    or $w = \mirror{s}_1 = 10^a$. Therefore $w = \mirror{s}_{h,\ell}$ for some $h \geq 2$ with $0 < \ell \leq a_h$. Because
    $|w| < 2|X_{k+1}|$, we must have that $|X_{k+1}| > |\mirror{s}_{h-2}|$. On the other hand, since $|X_{k+1}| < |w|$
    and $X_{k+1}$ and $w$ begin with the same letter, the only option is that $X_{k+1} = \mirror{s}_{h,\ell'}$ with
    $0 < \ell' < \ell$. Now
    \begin{align*}
      X_{k+1}^2 = (\mirror{s}_{h-2} \mirror{s}_{h-1}^{\,\ell'})^2
      = \mirror{s}_{h-2}\mirror{s}_{h-1}^{\,\ell'} L(\mirror{s}_{h-1}) \mirror{s}_{h-2} \mirror{s}_{h-1}^{\,\ell'-1},
    \end{align*}
    so as $w$ is a prefix of $X_{k+1}^2$, it must be that $\mirror{s}_{h-1} = L(\mirror{s}_{h-1})$. This is a
    contradiction. This final contradiction ends the proof.
  \end{proof}

  As a conclusion of this section, we study the lengths of the maximal solutions of \eqref{eq:square}. Namely, let
  $s = X_1^2 X_2^2 \cdots$ be a Sturmian word of type A factorized as a product of maximal solutions $X_i$. Computer
  experiments suggest that typically the sequence $(|X_i|)$ is strictly increasing. However, there are examples where
  $|X_i| > |X_{i+1}|$ for some $i \geq 1$. It is natural to ask if the lengths can decrease significantly or if
  oscillation is possible. It turns out that neither is possible. In \autoref{cor:solution_length_increasing}
  we prove that $\liminf_{i\to\infty} |X_i| = \infty$.

  First we need a result on certain periods of (semi)standard words.

  \begin{lemma}\label{lem:standard_periods}
    Let $u,v \in \sst{\alpha}$ and $|u| > |v|$. If $u$ is a prefix of some word in $v^+$, then $u = s_{k,\ell}$ and
    $v = s_{k-1}$ for some $k \geq 2$ with $0 < \ell \leq a_k$.
  \end{lemma}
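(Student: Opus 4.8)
The plan is to reduce the statement to a pure assertion about periods. Since $u,v\in\sst{\alpha}$, both are prefixes of $c_\alpha$, and because $v^+$ consists of the proper powers of $v$ while $|u|>|v|$, the word $v$ is a (proper) prefix of $u$ and $|v|$ is a period of $u$. First I would dispose of the short cases: $u$ cannot be $s_0$ or $s_1$, since if $u=s_1=0^{a_1-1}1$ then the only shorter (semi)standard word is $v=s_0=0$, yet $s_1$ is not a prefix of $0^\omega$. Hence $u=s_{k,\ell}$ for some $k\ge 2$ and $1\le\ell\le a_k$ (with $\ell=a_k$ meaning $u=s_k$), and $|u|=q_{k,\ell}$. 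Distinct (semi)standard words have distinct lengths, and no semiconvergent denominator equals a convergent denominator, so the only (semi)standard word of length $q_{k-1}$ is $s_{k-1}$. It therefore suffices to prove that the only period of $u$ which is the length of a (semi)standard word and is $<|u|$ is $q_{k-1}$; this forces $|v|=q_{k-1}$ and hence $v=s_{k-1}$.

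That $q_{k-1}$ is a period of $u$ is immediate from $u=s_{k-1}^{\,\ell}s_{k-2}$ together with $s_{k-2}$ being a prefix of $s_{k-1}$. The heart of the argument is to show $q_{k-1}$ is in fact the \emph{minimal} period of $u$, which rules out every candidate below it (in particular all small semiconvergent lengths) at once. For this I would establish the following consequence of the almost-commutation $s_j s_{j-1}=wxy$, $s_{j-1}s_j=wyx$: the prefix $P_n$ of $c_\alpha$ of length $n$ has period $q_{j-1}$ if and only if $n\le q_j+q_{j-1}-2$. Indeed, $P_n$ has period $q_{j-1}$ exactly when it is a prefix of $s_{j-1}^{\,\omega}$, and comparing the length-$(q_j+q_{j-1})$ prefixes $s_j s_{j-1}=s_{j-1}^{\,a_j}s_{j-2}s_{j-1}$ of $c_\alpha$ and $s_{j-1}^{\,a_j+1}s_{j-2}=s_{j-1}^{\,a_j}s_{j-1}s_{j-2}$ of $s_{j-1}^{\,\omega}$, the almost-commutation applied at index $j-1$ shows they first disagree exactly at position $q_j+q_{j-1}-2$. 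Applying this with $n=q_{k,\ell}$ and noting $q_{k,\ell}\le q_k\le q_k+q_{k-1}-2$ while $q_{k,\ell}\ge q_{k-1}+q_{k-2}>q_{k-1}+q_{k-2}-2$, the smallest $j$ with $q_{k,\ell}\le q_j+q_{j-1}-2$ is $j=k$, so the minimal period of $u$ is $q_{k-1}$.

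It then remains only to rule out the (semi)standard lengths strictly between $q_{k-1}$ and $q_{k,\ell}$, which are exactly the semiconvergent lengths $q_{k,j}$ with $1\le j<\ell$. Here I would invoke Fine--Wilf: if $q_{k,j}$ were also a period of $u$, then since $\gcd(q_{k,j},q_{k-1})=\gcd(q_{k-2},q_{k-1})=1$ and $|u|=q_{k,\ell}\ge q_{k,j}+q_{k-1}-1$ (because $\ell\ge j+1$), the word $u$ would have period $1$, contradicting that $u$ contains both letters. Combining the minimal-period bound with this exclusion leaves $|v|=q_{k-1}$, whence $v=s_{k-1}$, as claimed.

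I expect the main obstacle to be the period-structure computation of the second paragraph: pinning down that the longest prefix of $c_\alpha$ with period $q_{j-1}$ has length exactly $q_j+q_{j-1}-2$, and organising the ``smallest $j$'' bookkeeping so that it cleanly returns $q_{k-1}$ for every admissible $\ell$, including the genuinely semistandard range $\ell<a_k$ and the base case $k=2$ (where $v=s_1$). By contrast, the Fine--Wilf step is routine once one has recorded the coprimality of consecutive denominators and the length inequality $\ell\ge j+1$.
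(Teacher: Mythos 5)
Your reduction to periods is sound: since $u$ and $v$ are both prefixes of $c_\alpha$, the hypothesis says exactly that $|v|$ is a period of $u$, and it suffices to show that the only (semi)standard length $<|u|$ occurring as a period of $u=s_{k,\ell}$ is $q_{k-1}$. Two of your three tools are also correct: the characterization ``$P_n$ has period $q_{j-1}$ iff $n\le q_j+q_{j-1}-2$'' does follow from almost-commutation as you sketch, and your Fine--Wilf argument correctly eliminates the lengths $q_{k,j}$ with $j<\ell$. The gap is the sentence ``so the minimal period of $u$ is $q_{k-1}$.'' Your key fact decides only which \emph{convergent} denominators $q_{j-1}$ are periods of $u$; it says nothing about periods that are not convergent denominators, and in particular nothing about the semistandard lengths $q_{j,m}$ (with $m<a_j$) lying below $q_{k-1}$ --- which are precisely candidates for $|v|$ that the lemma must exclude. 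Concluding minimality from it silently assumes that the minimal period of a Sturmian prefix is a convergent denominator, a statement essentially as strong as the lemma itself and not established by your argument, so the claim that everything below $q_{k-1}$ is ``ruled out at once'' is circular.

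Moreover, this hole cannot be patched by the same Fine--Wilf trick you use above $q_{k-1}$. Take $\ell=1$, so $u=s_{k-1}s_{k-2}$ and $|u|=q_{k-1}+q_{k-2}$, and a candidate $v=s_{k-1,m}$ with $q_{k-2}<|v|<q_{k-1}$: Fine--Wilf needs $|v|+q_{k-1}-\gcd(|v|,q_{k-1})\le |u|$, i.e. $|v|-\gcd(|v|,q_{k-1})\le q_{k-2}$, which generally fails. Concretely, for $\alpha=[0;2,3,\ldots]$ take $u=s_{3,1}=s_2s_1=010101001$ (length $9=q_2+q_1$) and $v=s_{2,2}=01010$ (length $5$): here $5+7-\gcd(5,7)=11>9$, and $5$ is not a convergent denominator, so neither of your tools applies; yet the lemma requires showing $u$ is not a prefix of any word in $v^+$ (it is not --- they first differ at position $5$). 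The paper closes exactly this case by induction: if such a $v$ were a period of $u$, it would be a period of the prefix $s_{k-1}$ of $u$, the induction hypothesis then forces $v=s_{k-2}$, and the non-commutation of $s_{k-2}$ and $s_{k-3}$ gives a contradiction. To repair your proof you need either this inductive step or an appeal to the (true but nontrivial, and uncited) theorem that minimal periods of prefixes of $c_\alpha$ are convergent denominators; as written, the heart of the lemma is assumed rather than proved.
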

  \begin{proof}
    Suppose that $u$ is a prefix of some word in $v^+$. If $u = s_1 = 0^\oa 1$, then necessarily $v = s_0 = 0$. Then
    obviously $u$ is not a prefix of any word in $v^+$. Therefore $u = s_{k,\ell}$ for some $k \geq 2$ with
    $0 < \ell \leq a_k$. Suppose that $k = 2$. Then $u = (0^\oa 1)^\ell 0$. It is straightforward to show that $v$ must
    equal to $s_1 = 0^\oa 1$; $u$ cannot be a prefix of a word in $v^+$ if $v = s_0 = 0$ or $v = s_{2,\ell'}$ for some
    $\ell'$ such that $0 < \ell' < \ell$. Thus we may assume that $k > 2$.

    Suppose first that $|v| > |s_{k-1}|$. Then by the assumption $|u| > |v|$, it must be that $v = s_{k,\ell'}$ for
    some $\ell'$ such that $\ell' < \ell$. Since $u$ is a prefix of some word in $v^+$, it follows that the word
    $w = s_{k-1}^{\ell-\ell'}s_{k-2}$ is a prefix of some word in $s_{k-2}v^+$. Since the word $w$ begins with
    $s_{k-1}s_{k-2}$, we obtain that $s_{k-2}v$ begins with $s_{k-1}s_{k-2}$, so $s_{k-1}s_{k-2} = s_{k-2}s_{k-1}$.
    This is a contradiction.

    Assume then that $|v| < |s_{k-1}|$. Now the prefix $s_{k-1}$ of $u$ is a prefix of some word in $v^+$, so by
    induction $v = s_{k-2}$. Now $u = (s_{k-2}^{a_{k-1}}s_{k-3})^\ell s_{k-2}$, so as $u$ is a prefix of some word in
    $v^+$, it follows that $z = s_{k-3}s_{k-2}$ is a prefix of some word in $v^+$. This means that $z$ ends with a
    prefix of $s_{k-2}$ of length $|s_{k-3}|$. As the prefix of $s_{k-2}$ of length $|s_{k-3}|$ is $s_{k-3}$, the word
    $z$ ends with $s_{k-3}$. Consequently $s_{k-3}s_{k-2} = s_{k-2}s_{k-3}$; a contradiction.

    The only remaining option is that $v = s_{k-1}$. This is certainly possible.
  \end{proof}

  The next proposition describes precisely under which conditions it is possible that $|X_i| > |X_{i+1}|$. Moreover, it
  rules out the possibility that the lengths decrease significantly or oscillate. 

  \begin{proposition}\label{prp:next_shorter}
    Let $s = X_1^2 X_2^2 X_3^2 \cdots$ be a Sturmian word of type $A$ with slope $\alpha$ factorized as a product of
    maximal solutions $X_i$. If $|X_1| > |X_2|$, then $X_1 = \mirror{s}_{k,\ell}$ for some $k \geq 2$ with
    $0 < \ell \leq a_k - 1$, the primitive root of $X_2$ is $\mirror{s}_{k-1}$, and $|X_3| > |X_1|$.
  \end{proposition}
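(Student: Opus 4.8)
The plan is to exploit the two guaranteed occurrences of $\lambda_2 = X_1 X_2$ to force a period on it, transport that period through reversal to the standard word $c_\alpha$, and then invoke \autoref{lem:standard_periods}. Throughout I write $X_1 = u_1^{m_1}$ and $X_2 = u_2^{m_2}$ with $u_1,u_2$ the primitive roots, so that $u_1,u_2 \in \rsst{\alpha}$ (the primitive root of a maximal solution lies in $\rsst{\alpha}$). The first easy observation is that $u_1 \neq u_2$: if $u_1 = u_2$ then $s$ begins with $u_1^{2m_1+2m_2}$, so the largest even power of $u_1$ at position $0$ exceeds $2m_1$, contradicting the maximality defining $X_1$.

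Next I would extract the period. By \autoref{thm:combinatorial_version} the first occurrence of $\lambda_2$ is at position $|\lambda_1| = |X_1|$, while by \autoref{prp:max_solution_suffix} the word $\lambda_2$ is a suffix of $\mu_2 = X_1^2X_2^2$ and hence also occurs at position $|\mu_2| - |\lambda_2| = |X_1| + |X_2|$. Since $|X_2| < |\lambda_2|$ these occurrences overlap, so $\lambda_2$ has period $|X_2| = m_2|u_2|$. Being a product of maximal solutions, $\lambda_2$ is right special, so $\mirror{\lambda_2}$ is left special, i.e.\ a prefix of $c_\alpha$; and $\mirror{\lambda_2} = v_2^{m_2}v_1^{m_1}$, where $v_i = \mirror{u_i} \in \sst{\alpha}$, still has period $m_2|v_2|$. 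A word of period $m_2|v_2|$ whose initial block of that length is $v_2^{m_2}$ is a prefix of $v_2^\omega$; peeling off the first $v_2^{m_2}$ shows that $v_1^{m_1}$, and in particular $v_1$, is a prefix of a power of $v_2$.

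Now I would apply \autoref{lem:standard_periods} to $v_1,v_2 \in \sst{\alpha}$. In the case $|v_1| > |v_2|$ it yields $v_1 = s_{k,\ell}$ and $v_2 = s_{k-1}$ for some $k \geq 2$ with $0 < \ell \leq a_k$; reversing gives $u_1 = \mirror{s}_{k,\ell}$ and $u_2 = \mirror{s}_{k-1}$, so that the primitive root of $X_2$ is $\mirror{s}_{k-1}$ exactly as claimed. To legitimise this direction I must rule out $|u_1| \leq |u_2|$: equality forces $u_1 = u_2$, already excluded, while $|u_1| < |u_2|$ together with $|X_1| > |X_2|$ forces $m_1 > m_2 \geq 1$, hence $m_1 \geq 2$, so $u_1^4$ is a prefix of $s$ and the index of $u_1$ is at least $4$. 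By \autoref{prp:standard_index} and mirror-invariance this excludes $u_1$ being reversed semistandard, leaving $u_1 = \mirror{s}_j$; I would then contradict the prefix-of-power relation $v_1^{m_1}$ being a prefix of $v_2^\omega$ with $m_1\ge 2$ and $|v_1| < |v_2|$ by a short analysis of standard-word prefixes.

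It remains to pin down $\ell$, the primitivity of $X_1$, and the inequality $|X_3| > |X_1|$. From $v_1 = s_{k,\ell}$ and $v_2 = s_{k-1}$ I get $\mirror{\lambda_2} = s_{k-1}^{m_2+\ell}s_{k-2}\cdots$ as a prefix of $c_\alpha$, so the count of leading copies of $s_{k-1}$ constrains $m_2+\ell$; combined with maximality this should force $\ell \leq a_k - 1$, after which the index of the reversed semistandard word $\mirror{s}_{k,\ell}$ equals $2$ (\autoref{prp:standard_index}), giving $m_1 = 1$ and $X_1 = \mirror{s}_{k,\ell}$ primitive. For $|X_3| > |X_1|$ I would analyse the continuation $\mirror{s}_{k,\ell}^2\,\mirror{s}_{k-1}^{2m_2}$ and locate the next intercept via the nested intervals of \autoref{lem:square_union}, showing that the reversed (semi)standard word carrying $X_3$ is longer than $\mirror{s}_{k,\ell}$. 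I expect the genuine obstacle to be precisely this last stage: excluding the reversed-standard possibility for $u_1$ and proving the $|X_3|$ bound, since small examples show that the leading power of $s_{k-1}$ in $c_\alpha$ obeys no uniform formula, so both steps must be driven by the almost-commutation of standard words and by the maximality of the solutions rather than by length bookkeeping alone.
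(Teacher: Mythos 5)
Your opening moves are sound and coincide with the paper's own mechanism: the two occurrences of $\lambda_2$ --- the trivial one at position $|X_1|$ (since $s$ begins with $X_1X_1X_2X_2$, you do not even need \autoref{thm:combinatorial_version} for it) and the one at position $|X_1|+|X_2|$ coming from \autoref{prp:max_solution_suffix} --- give $\lambda_2$ the period $|X_2|$, which after reversal is exactly the paper's assertion that $\mirror{X}_1$ is a prefix of a word in $\mirror{u}^+$ (the paper derives it from $X_1X_2 = ZX_1$ and a conjugacy lemma); \autoref{lem:standard_periods} and the index bounds of \autoref{prp:standard_index} then do the same work in both texts. The first genuine gap is your case $|u_1| < |u_2|$. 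You claim that $v_1^{m_1}$ being a prefix of $v_2^\omega$ with $m_1 \geq 2$ and $|v_1| < |v_2|$ can be refuted by ``a short analysis of standard-word prefixes''. It cannot: applying \autoref{lem:standard_periods} in the other direction (to $v_2$, which is a prefix of $v_1^{m_1}$ because $|v_2| \leq |X_2| < |X_1|$) forces $v_2 = s_{k,\ell}$ and $v_1 = s_{k-1}$, and since $s_{k-1}s_{k-2}$ and $s_{k-2}s_{k-1}$ differ only in their last two letters, $s_{k-1}^{\ell+1}$ \emph{is} a prefix of $(s_{k-1}^{\ell}s_{k-2})^\omega$ whenever $|s_{k-2}| \geq 2$ (concretely, $s_2^2 = 010010$ is a prefix of $(s_2s_1)^\omega = (01001)^\omega$ when $a_1 = 2$, $a_2 = 1$). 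So the configuration $X_1 = \mirror{s}_{k-1}^{\,\ell+1}$, $X_2 = \mirror{s}_{k,\ell}$ survives every test you impose, and excluding it requires a language argument you never supply --- for instance: by mirror-invariance the prefix $X_1^2X_2^2$ of $s$ gives $s_{k,\ell}^{\,2}s_{k-1}^{2\ell+2} \in \Lang(\alpha)$, and since $s_{k-2}$ is a prefix of $s_{k-1}$ this word has $s_{k,\ell}^{\,3}$ as a prefix, contradicting the index $2$ of semistandard words from \autoref{prp:standard_index}; the subcase $\ell = a_k$ dies separately against the index $a_k+2$ of $s_{k-1}$, since then $\mirror{s}_{k-1}^{\,2a_k+2}$ would be a prefix of $s$. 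Note that the paper organizes its proof precisely so that this case never surfaces: it first assumes $X_1$ primitive, where $|X_1| > |X_2| \geq |u_2|$ hands \autoref{lem:standard_periods} its hypothesis immediately, and only afterwards treats general $X_1$ by shifting to $v^2X_2^2\cdots$.

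The second and larger gap is $|X_3| > |X_1|$: you do not prove it, and the plan you sketch (tracking the next intercept through the nested intervals of \autoref{lem:square_union}) does not resemble a workable argument; the paper's proof of this inequality is purely combinatorial and constitutes the bulk of its proof. It assumes $|X_3| \leq |X_1|$, shows $|X_3| > |X_2|$ by maximality of $X_2$, writes $X_3 = w^t$ with $w \in \rsst{\alpha}$, eliminates $|w| < |s_{k-1}|$ through a several-case computation pitting $t|w| > |s_{k-1}|$ against the index bounds of \autoref{prp:standard_index} (separately for $w = \mirror{s}_{k-2}$, $w = \mirror{s}_{k-3}$, and $|w| \leq |s_{k-4}|$), and finally, when $|w| > |s_{k-1}|$ forces $w = \mirror{s}_{k,\ell'}$ with $t = 1$, applies \autoref{prp:max_solution_suffix} to $\lambda_3$ and $\mu_3$ to conclude that $\mirror{s}_{k-1}$ and $\mirror{s}_{k-2}$ commute --- a contradiction. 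You correctly predicted that this stage is the real obstacle, but flagging the obstacle is not overcoming it: as written, your proposal establishes (modulo the first gap) only the identification of $X_1$ and of the primitive root of $X_2$, not the full proposition.
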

  \begin{proof}
    Assume that $|X_1| > |X_2|$. Let us first make the additional assumption that $X_1$ is primitive. In particular,
    $X_1 \in \rsst{\alpha}$. Let $u$ be the primitive root of $X_2$. Then $u \in \rsst{\alpha}$ and, moreover, by the
    assumption $|X_1| > |X_2|$ it holds that $|u| < |X_1|$. By \autoref{prp:max_solution_suffix} the word
    $\lambda_2 = X_1 X_2$ is a suffix of the word $\mu_2 = X_1^2 X_2^2$. Therefore $X_1$ is a proper suffix of $X_1
    X_2$, so $X_1 X_2 = Z X_1$ for some nonempty word $Z$. A standard argument shows that $X_1$ is a suffix of some
    word in $X_2^+$ (see e.g., \cite[Proposition~1.3.4]{1983:combinatorics_on_words}). Consequently, $\mirror{X}_1$ is
    a prefix of a word in $\mirror{u}^+$. As $|u| < |X_1|$, \autoref{lem:standard_periods} implies that that
    $X_1 = \mirror{s}_{k,\ell}$ and $u = \mirror{s}_{k-1}$ for some $k \geq 2$ with $0 < \ell \leq a_k$.

    Suppose now that $\ell = a_k$. Then the word $X_1^2 X_2^2$ contains
    $\mirror{s}_{k-1} \mirror{s}_{k-2} \mirror{s}_{k-1}^{\, a_k+2}$ as a factor. Thus
    $s_{k-1}^{a_k+2} s_{k-2} s_{k-1} \in \Lang(\alpha)$. As $s_{k-1}$ is a prefix of $s_{k-2}s_{k-1}$, it follows that
    $s_{k-1}^{a_k+3} \in \Lang(\alpha)$ contradicting \autoref{prp:standard_index}. Therefore $\ell \leq a_k - 1$.

    Let us then relax the assumption that $X_1$ is primitive. Let $v$ be the primitive root of $X_1$, so that
    $X_1 = v^j$ for some $j \geq 1$. Consider now the Sturmian word $T^{(2j-2)|v|}(s) = v^2 X_2^2 \cdots$. By the above
    arguments $v = \mirror{s}_{k,\ell}$ for some $k \geq 2$ with $0 < \ell \leq a_k - 1$ and the primitive root of
    $X_2$ is $\mirror{s}_{k-1}$. Further, as $\ell \neq a_k$, it follows from \autoref{prp:standard_index} that $v^3
    \notin \Lang(\alpha)$. Thus $j=1$, that is, $X_1 = \mirror{s}_{k,\ell}$.
    
    It remains to show that $|X_3| > |X_1|$. Assume for a contradiction that $|X_3| \leq |X_1|$. It is not possible
    that $|X_3| < |X_2|$ as the preceding arguments show that then $X_2$ must be reversed semistandard word; however,
    $X_2$ is a power of the reversed standard word $\mirror{s}_{k-1}$. Hence by the maximality of $X_2$ we have that
    $|X_3| > |X_2|$. Let $X_3 = w^t$ with $w \in \rsst{\alpha}$ and $t \geq 1$. As $|X_2| < |X_3| \leq |X_1|$, we have
    that $|s_{k-1}| < t|w| \leq |s_{k,\ell}|$.
    
    Assume for a contradiction that $|w| < |s_{k-1}|$. If $w$ is semistandard, then \autoref{prp:standard_index}
    implies that $t = 1$, so $t|w| > |s_{k-1}|$ cannot hold. Thus $w$ is standard. If $w = \mirror{s}_0 = 0$, then
    clearly $t|w| > |s_{k-1}| \geq |s_1|$ cannot hold as the index of the factor $0$ in $\Lang(\alpha)$ is $a+1$. Thus
    $w \neq \mirror{s}_0$. Suppose first that $w = \mirror{s}_{k-2}$. Now
    \begin{align*}
      t|w| > |s_{k-1}| = a_{k-1}|s_{k-2}| + |s_{k-3}|,
    \end{align*}
    so $t > a_{k-1}$. Since $X_3^2 \in \Lang(\alpha)$, \autoref{prp:standard_index} implies that $2t \leq a_{k-1} + 2$.
    Therefore
    \begin{align*}
      a_{k-1}+2 \geq 2t > 2a_{k-1}
    \end{align*}
    implying that $a_{k-1} = 1$. However, if $a_{k-1} = 1$, then $a_{k-1} + 2$ is odd, so actually $2t < a_{k-1} + 2$.
    Then $a_{k-1} + 2 > 2t > 2a_{k-1}$, so $a_{k-1} < 1$; a contradiction. Suppose then that $w = \mirror{s}_{k-3}$.
    Now
    \begin{align*}
      t|w| > |s_{k-1}| \geq |s_{k-2}s_{k-3}| = |s_{k-3}^{a_{k-2}}s_{k-4}s_{k-3}| > (a_{k-2}+1)|s_{k-3}|,
    \end{align*}
    so $t > a_{k-2} + 1$. Like previously, as $X_3^2 \in \Lang(\alpha)$, \autoref{prp:standard_index} implies that
    $2t \leq a_{k-2} + 2$. Like above, we obtain that $a_{k-2} < 0$; a contradiction. Similar to above
    \begin{align*}
      |s_{k-1}| \geq (a_{k-2}+1)|s_{k-3}| + |s_{k-4}| \geq 2|s_{k-3}| + |s_{k-4}| > (2a_{k-3} + 1)|s_{k-4}|.
    \end{align*}
    As $2a_{k-3} + 1 \geq a_{k-3} + 2$, we conclude that $|s_{k-4}^{a_{k-3}+2}| < |s_{k-1}|$. Therefore by
    \autoref{prp:standard_index} it is not possible that $|w| \leq |s_{k-4}|$. In conclusion, it is not possible that
    $t|w| > |s_{k-1}|$. This is a contradiction.
    
    Now $|w| > |\mirror{s}_{k-1}|$ (by the maximality of $X_2$ it must be that $w \neq \mirror{s}_{k-1}$). Because
    $|w| \leq |\mirror{s}_{k,\ell}|$, we have that $w = \mirror{s}_{k,\ell'}$ for some $\ell'$ such that
    $0 < \ell' \leq l$. Since $\ell \neq a_k$, the word $w$ is semistandard so by \autoref{prp:standard_index} we have
    that $t = 1$. By \autoref{prp:max_solution_suffix} the word $\lambda_3 = X_1 X_2 X_3$ is a suffix of the word
    $\mu_3 = X_1^2 X_2^2 X_3^2$. It follows that
    $\mirror{s}_{k-2}\mirror{s}_{k-1}^{\,\ell+r} = \mirror{s}_{k-1}^{\,\ell+r-\ell'}\mirror{s}_{k-2}\mirror{s}_{k-1}^{\,\ell'}$
    where $r$ is such that $X_2 = \mirror{s}_{k-1}^{\,r}$. Therefore the words $\mirror{s}_{k-2}$ and
    $\mirror{s}_{k-1}$ commute; a contradiction. This final contradiction proves that $|X_3| > |X_1|$.
  \end{proof}

  \begin{corollary}\label{cor:solution_length_increasing}
    Let $s = X_1^2 X_2^2 \cdots$ be a Sturmian word of type $A$ with slope $\alpha$ factorized as a product of maximal
    solutions $X_i$. Then $\liminf_{i \to \infty} |X_i| = \infty$.
  \end{corollary}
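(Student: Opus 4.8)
The plan is to read Proposition \ref{prp:next_shorter} as a statement about an \emph{arbitrary} consecutive triple of maximal solutions and then to track how the lengths $|X_i|$ can move. First I would observe that for every $i$ the shifted word $T^{|X_1^2\cdots X_{i-1}^2|}(s)$ is again a Sturmian word of slope $\alpha$ of type A whose canonical factorization into maximal solutions is exactly $X_i^2 X_{i+1}^2\cdots$; this is because the maximal solution occurring at a position depends only on the one-sided infinite word starting there, so the factorization of a tail is the tail of the factorization. Consequently Proposition \ref{prp:next_shorter} applies at every index: whenever $|X_i|>|X_{i+1}|$ we have $X_i=\mirror{s}_{k,\ell}$ for some $k\ge 2$ with $0<\ell\le a_k-1$, the primitive root of $X_{i+1}$ is $\mirror{s}_{k-1}$, and $|X_{i+2}|>|X_i|$.

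Next I would record two elementary facts. \textbf{(a)} Consecutive lengths are always distinct, $|X_i|\ne|X_{i+1}|$. Indeed, if they were equal then both $X_i$ and $X_{i+1}$ would be right special factors of the same length (a maximal solution is always right special), hence equal by the uniqueness of the right special factor of each length; but $X_i=X_{i+1}$ would make $X_i^2$ again a solution to \eqref{eq:square} (any power of a solution is a solution) whose square $X_i^4=X_i^2X_{i+1}^2$ is a prefix, contradicting the maximality of $X_i$. \textbf{(b)} Two descents cannot be adjacent: a descent at $i$ forces $|X_{i+2}|>|X_i|>|X_{i+1}|$, which is incompatible with a descent at $i+1$.

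Call $i$ a \emph{descent} if $|X_i|>|X_{i+1}|$. If there are only finitely many descents, then beyond the last one the sequence $(|X_i|)$ has no descents and, by \textbf{(a)}, no repetitions, so it is strictly increasing and tends to infinity. Otherwise let $i_1<i_2<\cdots$ be the descents. By \textbf{(b)} we have $i_{j+1}\ge i_j+2$, and on each block $[i_j+1,i_{j+1}]$ there are no descents, so by \textbf{(a)} the lengths strictly increase there; combined with $|X_{i_j+2}|>|X_{i_j}|$ this gives $|X_{i_{j+1}}|\ge|X_{i_j+2}|>|X_{i_j}|$. Hence the pre-descent values $|X_{i_j}|=q_{k_j,\ell_j}$ form a strictly increasing sequence and tend to infinity.

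The crux is to control the \emph{dips} $|X_{i_j+1}|$, and here the fine structure from Proposition \ref{prp:next_shorter} is essential. Since $|X_{i_j}|=q_{k_j,\ell_j}\to\infty$ while for each fixed $k$ the finitely many numbers $q_{k,\ell}$ are bounded, the indices $k_j$ must tend to infinity. The dip $X_{i_j+1}$ has primitive root $\mirror{s}_{k_j-1}$, so $|X_{i_j+1}|$ is a positive multiple of $q_{k_j-1}$ and thus $|X_{i_j+1}|\ge q_{k_j-1}\to\infty$. Because the lengths increase strictly across each block $[i_j+1,i_{j+1}]$, every length indexed by that block is at least $|X_{i_j+1}|\ge q_{k_j-1}$; as these blocks exhaust all large indices and their minima tend to infinity, $\liminf_{i\to\infty}|X_i|=\infty$. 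The main obstacle I foresee is exactly this last quantitative step: a large pre-descent value $q_{k,\ell}$ does not by itself force the dip to be large, so one must argue separately that it is $k_j$ (and not merely $\ell_j$, which is bounded by $a_{k_j}$) that tends to infinity, thereby pushing $q_{k_j-1}$, and with it the dips, to infinity.
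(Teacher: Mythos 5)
Your proof is correct, and it rests on the same foundation as the paper's: \autoref{prp:next_shorter} applied at every index of the factorization, which you justify by observing that each tail $T^{|X_1^2\cdots X_{i-1}^2|}(s)$ is again a type A Sturmian word of slope $\alpha$ whose canonical factorization into maximal solutions is the corresponding tail of the original one. The difference is in how much of that proposition gets used. The paper's own proof is a single sentence invoking only the bounce-back inequality (a descent at $i$ forces $|X_{i+2}|>|X_i|$); as your analysis implicitly shows, that inequality alone, even combined with your fact (a) that consecutive lengths differ, does not suffice: an integer sequence such as $5,3,6,3,7,3,8,\ldots$ satisfies both constraints yet has finite $\liminf$. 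The substance of your write-up is exactly the missing control of the dips: the pre-descent values $|X_{i_j}|=q_{k_j,\ell_j}$ increase strictly and hence tend to infinity; since $\ell_j\le a_{k_j}-1$ keeps $q_{k_j,\ell_j}$ bounded whenever $k_j$ is bounded, this forces $k_j\to\infty$; and then the structural clause of \autoref{prp:next_shorter} (the primitive root of the dip $X_{i_j+1}$ is $\mirror{s}_{k_j-1}$) gives $|X_{i_j+1}|\ge q_{k_j-1}\to\infty$. Your fact (a) is also sound: maximal solutions are right special (as the paper notes after the type A/B definitions), the right special factor of each length is unique, so equal consecutive lengths would force $X_i=X_{i+1}$, making $X_i^2$ a solution to \eqref{eq:square} whose square is a prefix, contradicting maximality. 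In short, you follow the paper's route but supply a quantitative step that the paper's one-line proof leaves implicit, and that step is genuinely needed rather than a pedantic addition.
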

  \begin{proof}
    This follows from \autoref{prp:next_shorter}: if $|X_{i+1}| < |X_i|$ for some $i \geq 1$, then $|X_{i+2}| > |X_i|$.
  \end{proof}

  \section{The Square Root of the Fibonacci Word}\label{sec:fibonacci}
  In this section we prove a formula for the square root of the Fibonacci word. To do this we factorize the Fibonacci
  word as a product of maximal solutions to \eqref{eq:square}.

  We denote by $\Phi$ the slope of the Fibonacci word, that is, $\Phi = [0;2,1,1,\ldots]$. Further, we set
  \begin{align*}
    t_k = 
    \begin{cases}
      01, & \text{ if } k \text{ is even }, \\
      10, & \text{ if } k \text{ is odd }.
    \end{cases}
  \end{align*}
  We need two lemmas specific to the slope $\Phi$.

  \begin{lemma}\label{lem:fibonacci_formula1}
    For the standard words of slope $\Phi$ it holds that
    $t_k s_k s_{k+1} s_{k+2} = \mirror{s}_{k+2}^{\, 2} t_{k+1}$ for all $k \geq 0$.
  \end{lemma}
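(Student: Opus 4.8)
The plan is to give a direct, non-inductive computation that rewrites both sides into one and the same word, using only two structural facts special to $\Phi = [0;2,1,1,\ldots]$. First I would record these facts. Since $a_i = 1$ for every $i \geq 2$, the standard recurrence collapses to $s_{k+2} = s_{k+1}s_k$ for all $k \geq 0$. Moreover, exactly as already used in the proof of \autoref{prp:begins_with_rsst}, each standard word has a palindromic prefix of length two less than itself: $s_{k+2} = w\mirror{t_k}$ for a palindrome $w$. Here the trailing two letters really are $\mirror{t_k}$, since $s_{k+2}$ ends in $10$ when $k$ is even and in $01$ when $k$ is odd, matching $\mirror{t_{k+2}} = \mirror{t_k}$ (this is the $s_{2k} = P_{2k}10$, $s_{2k+1} = Q_{2k+1}01$ decomposition cited from Lothaire).

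Next I would feed this into the almost-commutation property of standard words recalled in \autoref{sec:preliminaries}. Writing $s_{k+1}s_k = s_{k+2} = w\mirror{t_k}$, the property says $s_k s_{k+1}$ agrees with $s_{k+1}s_k$ except that the final two letters are transposed. The common prefix of length $|s_{k+2}|-2$ is necessarily $w$ (being the unique prefix of $s_{k+2}$ of that length), and transposing the two-letter suffix $\mirror{t_k}$ gives $t_k$; hence the clean identity $s_k s_{k+1} = w t_k$. Combining the palindromicity of $w$ with $\mirror{\mirror{t_k}} = t_k$ yields $\mirror{s}_{k+2} = \mirror{w\mirror{t_k}} = t_k w$, and of course $t_{k+1} = \mirror{t_k}$.

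With these substitutions the verification is immediate and uniform in $k$: the right-hand side becomes $\mirror{s}_{k+2}^{\,2} t_{k+1} = (t_k w)(t_k w)\mirror{t_k} = t_k w t_k w \mirror{t_k}$, while the left-hand side becomes $t_k s_k s_{k+1} s_{k+2} = t_k (w t_k)(w\mirror{t_k}) = t_k w t_k w \mirror{t_k}$, so the two sides coincide. No base case or induction is required beyond noting that the decomposition $s_{k+2} = w\mirror{t_k}$ is already valid at $k=0$, where $w = 0$ and $s_2 = 0\cdot 10$.

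I expect the only genuine subtlety to be the palindromicity of the common prefix $w$, since the collapse of both sides to the single word $t_k w t_k w \mirror{t_k}$ hinges on $\mirror{w} = w$. The almost-commutation statement in \autoref{sec:preliminaries} only asserts a shared prefix together with a transposition of the last two letters; it does not by itself record that this prefix is a palindrome. I would therefore be careful to import the palindromic-prefix fact for standard words explicitly and to check that its palindrome coincides with the almost-commutation prefix $w$, which holds because both are the unique prefix of $s_{k+2}$ of length $|s_{k+2}|-2$.
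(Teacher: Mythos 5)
Your proof is correct, and it is organized along a genuinely different route than the paper's, even though both ultimately rest on the same Lothaire palindrome fact. The paper verifies $k=0$ by hand and then, for $k \geq 1$, expands \emph{all three} words $s_k$, $s_{k+1}$, $s_{k+2}$ via the decomposition $s_m = P_m\mirror{t}_m$, telescoping the left-hand side into the right-hand side by repeatedly using $t_m P_m = \mirror{s}_m$ and $\mirror{t}_m = t_{m+1}$. You instead decompose only $s_{k+2} = w\mirror{t}_k$ and bring in the almost-commutation property of consecutive standard words from \autoref{sec:preliminaries} to convert $s_k s_{k+1}$ into $w t_k$, after which both sides collapse to the single word $t_k w t_k w \mirror{t}_k$. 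Your one delicate step--identifying the almost-commutation prefix with the Lothaire palindrome--is handled correctly: both are the prefix of $s_{k+2} = s_{k+1}s_k$ of length $|s_{k+2}|-2$, hence equal, and you rightly flag this as the point where the argument would otherwise leak. What your version buys: it is uniform in $k \geq 0$, so no separate base case is required (the paper needs one because $P_k$ exists only for $k \geq 1$), and it confines the palindromicity requirement to a single word. What the paper's version buys: it uses only the palindrome decomposition and never invokes almost-commutation, and its step-by-step rewriting style matches the inductive computation in \autoref{lem:fibonacci_formula2} that immediately follows. Both arguments use the collapsed recurrence $s_{k+2}=s_{k+1}s_k$ special to $\Phi$ (the paper in its final step $\mirror{s}_k\mirror{s}_{k+1} = \mirror{s}_{k+2}$, you at the outset), so the difference is one of bookkeeping and of which structural lemma carries the load, not of the underlying combinatorics.
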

  \begin{proof}
    The case $k = 0$ is verified directly:
    $t_0 s_0 s_1 s_2 = 01\cdot0\cdot01\cdot010 = (010)^2 \cdot 10 = \mirror{s}_2^{\, 2} t_1$.
    Let then $k \geq 1$. There exists a palindrome $P_k$ such that $s_k = P_k \mirror{t}_k$ for all $k \geq 1$ (see
    e.g. \cite[Lemma~2.2.8]{2002:algebraic_combinatorics_on_words}). Now
    \begin{align*}
      t_k s_k s_{k+1} s_{k+2} &= t_k P_k \mirror{t}_k s_{k+1} s_{k+2} = \mirror{s}_k \mirror{t}_k P_{k+1}
      \mirror{t}_{k+1} s_{k+2} = \mirror{s}_k t_{k+1} P_{k+1} \mirror{t}_{k+1} s_{k+2} \\
      &= \mirror{s}_k \mirror{s}_{k+1} t_{k+1} P_{k+2} \mirror{t}_{k+2} = \mirror{s}_k \mirror{s}_{k+1} \mirror{s}_{k+2}
      \mirror{t}_{k+2} = \mirror{s}_{k+2}^{\, 2} t_{k+1},
    \end{align*}
    which proves the claim.
  \end{proof}

  \begin{lemma}\label{lem:fibonacci_formula2}
    For the standard words of slope $\Phi$ it holds that
    $s_{3k + 4} = \prod_{i = 0}^k \mirror{s}_{3i + 2}^{\, 2} \cdot t_{k+1}$ for all $k \geq 0$.
  \end{lemma}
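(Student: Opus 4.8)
The plan is to proceed by induction on $k$, with \autoref{lem:fibonacci_formula1} as the main engine. For the base case $k=0$ I would compute directly: the product reduces to $\mirror{s}_2^{\,2}$, and since $\mirror{s}_2 = s_2 = 010$ and $t_1 = 10$ we get $\mirror{s}_2^{\,2}\, t_1 = 010010\cdot 10 = 01001010 = s_4 = s_{3\cdot 0+4}$, as required.

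For the inductive step I assume the formula for $k$ and target the formula for $k+1$. The key algebraic input is \autoref{lem:fibonacci_formula1} applied with its index set to $3k+3$, namely $t_{3k+3}\, s_{3k+3} s_{3k+4} s_{3k+5} = \mirror{s}_{3k+5}^{\,2}\, t_{3k+4}$. Here I would first record the parity observation that $t_m$ depends only on the parity of $m$; since $3k+3-(k+1)$ and $3k+4-(k+2)$ are both equal to the even number $2k+2$, we have $t_{3k+3}=t_{k+1}$ and $t_{3k+4}=t_{k+2}$. The identity thus rewrites as $\mirror{s}_{3k+5}^{\,2}\, t_{k+2} = t_{k+1}\, s_{3k+3} s_{3k+4} s_{3k+5}$. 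I then peel off the last factor of the target product,
\[
  \prod_{i=0}^{k+1} \mirror{s}_{3i+2}^{\,2}\cdot t_{k+2} = \Bigl(\prod_{i=0}^{k} \mirror{s}_{3i+2}^{\,2}\Bigr)\,\mirror{s}_{3k+5}^{\,2}\, t_{k+2},
\]
substitute the rewritten identity for $\mirror{s}_{3k+5}^{\,2}\, t_{k+2}$, and apply the induction hypothesis in the form $\bigl(\prod_{i=0}^{k} \mirror{s}_{3i+2}^{\,2}\bigr)\, t_{k+1} = s_{3k+4}$ to collapse the leading block. This leaves the expression $s_{3k+4}\, s_{3k+3} s_{3k+4} s_{3k+5}$.

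It then remains to check the purely combinatorial identity $s_{3k+4}\, s_{3k+3} s_{3k+4} s_{3k+5} = s_{3k+7}$. Since $\Phi = [0;2,1,1,\ldots]$ has $a_n = 1$ for every $n \geq 2$, the standard words satisfy the Fibonacci recurrence $s_n = s_{n-1}s_{n-2}$ for $n \geq 2$; writing $a = s_{3k+3}$ and $b = s_{3k+4}$ gives $s_{3k+5}=ba$, $s_{3k+6}=bab$, and $s_{3k+7}=babba$, while the expression above is $b\,a\,b\,(ba)=babba$. Hence the target equals $s_{3(k+1)+4}$, closing the induction. I expect the only delicate points to be the index-and-parity bookkeeping that links the subscripts $3k+3,3k+4$ arising from \autoref{lem:fibonacci_formula1} to the subscripts $k+1,k+2$ of the $t_m$'s, and keeping the noncommutative products in the correct left-to-right order; the concluding word identity is a one-line consequence of the Fibonacci recurrence.
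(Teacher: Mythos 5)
Your proof is correct and follows essentially the same route as the paper's: induction on $k$, with \autoref{lem:fibonacci_formula1} (plus the parity identification $t_m = t_{m+2}$) doing the collapsing and the Fibonacci recurrence $s_n = s_{n-1}s_{n-2}$ supplying the word identity; the only cosmetic difference is direction, since the paper expands $s_{3k+4}$ into $s_{3(k-1)+4}\,s_{3k}s_{3k+1}s_{3k+2}$ and then applies the induction hypothesis and the lemma, whereas you build up the product side and verify the identity $s_{3k+4}\,s_{3k+3}s_{3k+4}s_{3k+5} = s_{3k+7}$ at the end. The parity bookkeeping you spell out is left implicit in the paper, so your write-up is, if anything, slightly more careful.
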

  \begin{proof}
    If $k = 0$, then $s_4 = 01001010 = \mirror{s}_2^{\, 2} t_1$. Let then $k \geq 1$. Now
    \begin{align*}
      s_{3k+4} &= s_{3k+3}s_{3k+2} = s_{3k+2}s_{3k+1}s_{3k+2} = s_{3k+1}s_{3k}s_{3k+1}s_{3k+2} \\
      &= s_{3(k-1)+4}s_{3k}s_{3k+1}s_{3k+2}
      = \prod_{i=0}^{k-1} \mirror{s}_{3i+2}^{\, 2} \cdot t_k s_{3k}s_{3k+1}s_{3k+2}
    \end{align*}
    where the last equality follows by induction. By applying \autoref{lem:fibonacci_formula1} we obtain that
    \begin{align*}
      s_{3k+4} = \prod_{i=0}^{k-1} \mirror{s}_{3i+2}^{\, 2} \cdot \mirror{s}_{3k+2}^{\, 2} t_{k+1} = \prod_{i = 0}^k \mirror{s}_{3i + 2}^{\, 2} \cdot t_{k+1},
    \end{align*}
    which proves the claim.
  \end{proof}

  As an immediate corollary to \autoref{lem:fibonacci_formula2} we obtain a formula for the square root of the
  Fibonacci word.

  \begin{theorem}\label{thm:fibonacci_square_root}
    For slope $\Phi$ we have that
    \begin{align*}
      c_\Phi = \prod_{i=0}^\infty \mirror{s}_{3i+2}^{\, 2} \quad \text{ and } \quad
      \sqrt{c_\Phi} = s_{\frac{1}{2},\Phi} = \prod_{i=0}^\infty \mirror{s}_{3i+2}.
    \end{align*}
  \end{theorem}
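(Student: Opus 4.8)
The plan is to derive the theorem directly from \autoref{lem:fibonacci_formula2} together with the square root machinery already established, so the argument is essentially a passage to the limit followed by an evaluation of $\psi$. First I would obtain the product formula for $c_\Phi$ itself. Since $s_k$ is a prefix of $s_{k+1}$ for $k \geq 1$, the infinite standard word $c_\Phi$ equals $\lim_{k \to \infty} s_{3k+4}$. By \autoref{lem:fibonacci_formula2} each $s_{3k+4}$ has the prefix $\prod_{i=0}^k \mirror{s}_{3i+2}^{\,2}$, the only extra material being the two-letter suffix $t_{k+1}$; hence each finite product $\prod_{i=0}^k \mirror{s}_{3i+2}^{\,2}$ is a prefix of $c_\Phi$. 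These prefixes are nested and their lengths tend to infinity, so $c_\Phi = \prod_{i=0}^\infty \mirror{s}_{3i+2}^{\,2}$.

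Next I would compute the square root blockwise. By \autoref{prp:standard_solutions} each reversed standard word $\mirror{s}_{3i+2}$ is a primitive solution to \eqref{eq:square} in $\Lang(\alpha)$; that is, $\mirror{s}_{3i+2}^{\,2}$ factors as a product of minimal squares whose square root is exactly $\mirror{s}_{3i+2}$. Concatenating these factorizations over all $i$ produces a factorization of $c_\Phi$ into minimal squares, and by the uniqueness of the minimal-square factorization noted in \autoref{sec:word_equation_characterization} this is precisely the factorization used to define $\sqrt{c_\Phi}$. Replacing each minimal square by its root therefore amounts to replacing each block $\mirror{s}_{3i+2}^{\,2}$ by $\mirror{s}_{3i+2}$, giving $\sqrt{c_\Phi} = \prod_{i=0}^\infty \mirror{s}_{3i+2}$.

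Finally I would identify the intercept. Since $c_\Phi = s_{\Phi,\Phi}$, \autoref{thm:square_root} gives $\sqrt{c_\Phi} = s_{\psi(\Phi),\Phi}$ with $\psi(\Phi) = \frac12(\Phi + 1 - \Phi) = \frac12$. Hence $\sqrt{c_\Phi} = s_{\frac12,\Phi}$, which closes the chain of equalities claimed in the statement.

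The only genuinely delicate point is the second step: one must be certain that grouping the minimal squares into the blocks $\mirror{s}_{3i+2}^{\,2}$ is compatible with the square root map, i.e.\ that the per-block minimal-square factorizations supplied by \autoref{prp:standard_solutions} do not interfere at the block boundaries. This is exactly what the uniqueness of the minimal-square factorization guarantees, so the concatenated factorization is the genuine one and the blockwise computation of $\sqrt{c_\Phi}$ is valid. Everything else reduces to the routine limit in the first step and the one-line evaluation of $\psi$ in the last.
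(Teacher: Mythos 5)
Your proposal is correct and matches the paper's intended argument: the paper derives the theorem as an immediate corollary of \autoref{lem:fibonacci_formula2} by passing to the limit, using the fact that each $\mirror{s}_{3i+2}$ is a solution to \eqref{eq:square} (so the blockwise minimal-square factorizations concatenate, by uniqueness, into the genuine factorization of $c_\Phi$), and identifying the intercept via \autoref{thm:square_root} with $\psi(\Phi)=\frac12$. You have simply spelled out the details the paper leaves implicit, including the one genuinely delicate point about block boundaries, which you handle correctly.
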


  The preceding arguments are very specific to the Fibonacci word. The reader might wonder if formulas for the square
  roots of other standard Sturmian words exist. Surely, for some specific words such formulas can be derived, but we
  believe no general factorization for the square roots of standard Sturmian words can be given. Let us give some
  arguments supporting our belief.

  Let $s = X_1^2 X_2^2 \cdots$ be a standard Sturmian word of slope $\alpha$ factorized as a product of maximal solutions
  to \eqref{eq:square}. The word $s$ begins with the word $0^\oa 1$. Therefore if $\oa > 1$, then
  $X_1 = 0^{\lfloor \oa/2 \rfloor}$. Thus if $\oa > 1$, then $X_2$ begins with $0$ if and only if $\oa$ is odd. Because
  of the asymmetry of the letters $0$ and $1$ in the minimal squares of slope $\alpha$ \eqref{eq:min_squares}, the
  parity of the parameter $\oa$ greatly influences the remaining words $X_i$. Moreover, it is not just the partial
  quotient $a_1$ which influences the factorization. Suppose for instance that $a_1 = 2$ and $a_2 = 1$.
  \autoref{tbl:changes_1} shows how the values of the partial quotients $a_3$ and $a_4$ affect the words $X_i$. The
  cell of the table tells to which squares of reversed standard words the words $X_1, X_2$ and $X_3$ correspond to. For
  example if $a_3 = 2$ and $a_4 = 1$, then the standard Sturmian word of slope $[0;2,1,2,1,\ldots]$ begins with
  $\mirror{s}_2^{\, 2} \mirror{s}_4^{\, 2} \mirror{s}_7^{\, 2}$. \autoref{tbl:changes_2} tells the first letter of the
  corresponding word $X_4^2$. As can be observed from \autoref{tbl:changes_2}, the first letter of $X_4^2$ varies when
  $a_3$ and $a_4$ vary. Because of the asymmetry, it is thus expected that slight variation in partial quotients
  drastically changes the factorization as a product of maximal solutions to \eqref{eq:square}. Since similar behavior
  is expected from the rest of the partial quotients, it seems to us that no nice formula (like e.g., the formula of
  \autoref{thm:fibonacci_square_root}) can be given for the square root of a standard Sturmian word in terms of
  reversed standard words.

  \begin{table}
    \centering
    \begin{tabular}{|l|l|l|l|}
      \hline
      \diagbox{$a_4$}{$a_3$} & $1$ & $2$ & $3$ \\
      \hline
      $1$ & $2,5,8$ & $2,4,7$ & $2,5,8$ \\
      \hline
      $2$ & $2,3,6$ & $2,4,7$ & $2,3,6$ \\
      \hline
      $3$ & $2,3,5$ & $2,4,7$ & $2,3,5$ \\
      \hline
    \end{tabular}
    \caption{How $X_1$, $X_2$, and $X_3$ are affected when $a_3$ and $a_4$ vary in the case that $a_1 = 2$ and $a_2 = 1$.}
    \label{tbl:changes_1}
  \end{table}
  \begin{table}
    \centering
    \begin{tabular}{|l|l|l|l|}
      \hline
      \diagbox{$a_4$}{$a_3$} & $1$ & $2$ & $3$ \\
      \hline
      $1$ & $1$ & $0$ & $1$ \\
      \hline
      $2$ & $1$ & $0$ & $1$ \\
      \hline
      $3$ & $0$ & $0$ & $0$ \\
      \hline
    \end{tabular}
    \caption{How the first letter of $X_4$ varies when $a_3$ and $a_4$ vary in the case that $a_1 = 2$ and $a_2 = 1$.}
    \label{tbl:changes_2}
  \end{table}

  De Luca and Fici proved a nice formula for a certain shift of a standard Sturmian word
  \cite[Theorem~18]{2013:open_and_closed_prefixes_of_sturmian_words}.

  \begin{proposition}
    Let $c_\alpha$ be the standard Sturmian word of slope $\alpha = [0;\oa+1,\ob+1,\ldots]$. Then
    \begin{align*}
      c_\alpha = 0^\oa10^{\oa-1} \prod_{k=1}^\infty \mirror{s}_k^{\, 2}.
    \end{align*}
  \end{proposition}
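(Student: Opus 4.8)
Since the statement is attributed to de Luca and Fici, one option is simply to invoke their cited theorem; below I sketch how I would argue it inside the framework of this paper. Write $c_\alpha = \lim_{k\to\infty} s_k$ and set $P = 0^\oa 1 0^{\oa-1}$, noting that $P = s_1\,0^{\oa-1}$ and, with the palindrome decomposition $s_1 = P_1 \mirror{t}_1$ where $P_1 = 0^{\oa-1}$, that $P = P_1(\mirror{t}_1)P_1$. The claimed identity is equivalent to the two assertions that the lengths $\bigl|\,P\prod_{k=1}^N\mirror{s}_k^{\,2}\,\bigr|$ tend to infinity and that each finite word $P\prod_{k=1}^N\mirror{s}_k^{\,2}$ is a prefix of $c_\alpha$. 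The length claim is immediate since $|\mirror{s}_k| = q_k \geq 1$, so the entire content of the proposition lies in the family of finite prefix identities; passing to the limit then gives the formula.

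The plan is to establish these finite identities by the telescoping method used for the Fibonacci word in \autoref{lem:fibonacci_formula1}, \autoref{lem:fibonacci_formula2}, and \autoref{thm:fibonacci_square_root}. The structural ingredients are the reversed recursion $\mirror{s}_{k+1} = \mirror{s}_{k-1}\,\mirror{s}_k^{\,a_{k+1}}$, the decomposition $s_k = P_k\mirror{t}_k$ with $P_k$ a central (palindromic) word, so that $\mirror{s}_k = t_k P_k$, and the almost-commutation rule which, after reversal, reads $\mirror{s}_{k-1}\mirror{s}_{k-2} = L(\mirror{s}_{k-2})\mirror{s}_{k-1}$ (this is exactly the identity already exploited in the proof of \autoref{prp:standard_solutions}). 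I would first seek a one-step ``boundary-transfer'' identity in the spirit of \autoref{lem:fibonacci_formula1}, which pushes the two boundary letters $t_k$ across a short product of standard words while producing one reversed square $\mirror{s}^{\,2}$; chaining such identities as in \autoref{lem:fibonacci_formula2} would telescope a suitable standard word $s_{m(N)}$ into $P\prod_{k=1}^N\mirror{s}_k^{\,2}$ followed by a controlled boundary word $t_{N+1}$, which is precisely the required prefix statement. Here the parameters $\oa$ and $\ob$ should enter only through the base case and through the alignment of the initial prefix $P$ against the first central word $P_1$.

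The hard part will be carrying the telescoping through arbitrary partial quotients rather than the all-ones directive sequence of the Fibonacci word. \autoref{lem:fibonacci_formula1} depends on the length coincidence $q_k + q_{k+1} + q_{k+2} = 2q_{k+2}$, which holds only when $a_{k+2} = 1$, so for general $\alpha$ the correct one-step identity has a different shape, and the powers $\mirror{s}_k^{\,a_{k+1}}$ concealed inside $\mirror{s}_{k+1}$ must be unfolded and reassembled while carefully tracking the alternating boundary letters $t_k$ and the resulting index $m(N)$. A second, more geometric route avoids inventing the exact word identity: the finite prefix claim is equivalent to $\alpha \in [P]\cap\bigcap_{k\le N} R^{-m_k}\!\bigl([\mirror{s}_k^{\,2}]\bigr)$ for the appropriate offsets $m_k$, and one can feed the explicit interval endpoints of $[\mirror{s}_k^{\,2}]$ computed in the proof of \autoref{lem:square_union} into a nested-interval argument, showing that these sets all contain the intercept $\alpha$ and shrink to it, which again yields the identity in the limit. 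Either approach reduces the proposition to bookkeeping that the paper's earlier machinery already supports.
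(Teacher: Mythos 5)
You should first know that the paper does not actually prove this proposition: it is imported verbatim from de Luca and Fici (their Theorem~18), with no argument given. So the first sentence of your proposal --- simply invoke the citation --- is exactly what the paper does, and to that extent you match it.

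Everything after that sentence, however, is a plan rather than a proof, and both of your routes stop precisely where the real work would begin. In the telescoping route you correctly diagnose that \autoref{lem:fibonacci_formula1} rests on the length coincidence $q_k + q_{k+1} = q_{k+2}$, valid only when $a_{k+2} = 1$, but you then leave ``the correct one-step identity'' for general partial quotients unstated; producing and chaining that identity is the entire content of the theorem, not bookkeeping that ``the paper's earlier machinery already supports.'' In the geometric route, the assertion that $\alpha$ lies in every shifted interval $R^{-m_k}\bigl([\mirror{s}_k^{\,2}]\bigr)$ is exactly the statement to be proved and is nowhere verified. Moreover, a base-case check --- which your own plan flags as the place where $\oa$ and $\ob$ enter, yet never performs --- would have exposed a concrete obstruction: under this paper's conventions ($s_0 = 0$, $s_1 = 0^{a_1-1}1$) the displayed formula is off by one. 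For the Fibonacci word ($\oa = 1$, $\ob = 0$) it reads $c_\Phi = 01\cdot(10)^2(010)^2(10010)^2\cdots = 011010\cdots$, which disagrees with $c_\Phi = 010010\cdots$ already in the third letter; the product has to begin at $k = 0$, i.e.\ with $\mirror{s}_0^{\,2} = 00$ (equivalently, de Luca and Fici index their standard words with a shift relative to this paper). So as stated, neither sketch can be completed, and any honest attempt at the induction's base case --- or at the very alignment of $P$ against $P_1$ that you mention --- would have surfaced this immediately.
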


  As a corollary of this theorem we obtain that the word $\sqrt{T^{2\oa}(c_\alpha)} = \prod_{k=1}^\infty \mirror{s}_k$
  is a Sturmian word of slope $\alpha$ with intercept $\psi(\{(2\oa+1)\alpha\}) = \oa\alpha$. We have thus shown that
  \begin{align*}
    c_\alpha = 0^{\oa-1} \prod_{k=1}^\infty \mirror{s}_k.
  \end{align*}
  In particular, we obtain the well-known result that the Fibonacci infinite word is a product of the reversed
  Fibonacci words.

  \section{A Curious Family of Subshifts}\label{sec:counter_example}
  In this section we construct a family of linearly recurrent and optimal squareful words which are not Sturmian but
  are fixed points of the (more general) square root map. Moreover, we show that any subshift $\Omega$ generated by
  such a word has a curious property: for every $w \in \Omega$ either $\sqrt{w} \in \Omega$ or $\sqrt{w}$ is periodic.

  It is evident from \autoref{prp:optimal_squareful_characterization} that Sturmian words are a proper subclass of
  optimal squareful words. As Sturmian words have the exceptional property that their language is preserved under the
  square root map, it is natural to ask if other optimal squareful words can have this property. We show that, indeed,
  such words exist by an explicit construction. The idea behind the construction is to mimic the structure of the
  Sturmian words $01c_\alpha$ and $10c_\alpha$. The simple reason why these words are fixed points of the square root
  map (thus preserving the language) is that they have arbitrarily long squares of solutions to \eqref{eq:square} as
  prefixes. Thus to obtain a fixed point of the square root map, it is sufficient to find a sequence $(u_k)$ of
  solutions to \eqref{eq:square} with the property that $u_k^2$ is a proper prefix of $u_{k+1}^2$ for all $k \geq 1$.
  Let us show how such a sequence can be obtained.

  Let $S$ be a fixed primitive solution to \eqref{eq:square} in the language of some Sturmian word with slope
  $[0;\oa+1,\ob+1, \ldots]$ such that $|S| > |S_6|$. In particular, $S$ has the word $S_6 = 10^{\oa+1}(10^\oa)^{\ob+1}$
  as a proper suffix. Recall from the proof of \autoref{lem:product_of_squares} that $|S| \geq |S_5S_6|$. We denote the
  word $L(S)$ simply by $L$. Using the word $S$ as a seed solution, we produce a sequence $(\gamma_k)$ of primitive
  solutions to \eqref{eq:square} defined by the recurrence
  \begin{align}\label{eq:gamma_recurrence}
    \gamma_1 = S, \quad \gamma_{k+1} = L(\gamma_k) \gamma_k^2 \quad \text{ for } k \geq 2.
  \end{align}
  We need to prove that the sequence $(\gamma_k)$ really is a sequence of primitive solutions to \eqref{eq:square}.
  Before showing this, let us define
  \begin{align}\label{eq:gamma}
    \Gamma_1 = \lim_{k \to \infty} \gamma_{2k} \quad \text{ and } \quad \Gamma_2 = \lim_{k \to \infty} \gamma_{2k + 1}.
  \end{align}
  The limits exist as $\gamma_k^2$ is always a prefix of $\gamma_{k+2}$. Hence both $\Gamma_1$ and $\Gamma_2$ have
  arbitrarily long squares of words in the sequence $(\gamma_k)$ as prefixes. Observe also that
  $\Lang(\Gamma_1) = \Lang(\Gamma_2)$. As there is not much difference between $\Gamma_1$ and $\Gamma_2$ in terms of
  structure, we set $\Gamma$ to be either of these words.
 
  Taking for granted that the sequence $(\gamma_k)$ is a sequence of solutions to \eqref{eq:square}, we see that
  $\sqrt{\Gamma} = \Gamma$. Note that we also need to ensure that the word $\Gamma$ is optimal squareful for the square
  root map to make sense.

  Next we aim to prove the following.

  \begin{proposition}\label{prp:gamma_solutions}
    The word $\gamma_k$ is a primitive solution to \eqref{eq:square} in $\Lang(\oa,\ob)$ for all $k \geq 1$.
  \end{proposition}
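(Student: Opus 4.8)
The plan is to prove the proposition by induction on $k$, strengthening the inductive hypothesis so that it also records exactly the data needed to feed \autoref{lem:product_of_squares}. Concretely, I would carry along the statement that $\gamma_k$ is a \emph{primitive} solution to \eqref{eq:square}, that $\gamma_k$ has $S_6$ as a suffix, and that $\gamma_k^2,\gamma_k^3 \in \Lang(\oa,\ob)$. For the base case $k=1$ one has $\gamma_1 = S$, a primitive solution with $|S| > |S_6|$ and $S^2 \in \Lang(\beta) \subseteq \Lang(\oa,\ob)$, where $\beta = [0;\oa+1,\ob+1,\ldots]$. Since $L(S)$ has the same number of $0$'s as $S$, \autoref{prp:conjugate_square} makes $L(S)$ conjugate to $S$, hence primitive, and the remark following \autoref{prp:standard_solutions} (that every word of $L(\rsst{\beta})$ is a solution) shows $L(S)$ is again a solution to \eqref{eq:square}.

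The algebraic heart is a single observation: because $L$ only swaps the first two letters and $\gamma_{k+1} = L(\gamma_k)\gamma_k^2$ starts with those two (swapped) letters, applying $L$ restores them, so
\[
  L(\gamma_{k+1}) = \gamma_k^3 \quad\text{for all } k .
\]
Squaring the recurrence \eqref{eq:gamma_recurrence} and regrouping yields
\[
  \gamma_{k+1}^2 = \bigl[L(\gamma_k)\gamma_k\bigr]\,\bigl[\gamma_k L(\gamma_k)\bigr]\,\bigl[\gamma_k^2\bigr],
\]
three blocks of equal length $2|\gamma_k|$ whose would-be square roots $L(\gamma_k),\gamma_k,\gamma_k$ multiply to $L(\gamma_k)\gamma_k^2 = \gamma_{k+1}$. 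The middle block lies in $\Pi(\oa,\ob)$ with square root $\gamma_k$ by \autoref{lem:product_of_squares} applied to the primitive solution $\gamma_k$, and the last block lies in $\Pi(\oa,\ob)$ with square root $\gamma_k$ by the inductive hypothesis. For the first block I would use $L(\gamma_k) = \gamma_{k-1}^3$ (valid for $k \ge 2$) to rewrite
\[
  L(\gamma_k)\gamma_k = \gamma_{k-1}^2\,\bigl[\gamma_{k-1}L(\gamma_{k-1})\bigr]\,\gamma_{k-1}^2,
\]
so it too is a product of minimal squares with square root $\gamma_{k-1}^3 = L(\gamma_k)$, now invoking \autoref{lem:product_of_squares} only on the primitive word $\gamma_{k-1}$; this detour is essential, since $L(\gamma_k)=\gamma_{k-1}^3$ is itself non-primitive and the lemma cannot be applied to it directly. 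For $k=1$ the first block is $L(S)\,S = L(S)\,L(L(S))$, handled by applying \autoref{lem:product_of_squares} to the primitive solution $L(S)$. Concatenating the three square roots gives $\sqrt{\gamma_{k+1}^2} = \gamma_{k+1}$, so $\gamma_{k+1}$ is a solution to \eqref{eq:square}.

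Two bookkeeping points then close the structural part of the hypothesis. Primitivity of $\gamma_{k+1}$ follows from a Fine--Wilf argument: a nontrivial period of length at most $\tfrac{3}{2}|\gamma_k|$ would impose on the suffix $\gamma_k^2$ a period strictly shorter than $|\gamma_k|$, contradicting the primitivity of $\gamma_k$, while a period equal to $|\gamma_k|$ would force $\gamma_{k+1} = \gamma_k^3 \neq L(\gamma_k)\gamma_k^2$. Also $\gamma_{k+1}$ clearly inherits the suffix $S_6$ from its final factor $\gamma_k$.

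The main obstacle is the language membership: I must still verify $\gamma_{k+1}^2 \in \Lang(\oa,\ob)$ and $\gamma_{k+1}^3 = L(\gamma_{k+2}) \in \Lang(\oa,\ob)$, which is what upgrades ``product of minimal squares'' to genuine membership in $\Pi(\oa,\ob)$ and supplies the hypotheses for the next step. This cannot be read off from $\gamma_k^2 \in \Lang(\oa,\ob)$ alone, since $\gamma_k^3$ never occurs inside $\gamma_{k+1}^2$ (consecutive copies of $\gamma_k$ are always separated by a copy of $L(\gamma_k)\neq\gamma_k$), so the power $\gamma_k^3$ must be tracked explicitly through the induction. I expect to dispatch this using \autoref{lem:exchange_squares}: each block above is already known to lie in $\Lang(\oa,\ob)$, and the only delicate issue is the block junctions, where $\gamma_{k+1}^2$ differs from admissible products of minimal squares precisely by the adjacent, distinct letters at the front of the $L$-twisted copies, which do not begin a minimal square. \autoref{lem:exchange_squares} guarantees that swapping exactly such letters preserves membership in $\Lang(\oa,\ob)$ and leaves the minimal-square factorization intact, and carrying out this check carefully against the $S_5/S_6$ block structure of $(S_5+S_6)^\omega$ is the part requiring the most care.
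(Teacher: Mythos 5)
Your treatment of the word-equation part is sound and is essentially the paper's own argument: the paper uses exactly your three-block decomposition $\gamma_{k+1}^2 = \bigl[L(\gamma_k)\gamma_k\bigr]\bigl[\gamma_k L(\gamma_k)\bigr]\bigl[\gamma_k^2\bigr]$ together with \autoref{lem:product_of_squares}, plus the identity $L(\gamma_{k+1}) = \gamma_k^3$ to handle the $L$-twisted words. Your detour for the first block, rewriting $L(\gamma_k)\gamma_k = \gamma_{k-1}^2\bigl[\gamma_{k-1}L(\gamma_{k-1})\bigr]\gamma_{k-1}^2$ so that \autoref{lem:product_of_squares} is only ever invoked on primitive words, is a legitimate (indeed more careful) way to meet that lemma's hypotheses, since $L(\gamma_k) = \gamma_{k-1}^3$ is non-primitive for $k \geq 2$. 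Your Fine--Wilf argument for primitivity also works and replaces the paper's \autoref{lem:gamma_primitive}.

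The genuine gap is the part you defer: proving $\gamma_{k+1}^2, \gamma_{k+1}^3 \in \Lang(\oa,\ob)$. The tool you name for this, \autoref{lem:exchange_squares}, cannot do the job. That lemma takes $uv \in \Lang(\oa,\ob)$ and $L(v) \in \Lang(\oa,\ob)$ as \emph{hypotheses} and concludes only that the minimal-square prefix factorizations of $uv$ and $uL(v)$ match up with equal square roots; it never certifies that any word belongs to $\Lang(\oa,\ob)$. Nor can membership be pieced together from the blocks: $\Lang(\oa,\ob)$ is factorial but not closed under concatenation (each of your three blocks lying in $\Lang(\oa,\ob)$ says nothing about their product), and it is not preserved by $L$ in general (the paper points out that for $S = S_6$ one has $LS \notin \Lang(\oa,\ob)$ -- this is precisely why the seed must satisfy $|S| > |S_6|$). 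What is actually needed, and what the paper supplies as the separate \autoref{lem:gamma_optimal_squareful}, is an explicit parse of $\gamma_{k+1}$ and $L(\gamma_{k+1}) = \gamma_k^3$ into the blocks $S_5$ and $S_6$: writing $\gamma_k = vszs$ with $s = S_6$ and $|vs| = |S|$, one shows $sz \in (S_5+S_6)^+$ and $sv \in (S_5+S_6)^+$ (the latter because $svs$ is a suffix of $S^2$ or $L^2$), and that $L(vsz)$ is a suffix of a word in $(S_5+S_6)^+$, whence $\gamma_{k+1} = L(vsz)(svsz)^2s$ and $L(\gamma_{k+1}) = vsz(svsz)^2s$ are factors of words in $(S_5+S_6)^\omega$. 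This omission is not peripheral: ``in $\Lang(\oa,\ob)$'' is half of what the proposition asserts, the definition of $\Pi(\oa,\ob)$ requires membership in $\Lang(\oa,\ob)$, and \autoref{lem:product_of_squares} demands $w^2, L(w) \in \Lang(\oa,\ob)$ -- so without this block-parsing argument your induction cannot even get the hypotheses it needs at the next step.
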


  Recall from \autoref{sec:word_equation_characterization} that the language $\Lang(\oa,\ob)$ consists of all factors
  of the infinite words in the language
  \begin{align*}
    (10^{\oa+1}(10^\oa)^\ob + 10^{\oa+1}(10^\oa)^{\ob+1})^\omega = (S_5 + S_6)^\omega.
  \end{align*}

  Before we can prove \autoref{prp:gamma_solutions}, we need to know that the words $\gamma_k$ are primitive and that
  they are factors of some optimal squareful word with parameters $\oa$ and $\ob$.

  \begin{lemma}\label{lem:gamma_primitive}
    The word $\gamma_k$ is primitive for all $k \geq 1$.
  \end{lemma}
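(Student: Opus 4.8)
The plan is to argue by induction on $k$. The base case $\gamma_1 = S$ is primitive by the choice of the seed solution $S$. For the inductive step write $w = \gamma_k$ and assume $w$ is primitive; I must show that $\gamma_{k+1} = L(w)w^2$ is primitive. First I record an auxiliary invariant to carry through the induction: every $\gamma_k$ begins with two distinct letters. This holds for $\gamma_1 = S$ (a solution to \eqref{eq:square} begins with $01$ or $10$), and if $w$ begins with $xy$ with $x \neq y$, then $L(w)$ begins with $yx$, so $\gamma_{k+1} = L(w)w^2$ again begins with two distinct letters. In particular $L(w) \neq w$, which is needed below.

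Suppose for contradiction that $\gamma_{k+1}$ is not primitive, say $\gamma_{k+1} = z^m$ with $z$ primitive and $m \geq 2$; then $|z| = 3|w|/m$ is an integer. The key observation is that the suffix $w^2$ of $\gamma_{k+1}$ simultaneously carries two periods: it has period $|w|$ (being a square of $w$) and it inherits the period $|z|$ from $\gamma_{k+1} = z^m$. Since $|w^2| = 2|w|$, I intend to feed these two periods into the Fine--Wilf theorem. When $m \neq 3$ one checks that $\gcd(|z|,|w|) < |w|$ (indeed $\gcd(|z|,|w|) = |w|$ would force $|w| \mid |z|$, i.e. $m \mid 3$, impossible for $m \geq 2$, $m \neq 3$), and the inequality $|z| + |w| - \gcd(|z|,|w|) \leq 2|w|$ required by Fine--Wilf holds in every such case (the only tight instance is $m = 2$, where $|z| = \tfrac32|w|$ and $\gcd(|z|,|w|) = \tfrac12|w|$). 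Fine--Wilf then yields that $w^2$, and hence its prefix $w$, has the period $\gcd(|z|,|w|)$, a proper divisor of $|w|$; this makes $w$ a proper power, contradicting the primitivity of $w$.

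The remaining case $m = 3$, where $|z| = |w|$ and Fine--Wilf gives no new information, is handled directly: the prefix of $\gamma_{k+1} = z^3$ of length $|w|$ equals $z$, which is $L(w)$, while its suffix of length $|w|$ is the last block $w$, forcing $z = w$ and hence $L(w) = w$ --- impossible by the invariant above. This completes the induction. The main obstacle is purely arithmetic: verifying that the two periods $|z|$ and $|w|$ of the suffix $w^2$ satisfy the Fine--Wilf hypothesis and that their gcd properly divides $|w|$ in every case $m \neq 3$. Isolating $m = 3$ as the single exceptional value and dispatching it via the first-two-letters invariant is what makes the argument close cleanly.
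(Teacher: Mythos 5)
Your proof is correct, and it shares the paper's overall skeleton: induction on $k$, assuming $\gamma_{k+1} = L(\gamma_k)\gamma_k^2 = z^m$ for a contradiction, and dispatching the exceptional case $m = 3$ exactly as the paper does, via $L(\gamma_k) \neq \gamma_k$. (Your explicit first-two-letters invariant is actually a point where you are more careful than the paper, which asserts $\gamma_k \neq L(\gamma_k)$ without comment; just note that your justification for the base case should appeal to the seed $S$ having $S_6$ as a suffix, since trivial solutions to the word equation such as $0^n$ do begin with two equal letters.) Where you genuinely diverge is the main engine. You treat all $m \neq 3$ uniformly by applying the Fine--Wilf theorem to the suffix $w^2$ of $z^m$, which carries the two periods $|w|$ and $|z| = 3|w|/m$; your verification of the length hypothesis $2|w| \geq |w| + |z| - \gcd(|w|,|z|)$ (tight exactly at $m=2$) and of $\gcd(|z|,|w|) < |w|$ is accurate, and the conclusion that $w$ is a proper power follows since the gcd divides $|w|$. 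The paper instead splits the remaining cases in two: for $n = 2$ a short alignment argument forces the half-length suffix of $\gamma_k$ to equal its half-length prefix, and for $n > 3$ (where $|z| < |\gamma_k|$) a conjugacy argument shows $z = uv$ with both $uv$ and $vu$ suffixes of $\gamma_k$, whence $uv = vu$ and $\gamma_k \in z^+$ by primitivity of $z$. Your route buys uniformity and transparent arithmetic at the cost of invoking Fine--Wilf; the paper's route is more elementary and self-contained, resting only on the standard fact that commuting words are powers of a common word. Both are complete proofs.
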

  \begin{proof}
    We proceed by induction. By definition $\gamma_1$ is primitive. Let $k \geq 1$, and suppose for a contradiction that
    $\gamma_{k+1}$ is not primitive; that is, $\gamma_{k+1} = L(\gamma_k) \gamma_k^2 = z^n$ for some primitive word $z$
    and $n > 1$. If $n = 2$, then obviously $|\gamma_k|$ must be even, and the suffix of $\gamma_k$ of length
    $|\gamma_k|/2$ must be a prefix of $\gamma_k$. This contradicts the primitivity of $\gamma_k$. The case
    $n = 3$ would clearly imply that $\gamma_k = L(\gamma_k)$, which is not possible. Hence $n > 3$, and
    further $|z| < |\gamma_k|$. As $\gamma_k^2$ is a suffix of some word in $z^+$, it follows that $z = uv$
    where $vu$ is a suffix of $\gamma_k$. On the other hand, $z$ is a suffix of $\gamma_k$, so $uv = vu$. Since
    $z$ is primitive, the only option is that $u$ is empty. Therefore $\gamma_k \in z^+$; a contradiction with the
    primitivity of $\gamma_k$.
  \end{proof}

  \begin{lemma}\label{lem:gamma_optimal_squareful}
    We have that $\gamma_k, L(\gamma_k) \in \Lang(\oa,\ob)$ for all $k \geq 1$.
  \end{lemma}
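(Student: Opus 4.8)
The plan is to prove the two memberships $\gamma_k \in \Lang(\oa,\ob)$ and $L(\gamma_k) \in \Lang(\oa,\ob)$ simultaneously by induction on $k$, carrying two structural invariants: that $\gamma_k$ ends with $S_6$, and that $\gamma_k$ begins with one of the four words $0(10^\oa)^{\ob+1}$, $S_5$, $S_6$, or $0(10^\oa)^{\ob+2}$. The driving observation is the identity $L(\gamma_{k+1}) = \gamma_k^3$: since $L$ only exchanges the first two letters and $\gamma_{k+1} = L(\gamma_k)\gamma_k^2$ already begins with the first two letters of $\gamma_k$ exchanged, applying $L$ once more restores them, so $L(\gamma_{k+1}) = \gamma_k\gamma_k^2 = \gamma_k^3$. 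Hence both words that must be controlled for index $k+1$, namely $\gamma_{k+1} = L(\gamma_k)\gamma_k^2$ and $L(\gamma_{k+1}) = \gamma_k^3$, are concatenations of the words $\gamma_k$ and $L(\gamma_k)$, each lying in $\Lang(\oa,\ob)$ by the induction hypothesis and each ending with $S_6$. Because of this common suffix, \emph{every} junction created in these two concatenations is of the form ``$S_6$ immediately followed by a prefix of $\gamma_k$'', which is the only splice I will need to analyse.

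For the base case $k=1$ I would show that both $S$ and $L(S)$ lie in $\Lang(\alpha)$, where $\alpha = [0;\oa+1,\ob+1,\ldots]$ is the slope for which $S$ is a solution; since $\Lang(\alpha) \subseteq \Lang(\oa,\ob)$, this suffices. Membership $S \in \Lang(\alpha)$ is immediate. For $L(S)$, \autoref{thm:complete_characterization} gives $S \in \rsst{\alpha} \cup L(\rst{\alpha})$, so one of $S$, $L(S)$ is a reversed (semi)standard word $\mirror{s}$; by \autoref{prp:endpoint_alpha_characterization} the two factors of length $|S|$ whose intervals have $1-\alpha$ as an endpoint are exactly $\mirror{s}$ and $L(\mirror{s})$, and both are factors of slope $\alpha$. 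Thus $L(S) \in \Lang(\alpha)$ in every case. The invariants hold for $\gamma_1 = S$ because $S$ ends with $S_6$ (as $|S| > |S_6|$) and begins with $0(10^\oa)^{\ob+1}$ or $S_6$, as recorded in the proof of \autoref{lem:product_of_squares}.

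In the inductive step I would first check that the admissible prefixes are closed under the letter-exchange that sends the prefix of $\gamma_k$ to that of $\gamma_{k+1} = L(\gamma_k)\cdots$: exchanging the first two letters sends $0(10^\oa)^{\ob+1} \leftrightarrow S_5$ and $S_6 \leftrightarrow 0(10^\oa)^{\ob+2}$, so the prefix invariant persists, while the suffix invariant is automatic since both new words end in $\gamma_k$ and hence in $S_6$. It then remains to verify that prepending $S_6$ to each of the four prefixes yields a legitimate factor of $(S_5+S_6)^\omega$. Using $10^{\oa+1}(10^\oa)^{\ob+1}\cdot 0 = 10^{\oa+1}(10^\oa)^{\ob}\cdot 10^{\oa+1} = S_5\cdot 10^{\oa+1}$, the concatenation $S_6\cdot 0(10^\oa)^{\ob+1}$ regroups as $S_5 S_6\cdots$ and $S_6\cdot 0(10^\oa)^{\ob+2}$ as $S_5\cdot 10^{\oa+1}(10^\oa)^{\ob+2}\cdots$, while $S_6\cdot S_5$ and $S_6\cdot S_6$ are plainly block boundaries. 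In each case the splice re-aligns the $\{S_5,S_6\}$-block decomposition, and the continuation beyond the prefix is governed by the tail of $\gamma_k$, which parses correctly by the induction hypothesis; so each concatenation parses as a product of $S_5$ and $S_6$ (with partial blocks only at its two ends) and therefore lies in $\Lang(\oa,\ob)$.

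I expect the junction analysis to be the main obstacle, and in particular the case $0(10^\oa)^{\ob+2}$, where I must argue that prepending $S_6$ does not push the maximal run of the pattern $10^\oa$ beyond the allowed length $\ob+1$; making this rigorous relies on the fact that membership in $\Lang(\oa,\ob)$ is decided by the $\{S_5,S_6\}$-parse at the splice points, which is precisely the bookkeeping that \autoref{lem:exchange_squares} was built to control, and I would carry it out by the same kind of case-by-case inspection used there.
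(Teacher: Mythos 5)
Your proposal has the same skeleton as the paper's proof (simultaneous induction on $\gamma_k$ and $L(\gamma_k)$, the identity $L(\gamma_{k+1}) = \gamma_k^3$, and gluing at junctions anchored by the common suffix $S_6$), but the inner mechanism is genuinely different. The paper's base case establishes the stronger facts $S^2, L^2 \in \Lang(\oa,\ob)$, writes $\gamma_k = vszs$ with $s = S_6$, and uses the induction hypothesis to extract \emph{exact} tilings $sv, sz \in (S_5+S_6)^+$ of the rotated pieces; this works because $s$ begins with $10^{\oa+1}$, and in a word of $(S_5+S_6)^\omega$ an occurrence of $10^{\oa+1}$ can only sit at a block start, so the factor between two occurrences of $s$ is exactly tiled by blocks. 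With those tilings in hand, $\gamma_{k+1} = L(vsz)(svsz)^2 s$ and $L(\gamma_{k+1}) = vsz(svsz)^2 s$ are assembled purely algebraically (suffix of a tiling, then exact tilings, then the block $s$), with no prefix invariant and no splice case analysis. Your route instead carries the four-prefix invariant and checks splices locally; it buys a weaker base case (you only need $S, L \in \Lang(\oa,\ob)$, not their squares) at the cost of more delicate bookkeeping at the junctions.

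Two places in that bookkeeping need repair before the plan closes. First, the principle you lean on --- that membership in $\Lang(\oa,\ob)$ is decided by the $\{S_5,S_6\}$-parse at the splice points --- is not what \autoref{lem:exchange_squares} supplies: that lemma is about products of \emph{minimal squares} and their square roots, not about $\{S_5,S_6\}$-parses. The fact you actually need is the synchronization property just described (occurrences of $10^{\oa+1}$, equivalently of $0^{\oa+1}$, occur only at block starts, so parses are forced and local verification suffices); it is exactly what the paper invokes with ``since $s$ begins with $10^{\oa+1}$,'' and it must be stated and used explicitly. Second, your regroupings tacitly assume that the admissible prefix of $\gamma_k$ ends at a block boundary of $\gamma_k$'s own parse, which is false in general: $S_6 \cdot 0(10^\oa)^{\ob+1}$ equals $S_5S_6$ as a word, but if the letter of $\gamma_k$ following the prefix $0(10^\oa)^{\ob+1}$ is $0$, the forced parse of the spliced word begins $S_5 \cdot S_5 \cdot 10^{\oa+1}\cdots$, not $S_5 \cdot S_6 \cdots$. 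The repair is a case split on the letter after the prefix (in $\Lang(\oa,\ob)$ the prefix $0(10^\oa)^{\ob+1}$ can only be followed by $0$ or by $10^{\oa+1}$, and $0(10^\oa)^{\ob+2}$ only by $0$); in each sub-case the forced marks near the junction produce blocks and then hand over to the tail of $\gamma_k$'s own parse, so your flagged case $0(10^\oa)^{\ob+2}$ also resolves. With these two points made explicit your argument is correct; as written, they are asserted rather than proved, and the appeal to \autoref{lem:exchange_squares} would not fill them.
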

  \begin{proof}
    For a suitable slope $\alpha = [0;\oa+1,\ob+1, \ldots]$, either of the words $S$ and $L$ is a reversed standard
    word of slope $\alpha$. Thus by \autoref{thm:complete_characterization} both $S^2$ and $L^2$ are in
    $\Lang(\alpha)$, so $S^2, L^2 \in \Lang(\oa,\ob)$.

    We clearly have that $\gamma_1 \in \Lang(\oa,\ob)$. Note that by the assumption $|S| > |S_6|$ both of the words $S$
    and $L$ have the word $s = S_6 = 10^{\oa+1}(10^\oa)^{\ob+1}$ as a proper suffix. Write $S = us$. Since $s$ begins
    with $10^{\oa+1}$ and $S^2$ has $sus$ as a suffix, it follows that $us \in (S_5+S_6)^+$. Using the fact that
    $L \in \Lang(\oa,\ob)$, we see that $\gamma_2 = LSS = L(u)s(us)^2 \in \Lang(\oa,\ob)$. Clearly
    $L(\gamma_2) = S^3 = (us)^3 \in \Lang(\oa,\ob)$. Proceeding by induction we may assume that $k \geq 2$ and
    $\gamma_k, L(\gamma_k) \in \Lang(\oa,\ob)$. Since $\gamma_k$ has either $S$ or $L$ as a prefix, it can be written
    that $\gamma_k = vszs$ with $|vs| = |S|$. It follows that $sz \in (S_5+S_6)^+$. Since $svs$ is a suffix of either
    $S^2$ or $L^2$, we have that $sv \in (S_5+S_6)^+$. Therefore $svsz \in (S_5+S_6)^+$. As
    $L(\gamma_k) = L(vsz)s \in \Lang(\oa,\ob)$, we have that $L(vsz)$ is a suffix of some word in $(S_5+S_6)^+$.
    Overall, the word $\gamma_{k+1} = L(vsz)(svsz)^2 s$ is in $\Lang(\oa,\ob)$. Clearly then must the word
    $L(\gamma_{k+1} = (vszs)^3 = vsz(svsz)^2 s$ also be in $\Lang(\oa,\ob)$.
  \end{proof}

  Note that without the assumption $|S| > |S_6|$ the conclusion of the above lemma fails to hold. If
  $S = S_6 = 10^{\oa+1}(10^\oa)^{\ob+1}$, then $L = 0(10^\oa)^{\ob+2}$ and
  $LS = 0(10^\oa)^{\ob+2}10^{\oa+1}(10^\oa)^{\ob+1}$. Therefore $LS~\notin~\Lang(\oa,\ob)$, and consequently
  $\gamma_2 = LS^2 \notin \Lang(\oa,\ob)$.

  \begin{proof}[Proof of \autoref{prp:gamma_solutions}]
    We proceed by induction. By \autoref{lem:gamma_primitive} the word $\gamma_k$ is primitive for all $k \geq 1$.
    \autoref{lem:gamma_optimal_squareful} tells that both of the words $\gamma_k$ and $L(\gamma_k)$ are in
    $\Lang(\oa,\ob)$ for all $k \geq 1$. By definition both $\gamma_1$ and $L(\gamma_1)$ are solutions to
    \eqref{eq:square}. We may thus assume that $k \geq 1$ and both $\gamma_k$ and $L(\gamma_k)$ are solutions to
    \eqref{eq:square}. It follows from \autoref{lem:product_of_squares} that
    \begin{align*}
      \gamma_k L(\gamma_k) \in \Pi(\oa,\ob) \ \text{ and } \ \sqrt{\gamma_k L(\gamma_k)} = \gamma_k.
    \end{align*}
    Since $L(\gamma_k)$ is a solution to \eqref{eq:square}, \autoref{lem:product_of_squares} also implies that
    \begin{align*}
      L(\gamma_k) \gamma_k \in \Pi(\oa,\ob) \ \text{ and } \ \sqrt{L(\gamma_k) \gamma_k} = L(\gamma_k).
    \end{align*}
    Because
    \begin{align*}
      \gamma_{k+1}^2 = L(\gamma_k) \gamma_k \cdot \gamma_k L(\gamma_k) \cdot \gamma_k^2,
    \end{align*}
    we obtain that
    \begin{align*}
      \gamma_{k+1}^2 \in \Pi(\oa,\ob) \ \text{ and } \
      \sqrt{\gamma_{k+1}^2} = \sqrt{L(\gamma_k) \gamma_k} \sqrt{\gamma_k L(\gamma_k)} \sqrt{\gamma_k^2} = L(\gamma_k)\gamma_k\gamma_k = \gamma_{k+1}.
    \end{align*}
    This proves that $\gamma_{k+1}$ is a solution to \eqref{eq:square}. Consider next the word
    $L(\gamma_{k+1}) = \gamma_k^3$. Because $(L(\gamma_{k+1}))^2 = (\gamma_k^2)^3$, it is evident that
    \begin{align*}
      (L(\gamma_{k+1}))^2 \in \Pi(\oa,\ob) \ \text{ and } \
      \sqrt{(L(\gamma_{k+1}))^2} = \gamma_k^3 = L(\gamma_{k+1}).
    \end{align*}
    Therefore also $L(\gamma_{k+1})$ is a solution to \eqref{eq:square}. The conclusion follows.
  \end{proof}

  As we remarked earlier, we have now proved that $\Gamma$ is a fixed point of the square root map. Next we show that
  the word $\Gamma$ is aperiodic, linearly recurrent, and not Sturmian.

  \begin{lemma}\label{lem:gamma_not_sturmian}
    The word $\gamma_2^2$ is not a factor of any Sturmian word.
  \end{lemma}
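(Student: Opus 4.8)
The plan is to exploit the fact, recorded in \autoref{ssec:sturmian_words}, that every factor of a Sturmian word is balanced. Hence it suffices to show that $\gamma_2^2$ is \emph{not} balanced, that is, to exhibit two factors of $\gamma_2^2$ of equal length whose numbers of occurrences of the letter $1$ differ by $2$. Throughout I write $|w|_1$ for the number of $1$'s in $w$. First I would fix the shape of $\gamma_2$. Since $S$ is a primitive solution to \eqref{eq:square} with $|S| > |S_6|$, it lies in $\rsst{\alpha} \cup L(\rst{\alpha})$ (for a suitable slope) and hence begins with two distinct letters; write $S = xyS'$ with $\{x,y\} = \{0,1\}$ and $|S'| = |S|-2$. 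The operation $L$ only swaps the first two letters, so $L(S) = yxS'$, and $\gamma_2 = L(\gamma_1)\gamma_1^2 = L(S)S^2$. Therefore
\[
  \gamma_2^2 = L(S)\,S\,S\,L(S)\,S\,S,
\]
a concatenation of six blocks in the order $L(S),S,S,L(S),S,S$; only the prefix $L(S)\,S\,S\,L(S)$ will be needed.

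Next I would locate two windows of length $n = |S|$ straddling block boundaries. Where a block $L(S) = yxS'$ is immediately followed by a block $S = xyS'$ (for instance between the first and second blocks), the factor of length $n$ beginning at the second letter of $L(S)$ is $A = x\,S'\,x$; thus $|A|_1 = |S'|_1$ if $x = 0$ and $|A|_1 = |S'|_1 + 2$ if $x = 1$. Where a block $S = xyS'$ is immediately followed by a block $L(S) = yxS'$ (namely between the third and fourth blocks), the factor of length $n$ beginning at the second letter of $S$ is $B = y\,S'\,y$; thus $|B|_1 = |S'|_1$ if $y = 0$ and $|B|_1 = |S'|_1 + 2$ if $y = 1$. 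In both windows the middle $n-2$ letters are exactly $S'$ precisely because $L$ alters only the first two letters of a block, and both occurrences are forced by the block order $L(S)\,S\,S\,L(S)$.

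Since $x \neq y$ and $\{x,y\} = \{0,1\}$, exactly one of $x,y$ equals $1$, so $\{|A|_1,|B|_1\} = \{|S'|_1,\,|S'|_1+2\}$. Thus $A$ and $B$ are factors of $\gamma_2^2$ of the same length $n$ with $\big||A|_1 - |B|_1\big| = 2$, contradicting balancedness; hence $\gamma_2^2$ is a factor of no Sturmian word. The only genuine obstacle is spotting the right pair of windows: once one observes that exchanging the two distinct leading letters of a block (which is exactly what $L$ does) shifts the $1$-count of a length-$n$ window across the following boundary by $2$, both the occurrences and the counts are immediate. The minor points to check carefully are that $S$ indeed begins with two distinct letters and that the windows $A$ and $B$ occur as factors, both of which follow directly from the explicit block order $L(S)\,S\,S\,L(S)$ displayed above.
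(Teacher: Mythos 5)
Your proof is correct and takes essentially the same approach as the paper: the paper writes $S = xyw$ and $L = yxw$ for distinct letters $x,y$, expands $\gamma_2^2 = LS^2LS^2$, and exhibits the factors $xwx$ and $ywy$ — exactly your windows $A$ and $B$ — concluding via balancedness. The only difference is cosmetic: you additionally spell out why $S$ begins with two distinct letters, which the paper leaves implicit.
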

  \begin{proof}
    By definition $\gamma_2 = LS^2$. Write $S = xyw$ and $L = yxw$ for some word $w$ and distinct letters $x$ and
    $y$. Now $\gamma_2^2 = xyxw(xyw)^2yxw(xyw)^2$, so the word $\gamma_2^2$ has factors $xwx$ and $ywy$. Hence
    $\gamma_2^2$ is not balanced, and it cannot be a factor of any Sturmian word.
  \end{proof}

  \begin{lemma}\label{lem:gamma_linearly_recurrent}
    The word $\Gamma$ is aperiodic and linearly recurrent.
  \end{lemma}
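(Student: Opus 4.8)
The plan is to uncover an explicit, level-independent self-similar structure of $\Gamma$ governed by a single primitive substitution (which gives linear recurrence), and to rule out periodicity by a Fine--Wilf argument based on the unbounded primitive squares $\gamma_k^2$ occurring in $\Gamma$. The first step is to record two length-independent identities. Since $L$ only exchanges the first two letters and $|\gamma_k| \ge |S| > |S_6| \ge 2$, applying $L$ to $\gamma_{k+1} = L(\gamma_k)\gamma_k^2$ restores the first block, so that $L(\gamma_{k+1}) = \gamma_k^3$ (as already used in the proof of \autoref{prp:gamma_solutions}). Feeding this into \eqref{eq:gamma_recurrence} gives, for every $k \ge 1$,
\[
  \gamma_{k+2} = \gamma_k^3\,(L(\gamma_k)\gamma_k^2)^2 = A^3 B A^2 B A^2, \qquad
  L(\gamma_{k+2}) = \gamma_{k+1}^3 = (BA^2)^3,
\]
where $A = \gamma_k$ and $B = L(\gamma_k)$ have equal length. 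Taking the base level $A = \gamma_1 = S$, $B = L(\gamma_1) = L$, the factorization of $\Gamma$ into $\gamma_1$- and $L(\gamma_1)$-blocks is thus the image under the coding $\phi\colon A \mapsto S,\ B \mapsto L$ of a fixed point of the substitution $\sigma\colon A \mapsto AAABAABAA,\ B \mapsto BAABAABAA$, the two words $\Gamma_1,\Gamma_2$ of \eqref{eq:gamma} corresponding to the two fixed points of $\sigma$. The decisive point is that $\sigma$ is the same at every level, so $\Gamma$ is genuinely substitutive.

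For linear recurrence I would note that $\sigma$ is primitive, since $\sigma(A)$ and $\sigma(B)$ each contain both $A$ and $B$. Fixed points of primitive substitutions are linearly recurrent, and applying the equal-length coding $\phi$ (both blocks have length $|S|$) preserves this. Concretely, a factor $u$ of $\Gamma$ with $g_{k-1} < |u| \le g_k$, where $g_k = |\gamma_k| = 3^{k-1}|S|$, meets at most two consecutive level-$1$ blocks, hence is read inside a window of at most two blocks of the $\{A,B\}$-sequence; as that sequence recurs with gaps bounded by a constant number of blocks, occurrences of $u$ recur within $O(g_k) = O(|u|)$ positions, with a single global constant because $\sigma$ and $\phi$ are fixed. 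Thus $\Gamma$ is linearly recurrent, in particular uniformly recurrent.

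For aperiodicity I would argue by contradiction. By \autoref{lem:gamma_primitive} each $\gamma_k$ is primitive, and $\gamma_k^2$ is a prefix of $\Gamma$ for arbitrarily large $k$ of the appropriate parity, with $g_k \to \infty$. Suppose $\Gamma = uv^\omega$ with period $p = |v|$. Since $\Gamma$ begins with $\gamma_m^2 = \gamma_m\gamma_m$ for arbitrarily large $m$ and each $\gamma_m$ has $\gamma_k^2$ as a prefix, the square $\gamma_k^2$ also occurs at position $g_m$; choosing $m$ with $g_m \ge |u|$ places an occurrence of $\gamma_k^2$ inside the periodic tail. Fix $k$ with $g_k > p$. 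This occurrence has length $2g_k$ and both period $g_k$ (it is a square) and period $p$; as $2g_k \ge g_k + p$, Fine--Wilf gives it the period $d = \gcd(g_k,p)$. Since $d \mid g_k$ and $d$ is a period of the length-$g_k$ word $\gamma_k$, primitivity forces $d = g_k$, whence $g_k \mid p$, contradicting $g_k > p$. Hence $\Gamma$ is aperiodic.

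The main obstacle is the linear-recurrence step: one must recognize that the nine-block expansion pattern is literally level-independent, so that $\Gamma$ is the coding of a single primitive substitutive word (this is exactly where the identity $L(\gamma_{k+1}) = \gamma_k^3$ is needed), and then transfer the standard linear-recurrence bound for primitive substitutions through the equal-length coding with one uniform constant. The aperiodicity step is comparatively routine once the unbounded primitive squares $\gamma_k^2$ are in hand.
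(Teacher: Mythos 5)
Your proof is correct, and it reaches both conclusions by a genuinely different route than the paper. The paper's own proof is elementary and self-contained: using $L(\gamma_{k+1}) = \gamma_k^3$, it observes that at every level $k$ the word $\Gamma$ is a concatenation of blocks $\gamma_k$ and $L(\gamma_k)$ in which consecutive occurrences of $L(\gamma_k)$ are separated by $\gamma_k^2$ or $\gamma_k^5$; hence every factor of length at most $|\gamma_k|$ has return time at most $6|\gamma_k|$, and applying this at level $k+1$ to factors of length in $(|\gamma_k|,|\gamma_{k+1}|]$ gives linear recurrence with the explicit constant $18$. Aperiodicity is then immediate: $\gamma_k$ is followed in $\Lang(\Gamma)$ by both $\gamma_k$ and $L(\gamma_k)$, whose first letters differ, so $\Gamma$ has arbitrarily long right special factors, which an ultimately periodic word cannot have. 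You instead promote the same self-similarity to a single level-independent primitive substitution $\sigma\colon A \mapsto A^3BA^2BA^2$, $B \mapsto (BA^2)^3$, and quote the standard---but external to this paper---theorem that fixed points of primitive substitutions are linearly recurrent, transferring the property through the equal-length block coding; aperiodicity you get from Fine--Wilf applied to the prefix squares $\gamma_k^2$ together with \autoref{lem:gamma_primitive}. Your route buys a stronger structural statement ($\Gamma$ is substitutive, from which linear complexity, unique ergodicity, and so on also follow), at the price of importing substitution machinery; the paper's route needs no outside theorem and produces an explicit recurrence constant. Two details in your write-up need repair, though neither is fatal. First, a factor $u$ with $g_{k-1} < |u| \le g_k$ does not sit inside ``two consecutive level-$1$ blocks''---it can span $3^{k-1}+1$ of them; the correct statement is that $u$ sits inside two consecutive level-$k$ blocks, i.e., inside the coding of a two-letter factor of the fixed point read at the appropriate scale, and it is precisely the level-independence of $\sigma$ that makes the resulting constant uniform in $k$. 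Second, the assertion that $\Gamma_1$ and $\Gamma_2$ are the codings of the two fixed points of $\sigma$ is true but unproven: your induction gives it for $\Gamma_2$ (whose level-$1$ block sequence is the fixed point beginning with $A$), while for $\Gamma_1$ one should either rerun the induction with base blocks $\gamma_2$ and $L(\gamma_2)$ (still an equal-length coding, which is all the argument needs) or note that $L(\Gamma_1) = \lim_k \gamma_{2k-1}^3 = \Gamma_2$, so the two block sequences differ only in their first letter.
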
 
  \begin{proof}
    The recurrence \eqref{eq:gamma_recurrence} and the definition \eqref{eq:gamma} of $\Gamma$ show that for all
    $k \geq 1$ the word $\Gamma$ is a product of the words $\gamma_{k+1} = L(\gamma_k)\gamma_k^2$ and
    $L(\gamma_{k+1}) = \gamma_k^3$ such that between two occurrences of $L(\gamma_{k+1})$ there is always $\gamma_k^2$
    or $\gamma_k^5$. From this it follows that the return time of a factor of $\Gamma$ of length $\gamma_k$ is at most
    the return time of the factor $L(\gamma_k)$, which is at most $6|\gamma_k|$. Let then $w$ be a factor of $\Gamma$
    such that $|\gamma_k| < |w| \leq |\gamma_{k+1}|$. Since $w$ is a factor of some factor of $\Gamma$ of length
    $|\gamma_{k+1}|$, it follows that the return time of $w$ is at most $6|\gamma_{k+1}|$. Now
    $6|\gamma_{k+1}| = 18|\gamma_k| < 18|w|$ proving that $\Gamma$ is linearly recurrent.

    The preceding shows that $\gamma_k$ is followed in $\Lang(\Gamma)$ by both $\gamma_k$ and $L(\gamma_k)$. As the
    first letters of $\gamma_k$ and $L(\gamma_k)$ are distinct, the factor $\gamma_k$ is right special. Thus
    $\Lang(\Gamma)$ contains arbitrarily long right special factors, so $\Gamma$ must be aperiodic.
  \end{proof}

  Since linearly recurrent words have linear factor complexity
  \cite[Theorem~24]{1999:substitution_dynamical_systems_bratteli_diagrams}, it follows from
  \autoref{lem:gamma_linearly_recurrent} that $\Gamma$ has linear factor complexity.

  We observed in the previous proof that the word $\Gamma$ is a product of the words $S$ and $L$ such that between two
  occurrences of $L$ in this product there is always $S^2$ or $S^5$. Since $S$ and $L$ are primitive, any word
  $w \in \Lang(\Gamma)$ which is a product of the words $S$ and $L$ such that $|w| \geq 6|S|$ must synchronize to the
  factorization of $\Gamma$ as a product of the words $S$ and $L$. That is, for any factorization
  $\Gamma = uw\Gamma'$ we must have that $|u|$ is a multiple of $|S|$.

  \begin{theorem}
    The word $\Gamma$ is a non-Sturmian, linearly recurrent optimal squareful word which is a fixed point of the square
    root map.
  \end{theorem}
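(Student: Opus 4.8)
The plan is to assemble the four properties one at a time, each reducing to a result already proved in this section. Linear recurrence, and hence aperiodicity, is exactly \autoref{lem:gamma_linearly_recurrent}, so nothing remains there. For the non-Sturmian property I would first note that $\gamma_2^2 \in \Lang(\Gamma)$: since $\gamma_k^2$ is a prefix of $\gamma_{k+2}$, the word $\gamma_2^2$ occurs as a prefix of every sufficiently long $\gamma_{2k}$ and as a suffix of $\gamma_3$ (hence a factor of $\gamma_{2k+1}$), so $\gamma_2^2$ belongs to $\Lang(\Gamma_1) = \Lang(\Gamma_2) = \Lang(\Gamma)$. By \autoref{lem:gamma_not_sturmian} the word $\gamma_2^2$ is not a factor of any Sturmian word, so $\Gamma$ cannot be Sturmian.

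For the fixed-point property I would first observe that $\Gamma$ is an infinite product of minimal squares. Indeed, by \autoref{prp:gamma_solutions} each $\gamma_k$ is a solution to \eqref{eq:square} in $\Lang(\oa,\ob)$, so $\gamma_k^2 \in \Pi(\oa,\ob)$ factors as a product of minimal squares $X_1^2 \cdots X_n^2$ with $\sqrt{\gamma_k^2} = X_1 \cdots X_n = \gamma_k$. Because the factorization of a word into minimal squares is unique, the arbitrarily long prefixes $\gamma_k^2$ of $\Gamma$ have mutually consistent minimal-square factorizations, so $\Gamma$ itself factors as an infinite product of minimal squares and $\sqrt{\Gamma}$ agrees with $\Gamma$ on its first $|\gamma_k|$ letters for every $k$. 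Letting $k \to \infty$ yields $\sqrt{\Gamma} = \Gamma$.

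It remains to prove that $\Gamma$ is optimal squareful, which I expect to be the only step requiring genuine work. Aperiodicity is already available, and from \autoref{lem:gamma_optimal_squareful} every prefix of $\Gamma$ lies in $\Lang(\oa,\ob)$, so $\Lang(\Gamma) \subseteq \Lang(\oa,\ob)$; in particular the only minimal squares that can occur in $\Gamma$ are among $S_1^2, \ldots, S_6^2$. The key point is squarefulness, that every position of $\Gamma$ begins a minimal square. To see this I would fix a position $i$ and consider the factor $u = \Gamma[i, i + 2|S_6| - 1]$. Since $u \in \Lang(\oa,\ob)$, the remark following the definition of $\Lang(\oa,\ob)$ (a consequence of \autoref{prp:optimal_squareful_characterization}) shows $u$ is a factor of some optimal squareful word $w$ with parameters $\oa$ and $\ob$, say $u = w[j, j + |u| - 1]$. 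As $w$ is squareful, position $j$ of $w$ begins a minimal square $S_m^2$ whose length is at most $2|S_6| = |u|$; transporting this back gives $\Gamma[i, i + |S_m^2| - 1] = S_m^2$, so position $i$ of $\Gamma$ begins a minimal square.

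Finally, $\Gamma$ is aperiodic and squareful with all minimal squares among the six roots \eqref{eq:min_squares}, hence contains at most six minimal squares; by Saari's dichotomy a squareful word with at most five minimal squares is ultimately periodic, so the aperiodic word $\Gamma$ must contain exactly six, i.e.\ it is optimal squareful (equivalently, \autoref{prp:optimal_squareful_characterization} then places $\Gamma$ in $S_1^* S_4^* (S_5 + S_6)^\omega$). The main obstacle is this last step: the two delicate points are the transport of a minimal square from the reference word $w$ back to $\Gamma$, which succeeds precisely because minimal squares have length bounded by $2|S_6|$ independently of position, and the passage from ``at most six'' to ``exactly six'' minimal squares, which is where aperiodicity must be combined with Saari's result.
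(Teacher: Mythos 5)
Your proposal is correct and takes essentially the same route as the paper: it assembles \autoref{lem:gamma_linearly_recurrent} (linear recurrence and aperiodicity), \autoref{lem:gamma_not_sturmian} (non-Sturmian via the unbalanced factor $\gamma_2^2$), \autoref{prp:gamma_solutions} together with the prefix-consistency of minimal-square factorizations (fixed point), and \autoref{lem:gamma_optimal_squareful} (optimal squarefulness), exactly as the paper does. The only difference is one of detail: where the paper treats optimal squarefulness as immediate from \autoref{lem:gamma_optimal_squareful}, you correctly fill in the implicit step by transporting a minimal square from a reference optimal squareful word (using that minimal squares have length at most $2|S_6|$) and then combining squarefulness, aperiodicity, and Saari's dichotomy to get exactly six minimal squares.
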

  \begin{proof}
    The fact that $\Gamma$ is optimal squareful and linearly recurrent follows from Lemmas
    \ref{lem:gamma_optimal_squareful} and \ref{lem:gamma_linearly_recurrent}. The argument outlined at the beginning of
    this section shows that $\Gamma$ is a fixed point of the square root map as by \autoref{prp:gamma_solutions} the
    words $\gamma_k$ which occur as square prefixes in $\Gamma$ are solutions to \eqref{eq:square}. Finally, $\Gamma$
    contains the factor $\gamma_2^2$, so $\Gamma$ is not Sturmian by \autoref{lem:gamma_not_sturmian}.
  \end{proof}

  Denote by $\Omega$ the subshift consisting of the infinite words having language $\Lang(\Gamma)$. As $\Gamma$ is
  linearly recurrent, it is uniformly recurrent, so the subshift $\Omega$ is minimal. The rest of this section is
  devoted to proving the result mentioned in the beginning of this section.

  \begin{theorem}\label{thm:preserved_periodic}
    For all $w \in \Omega$ either $\sqrt{w} \in \Omega$ or $\sqrt{w}$ is (purely) periodic with minimal period
    conjugate to $S$. Moreover, there exists words $u, v \in \Omega$ such that $\sqrt{u} \in \Omega$ and $\sqrt{v}$ is
    periodic.
  \end{theorem}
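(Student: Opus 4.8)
The plan is to reduce the square-root computation on $\Omega$ to a combinatorial operation on the block structure of $\Gamma$, and then to treat separately the words that begin at a block boundary and those that do not. Recall that every $w \in \Omega$ is, up to an initial incomplete block, a concatenation of the two length-$|S|$ blocks $S$ and $L = L(S)$, and that by \autoref{lem:product_of_squares} and its use in the proof of \autoref{prp:gamma_solutions} each of $S^2$, $SL$, $LS$, $L^2$ lies in $\Pi(\oa,\ob)$ with $\sqrt{S^2} = \sqrt{SL} = S$ and $\sqrt{LS} = \sqrt{L^2} = L$; in other words $\sqrt{PQ} = P$ for all blocks $P, Q \in \{S, L\}$. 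Since the factorization into minimal squares is unique and therefore distributes over concatenations of words in $\Pi(\oa,\ob)$, I would first record the key consequence: if $w = P_1 P_2 P_3 \cdots$ begins at a block boundary, then grouping the blocks in consecutive pairs gives $\sqrt{w} = P_1 P_3 P_5 \cdots$, the product of the odd-indexed blocks.

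For the block-aligned words this turns the problem into a statement about a decimation map. Encoding $S \mapsto a$ and $L \mapsto b$, the block sequence of $\Gamma$ is the fixed point $G$ of the constant-length substitution $\tau : a \mapsto baa,\ b \mapsto aaa$ (this is exactly the recurrence $\gamma_{k+1} = L(\gamma_k)\gamma_k^2$, $L(\gamma_{k+1}) = \gamma_k^3$ read at the level of blocks), and $\sqrt{\Gamma} = \Gamma$ becomes $D(G) = G$, where $D$ takes the letters in even positions. A direct check gives the commutation $D \circ \tau = \tau \circ D$, since $\tau(x) = \overline{x}aa$. Using this together with recognizability of $\tau$ to desubstitute an arbitrary block sequence $g \in \Omega_G$, followed by a short analysis of the finitely many alignment phases, I would show $D(g) \in \Omega_G$ for every $g$. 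Hence a block-aligned $w$ always satisfies $\sqrt{w} \in \Omega$; in particular $\sqrt{w}$ is aperiodic here, consistent with the fact that $D(g) = a^\omega$ or $b^\omega$ is impossible because the gap $3$ between consecutive $b$'s in $G$ forbids all $b$'s from sharing one parity.

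The genuinely hard case, and the main obstacle, is a word $w = t\, P_1 P_2 \cdots$ that begins with a proper nonempty suffix $t$ of a block: here the pairing above is broken, $\sqrt{w}$ is no longer a decimation of the block sequence, and one must follow the greedy minimal-square factorization across the initial seam. The plan is to track the offset of the factorization relative to the block grid as it advances. Writing $S = xyw'$ and $L = yxw'$ and using \autoref{lem:exchange_squares} to control how a swapped pair of letters at a seam redistributes minimal squares, I expect to prove a dichotomy: either the offset returns to $0$ after finitely many blocks, so that from some point on the computation is block-aligned and $\sqrt{w} \in \Omega$ by the previous paragraph; or the seam forces the factorization to emit one fixed cyclic conjugate $Q$ of $S$ at every step, giving $\sqrt{w} = Q^\omega$, purely periodic with minimal period $Q$ conjugate to $S$ (a single block, matching the length $|S|$ in the statement). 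Isolating precisely which phases $t$ fall into the periodic regime, and verifying that the emitted word is literally a repeated conjugate of $S$, is the technical heart of the argument.

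Finally, for the existence clause I would take $u = \Gamma$, so that $\sqrt{u} = \Gamma \in \Omega$. For $v$ I would exhibit a word in the periodic regime explicitly: choosing the within-block phase $t$ identified above and a sequence of shifts $T^{r_n}(\Gamma)$ carrying that fixed phase whose block content makes the locking mechanism operate at every position, I would pass to a limit point $v \in \Omega$ and read off $\sqrt{v} = Q^\omega$ from the finite-window factorizations. Since $v$ inherits aperiodicity from the minimal aperiodic subshift $\Omega$ while $\sqrt{v}$ is periodic, this simultaneously furnishes the promised example and the advertised curiosity that the square-root map can send an aperiodic word to a periodic one.
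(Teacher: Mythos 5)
Your first two paragraphs are essentially correct and recover \autoref{lem:product_sl_preserved}: the block-level identities $\sqrt{SS}=\sqrt{SL}=S$, $\sqrt{LS}=\sqrt{LL}=L$ hold, the decimation formula for block-aligned words is valid by uniqueness of the minimal-square factorization, and your commutation $D\circ\tau=\tau\circ D$ is true (because $\tau(x)=\overline{x}aa$ has odd length and is constant off position $0$). The paper reaches the same conclusion more cheaply: since $|\overline{\gamma}_k|=3^{k-1}$ is odd, every block factor of $\overline{\Gamma}$ occurs at both an even and an odd position, so its decimation is again a factor of $\overline{\Gamma}$; no recognizability is needed. The real problem is that your third and fourth paragraphs state the dichotomy and then defer exactly the part that constitutes the theorem. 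Concretely, three things are missing. First, even the realignment half does not close the way you claim: if $T^m(w)$ is block-aligned, then $\sqrt{w}=\sqrt{w[0,m-1]}\,\sqrt{T^m(w)}$, and membership of the tail in $\Omega$ says nothing about the factors straddling the seam, since a subshift is not closed under prepending finite words. The paper needs the Backtracking Lemma (\autoref{lem:backtracking}) precisely here, to show that $\sqrt{w}$ is a \emph{suffix} of $\sqrt{zw'}$ for a block-aligned $zw'\in\Omega$, which is what forces $\Lang(\sqrt{w})\subseteq\Lang(\Gamma)$. Second, you give no mechanism making the dichotomy exhaustive, i.e.\ ruling out an offset that never returns to $0$ yet never locks into periodic emission. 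The paper's mechanism is the heart of its proof: a minimal-square boundary falling inside a block can only land in the four-element set $\B_S$ determined by the suffix $S_6$; the notion of (nicely) repetitive positions; \autoref{lem:periodic_image} (locking at a nicely repetitive position, via \autoref{lem:exchange_squares}); \autoref{lem:can_not_continue}; and a final parity argument (exactly two of $|S_1|,|S_3|,|S_4|$ are odd) showing that if the successive offsets $g_1,\dots,g_4$ never realign, then $g_2$ must be nicely repetitive. None of this is routine, and none of it is sketched.

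Third, the ``Moreover'' clause is unproved. Your construction of $v$ presupposes that some phase $t$ actually lies in the periodic regime, but a priori it could happen that for a given seed $S$ every phase realigns, in which case no $v\in\Omega$ with periodic square root would exist. The paper excludes this with \autoref{lem:one_nicely_repetitive}, which exhibits for every seed $S=\mirror{s}_{k,i}$ the explicit nicely repetitive position $|\mirror{s}_{k,i-1}|\in\B_S$ via the identity $S^3=\mirror{s}_{k,i-1}\cdot\mirror{s}_{k-1}L(\mirror{s}_{k-1})\cdot\mirror{s}_{k,i-1}^{\,2}\mirror{s}_{k-1}$, together with a primitivity argument showing this position is \emph{nicely} repetitive. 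Without an analogue of that lemma, the existence of $v$ --- and hence the advertised aperiodic-to-periodic phenomenon --- does not follow. In summary, your skeleton (decimation for block-aligned words; realign-or-lock for the rest) matches the paper's, but the three items above are the actual content of the proof and are absent.
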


  This result is very surprising since it is contrary to the plausible hypothesis that an aperiodic word must map to an
  aperiodic word under the square root map.

  It is not difficult to prove \autoref{thm:preserved_periodic} for words in $\Omega$ which are products of the words
  $S$ and $L$. We prove this special case next in \autoref{lem:product_sl_preserved}. However, difficulties arise since
  a word in $\Omega$ can start in an arbitrary position of an infinite product of $S$ and $L$. There are certain
  well-behaved positions in $S$ and $L$ which are easier to handle. \autoref{thm:preserved_periodic} is proved for
  these special positions in \autoref{lem:periodic_image}. The rest of the effort is in demonstrating that all the
  other cases can be reduced to these well-behaved cases. We begin by proving the easier cases, and we conclude with
  the reductions.

  \begin{lemma}\label{lem:product_sl_preserved}
    If a word $w \in \Omega$ can be written as a product of the words $S$ and $L$, then $\sqrt{w} \in \Omega$.
  \end{lemma}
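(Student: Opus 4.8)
The plan is to show first that $\sqrt{w}$ is again a product of the two blocks $S$ and $L$, and then that this particular product lies in $\Omega$. Write $w = B_1 B_2 B_3 \cdots$ with each $B_i \in \{S,L\}$ and group the blocks into consecutive pairs. Each pair $B_{2i-1}B_{2i}$ is one of $S^2$, $L^2$, $SL = S\,L(S)$, or $LS = L\,L(L)$. Recall from the setup that $S$ (hence also $L = L(S)$, which differs from $S$ only in its first two letters) has $S_6$ as a proper suffix, and that both $S$ and $L$ are primitive solutions to \eqref{eq:square} with $S^2,L^2,S,L \in \Lang(\oa,\ob)$ by \autoref{lem:gamma_optimal_squareful}. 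Consequently all four words lie in $\Pi(\oa,\ob)$: the two pure squares because $S$ and $L$ are solutions, and $SL$, $LS$ by two applications of \autoref{lem:product_of_squares} (to $w=S$ and to $w=L$). Moreover, in every case $\sqrt{B_{2i-1}B_{2i}} = B_{2i-1}$. Concatenating the minimal-square factorizations of the successive pairs exhibits $w$ as a product of minimal squares; by uniqueness of the minimal-square factorization this is \emph{the} factorization of $w$, whence
\[
  \sqrt{w} = \prod_{i\geq 1}\sqrt{B_{2i-1}B_{2i}} = B_1 B_3 B_5 \cdots,
\]
the product of the odd-indexed blocks of $w$. In particular $\sqrt{w}$ is again a product of $S$ and $L$.

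It remains to prove $\sqrt{w} \in \Omega$. Since $\Omega$ is minimal (being generated by the uniformly recurrent word $\Gamma$), it suffices to show that every factor of $\sqrt{w}$ belongs to $\Lang(\Gamma)$. Any such factor occurs inside a block-aligned chunk $B_{2i-1}B_{2i+1}\cdots B_{2j-1}$ of odd-indexed blocks, and by the previous paragraph this chunk is exactly $\sqrt{v}$ for the block-aligned chunk $v = B_{2i-1}B_{2i}\cdots B_{2j}$ of $w$. As $\Lang(w) = \Lang(\Gamma)$, the word $v$ is a factor of $\Gamma$. Thus the claim reduces to: the square root of a block-aligned, even-length product of $S$ and $L$ occurring in $\Gamma$ is again a factor of $\Gamma$. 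Here one invokes the fixed-point identity $\sqrt{\Gamma}=\Gamma$ coming from \autoref{prp:gamma_solutions}: writing $\Gamma$ as a product of its own blocks, the square root pairs them consecutively and retains those at odd positions, so the odd-positioned blocks of $\Gamma$ reproduce $\Gamma$ itself.

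The crux is the synchronization of phase. If $v$ occurs in $\Gamma$ at an \emph{odd} block-position, then the global pairing realizing $\sqrt{\Gamma}=\Gamma$ restricts precisely to the pairing defining $\sqrt{v}$, so $\sqrt{v}$ appears among the retained blocks of $\sqrt{\Gamma}=\Gamma$ and is therefore a factor of $\Gamma$; an occurrence at an even block-position would instead expose the even-indexed decimation of $v$, which is not what we need. The main obstacle is thus to guarantee that every sufficiently long block-aligned chunk of $w$ occurs in $\Gamma$ at a block-position of the correct parity. I would derive this from the self-similar structure of $\Gamma$: the recurrences $\gamma_{k+1}=L(\gamma_k)\gamma_k^2$ and $L(\gamma_{k+1})=\gamma_k^3$ show that the block sequence of $\Gamma$ is a fixed point of the constant-length substitution $\tau\colon S\mapsto LSS,\ L\mapsto SSS$, whose square $\tau^2$ is primitive and whose image lengths $3$ (and $9$) are \emph{odd}. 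Because the substitution length is odd and $\tau^2$ is primitive, a desubstitution (recognizability) argument shows that every block-aligned factor of $\Gamma$ recurs at block-positions of \emph{both} parities, which furnishes the required odd-position occurrence of $v$ and completes the proof. Establishing this parity statement rigorously is the technically heaviest point; the pairing computation and the minimality of $\Omega$ then deliver the conclusion directly.
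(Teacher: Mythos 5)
Your first two steps are sound and coincide with the paper's: pairing the blocks, using \autoref{lem:product_of_squares} to get $\sqrt{SS}=\sqrt{SL}=S$ and $\sqrt{LS}=\sqrt{LL}=L$, concluding by uniqueness of minimal-square factorizations that $\sqrt{w}=B_1B_3B_5\cdots$, and reducing $\sqrt{w}\in\Omega$ to the claim that every block-aligned factor $v$ of $w$ occurs in $\Gamma$ at a block-position of the correct parity. The gap is in how you propose to obtain that parity claim. The principle you invoke --- odd substitution length plus primitivity implies every block-aligned factor recurs at block-positions of both parities --- is false. The substitution $a\mapsto aba$, $b\mapsto bab$ is primitive of odd length $3$, yet in its fixed point $(ab)^\omega$ the letter $a$ occurs only at even positions. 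Nor does adding aperiodicity rescue the principle: by Dekking's theory of heights for constant-length substitutions, there are aperiodic primitive substitutions of odd length and height $2$ (on alphabets of size at least $4$) in whose fixed points some factors occur at a single parity. What makes the claim true here is special to this situation: the block alphabet $\{S,L\}$ is binary and $\overline{\Gamma}$ is aperiodic, so the height cannot be $2$ (height $2$ would force $S$ at even and $L$ at odd block-positions, i.e.\ the periodic word $(SL)^\omega$), and one would then still need the maximal-equicontinuous-factor machinery to pass from ``height $1$'' to ``no factor is confined to one parity.'' None of this is in your sketch, and you yourself flag the step as unestablished; as written the proof is incomplete exactly at its crux.

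The paper closes this step with a two-line observation that bypasses all substitution theory: since $\gamma_k^2$ is a prefix of $\Gamma$ and $|\gamma_k|=3^{k-1}|S|$, the block word $\overline{\gamma}_k$ occurs in $\overline{\Gamma}$ at block-positions $0$ and $3^{k-1}$, one even and one odd; and every block-aligned factor of $\overline{\Gamma}$ is a block-aligned factor of some $\overline{\gamma}_k$. Hence every such factor occurs at a block-position of either parity, which is precisely the phase you need. Two smaller points: $\overline{\Gamma}$ is a fixed point of $\tau^2$, not of $\tau$ itself, since $\tau(S)=LSS$ begins with $L$; and both your argument and the paper's also need the synchronization remark (primitivity of $S$ forces sufficiently long block products occurring in $\Gamma$ to occur only block-aligned) to know that occurrences of $v$ in $\Gamma$ respect the block structure at all. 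If you replace the recognizability/height appeal by the prefix-occurrence observation above, your proof becomes complete and is then essentially the paper's.
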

  \begin{proof}
    Any word $u$ which is a product of the words $S$ and $L$ can be naturally written as a binary word $\overline{u}$
    over the alphabet $\{S, L\}$. If such a word $\overline{u}$ has even length, then it is a word over the alphabet
    $A = \{SS, SL, LS, LL\}$. Using the fact that $\sqrt{SS} = S$, $\sqrt{SL} = S$, $\sqrt{LS} = L$, and
    $\sqrt{LL} = L$ (see \autoref{lem:product_of_squares}), we can define a square root for a word over $A$.

    The word $\gamma_k^2$ is a prefix of $\Gamma$ for all $k \geq 1$. Thus $\gamma_k$ has occurrences at positions $0$
    and $|\gamma_k|$ of $\Gamma$. Clearly $|\gamma_k| = 3^{k-1}|S|$, so the word $\overline{\gamma}_k$ occurs in
    $\overline{\Gamma}$ in an even and in an odd position.

    Let $v$ be a prefix of $w$ of length $|v| = 2n|S|$ for some $n \geq 1$, so $\overline{v}$ is a word over $A$. Since
    $v$ is a prefix of $w$, the word $v$ is a factor of some $\gamma_k$. Since $\overline{\gamma}_k$ occurs in
    $\overline{\Gamma}$ in an even and in an odd position, the word $\overline{v}$ occurs in an even position in
    $\overline{\Gamma}$. Hence $\overline{\Gamma}$ can be factored as $\overline{\Gamma} = z\overline{v}t$ where $z$
    and $t$ are finite or infinite words over $A$. Since $\Gamma$ is a fixed point of the square root map, we have that
    $\overline{\Gamma} = \sqrt{z\vphantom{\overline{v}}}\sqrt{\overline{v}}\sqrt{t\vphantom{\overline{v}}}$. Hence
    $\sqrt{v} \in \Lang(\Gamma)$. It follows that $\Lang(\sqrt{w}) \subseteq \Lang(\Gamma)$, so $\sqrt{w} \in \Omega$.
  \end{proof}

  \begin{definition}
    Let $w$ be a word and $\ell$ be an integer such that $0 < \ell < |w|$. If the factor of $w^3$ of length $|w^2|$
    starting at position $\ell$ can be written as a product of minimal squares $X_1^2, \ldots, X_n^2$, then we say that
    the position $\ell$ of $w$ is \emph{repetitive}. If in addition
    $|X_1^2 \cdots X_m^2| \neq |w| - \ell, |w^2| - \ell$ for all $m$ such that $1\leq m\leq n$, then we say that the
    position $\ell$ is \emph{nicely repetitive}.
  \end{definition}

  For example if $\oa=1$, $\ob=0$, and $S = 1001001010010$, then the position $1$ of $S$ is repetitive as the factor
  $00100101001010010010100101$ of $S^3$ of length $|S^2| = 26$ starting at position $1$ is in $\Pi(\oa,\ob)$. This
  position is not nicely repetitive as $|0^2 \cdot (10010)^2| = 12 = |S| - 1$. The position $2$ of $S$, however, can be
  checked to be nicely repetitive. The position $4$ of $S$ is not repetitive as the factor $00101001010010010100101001$
  of length $26$ starting at position $4$ is not in $\Pi(\oa,\ob)$.

  In the upcoming proof of \autoref{thm:preserved_periodic} we will show that if $w \in \Omega$ is a product of the
  words $S$ and $L$ and $\ell$ is a nicely repetitive position of $S$, then the word $\sqrt{T^\ell(w)}$ is always
  periodic. On the other hand, we show that if $\ell$ is not a nicely repetitive position then $\sqrt{T^\ell(w)}$ is
  always in $\Omega$.

  Next we identify some good positions in the suffix $S_6$ of $S$. As we observed in the proof of
  \autoref{lem:product_of_squares}, the suffix $S_6$ of $S$ restricts locally how a
  factorization of a word as a product of minimal squares continues after an occurrence of $S_6$. Consider a product
  $X_1^2 \cdots X_n^2$ of minimal squares which has an occurrence of $S_6$ at position $\ell$. Then for some $m \in
  \{1,\ldots,n\}$ the minimal square $X_m^2$ must begin at some of the positions $\ell, \ell+1, \ldots, \ell+|S_6|-1$.
  Otherwise some minimal square would have $S_6$ as an interior factor; yet no such minimal square exists. Among the
  positions $\ell, \ell+1, \ldots, \ell+|S_6|-1$ we are interested in the largest position where a minimal square may
  begin. Let
  \begin{align*}
    \B = \{\ell \in \{0,\ldots,|S_6|-1\}\colon \text{no square of length at most } |S_6| - \ell \text{ begins at position } \ell \text{ of } S_6\}.
  \end{align*}
  It is straightforward to see that
  \begin{align*}
    \B = \{|S_6|-|S_6|,|S_6|-|S_4|,|S_6|-|S_3|,|S_6|-|S_1|\}.
  \end{align*}
  We are interested in those positions of the suffix $S_6$ of $S$ where no minimal square begins. Hence we define
  \begin{align*}
    \B_S = \{\ell: \ell-|S|+|S_6| \in \B\} = \{|S|-|S_6|, |S|-|S_4|, |S|-|S_3|, |S|-|S_1|\}.
  \end{align*}
  A consequence of the definitions is that if $\ell$ is a position of $S$ such that $\ell \notin \B_S$, then there
  exists $\ell' \in \B_S \cup \{|S|\}$ such that $S[\ell,\ell'-1] \in \Pi(\oa,\ob)$. This fact is used later several
  times.

  \begin{lemma}\label{lem:periodic_image}
    Suppose that $w \in \Omega$ can be written as a product of the words $S$ and $L$. Assume that the position
    $\ell \in \B_S$ is nicely repetitive. Let the prefix of $T^\ell(w)$ of length $|S^2|$ be factorized as a product of
    minimal squares $X_1^2 \cdots X_n^2$. Then the word $\sqrt{T^\ell(w)}$ is periodic with minimal period
    $X_1 \cdots X_n$. Moreover, $X_1 \cdots X_n$ is conjugate to $S$.
  \end{lemma}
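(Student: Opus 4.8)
The plan is to show that shifting by a nicely repetitive position $\ell \in \B_S$ makes the minimal-square factorization of $T^\ell(w)$ completely insensitive to which of $S$ or $L$ occupies each block of $w$, so that the sequence of square roots becomes constant and $\sqrt{T^\ell(w)}$ is purely periodic. The conceptual point, which I would state explicitly, is that a nicely repetitive position is precisely one at which every $S$-versus-$L$ transposition sits strictly inside a minimal square and is therefore invisible to the square root; the aperiodic pattern of blocks is thereby erased, forcing a periodic image. This is in contrast to positions that are not nicely repetitive, where a transposition lands on a square boundary and transmits the aperiodicity.

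First I would pin down the prefix. Writing $w = B_1 B_2 B_3 \cdots$ with each $B_i \in \{S,L\}$ and using that $\ell \geq |S| - |S_6| \geq 2$ (so $S$ and $L$ agree from position $\ell$ on), we get $T^\ell(w) = S[\ell,|S|-1]\,B_2\,B_3\cdots$, whence the prefix of length $|S^2|$ is $p = S[\ell,|S|-1]\,B_2\,B_3[0,\ell-1]$. The factor of $S^3$ of length $|S^2|$ starting at position $\ell$ is exactly $(C^\ell(S))^2$, and I observe that $p$ coincides with $(C^\ell(S))^2$ except possibly at the first two letters of the block $B_2$ (positions $|S|-\ell,|S|-\ell+1$ of $p$) and of the block $B_3$ (positions $2|S|-\ell,2|S|-\ell+1$), where an $L$ in place of an $S$ transposes a pair of distinct letters. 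These are exactly the two block boundaries of $w$ lying inside $[0,|S^2|)$, and the nicely repetitive hypothesis says precisely that no boundary $|X_1^2\cdots X_m^2|$ equals $|S|-\ell$ or $|S^2|-\ell = 2|S|-\ell$; thus no minimal square of the factorization begins at a transposed position.

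Next I would absorb the transpositions with \autoref{lem:exchange_squares}. At the boundary $|S|-\ell$ the minimal square that straddles it has its pre-boundary part equal to a nonempty suffix of $S_6$ (because $S[\ell,|S|-1]$ is itself a suffix of $S_6$), and the continuation begins with two distinct letters; this is exactly the configuration of $u,v$ versus $u,L(v)$ in \autoref{lem:exchange_squares}, with $X$ the straddling square, and the membership hypotheses follow from $w \in \Lang(\oa,\ob)$ and \autoref{lem:gamma_optimal_squareful}. The lemma then shows that replacing $S$ by $L$ at that boundary alters the factorization only locally and leaves the square root unchanged; applying it at the (at most two) interior boundaries gives that $p$ factors into minimal squares with $X_1\cdots X_n = \sqrt{(C^\ell(S))^2}$, independent of the choices of $B_2,B_3$. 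The same analysis applies verbatim to the length-$|S^2|$ factor starting at every position $j|S|$ of $T^\ell(w)$, since it again has the form $S[\ell,|S|-1]\,B_{j+2}\,B_{j+3}[0,\ell-1]$: each step contributes the same word $X_1\cdots X_n$ to the square root and realigns the leftover to position $\ell$ of the next block. An induction on $j$ then yields $\sqrt{T^\ell(w)} = (X_1\cdots X_n)^\omega$, which is periodic.

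Finally I would identify the period. A letter count gives $|X_1\cdots X_n| = |S|$ and that $X_1\cdots X_n$ has the same number of $0$'s as $C^\ell(S)$, hence as $S$; combined with its membership in $\Lang(\oa,\ob)$, \autoref{prp:conjugate_square} places it in the conjugacy class of the relevant (semi)standard word, and tracking the rotation shows that it is conjugate to $S$. Since $S$ is primitive, its conjugate $X_1\cdots X_n$ is primitive as well, so it is the minimal period of $(X_1\cdots X_n)^\omega$. The main obstacle I anticipate is the chaining step of the previous paragraph: making rigorous that the local rewrites supplied by \autoref{lem:exchange_squares} at successive boundaries compose into one coherent minimal-square factorization of the whole infinite word with constant square root, and in particular verifying that after processing each block the leftover is again the suffix $S[\ell,|S|-1]$, so the induction closes. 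Securing that the period is conjugate to $S$ (and not merely to the standard word of the same length) is the secondary delicate point.
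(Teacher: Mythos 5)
Your periodicity argument is essentially the paper's own proof: compare the length-$|S^2|$ window of $T^\ell(w)$ with the corresponding window of pure $S^3$ (which lies in $\Pi(\oa,\ob)$ because $\ell$ is repetitive), absorb each $S \mapsto L$ substitution with \autoref{lem:exchange_squares}, and chain disjoint consecutive windows using the uniqueness of minimal-square factorizations. One step you gloss over: your justification that the straddling square's pre-boundary part is a nonempty suffix of $S_6$ (``because $S[\ell,|S|-1]$ is itself a suffix of $S_6$'') only covers the \emph{first} block boundary. At the second boundary the straddling square begins at a position $\ell'$ of the middle block that is determined by the factorization, and you must show $\ell' \in \B_S$; this follows from the maximality of the partial product together with the fact that no minimal square contains $S_6$ as an interior factor, and it is precisely here (to guarantee $\ell' < |S|$, so that a square genuinely straddles the second boundary rather than a partial product ending on it) that the ``nicely'' part of the hypothesis is consumed. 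This is recoverable with the tools you cite, so I would count it as an unproved step rather than an error.

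The conjugacy claim, however, has a genuine gap. \autoref{prp:conjugate_square} is stated and proved for factors of a \emph{Sturmian} language $\Lang(\alpha)$; it cannot be invoked for a word that is only known to lie in $\Lang(\oa,\ob)$, which is a strictly larger, non-Sturmian language (it contains unbalanced factors), and within $\Lang(\oa,\ob)$ an equal count of $0$'s does not imply conjugacy. Your period $X_1 \cdots X_n$ is produced as the square root of a factor of the non-Sturmian word $w$, so nothing in your argument places it in any Sturmian language, and ``tracking the rotation'' has no meaning for it, since no circle rotation underlies $\Lang(\oa,\ob)$. The paper closes exactly this hole with an embedding trick: write $S \in \{\mirror{s}_k, L(\mirror{s}_k)\}$ for a reversed standard word of a suitable slope, choose $\beta$ agreeing with that slope in its first $k$ partial quotients and with $b_{k+1} \geq 5$, so that $S^5 \in \Lang(\beta)$; by the substitution-invariance just established, the prefix of $T^\ell(S^5)$ of length $|S^4|$ is a product of minimal squares whose square root is $(X_1 \cdots X_n)^2$; then \autoref{thm:square_root}, applied to a Sturmian word of slope $\beta$ beginning with that prefix, gives $(X_1 \cdots X_n)^2 \in \Lang(\beta)$, and only now does \autoref{prp:conjugate_square}, inside $\Lang(\beta)$, yield that $X_1 \cdots X_n$ is conjugate to $S$ (primitivity of $S$ then makes this period minimal, as you say). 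Without some such passage into a genuine Sturmian language, your final step does not go through.
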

  \begin{proof}[Proof Sketch]
    As $\ell$ is repetitive, the factor $u$ of length $|S^2|$ of $S^3$ starting at position $\ell$ is in
    $\Pi(\oa,\ob)$. If we substitute the middle $S$ in $S^3$ with $L$, then an application of
    \autoref{lem:exchange_squares} shows that the factor of length $|S^2|$ of $SLS$ starting at position $\ell$ is
    still in $\Pi(\oa,\ob)$ and that the square root of this factor coincides with the square root of $u$ (here we need
    that $\ell \in \B_S$). Further analysis shows that if we substitute the words $S$ in $S^3$ in any way, then the
    square root of the factor of length $|S^2|$ beginning at position $\ell$ is unaffected. Since $\ell$ is repetitive,
    the prefix of $T^{\ell+|S^2|}(w)$ of length $|S^2|$ is again in $\Pi(\oa,\ob)$ and has the same square root, and so
    on. Thus $\sqrt{T^\ell(w)}$ is periodic. Since both the square of the period and $S^2$ occur in a suitable Sturmian
    word; having equals lengths, they must be conjugate by \autoref{prp:conjugate_square}.
  \end{proof}
  \begin{proof}
    We have that $|S| \geq |S_5 S_6|$, so $\ell > 1$. Let $u$ be the suffix of $S$ of length $|S|-\ell$. Since $\ell$
    is repetitive, the factor $v$ of $S^3$ of length $|S^2|$ starting at position $\ell$ can be factorized as a product
    of minimal squares $Y_1^2 \cdots Y_m^2$. We have that $|Y_1^2| > |u|$ because $\ell \in \B_S$.
    
    Next we consider how the situation changes if any of the words $S$ in $S^3$ is substituted with $L$. Substituting
    the first $S$ with $L$ does not affect the product as $\ell > 1$. Suppose then that the second word $S$ is substituted
    with $L$. By applying \autoref{lem:exchange_squares} to the words $u$ and $S$ with $X = Y_1$, we see that the
    factor of length $|S^2|$ of $SLS$ starting at position $\ell$ can still be factorized as a product of minimal
    squares and that the square root of this factor coincides with the square root of $v$. Consider next what happens
    when the third word $S$ is substituted with $L$. Let
    \begin{align*}
      r = \max\{i \in \{1,\ldots,m\}\colon |Y_1^2 \cdots Y_i^2| \leq |S^2| - \ell\}.
    \end{align*}
    Set $\ell' = \ell + |Y_1^2 \cdots Y_r^2| - |S|$. Since $\ell$ is nicely repetitive, we have that $\ell' < |S|$. By
    the maximality of $r$ and the definition of the set $\B_S$, we thus have that $\ell' \in \B_S$. Applying
    \autoref{lem:exchange_squares} to the suffix of $S$ of length $|S|-\ell'$ and $S$ with $X = Y_{r+1}$ we obtain,
    like above, that the product of minimal squares is affected but the square root is not. Substituting the second and
    third words $S$ with $L$ gives the same result: first proceed as above and substitute the second word $S$ and then
    make the second substitution like above but apply \autoref{lem:exchange_squares} for the word $L$ instead of $S$.

    We have concluded that however we substitute the words $S$ in $S^3$, the square root of the factor of length
    $|S^2|$ beginning at position $\ell$ never changes. The word $w$ is obtained from the word $S^\omega$ by
    substituting some of the words $S$ with $L$. By the preceding, the prefix of $T^\ell(w)$ of length $|S^2|$ can be
    factorized as a product of minimal squares $X_1^2 \cdots X_n^2$. Since $\ell$ is repetitive, the prefix of
    $T^{\ell+|S^2|}(w)$ of length $|S^2|$ can also be factorized as a product of some minimal squares (perhaps
    different) but the square root still equals $X_1 \cdots X_n$. By repeating this observation we see that
    \begin{align*}
      \sqrt{T^\ell(w)} = (X_1 \cdots X_n)^\omega.
    \end{align*}

    By our choice of $S$ we have that $S \in \{\mirror{s}_k, L(\mirror{s}_k)\}$ where $\mirror{s}_k$ is a reversed
    standard word of some slope $\alpha = [0;\oa+1,\ob+1, \ldots]$. Let $\beta = [0; b_1, b_2, \ldots]$ be a number such
    that $a_i = b_i$ for $1 \leq i \leq k$ and $b_{k+1} \geq 5$. Then by the definition of standard words
    $S^5 \in \Lang(\beta)$. By the preceding, the prefix of $T^\ell(S^5)$ of length $|S^4|$ can be written as a product
    of minimal squares, and the square root of these minimal squares equals $(X_1 \cdots X_n)^2$. Since the square root
    of a Sturmian word of slope $\beta$ is a Sturmian word of slope $\beta$, we have that
    $(X_1 \cdots X_n)^2 \in \Lang(\beta)$. As $|X_1 \cdots X_n| = |S|$, it follows by \autoref{prp:conjugate_square}
    that $X_1 \cdots X_n$ is conjugate to $S$. Since $S$ is primitive, so is $X_1 \cdots X_n$, and hence the period
    $X_1 \cdots X_n$ is minimal.
  \end{proof}

  \begin{lemma}\label{lem:one_nicely_repetitive}
    Every seed solution $S$ has at least one nicely repetitive position $\ell$ such that $\ell \in \B_S$.
  \end{lemma}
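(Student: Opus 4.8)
The plan is to read everything off the left-to-right greedy factorization into minimal squares of the periodic word $S^\omega$. Recall that $S \in (S_5+S_6)^+$ ends with $S_6$, that $|S| \ge |S_5 S_6|$, and that in $S^\omega$ the block following the terminal $S_6$ begins with $10^{\oa+1}$. For a position $\ell$, being \emph{repetitive} says that the length-$|S^2|$ factor starting at $\ell$ lies in $\Pi(\oa,\ob)$; since a word admits at most one factorization into minimal squares, this is equivalent to saying that greedily peeling off minimal squares starting at $\ell$ lands exactly on $\ell + |S^2|$. Being \emph{nicely repetitive} asks in addition that no greedy cut fall on the two copy-boundaries, i.e. on the positions $|S|$ and $2|S|$ of $S^3$. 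The first thing I would record is that for \emph{every} $\ell \in \B_S$ the minimal square beginning at $\ell$ has length strictly greater than $|S|-\ell$ — this is exactly the defining property of $\B_S$ — so that square straddles the boundary $|S|$ and no greedy cut can ever land on $|S|$. Hence the first half of the ``nice'' condition holds automatically on all of $\B_S$, and only the boundary $2|S|$ must be dodged.

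Next I would single out the candidate $\ell_1 = |S|-|S_6|$, the starting position of the terminal $S_6$. Writing $S = QS_6$, the length-$|S^2|$ factor at $\ell_1$ is $(S_6Q)^2$, the square of the cyclic rotation of $S$ that brings $S_6$ to the front. Because $S$ begins with a block $S_5$ or $S_6$ — in either case with $10^{\oa+1}$ — one checks directly that the length-$|S_6|$ prefix of $S$ equals $S_6$, so $S_6^2$ is a prefix of $(S_6Q)^2$; as $S_6^2$ is itself a minimal square it has no shorter square prefix, hence it is the first greedy square. Thus the greedy factorization of the window opens with $S_6^2$, which already crosses the boundary $|S|$ harmlessly.

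It then remains to continue the greedy peeling through the rest of $(S_6Q)^2$, show it terminates precisely at $2|S|$, and verify that $2|S|$ stays interior to a square. I would carry this out exactly as in the proof of \autoref{lem:product_of_squares}: after the opening $S_6^2$ the computation is governed by the local structure of $S$ as a product of $S_5$- and $S_6$-blocks, and \autoref{lem:exchange_squares} lets me transport the known factorization $S^2 = X_1^2 \cdots X_n^2$ (available because $S$ is a solution) to the rotated word, the exchanged letters being confined to the $S_6$-seams. \autoref{prp:conjugate_square} then identifies the period reached as a conjugate of $S$ and pins down its length so that the factorization closes exactly at $2|S|$.

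The main obstacle is precisely this global bookkeeping: guaranteeing that the greedy trajectory across a whole period $|S^2|$ returns to a cut at $2|S|$ while keeping $2|S|$ off the cut set. If the candidate $\ell_1$ turns out to be degenerate — for instance if it coincides with a cut point of the canonical factorization of $S^2$, in which case greedy reproduces that factorization and \emph{does} cut on $2|S|$ — I would fall back to the remaining members of $\B_S$ (there are at most four, namely $|S|-|S_6|,\ |S|-|S_3|,\ |S|-|S_4|,\ |S|-|S_1|$) and run the same local analysis. The finitely many cases, distinguished by whether the blocks adjacent to the terminal $S_6$ are $S_5$ or $S_6$ and by the parity of $\ob$, are routine, and it is here that the edge cases (a short seed with $|Q| < |S_6|$, or $\ob = 0$) must be verified to complete the argument.
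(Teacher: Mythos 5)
Your opening reduction is sound: the minimal-square factorization of a word is unique, and by the very definition of $\B_S$ no minimal square starting at a position $\ell \in \B_S$ has length at most $|S|-\ell$, so the boundary $|S|$ is automatically straddled and only cuts landing on $2|S|$ must be excluded. From there, however, the proposal breaks in two concrete places. First, the premise that $S \in (S_5+S_6)^+$ — and hence that the length-$|S_6|$ prefix of $S$ is $S_6$ — is false in general: a seed solution is a reversed (semi)standard word or its $L$-image and may begin with the letter $0$ (e.g. $S = \mirror{s}_4 = \mirror{s}_2\mirror{s}_3^{\,a_4}$; with $\oa=1$, $\ob=0$ this is $01010010$, whose window at $|S|-|S_6|$ opens with $S_5^2$, not $S_6^2$). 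Only the conjugate of $S$ beginning at the terminal $S_6$ is a product of the blocks $S_5,S_6$. Second, and fatally, your main candidate $\ell_1 = |S|-|S_6|$ is provably not always nicely repetitive, even when $S$ does begin with a block: the paper's own example following \autoref{lem:nicely_repetitive_position} takes $\oa=1$, $\ob=0$, $S = \mirror{s}_{3,3} = 10(010)^3$, where the window at position $|S|-|S_6|=6$ factors as $(10010)^2(010)^2(100)^2$ and the cut after $(10010)^2(010)^2$ lands exactly on $|S^2|-6$. So the main line of your argument fails, and everything is delegated to the fallback.

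That fallback — checking the remaining members of $\B_S$ by a "routine" local case analysis — is precisely where the difficulty of the lemma sits, and it cannot be local: whether a position of $\B_S$ is (nicely) repetitive depends globally on $S$, not on the blocks adjacent to the terminal $S_6$ and the parity of $\ob$ (the paper notes explicitly that there is no clear pattern for the other positions of $\B_S$). The paper's proof goes a different way that your plan never touches: writing $S = \mirror{s}_{k,i}$ (the case $S = L(\mirror{s}_{k,i})$ is reduced to it via \autoref{lem:exchange_squares}), it takes the position $r = |\mirror{s}_{k,i-1}|$ — in general not one of your four candidates — and uses the identity
\begin{align*}
  S^3 = \mirror{s}_{k,i-1}\cdot \mirror{s}_{k-1}L(\mirror{s}_{k-1})\cdot \mirror{s}_{k,i-1}^{\,2}\,\mirror{s}_{k-1},
\end{align*}
together with \autoref{lem:product_of_squares} and the fact that $\mirror{s}_{k,i-1}$ is a solution to \eqref{eq:square}, to show the window at $r$ lies in $\Pi(\oa,\ob)$. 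Cuts at the two forbidden boundaries are then excluded by primitivity: such a cut would exhibit $\mirror{s}_{k-1}$ or $\mirror{s}_{k,i-1}$ itself as a product of minimal squares, forcing a primitive solution of \eqref{eq:square} to be a square. Finally, if $r \notin \B_S$, nice repetitiveness is transported along the factorization to some $r' \in \B_S$. Without an argument of this algebraic kind, your proposal certifies no position at all; it remains a plan rather than a proof.
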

  \begin{proof}
    Suppose that $S = \mirror{s}_{k,i}$ for some $k \geq 3$ and $0 < i \leq a_k$. It is sufficient to show that
    $r = |\mirror{s}_{k,i-1}|$ is a nicely repetitive position of $S$. If $r \notin \B_S$, then there exists
    $r' \in \B_S$ such that $S[r,r'-1] \in \Pi(\oa,\ob)$. Since the position $r$ is nicely repetitive, so must $r'$ be.
    If $S = L(\mirror{s}_{k,i})$, then as $r > 1$, an application of \autoref{lem:exchange_squares} shows that the
    conclusion holds also in this case.
  
    Observe that the word $\mirror{s}_{k,i-1}$ is both a prefix and a suffix of $S$. Using the fact that
    $\mirror{s}_{k-2}\mirror{s}_{k-3} = L(\mirror{s}_{k-3}\mirror{s}_{k-2})$ we obtain that
    \begin{align*}
      S^3 &= \mirror{s}_{k,i-1} \mirror{s}_{k-1} \mirror{s}_{k-2} \mirror{s}_{k-1}^{\,i} \mirror{s}_{k,i}
      = \mirror{s}_{k,i-1} \cdot \mirror{s}_{k-1} \mirror{s}_{k-2} \mirror{s}_{k-3} \mirror{s}_{k-2}^{\,a_{k-1} - 1} \cdot \mirror{s}_{k,i-1} \mirror{s}_{k,i} \\
          &= \mirror{s}_{k,i-1} \cdot \mirror{s}_{k-1} L(\mirror{s}_{k-1}) \cdot \mirror{s}_{k,i-1}^{\,2} \mirror{s}_{k-1}.
    \end{align*}
    By \autoref{lem:product_of_squares} the word $\mirror{s}_{k-1} L(\mirror{s}_{k-1})$ is in $\Pi(\oa,\ob)$. Since
    $\mirror{s}_{k,i-1}$ is a solution to \eqref{eq:square}, we have that $\mirror{s}_{k,i-1}^{\,2} \in \Pi(\oa,\ob)$.
    Overall, the factor $\mirror{s}_{k-1} L(\mirror{s}_{k-1}) \mirror{s}_{k,i-1}^{\,2}$ of $S^3$ of length $|S^2|$
    starting at position $r$ is in $\Pi(\oa,\ob)$. Thus the position $r$ of $S$ is repetitive.
    
    Suppose for a contradiction that the suffix of $S$ of length $|S| - r$ is in $\Pi(\oa,\ob)$, that is,
    $S = \mirror{s}_{k,i-1} X_1^2 \cdots X_n^2$ for some minimal square roots $X_j$. It follows that
    $s_{k-1} = X_1^2 \cdots X_n^2$. Since $s_{k-1}$ is a solution to \eqref{eq:square}, it follows that
    $s_{k-1} = (X_1 \cdots X_n)^2$. This contradicts the primitivity of $s_{k-1}$. Similarly if the suffix of $S^2$ of
    length $|S^2| - r$ is in $\Pi(\oa,\ob)$, then $\mirror{s}_{k,i-1} \in \Pi(\oa,\ob)$ contradicting the primitivity of
    $\mirror{s}_{k,i-1}$. We conclude that the position $r$ is nicely repetitive.
  \end{proof}

  \autoref{lem:periodic_image} and \autoref{lem:one_nicely_repetitive} now imply the following:

  \begin{corollary}
    There exist uncountably many linearly recurrent optimal squareful words having (purely) periodic square root.
  \end{corollary}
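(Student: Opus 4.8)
The plan is to keep a single word $\Gamma$ fixed and to produce uncountably many words inside its subshift $\Omega$ whose square root is periodic. First I would use \autoref{lem:one_nicely_repetitive} to fix a nicely repetitive position $\ell \in \B_S$. By \autoref{lem:periodic_image}, the word $\sqrt{T^\ell(w)}$ is then periodic for every $w \in \Omega$ that is an aligned product of the blocks $S$ and $L$, i.e.\ that factors as such a product starting at position $0$. Since subshifts are shift-invariant, each $T^\ell(w)$ again lies in $\Omega$, and every word of the minimal subshift $\Omega$ has the language $\Lang(\Gamma)$ and is therefore linearly recurrent (by \autoref{lem:gamma_linearly_recurrent}) and optimal squareful, these being properties of the shared language already verified for $\Gamma$. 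Hence it suffices to exhibit uncountably many aligned products $w$ and to verify that the map $w \mapsto T^\ell(w)$ identifies at most finitely many of them.

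To count the aligned products I would pass to the block level. The coding of $\Gamma$ over the two-letter alphabet $\{S,L\}$ is an infinite word $\overline{\Gamma}$ generating a subshift $\overline{\Omega}$. The word $\overline{\Gamma}$ is aperiodic, for otherwise $\Gamma$ would be periodic, contradicting \autoref{lem:gamma_linearly_recurrent}; and it is uniformly recurrent because $\Gamma$ is and because, as observed after \autoref{lem:gamma_linearly_recurrent}, any sufficiently long factor of $\Gamma$ that is a product of $S$ and $L$ occurs only at positions divisible by $|S|$, so its occurrences synchronize with the block factorization. Thus $\overline{\Omega}$ is an infinite minimal subshift over a finite alphabet, hence a perfect compact set, hence uncountable. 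Unfolding a block sequence from $\overline{\Omega}$ yields an aligned product of $S$ and $L$ lying in $\Omega$, and distinct block sequences unfold to distinct words since $S$ and $L$ already differ in their first two letters. Therefore the aligned products form an uncountable subset of $\Omega$.

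Finally I would check that $T^\ell$ is finite-to-one on this set. Recall from the proof of \autoref{lem:periodic_image} that $\ell > 1$, and note that $L = L(S)$ differs from $S$ only in positions $0$ and $1$. If $T^\ell(w) = T^\ell(w')$ for aligned products $w, w'$, then $w$ and $w'$ agree from position $\ell$ onward; since all blocks after the first occupy positions $\geq |S| > \ell$, these later blocks must coincide, so $w$ and $w'$ can differ only in whether their first block is $S$ or $L$. Each fiber thus contains at most two points, and $T^\ell$ carries the uncountable set of aligned products onto an uncountable set of pairwise distinct words, each a linearly recurrent optimal squareful word with periodic square root, as desired.

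The main obstacle is the uncountability of the aligned products, which I reduce to the minimality and infiniteness of the block subshift $\overline{\Omega}$. The delicate points are confirming that $\overline{\Gamma}$ is genuinely uniformly recurrent and aperiodic and that the block factorization of elements of $\Omega$ is unambiguous; both rest on the synchronization property quoted above. Once minimality and infiniteness of $\overline{\Omega}$ are secured, the perfect-set argument for uncountability and the finite-to-one verification are routine.
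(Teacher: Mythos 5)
Your proposal is correct and follows essentially the same route as the paper: pass to the block subshift $\overline{\Omega}$ over $\{S,L\}$, establish its minimality via the synchronization property and its infiniteness via aperiodicity, conclude uncountability by the perfect-set argument, and then apply \autoref{lem:one_nicely_repetitive} and \autoref{lem:periodic_image}. You additionally spell out two points the paper leaves implicit --- that unfolding block sequences lands in $\Omega$ and that $T^\ell$ is at most two-to-one on the aligned products --- which is a welcome bit of extra care but not a different argument.
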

  \begin{proof}
    We only need to show that there are uncountably many such words. Consider the words in $\Omega$ which can be
    written as a product of the words $S$ and $L$. Viewed over the binary alphabet $\{S, L\}$, these words form an
    infinite subshift $\overline{\Omega}$. Let us show that $\overline{\Omega}$ is minimal. Then the conclusion follows
    by well-known arguments from topology: a minimal subshift is always finite or uncountable and an aperiodic subshift
    cannot be finite (use the fact that a perfect set is always uncountable).
    
    Let $\overline{w} \in \overline{\Omega}$ (we use the notation of the proof of \autoref{lem:product_sl_preserved}).
    Let $\overline{u} \in \Lang(\overline{w})$ be a factor such that $|\overline{u}| \geq 6$. As $|u| \geq 6|S|$, every
    occurrence of $u$ in $\Gamma$ must synchronize to the factorization of $\Gamma$ as a product of $S$ and $L$. It
    follows that every return to $u$ in $\Gamma$ is a product of $S$ and $L$. Since the return time of $u$ is finite in
    $\Gamma$, the return time of the word $\overline{u}$ in $\overline{w}$ is also finite. Hence $\overline{\Omega}$ is
    minimal.
  \end{proof}

  We also prove the following weaker result, which we need later.

  \begin{lemma}\label{lem:nicely_repetitive_position}
    The position $|S| - |S_6|$ of $S$ is repetitive.
  \end{lemma}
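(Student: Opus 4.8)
The plan is to first make the factor explicit and then show it is a product of minimal squares. Since $S$ has $S_6$ as a proper suffix, write $S = P S_6$ with $|P| = |S| - |S_6|$. Reading $|S^2| = 2|S|$ letters of $S^3 = P S_6 P S_6 P S_6$ starting at position $|S| - |S_6| = |P|$ picks out exactly $S_6 P S_6 P = (S_6 P)^2$. Thus it suffices to prove $(S_6 P)^2 \in \Pi(\oa,\ob)$, i.e.\ that the conjugate square $(S_6 P)^2$ is a product of minimal squares.

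First I would reduce to the case where $S$ is a reversed standard word. By \autoref{thm:complete_characterization} and \autoref{prp:standard_solutions}, $S \in \rsst{\alpha} \cup L(\rst{\alpha})$ for its slope $\alpha = [0;\oa+1,\ob+1,\dots]$; since a reversed semistandard word of slope $\alpha$ is a reversed standard word for a slope with a modified tail, I may assume $S = \mirror{s}_k$ for a suitable slope $\beta = [0;a_1,\dots,a_k,b,\dots]$ agreeing with $\alpha$ up to index $k$ (so $\oa,\ob$ are unchanged and $\Lang(\beta) \subseteq \Lang(\oa,\ob)$). Choosing $b = a_{k+1}(\beta)$ large, \autoref{prp:standard_index} gives $S^N \in \Lang(\beta)$ for $N$ as large as needed; hence the long power $(S_6 P)^m$ occurs in $\Lang(\beta)$, and in particular $(S_6 P)^2 \in \Lang(\beta) \subseteq \Lang(\oa,\ob)$. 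The residual case $S = L(\mirror{s}_k)$ I would defer and settle with \autoref{lem:exchange_squares}, exactly as in the proof of \autoref{lem:one_nicely_repetitive}, since the window $(S_6 P)^2$ begins at position $|P| = |S|-|S_6| > 1$ and is therefore unaffected by the swap of the first two letters of $S$.

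The heart of the argument is the minimal square factorization of $(S_6 P)^2$. Let $t = s_{y,\beta}$ be a Sturmian word of slope $\beta$ whose prefix is the occurrence of $(S_6 P)^m$ in $S^N$. By \autoref{thm:square_root} the word $t$ factors uniquely as a product of minimal squares $t = X_1^2 X_2^2 \cdots$, and its square root $\sqrt{t} = X_1 X_2 \cdots = s_{\psi(y),\beta}$ is again Sturmian of slope $\beta$. Because $t$ begins with a high power $(S_6 P)^m$ of the primitive word $S_6 P$ of length $q_k = |S|$, the greedy factorization is governed by a deterministic map on the phases modulo $q_k$, so $\sqrt{t}$ begins with a long power of a primitive word $r$, and $r^2 \in \Lang(\beta)$. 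By \autoref{prp:square_length} the length $|r|$ is a denominator of a convergent or semiconvergent of $\beta$. The key step is then to show, by length bookkeeping, that the period $\pi$ of the factorization of $t$ on its periodic prefix divides $2q_k$: the phase returns to its initial value only at multiples of $q_k$, which forces $q_k \mid 2|r|$, while the continued fraction arithmetic rules out $|r| > q_k$ (a denominator strictly between $q_k$ and $q_{k+1}$ has the form $\ell q_k + q_{k-1}$, and $q_k \nmid 2(\ell q_k + q_{k-1})$ because $q_k \nmid 2q_{k-1}$). Hence $\pi \in \{q_k, 2q_k\}$, so $t$ has a minimal square boundary at $2q_k = 2|S|$, the prefix $(S_6 P)^2 = t[0,2|S|)$ is exactly a product $X_1^2 \cdots X_j^2$ of minimal squares, and $(S_6 P)^2 \in \Pi(\oa,\ob)$.

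I expect the main obstacle to be precisely this last length–bookkeeping step: converting the soft statement ``$\sqrt{t}$ has a periodic prefix'' into the hard conclusion that the relevant factorization period divides $2|S|$ (equivalently, that $|r| = q_k$). An alternative I would keep in reserve is to avoid the Sturmian detour entirely and instead transport the known factorization $S^2 = X_1^2 \cdots X_n^2$ (whose last root is $X_n = S_6$, as $S_6$ is itself a minimal square root forming the suffix of $S$) across the seam at the occurrence of $S_6$ by repeated use of \autoref{lem:exchange_squares}, in the style of the proof of \autoref{lem:periodic_image}; this route is more computational but sidesteps the continued fraction estimates.
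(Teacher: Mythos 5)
Your reduction to the case $S = \mirror{s}_k$ of a modified slope $\beta$ and the identification of the window as $(S_6P)^2$ are both fine, but the step you yourself flag as the heart of the argument contains a genuine gap, and it is not one that sharper bookkeeping can close. The phase-map argument yields only \emph{eventual} periodicity of the greedy minimal-square factorization of $t$: the orbit of the starting phase under $p \mapsto p + |\text{minimal square at phase } p| \bmod q_k$ enters a cycle, but the starting phase itself may be a transient point that never recurs, so ``$\sqrt{t}$ \emph{begins} with a long power of a primitive word'' does not follow. Even granting recurrence, the first-return length is only forced to be an \emph{even} multiple of $q_k$ (the square root of one return block is a power of $r$, and granting your index argument $|r| = q_k$, so the block length $dq_k$ has $d$ even); return lengths $4q_k, 6q_k, \ldots$ are not excluded, and for those there is no factorization boundary at $2q_k$, since boundaries at positions divisible by $q_k$ occur only at multiples of the first-return length. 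The decisive objection is that your argument proves too much: it uses no property of $S_6P$ beyond its being a primitive conjugate of $S$ of length $q_k$ whose high powers lie in $\Lang(\beta)$, and all of that holds verbatim for \emph{every} conjugate of $S$. If the reasoning were sound, every position of $S$ would be repetitive, which is false: immediately after this lemma the paper exhibits $\oa = 1$, $\ob = 0$, $S = 1001001010010$, whose position $4$ is not repetitive. So the conclusion genuinely depends on combinatorial information specific to the suffix $S_6$, which your dynamical setup never injects.

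That missing input is exactly what the paper's proof supplies. It proves by induction along the standard-word recursion the stronger ``intertwining'' statement that conjugation by $S_6$ produces products of minimal squares: $S_6\,\mirror{s}_{k,\ell}^{\,2} = (\text{product of minimal squares})\cdot S_6$, with explicit base cases ($S_6\mirror{s}_2^{\,2} = S_5^2S_6$, plus a parity analysis for $\mirror{s}_{3,\ell}$ using \autoref{lem:product_of_squares}), and with inductive step combining $S_6\mirror{s}_{k-2} = X_1^2\cdots X_n^2\,S_6$ and $S_6\mirror{s}_{k-1} = Y_1^2\cdots Y_m^2\,S_6$ into $S_6\mirror{s}_{k,\ell}^{\,2} = \bigl(X_1^2\cdots X_n^2(Y_1^2\cdots Y_m^2)^{\ell}\bigr)^2 S_6$; the residual case $S = L(\mirror{s}_{k,\ell})$ is then settled with \autoref{lem:exchange_squares}, as you anticipated. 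Your reserve plan is closer in spirit to this, but as stated it also has a hole: \autoref{lem:exchange_squares} swaps two adjacent distinct letters, it does not relocate the starting point of a factorization across a seam, and your assertion that the factorization of $S^2$ ends with $X_n^2 = S_6^2$ is unjustified --- for $\oa = 1$, $\ob = 0$, $S = 10010010$ one has $S^2 = S_5^2 S_4^2 S_3^2$, ending with $S_3^2$. Either route, to succeed, needs the inductive conjugation identity above rather than a local exchange or a dynamical genericity argument.
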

  \begin{proof}
    We prove first by induction that the prefix of the word $S_6 \mirror{s}_{k,\ell}^{\,2}$ of length
    $2|\mirror{s}_{k,\ell}| - |S_6|$ is a product of minimal squares for $k \geq 2$ and $\ell$ such that
    $0 < \ell \leq a_k$. Let us first establish the base cases.

    Recall that $\mirror{s}_2 = 0(10^\oa)^{\ob+1}$ and $\mirror{s}_{3,1} = S_6$. We have that
    \begin{align*}
      S_6 \mirror{s}_2^{\,2} = 10^{\oa+1}(10^\oa)^{\ob+1}(0(10^\oa)^{\ob+1})^2 = S_5^2 10^{\oa+1}(10^\oa)^{\ob+1} = S_5^2 S_6.
    \end{align*}
    In addition, for $0 < \ell \leq a_3$, we have that
    \begin{align*}
      S_6 \mirror{s}_{3,\ell}^{\,2} = S_6 \mirror{s}_{3,1} \mirror{s}_2^{\,\ell-1} \mirror{s}_{3,\ell} = S_6^2 \mirror{s}_2^{\,\ell-1} \mirror{s}_{3,\ell}
      = S_6^2 \mirror{s}_2^{\,\ell-1} \mirror{s}_1 \mirror{s}_2^{\,\ell}.
    \end{align*}
    The case $\ell = 1$ is clear. So let us assume that $\ell > 1$. We have that
    \begin{align*}
      S_6 \mirror{s}_{3,\ell}^{\,2} = S_6^2 \mirror{s}_2^{\,\ell-1} \mirror{s}_1 \mirror{s}_2^{\,\ell-2} \mirror{s}_0 \mirror{s}_1^{\,\ob} S_6,
    \end{align*}
    so it is sufficient to show that the word
    $\mirror{s}_2^{\,\ell-1} \mirror{s}_1 \mirror{s}_2^{\,\ell-2} \mirror{s}_0 \mirror{s}_1^{\,\ob}$ is in $\Pi(\oa,\ob)$.
    
    Suppose first that $\ell-1$ is even. Then as $\mirror{s}_2$ is a solution to \eqref{eq:square}, it is enough to
    show that $\mirror{s}_1 \mirror{s}_2^{\,\ell-2} \mirror{s}_0 \mirror{s}_1^{\,\ob} \in \Pi(\oa,\ob)$. Since
    $\mirror{s}_1 \mirror{s}_2 = L(\mirror{s}_2) \mirror{s}_1$, we have that
    \begin{align*}
      \mirror{s}_1 \mirror{s}_2^{\,\ell-2} \mirror{s}_0 \mirror{s}_1^{\,\ob} = L(\mirror{s}_2)^{\ell-2} \mirror{s}_1 \mirror{s}_0 \mirror{s}_1^{\,\ob}.
    \end{align*}
    Now $\mirror{s}_1 \mirror{s}_0 \mirror{s}_1^{\,\ob} = L(\mirror{s}_2)$. The word $L(\mirror{s}_2)$ is a solution to
    \eqref{eq:square}, so the conclusion follows as $\ell-1$ is even.

    Suppose next that $\ell-1$ is odd. We need to show that
    $\mirror{s}_2 \mirror{s}_1 \mirror{s}_2^{\,\ell-2} \mirror{s}_0 \mirror{s}_1^{\,\ob} \in \Pi(\oa,\ob)$. Using the
    facts $\mirror{s}_1 \mirror{s}_2 = L(\mirror{s}_2) \mirror{s}_1$ and
    $\mirror{s}_1 \mirror{s}_0 \mirror{s}_1^{\,\ob} = L(\mirror{s}_2)$ we obtain that
    \begin{align*}
      \mirror{s}_2 \mirror{s}_1 \mirror{s}_2^{\,\ell-2} \mirror{s}_0 \mirror{s}_1^{\,\ob} = \mirror{s}_2 L(\mirror{s}_2)^{\ell-1}.
    \end{align*}
    By \autoref{lem:product_of_squares} the word $\mirror{s}_2 L(\mirror{s}_2)$ is a product of minimal squares. Since
    $\ell-1$ is odd and $L(\mirror{s}_2)$ is a solution to \eqref{eq:square}, the conclusion follows.

    We have established the base cases. Now for $k \geq 4$ and $0 < \ell \leq a_k$, we have that
    \begin{align*}
      S_6 \mirror{s}_{k,\ell}^{\,2} = S_6(\mirror{s}_{k-2} \mirror{s}_{k-1}^{\,\ell})^2.
    \end{align*}
    By induction $S_6 \mirror{s}_{k-2} = X_1^2 \cdots X_n^2 S_6$ and $S_6 \mirror{s}_{k-1} = Y_1^2 \cdots Y_m^2 S_6$
    for some minimal square roots $X_1, \ldots, X_n, Y_1, \ldots, Y_m$. Therefore
    \begin{align*}
      S_6 \mirror{s}_{k,\ell}^{\,2} = (X_1^2 \cdots X_n^2(Y_1^2 \cdots Y_m^2)^\ell)^2 S_6.
    \end{align*}
    We have thus proved that the prefix of the word $S_6 \mirror{s}_{k,\ell}^{\,2}$ of length
    $2|\mirror{s}_{k,\ell}| - |S_6|$ is a product of minimal squares for $k \geq 2$ and $\ell$ such that
    $0 < \ell \leq a_k$.

    Now if $S = \mirror{s}_{k,\ell}$ for some $k \geq 2$ and $\ell$ such that $0 < \ell \leq a_k$, then the claim is
    clear by the above. Suppose that $S = L(\mirror{s}_{k,\ell})$. Now if
    $S_6 \mirror{s}_{k,\ell} \notin \Pi(\oa,\ob)$, then two applications of \autoref{lem:exchange_squares} (first with
    $u = S_6$, $v = S^2$ and then with $u = S_6L$, $v = S$) show that the claim holds. Assume that $S_6
    \mirror{s}_{k,\ell} \in \Pi(\oa,\ob)$. Since the prefix of $S_6 \mirror{s}_{k,\ell}$ of length
    $2|\mirror{s}_{k,\ell}|-|S_6|$ is in $\Pi(\oa,\ob)$, this means that the prefix of $\mirror{s}_{k,\ell}$ of length
    $|\mirror{s}_{k,\ell}|-|S_6|$ is in $\Pi(\oa,\ob)$. It is sufficient to show that the prefixes of
    $\mirror{s}_{k,\ell}$ and $L(\mirror{s}_{k,\ell})$ of length $2|\mirror{s}_2|$ are in $\Pi(\oa,\ob)$. Since
    $\mirror{s}_1 \mirror{s}_2 = L(\mirror{s}_2) \mirror{s}_1$, the word $\mirror{s}_{4,1} = \mirror{s}_2 \mirror{s}_3$
    has $\mirror{s}_2 L(\mirror{s}_2)$ as a prefix. If $a_3 > 1$, then the word
    $\mirror{s}_3 = \mirror{s}_1 \mirror{s}_2^{\,a_3}$ has $L(\mirror{s}_2) \mirror{s}_2$ as a prefix. Finally if
    $a_3 = 1$, then the word $\mirror{s}_{5,1} = \mirror{s}_3 \mirror{s}_4 = \mirror{s}_1 \mirror{s}_2 \mirror{s}_4$
    has $L(\mirror{s}_2)^2$ as a prefix. \autoref{lem:product_of_squares} shows that $\mirror{s}_2 L(\mirror{s}_2)$,
    $L(\mirror{s}_2) \mirror{s}_2$, and $L(\mirror{s}_2)^2$ are all in $\Pi(a,b)$. The conclusion follows.
  \end{proof}

  There is no clear pattern for other positions in $\B_S$; it depends on the word $S$ if a position in $\B_S$ is
  repetitive or not. The position $|S| - |S_6|$ is not always nicely repetitive. Suppose that $\oa = 1$, $\ob = 0$, and
  $S = \mirror{s}_{3,3} = 10(010)^3$. Then the factor beginning at position $|S| - |S_6| = 6$ of $S^3$ of length
  $|S^2|$ is a product of minimal squares: $(10010)^2 \cdot (010)^2 \cdot (100)^2$. As
  $|(10010)^2 \cdot (010)^2| = 16 = |S^2| - 6$, the position $6$ is not nicely repetitive.
  
  Since none of the minimal squares can be a proper prefix of another minimal square, it is easy to factorize words as
  products of minimal squares from left to right. Next we consider what happens if we start to backtrack from a given
  position to the left.

  \begin{lemma}[Backtracking Lemma]\label{lem:backtracking}
    Let $X, Y_1, \cdots Y_n$ be minimal square roots. Let $w$ be a word having both of the words $X^2$ and
    $Y_1^2 \cdots Y_n^2$ as suffixes. If $|X| > |Y_n|$, then $|X| > |Y_1 \cdots Y_n|$ and the word $Y_1 \cdots Y_n$ is
    a suffix of $X$.
  \end{lemma}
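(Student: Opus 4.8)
The plan is to induct on $n$, with the key reduction being to peel off the \emph{leftmost} square $Y_1^2$ rather than the rightmost one, since dropping $Y_1^2$ leaves both hypotheses intact. The base case $n=1$ is immediate: as $|Y_1^2|<|X^2|$, the shorter suffix $Y_1^2$ of $w$ is a suffix of $X^2=XX$, so $Y_1$ (a suffix of $Y_1^2$ of length $<|X|$) is a suffix of the last block $X$. Before the main argument I would record the strict inequality as a consequence of the suffix claim: once $Y_1\cdots Y_n$ is known to be a suffix of $X$, equality $Y_1\cdots Y_n=X$ would force the two equal-length suffixes $X^2=(Y_1\cdots Y_n)^2$ and $Y_1^2\cdots Y_n^2$ of $w$ to coincide; but $X^2$ is a minimal square, so its only factorization into minimal squares is $(X^2)$ itself, whereas $Y_1^2\cdots Y_n^2$ exhibits $n$ factors. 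By uniqueness of the minimal-square factorization this forces $n=1$ and $Y_1=X$, contradicting $|X|>|Y_n|$. Hence $|Y_1\cdots Y_n|<|X|$, and it suffices to prove the suffix claim.

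For the inductive step I would apply the induction hypothesis to $w$ together with $X$ and the shorter sequence $Y_2,\dots,Y_n$: indeed $Y_2^2\cdots Y_n^2$ is a suffix of $Y_1^2\cdots Y_n^2$, hence of $w$, while $X^2$ is still a suffix of $w$ and $|X|>|Y_n|$. This yields that $u:=Y_2\cdots Y_n$ is a suffix of $X$ with $|u|<|X|$; write $X=Zu$ with $Z$ nonempty. Since $|u|<|X|$, the word $Y_2^2\cdots Y_n^2$ (of length $2|u|<2|X|$) is a suffix of $X^2=ZuZu$, so the minimal square $Y_1^2$ occurs in $w$ ending exactly where this suffix begins, i.e.\ at distance $2|u|$ from the right end. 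Everything then reduces to a single \emph{prepend} step: to show that $Y_1$ is a suffix of $Z$, equivalently that $Y_1u$ is a suffix of $X$.

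The prepend step is the main obstacle. Because $Y_2^2\cdots Y_n^2$ is a suffix of $X^2$ and $X^2=XX$ has period $|X|$, the letters of $w$ lying immediately to the left of $Y_2^2\cdots Y_n^2$ agree with those of $X^2$, so in particular the suffix of $Z$ of any given length is completely determined by $X$; the content of the step is that the minimal square $Y_1^2$ abutting the junction is forced to have its root $Y_1$ equal to that suffix of $Z$. I expect periodicity alone to be insufficient here, since the period-$|X|$ shift maps the suffix-of-$Z$ region of the second copy of $X$ onto the identical region of the first copy and so yields no new information. Instead the argument must use the structure of the six minimal square roots, exactly as in the proof of \autoref{lem:exchange_squares}: no minimal square is a proper prefix of another, and a minimal square cannot contain $S_6$ as an interior factor. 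Concretely I would split according to which of $S_1,\dots,S_6$ the root $Y_1$ is and determine, from the known letters surrounding the $Z$--$u$ junction of $X=Zu$, the unique minimal square that can end there; this simultaneously pins down $Y_1$ as the suffix of $Z$ and excludes the overhang case $|Y_1|>|Z|$ (which would make $|Y_1\cdots Y_n|>|X|$ and is the delicate possibility to rule out). Once $Y_1$ is shown to be a suffix of $Z$, the word $Y_1\cdots Y_n=Y_1u$ is a suffix of $X$, which closes the induction, and the strict length bound follows from the preliminary remark above.
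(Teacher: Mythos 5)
Your overall skeleton is sound: inducting on $n$ by peeling off the \emph{leftmost} square is legitimate (both hypotheses survive the peeling), the base case is correct, and your derivation of the strict inequality $|Y_1\cdots Y_n|<|X|$ from the suffix claim via uniqueness of minimal-square factorizations is valid. But there is a genuine gap exactly where you flag it: the ``prepend step'' is never carried out, and it is not a routine verification --- it is the entire content of the lemma. Your plan to ``split according to which of $S_1,\dots,S_6$ the root $Y_1$ is and determine, from the known letters surrounding the $Z$--$u$ junction, the unique minimal square that can end there'' does not work as stated in precisely the case you call delicate, the overhang case $|Y_1|>|Z|$: there $Y_1^2$ protrudes past the left end of $X^2$ into letters of $w$ about which nothing whatsoever is known, so no reading of known letters can identify a unique square. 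What one must actually do is take each candidate root $Y$ with $|Y|>|Z|$ and show that the suffix of $Y^2$ of length $2|Z|$ cannot coincide with the prefix of $X^2$ of length $2|Z|$; this comparison depends jointly on $X$, on $Y$, and on the offset $|u|$, and in concrete terms requires arithmetic in the parameters $\oa$ and $\ob$ (parity of $\oa$, parity of $\ob$, the degenerate cases $\ob=0$ and $\ob=1$, \dots). Until those cases are written down, the lemma is not proved; citing the flavor of \autoref{lem:exchange_squares} does not substitute for them.

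For comparison, the paper organizes this unavoidable case analysis differently and avoids the induction altogether: it notes $X\neq S_1$, takes $n$ as large as possible, fixes $X\in\{S_2,\dots,S_6\}$, and reads the forced chain of minimal squares backwards from the common right end of $X^2$ and $Y_1^2\cdots Y_n^2$. Because $X^2$ is an explicit word, the admissible squares are pinned down one position at a time (for instance, for $X=S_5$ with $\ob$ even the backward chain is $(S_4^2)^{\ob/2}$, then $S_1^2$, then $(S_2^2)^{\ob/2}$, then $(S_1^2)^{\lfloor(\oa-1)/2\rfloor}$), consistency of each candidate with the known suffix $X^2$ of $w$ is exactly how overhanging candidates are excluded, and the chain provably terminates with no admissible predecessor while its root product is still a suffix of $X$. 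Your induction would have to reproduce equivalent letter-level arguments at every step --- indeed, knowing which words $u=Y_2\cdots Y_n$ can actually occur is itself an output of that backward analysis --- so the reorganization buys nothing until the combinatorial core is supplied.
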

  \begin{proof}
    Suppose that $|X| > |Y_n|$. We may assume that $n$ is as large as possible. We prove the lemma by considering
    different options for the word $X$.

    Clearly we cannot have that $X = S_1$. Let $X = S_4$. Now $X^2$ can have a proper minimal square suffix only if
    $\oa>1$. If $\oa$ is even, then we must have that
    \begin{align*}
      X^2 = 10^\oa 1(S_1^2)^{\oa/2} \ \text{ and } \ Y_{n-\oa/2+1} = \ldots = Y_n = S_1.
    \end{align*}
    The suffix $(S_1)^{\oa/2}$ of $w$ cannot be preceded by $S_2^2$ as otherwise
    $w$ would have $S_2 S_1^\oa = 010^{2\oa-1}$ as a suffix; this is not possible as $2\oa-1 > \oa$. Therefore there is
    no choice for $Y_{n-\oa/2}$. Thus $|Y_1^2 \cdots Y_n^2| < |X^2|$ and $Y_1 \cdots Y_n$ is a suffix of $X$. If $\oa$
    is odd, then similarly
    \begin{align*}
      X^2 = 10^\oa10(S_1^2)^{(\oa-1)/2} \ \text{ and } \ Y_{n-(\oa-1)/2+1} = \ldots = Y_n = S_1.
    \end{align*}
    Again there is no choice for $Y_{n-(\oa-1)/2}$, and the conclusion holds. Similar considerations show that the
    conclusion holds if $X \in \{S_2, S_3\}$.

    Let then $X = S_5$. It is obvious that now $Y_n \in \{S_1, S_3, S_4\}$. If $Y_n = S_1$ or $\ob=0$, then like above
    $Y_1 = \ldots = Y_n = S_1$ and $Y_1 \cdots Y_n$ is a suffix of $X$. We may thus suppose that $\ob > 0$. Say
    $Y_n = S_3$. Then we must have $\ob = 1$ and $X^2 = 10^{\oa+1}10^{\oa-1} Y_n^2$. Like above, the remaining minimal
    square roots $Y_i$ with $i < n$ must equal to $S_1$ and there must be $\lfloor (\oa-1)/2 \rfloor$ of them. Since
    there is no further choice, the conclusion holds as clearly $Y_1 \cdots Y_n$ is a suffix of $X$. Suppose then that
    $\ob > 1$. The next case is $Y_n = S_4$. Assume first that $\ob$ is even. Then it is straightforward to see that
    necessarily
    \begin{align*}
      Y_{n-\ob/2+1} = \ldots = Y_n = S_4 \ \text{ and } X^2 = 10^{\oa+1}(10^\oa)^\ob10^{\oa+1}(S_4^2)^{\ob/2}.
    \end{align*}
    Thus $Y_{n-\ob/2} = S_1$ and, further, it must be that
    \begin{align*}
      Y_{n-\ob} = \ldots = Y_{n-\ob/2-1} = S_2 \ \text{ and } \ X^2 = 10^{\oa+1}10^{\oa-1} (S_2^2)^{\ob/2} S_1^2 (S_4^2)^{\ob/2}.
    \end{align*}
    Like before, the remaining minimal squares $Y_i$ with $i < n-\ob$ must equal to $S_1$ and there must be
    $\lfloor (\oa-1)/2 \rfloor$ of them. Therefore
    \begin{align*}
      Y_1 \cdots Y_n = S_1^{\lfloor (\oa-1)/2 \rfloor} S_2^{\ob/2} S_1 S_4^{\ob/2} = 0^{\lfloor (\oa-1)/2 \rfloor + 1}(10^\oa)^\ob
    \end{align*}
    is a suffix of $X$, so the conclusion holds. If $\ob$ is odd, then in a similar fashion
    \begin{align*}
      X^2 = 10^{\oa+1}10^{\oa-1} (S_2^2)^{(\ob-1)/2} S_3^2 (S_4)^{(\ob-1)/2},
    \end{align*}
    so $Y_{n-(\ob-1)/2} = S_3$ and
    \begin{align*}
      Y_{n-(\ob-1)/2+1} = \ldots = Y_n = S_4 \ \text{ and } \ Y_{n-\ob+1} = \ldots = Y_{n-(\ob-1)/2-1} = S_2.
    \end{align*}
    Again, the final $\lfloor (\oa-1)/2 \rfloor$ minimal square roots must equal $S_1$. Since 
    \begin{align*}
      Y_1 \cdots Y_n = S_1^{\lfloor (\oa-1)/2 \rfloor} S_2^{(\ob-1)/2} S_3 S_4^{(\ob-1)/2} = 0^{\lfloor (\oa-1)/2 \rfloor + 1}(10^\oa)^\ob
    \end{align*}
    is a suffix of $X$, the conclusion holds.

    If $X = S_6$, then it is clear that $Y_n \neq S_5$. The conclusion follows as in the case $X = S_5$.
  \end{proof}

  The next lemma is useful in the proof of \autoref{thm:preserved_periodic}.

  \begin{lemma}\label{lem:can_not_continue}
    Let $w$ be an infinite product of the words $S$ and $L$ and $\ell_1$, $\ell_2$, $\ell_3$ be positions of $w$ such
    that $\ell_1 < \ell_2 < \ell_3$. Let $r$ be the largest integer such that $\ell_1 \geq r|S|$. If
    \begin{itemize}
      \item $w[\ell_1,\ell_3-1], w[\ell_2,\ell_3-1] \in \Pi(\oa,\ob)$,
      \item $\ell_1 - r|S| \in \B_S$, and
      \item $\ell_2 \leq (r+1)|S|$,
    \end{itemize}
    then for all $u \in \Pi(\oa,\ob)$ such that $uw[\ell_2,\ell_3-1]$ is a suffix of $w[0,l_3-1]$ we have that
    $|uw[\ell_2,\ell_3-1]| < |w[\ell_1,\ell_3-1]|$.
  \end{lemma}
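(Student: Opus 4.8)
The plan is to argue by contradiction and to reduce everything to the impossibility of a minimal square straddling a boundary that every $\Pi(\oa,\ob)$-factorization must respect at $\ell_2$. First I would reformulate the conclusion. Writing $u = w[\ell_0,\ell_2-1]$ with $\ell_0 = \ell_2 - |u|$, the hypothesis that $uw[\ell_2,\ell_3-1]$ is a suffix of $w[0,\ell_3-1]$ just says $u = w[\ell_0,\ell_2-1]$, and the desired inequality $|uw[\ell_2,\ell_3-1]| < |w[\ell_1,\ell_3-1]|$ is equivalent to $\ell_0 > \ell_1$. So I assume for contradiction that some $u \in \Pi(\oa,\ob)$ yields $\ell_0 \le \ell_1$. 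Since $u \in \Pi(\oa,\ob)$ and $w[\ell_2,\ell_3-1] \in \Pi(\oa,\ob)$, their concatenation $w[\ell_0,\ell_3-1]$ lies in $\Pi(\oa,\ob)$; because the factorization of a word into minimal squares is unique, the factorization $G$ of $w[\ell_0,\ell_3-1]$ is the juxtaposition of those of $u$ and of $w[\ell_2,\ell_3-1]$, so $G$ has a square boundary exactly at $\ell_2$.

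Two facts then drive the contradiction. The first is local: since $\ell_1 - r|S| \in \B_S$ and both $S$ and $L$ end in $S_6$, the letters $w[\ell_1],\dots,w[(r+1)|S|-1]$ form a suffix of $S_6$, so by the defining property of $\B$ no minimal square of length at most $(r+1)|S|-\ell_1$ begins at $\ell_1$; as $w[\ell_1,\ell_3-1]\in\Pi(\oa,\ob)$ does begin with a minimal square $A^2$, we get $|A^2| > (r+1)|S|-\ell_1 \ge \ell_2 - \ell_1$, so $A^2$ ends at $e := \ell_1 + |A^2| > \ell_2$ and hence spans across $\ell_2$. The second fact is the key consequence I would extract from \autoref{lem:backtracking}: if $W \in \Pi(\oa,\ob)$ has last minimal square $Y^2$, then $Y^2$ is the \emph{longest} minimal square that is a suffix of $W$. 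Indeed, if some minimal square $X^2$ with $|X| > |Y|$ were a suffix of $W$, then \autoref{lem:backtracking} applied with $Y_1^2\cdots Y_n^2 = W$ would give $|X| > |Y_1\cdots Y_n| = |W|/2$, contradicting $|X^2| \le |W|$.

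Using this longest-suffix principle I would align the factorizations of $w[\ell_0,\ell_3-1]$ and $w[\ell_1,\ell_3-1]$ from the right, peeling off equal last squares. At a stage where the truncated words share right endpoint $\mathrm{cur}$, let $Q^2$ be the last square of the longer word (a square of $G$). If $Q^2$ is a suffix of the shorter word $w[\ell_1,\mathrm{cur}-1]$, then applying the principle to both words forces $Q$ to equal the shorter word's last square, and the peeling continues. If $Q^2$ does not fit, then $Q^2$ begins strictly before $\ell_1$; since it ends at $\mathrm{cur}-1$ with $\mathrm{cur} \ge e > \ell_2$, the square $Q^2$ contains $\ell_2$ in its interior, contradicting the boundary of $G$ at $\ell_2$. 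Because the $A$-boundaries skip from $e$ down to $\ell_1$, the peeling either fails somewhere above $e$ (the straddle case just described) or succeeds down to $\ell_1$, forcing a boundary of $G$ at $\ell_1$; the $G$-square beginning there is then the minimal square at $\ell_1$, namely $A^2$, which ends at $e > \ell_2$ and so straddles the $G$-boundary at $\ell_2$ — again impossible. Every branch reaches a contradiction, proving the lemma.

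The step I expect to be the main obstacle is precisely this right-to-left alignment. Alignment of minimal-square factorizations from the right is \emph{false in general}: for instance $S_1^2 = 00$ is a suffix of $S_4^2 = 10^{\oa}10^{\oa}$ when $\oa \ge 2$, yet no boundary of the factorization of $S_4^2$ sits at the start of that suffix. Thus the argument cannot simply invoke uniqueness, and \autoref{lem:backtracking} supplies the one genuinely combinatorial input. The entire force of the hypotheses $\ell_2 \le (r+1)|S|$ and $\ell_1 - r|S| \in \B_S$ is to guarantee both that $A^2$ overshoots $\ell_2$ and that $G$ carries a real boundary at $\ell_2$; these are exactly what convert every possible breakdown of the alignment into a minimal square of $G$ engulfing $\ell_2$. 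The care needed is in the bookkeeping of the peeling — verifying that the peeled squares stay beyond $\ell_2$ until the shorter word is spent, and handling the boundary cases where $Q^2$ ends at or just past $\ell_2$.
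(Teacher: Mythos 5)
Your proof is correct. It rests on the same two pillars as the paper's proof: \autoref{lem:backtracking} and the fact that $\ell_1 - r|S| \in \B_S$ together with $\ell_2 \le (r+1)|S|$ forces the first minimal square of $w[\ell_1,\ell_3-1]$ to end strictly past $\ell_2$ (the paper uses this same fact in contrapositive form, to exclude that the factorization of $w[\ell_2,\ell_3-1]$ right-aligns perfectly inside that of $w[\ell_1,\ell_3-1]$). The packaging, however, differs. The paper argues positively: writing $w[\ell_1,\ell_3-1]=X_1^2\cdots X_n^2$ and $w[\ell_2,\ell_3-1]=Y_1^2\cdots Y_m^2$, it takes the maximal $j$ with $X_{n-m+j}\neq Y_j$, orients the inequality $|Y_j|<|X_{n-m+j}|$ by one application of \autoref{lem:backtracking}, and then, for each admissible $u$, applies \autoref{lem:backtracking} a second time, to $X_{n-m+j}^2$ against the factorization of $u$ followed by $Y_1^2\cdots Y_j^2$, obtaining $|u\,w[\ell_2,\ell_3-1]|<|w[\ell_1,\ell_3-1]|$ directly. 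You argue by contradiction: a violating $u$ produces, via uniqueness of minimal-square factorizations, a factorization $G$ of $u\,w[\ell_2,\ell_3-1]$ with a forced boundary at $\ell_2$, and your right-to-left peeling, driven by the longest-minimal-square-suffix principle (a correct and clean corollary of \autoref{lem:backtracking}), shows that some square of $G$ must straddle $\ell_2$. Your peeling is the analogue of the paper's maximal disagreement index $j$, and your case analysis is complete because every boundary of the factorization of $w[\ell_1,\ell_3-1]$ other than $\ell_1$ itself is at least $e>\ell_2$. What your route buys: a reusable corollary of the Backtracking Lemma, and a transparent reading of the hypotheses as exactly what is needed to turn any violation into a minimal square engulfing the forced boundary at $\ell_2$. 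What the paper's route buys: brevity --- no induction and no contradiction, just two direct applications of the lemma yielding the length bound in the positive form in which it is later used.
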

  \begin{proof}
    Let $v = w[\ell_1,\ell_3-1]$ and $u = w[\ell_2,\ell_3-1]$. Since $v,u \in \Pi(\oa,\ob)$, we may write
    $v = X_1^2 \cdots X_n^2$ and $u = Y_1^2 \cdots Y_m^2$ for some minimal square roots $X_i$ and $Y_i$. If $n \geq m$
    and $X_{n-m+i} = Y_i$ for all $i \in \{1,\ldots,m\}$, then as $|v| > |u|$, we must have that $n > m$. This means
    that the prefix $X_1^2$ of $v$ ends before the position $\ell_2$, that is, $\ell_1+|X_1^2| < \ell_2 \leq (r+1)|S|$.
    This contradicts the fact that $\ell_1-r|S| \in \B_S$. Therefore as $|v| > |u|$, we we conclude that
    there exists maximal $j \in \{1,\ldots,m\}$ such that $X_{n-m+j} \neq Y_j$. If $|Y_j| > |X_{n-m+j}|$, then by the
    \nameref{lem:backtracking} we have that $|X_1^2 \cdots X_{n-m+j}^2| < |Y_j^2|$. This is not possible as
    $|v| > |u|$. Therefore $|Y_j| < |X_{n-m+j}|$. Let $z \in \Pi(\oa,\ob)$ be such that $zu$ is a suffix of $w[0,l_3-1]$. Write
    $z = Z_1^2 \cdots Z_t^2$ for minimal square roots $Z_i$. Applying the \nameref{lem:backtracking} to the words
    $X_{n-m+j}^2$ and $Z_1^2 \cdots Z_t^2 Y_1^2 \cdots Y_j^2$
    yields that $|Z_1^2 \cdots Z_t^2 Y_1^2 \cdots Y_j^2| < |X_{n-m+j}^2|$. It follows that $|zu| < |v|$.
  \end{proof}

  Finally we can give a proof of \autoref{thm:preserved_periodic}.

  \begin{proof}[Proof of \autoref{thm:preserved_periodic}]
    Let $w \in \Omega$. Since $\Gamma$ is uniformly recurrent and a product of the words $S$ and $L$, there exists a
    word $w' \in \Omega$ such that $w'$ is a product of $S$ and $L$ and $w = T^\ell(w')$ for some $\ell$ such that
    $0 \leq \ell < |S|$ (recall that a product of $S$ and $L$ occurring in $\Gamma$ having length at least $6|S|$ must
    synchronize to the factorization of $\Gamma$ as a product of $S$ and $L$). If $\ell = 0$, then the conclusion holds
    by \autoref{lem:product_sl_preserved}, so we can assume that $\ell > 0$. Write $w$ as a product of minimal squares:
    $w = X_1^2 X_2^2 \cdots$. Let
    \begin{align*}
      r_1 = \max\{\{0\} \cup \{i \in \{1,2,\ldots\}\colon |X_1^2 \cdots X_i^2| \leq |S| - \ell\}\}.
    \end{align*}
    If $r_1 > 0$, then set $\ell_1 = \ell + |X_1^2 \cdots X_{r_1}^2|$. If $r_1 = 0$, then we set $\ell_1 = \ell$. By
    the maximality of $r_1$ and by the definition of the set $\B_S$, it follows that $\ell_1 \in \B_S \cup \{|S|\}$
    (indeed, the word $L$ also has $S_6$ as a suffix). See \autoref{fig:l_positions}.

    To aid comprehension we have separated different parts of the proof as distinct claims with their own proofs. Any
    new definitions and assumptions given in one of the subproofs are valid only up to the end of the subproof.

    \begin{claim}
      If $\ell_1 = |S|$, then $\sqrt{w} \in \Omega$.
    \end{claim}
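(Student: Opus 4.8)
The plan is to reduce this case to the already-settled product case, \autoref{lem:product_sl_preserved}, by peeling off the first block. First I would unpack the hypothesis $\ell_1 = |S|$. Writing $B_1 \in \{S,L\}$ for the first block of $w'$, the definition of $\ell_1$ gives $X_1^2 \cdots X_{r_1}^2 = w[0,|S|-\ell-1]$, which is exactly the suffix $B_1[\ell,|S|-1]$ of the first block; thus this suffix lies in $\Pi(\oa,\ob)$ and ends precisely at the block boundary. Consequently $w = B_1[\ell,|S|-1]\cdot w''$, where $w'' = T^{|S|}(w')$ is again a product of $S$ and $L$ and hence a member of $\Omega$. Since the factorization of $w$ into minimal squares is unique and breaks at position $|S|-\ell$, it continues as the factorization of $w''$, so $\sqrt{w} = p\,\sqrt{w''}$ with $p = \sqrt{B_1[\ell,|S|-1]} = X_1 \cdots X_{r_1}$. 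By \autoref{lem:product_sl_preserved} we have $\sqrt{w''} \in \Omega$, and since $\Omega$ is minimal, $\Lang(\sqrt{w''}) = \Lang(\Gamma)$. Note that this very first step is where the hypothesis $\ell_1 = |S|$ enters decisively: it is what makes $w''$ an honest product of $S$ and $L$, and it is precisely this block alignment that fails at the nicely repetitive positions which later yield periodic square roots.

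Because $\sqrt{w''} \in \Omega$, proving $\sqrt{w} = p\,\sqrt{w''} \in \Omega$ amounts to checking that $p\,t \in \Lang(\Gamma)$ for every prefix $t$ of $\sqrt{w''}$ (the factors of $\sqrt{w}$ not already contained in $\sqrt{w''}$ are exactly those meeting $p$, hence contained in such words $p\,t$). It suffices to treat the cofinal family $t = \sqrt{Q}$, where $Q$ ranges over the prefixes of $w''$ consisting of an even number of blocks; then $Q \in \Pi(\oa,\ob)$, the word $B_1[\ell,|S|-1]\,Q$ is a product-of-minimal-squares prefix of $w$, and $p\,t = \sqrt{B_1[\ell,|S|-1]\,Q}$.

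To locate $p\,t$ inside $\Lang(\Gamma)$ I would prepend a full block. By minimality of $\Omega$, together with the synchronization of long products of $S$ and $L$ to the block structure of $\Gamma$, there is a block $C$ with $Cw' \in \Omega$. Then $CB_1Q$ is a product of an even number of blocks, so $CB_1Q \in \Pi(\oa,\ob)$ is a factor of $\Gamma$ with $\sqrt{CB_1Q} = C\,t \in \Lang(\Gamma)$. Now $B_1[\ell,|S|-1]\,Q$ is a suffix of $CB_1Q$, and both words are products of minimal squares; the crucial step is to show that the position $|C|+\ell = |S|+\ell$ is therefore a breakpoint of the (unique) minimal-square factorization of $CB_1Q$. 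Granting this, the tail of that factorization coincides with the factorization of $B_1[\ell,|S|-1]\,Q$, so taking square roots exhibits $p\,t$ as a suffix of $\sqrt{CB_1Q} = C\,t$; as $\Lang(\Gamma)$ is closed under factors, $p\,t \in \Lang(\Gamma)$. Letting $Q$ grow then gives $\Lang(\sqrt{w}) \subseteq \Lang(\Gamma)$, that is, $\sqrt{w} \in \Omega$.

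The main obstacle is exactly this alignment of the two factorizations: a priori a minimal square of $CB_1Q$ could straddle position $|S|+\ell$ even though the suffix starting there is itself a product of minimal squares. I expect to rule this out with the \nameref{lem:backtracking}, applied as in \autoref{lem:can_not_continue}, by backtracking from the common right endpoint of the two factorizations and using that no minimal square begins strictly inside another in this setting; this is the only point requiring genuine work, the remainder being bookkeeping about shifts and block boundaries.
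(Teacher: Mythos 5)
Your skeleton is the paper's own: peel off the partial first block, prepend a full block so that the extended word is a product of $S$ and $L$, apply \autoref{lem:product_sl_preserved}, and transfer $\sqrt{w}$ into the square root of the extended word via the \nameref{lem:backtracking} (the paper does this in one stroke with the infinite word $zw'$, $z \in \{S,L\}$, instead of your finite prefixes $CB_1Q$, but that difference is immaterial). The genuine gap is the step you yourself flag as crucial: the assertion that position $|S|+\ell$ must be a breakpoint of the minimal-square factorization of $CB_1Q$ is false, and the straddling you plan to ``rule out'' actually happens. Take $\oa=1$, $\ob=0$ and the seed $S=\mirror{s}_4=01010010$ (the example seed of \autoref{sec:open_problems}), so $L=10010010$, and take $\ell=2$ and $B_1=S$ (the block $S$ does occur as a first block, since $\Gamma_1$ begins with the blocks $LSS\cdots$). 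The suffix of $B_1$ starting at position $2$ is $010010=S_3^2$, so $r_1=1$, $\ell_1=|S|$ and $p=S_3=010$: we are squarely in the case of this claim. But for either choice of $C\in\{S,L\}$ the unique minimal-square factorization of $CB_1$ is $SS=S_2^2S_1^2S_6^2$ or $LS=S_5^2S_6^2$, and in both the last square $S_6^2$ occupies positions $6$ through $15$ and straddles position $|S|+\ell=10$, even though the suffix of $CB_1$ beginning there is $S_3^2\in\Pi(\oa,\ob)$. So no argument can establish your alignment claim, and the auxiliary principle you cite (``no minimal square begins strictly inside another in this setting'') is refuted by the same example: an occurrence of $S_3^2$ begins strictly inside the factorization square $S_6^2$.

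The fix is to ask the \nameref{lem:backtracking} for what it actually gives, which is weaker than alignment but sufficient: if $u,v\in\Pi(\oa,\ob)$ and $u$ occurs as a suffix of $v$, then $\sqrt{u}$ is a suffix of $\sqrt{v}$. One compares the two factorizations from their common right end, stripping equal last squares; when the last squares first differ, the \nameref{lem:backtracking} says the entire remaining shorter product has its square root sitting as a suffix inside the single longer square root. In the example this yields exactly the right conclusion: $\sqrt{S_3^2}=010$ is a suffix of $\sqrt{LS}=S_5S_6=L$ and of $\sqrt{SS}=S_2S_1S_6=S$, even though the factorizations never line up. This is how the paper closes the proof: writing $zw'=Y_1^2\cdots Y_n^2X_{r_1+1}^2X_{r_1+2}^2\cdots$, the \nameref{lem:backtracking} gives that $X_1\cdots X_{r_1}$ is a suffix of $Y_1\cdots Y_n$, whence $\sqrt{w}$ is a suffix of $\sqrt{zw'}\in\Omega$ and $\Lang(\sqrt{w})\subseteq\Lang(\sqrt{zw'})=\Lang(\Gamma)$. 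With this replacement your argument goes through, and the bookkeeping with finite prefixes $Q$ can be dropped entirely.
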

    \begin{proof}
      Suppose that $\ell_1 = |S|$. By the definition of the number $r_1$, we have that $r_1 > 0$ and the word
      $T^{|S|-\ell}(w) = T^{|S|}(w') = X_{r_1+1}^2 X_{r_2+2}^2 \cdots$ is a product of the words $S$ and $L$. Now
      $zw' \in \Omega$ where $z \in \{S, L\}$. Since $zw'$ is a product of $S$ and $L$, by
      \autoref{lem:product_sl_preserved} $\sqrt{zw'} \in \Omega$. By the choice of $S$ as a solution to
      \eqref{eq:square} and by \autoref{lem:product_of_squares}, the first $|S^2|$ letters of $zw'$ can be written as a
      product of minimal squares. Hence $zw' = Y_1^2 \cdots Y_n^2 X_{r_1+1}^2 X_{r_1+2}^2 \cdots$ for some minimal
      square roots $Y_1, \ldots, Y_n$. By the \nameref{lem:backtracking}, we have that $X_1 \cdots X_{r_1}$ is a suffix
      of $Y_1 \cdots Y_n$. Thus the word $\sqrt{w} = X_1 \cdots X_{r_1} X_{r_1+1} \cdots$ is a suffix of the word
      $\sqrt{zw'} = Y_1 \cdots Y_n X_{r_1} X_{r_1+1} \cdots$. Therefore
      $\Lang(\sqrt{w}) \subseteq \Lang(\sqrt{zw'}) = \Lang(\Gamma)$, so $\sqrt{w} \in \Omega$.
    \end{proof}

    We assume that $\ell_1 \in \B_S$. Now either the position $\ell_1$ of $S$ is nicely repetitive or it is not.

    \begin{claim}\label{cl:nicely_repetitive}
      If $\ell_1$ is a nicely repetitive position of $S$, then $\sqrt{w}$ is periodic with minimal period conjugate to
      $S$.
    \end{claim}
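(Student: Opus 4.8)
The plan is to obtain the periodicity of the tail of $\sqrt w$ straight from \autoref{lem:periodic_image} and then to promote the resulting \emph{ultimate} periodicity to \emph{pure} periodicity. First I would apply \autoref{lem:periodic_image} to the word $w'$ (which is a product of $S$ and $L$) at the position $\ell_1$, which by assumption lies in $\B_S$ and is nicely repetitive. This yields a word $P$ that is conjugate to $S$, has $|P| = |S|$, and satisfies $\sqrt{T^{\ell_1}(w')} = P^\omega$. Since $w = T^\ell(w')$ and, by the definitions of $r_1$ and $\ell_1$, the shift $\ell_1 - \ell$ equals $|X_1^2 \cdots X_{r_1}^2|$ (read as $0$ when $r_1 = 0$), the word $T^{\ell_1}(w')$ is exactly $w$ with its first $r_1$ minimal squares deleted. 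Hence
\begin{align*}
  \sqrt w = X_1 \cdots X_{r_1}\,\sqrt{T^{\ell_1}(w')} = X_1 \cdots X_{r_1}\,P^\omega.
\end{align*}
If $r_1 = 0$ this already proves the claim, so the entire difficulty is concentrated in the finite prefix $X_1 \cdots X_{r_1}$.

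The key reduction is the following. The squares $X_1^2 \cdots X_{r_1}^2$ fill the positions $[\ell,\ell_1]$ inside the first block of $w'$, so $|X_1 \cdots X_{r_1}| = \tfrac12(\ell_1 - \ell) < \tfrac12|S| < |P|$. Therefore it suffices to prove that $X_1 \cdots X_{r_1}$ is a \emph{suffix} of $P$: once this is known, writing $P = t\,(X_1 \cdots X_{r_1})$ yields $\sqrt w = T^{|t|}(P^\omega)$, a shift of a purely periodic word and hence itself purely periodic. As $S$ is primitive, so is the conjugate $P$ and so is every conjugate of $P$; thus the minimal period of $\sqrt w$ is a conjugate of $P$, which is conjugate to $S$, exactly as claimed.

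The step I expect to be the main obstacle is proving that $X_1 \cdots X_{r_1}$ is a suffix of $P$. I would argue by comparing the factorization of $w$ near $\ell_1$ with the factorization of the bi-infinite periodic word $U = {}^\omega S^\omega$. Because $\ell_1$ is nicely repetitive and $|S^2| \equiv 0 \pmod{|S|}$, the factorization furnished by \autoref{lem:periodic_image} is block-periodic: the block $U[\ell_1 - |S^2|,\ell_1)$ decomposes into minimal squares $W_1^2 \cdots W_t^2$ with $W_1 \cdots W_t = P$, and its last square $W_t^2$ ends precisely at $\ell_1$. On the other hand, $X_{r_1}^2$ also ends at $\ell_1$, and $U$ agrees with $w'$ on $[\ell,\ell_1)$ up to the swap of the first two letters that distinguishes $S$ from $L$ when the initial block is $L$; this swap lies outside $[\ell,\ell_1)$ except in a boundary case, which I would treat by comparing instead with ${}^\omega L^\omega$. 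Hence $U[\,\cdot\,,\ell_1)$ has both $X_{r_1}^2$ and $W_1^2 \cdots W_t^2$ as suffixes.

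Now I would peel matching squares from the right using the \nameref{lem:backtracking}. The bound $|X_{r_1}| \le |S_6| < |S| = |W_1 \cdots W_t|$ rules out $|X_{r_1}| > |W_t|$, since that case of the \nameref{lem:backtracking} would force $|X_{r_1}| > |W_1 \cdots W_t| = |S|$. Thus either $|W_t| > |X_{r_1}|$ — in which case the \nameref{lem:backtracking} immediately gives that $X_1 \cdots X_{r_1}$ is a suffix of $W_t$, hence of $P$ — or $|W_t| = |X_{r_1}|$, forcing $W_t = X_{r_1}$ because the minimal square ending at the fixed position $\ell_1$ of a given word is unique, after which the same argument is repeated one square to the left. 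Since $|X_1 \cdots X_{r_1}| < |P|$, this peeling terminates with $X_1 \cdots X_{r_1}$ identified as a suffix of $W_1 \cdots W_t = P$. The delicate points to verify carefully are the length hypothesis of the \nameref{lem:backtracking} at each peeling step and the alignment of $U$ with $w'$ near the first block; both are precisely what the conditions $\ell_1 \in \B_S$ and nicely repetitive are designed to secure.
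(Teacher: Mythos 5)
Your proposal follows the same skeleton as the paper's proof: \autoref{lem:periodic_image} applied to $w'$ at $\ell_1$ yields the periodic tail $P^\omega$, the \nameref{lem:backtracking} is used to show that the leftover prefix $X_1\cdots X_{r_1}$ is a suffix of the period, and pure periodicity with minimal period conjugate to $S$ follows. Your right-to-left peeling is in fact more explicit than the paper's single appeal to the \nameref{lem:backtracking}, and it does terminate; note only that at the later peeling steps the bad case (the current $X$-root swallowing the entire remaining $W$-product) is excluded not by your first-step bound $|S_6|<|S|$, which need not persist after removing matched squares, but by comparing the total lengths of the two remaining products, which works because $|X_1^2\cdots X_{r_1}^2| = \ell_1-\ell < |S| < |W_1^2\cdots W_t^2|$ and the matched suffixes have equal length.

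The one genuine gap is where you place the comparison factorization inside $U={}^\omega S^\omega$ and assert $W_1\cdots W_t = P$. \autoref{lem:periodic_image} is stated only for words of $\Omega$ that are products of $S$ and $L$, and $S^\omega\notin\Omega$ (the subshift is minimal and aperiodic); likewise ${}^\omega L^\omega\notin\Omega$, since consecutive $L$-blocks never occur in $\Gamma$. Applied to $w'$, the lemma defines $P$ as the square root of $w'[\ell_1,\ell_1+|S^2|-1]$, whose underlying three blocks need not be $SSS$; the equality of this square root with $\sqrt{S^3[\ell_1,\ell_1+|S^2|-1]} = W_1\cdots W_t$ is precisely the invariance of this square root under replacing $S$-blocks by $L$-blocks, a fact established inside the proof of \autoref{lem:periodic_image} (via \autoref{lem:exchange_squares}) but not contained in its statement. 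The paper sidesteps this by taking the comparison word inside $\Omega$: choose $u,v\in\{S,L\}$ with $uvw'\in\Omega$ and apply the lemma to $uvw'$ at $\ell_1$. Because the length-$|S^2|$ prefix of $T^{\ell_1}(uvw')$ ends exactly where $T^{\ell_1}(w')$ begins, uniqueness of minimal-square factorizations gives $\sqrt{T^{\ell_1}(uvw')} = z'\,z^\omega$, where $z$ is the period obtained from $w'$ and $z'$ is the square root of that prefix; since this word also equals $(z')^\omega$ and $|z|=|z'|=|S|$, one gets $z'=z$ from the statement alone. Your argument becomes complete either by switching to this legal extension, or by explicitly invoking (or re-deriving) the substitution-invariance step from the proof of \autoref{lem:periodic_image}.
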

    \begin{proof}
      By \autoref{lem:periodic_image} the word $\sqrt{T^{\ell_1}(w')}$ is periodic with minimal period $z$ conjugate to
      $S$. If $\ell_1 = \ell$, then there is nothing more to prove, so assume that $\ell_1 \neq \ell$. There exists
      $u, v \in \{S,L\}$ such that $uvw' \in \Omega$. Since $\ell_1$ is a nicely repetitive position of $S$, the prefix
      of $T^{\ell_1}(uvw')$ of length $|S^2|$ is a product of minimal squares and its square root equals $z$ by
      \autoref{lem:periodic_image}. Since the factor $w'[\ell,\ell_1-1]$ is also a product of minimal squares, the
      \nameref{lem:backtracking} implies that $\sqrt{w'[\ell,\ell_1-1]}$ is a suffix of $z$. Now
      $\sqrt{w} = \sqrt{w'[\ell,\ell_1-1]}\sqrt{T^{\ell_1}(w')}$, so $\sqrt{w}$ is periodic with minimal period
      conjugate to $S$.
    \end{proof}

    If the position $\ell_1$ of $S$ is not nicely repetitive, then either it is not repetitive or it is repetitive but
    not nicely repetitive.

    \begin{claim}\label{cl:repetitive_not_nicely}
      If $\ell_1$ is repetitive but not nicely repetitive position of $S$, then $\sqrt{w} \in \Omega$.
    \end{claim}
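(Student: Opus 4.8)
The plan is to reduce this case to the already-settled situation where the minimal-square factorization of $w$ reaches a block boundary of $w'$, and then to conclude via \autoref{lem:product_sl_preserved} and the \nameref{lem:backtracking}. First I would unpack the hypothesis at $\ell_1\in\B_S$. Writing $S^3[\ell_1,\ell_1+|S^2|-1]=W_1^2\cdots W_N^2$ for the minimal-square factorization appearing in the definition of \emph{repetitive}, the failure of nice repetitiveness gives an index $m$ with $|W_1^2\cdots W_m^2|=|S|-\ell_1$ or $|W_1^2\cdots W_m^2|=|S^2|-\ell_1$. The first alternative can be ruled out: it asserts that the suffix $S[\ell_1,|S|-1]$ is itself a product of minimal squares, and since $\ell_1\in\B_S$ this suffix coincides with the suffix of the first block of $w'$ from position $\ell_1$; by uniqueness of the minimal-square factorization the canonical factorization of $w$ would then reach position $|S|-\ell$, forcing $|X_1^2\cdots X_{r_1}^2|=|S|-\ell$ and hence $\ell_1=|S|$, contradicting $\ell_1\in\B_S$. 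So only the second alternative survives: $S^3[\ell_1,|S^2|-1]$ is a product of minimal squares ending exactly at the second block boundary.

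Next I would transfer this to $w'$, whose first two blocks may be $L$ rather than $S$. Since $S$ and $L$ differ only in their first two letters and $\ell_1>1$, the suffix of the first block from $\ell_1$ is unchanged when a block $S$ is replaced by $L$; and the effect of replacing the second block by $L$ on the factorization is local near the first block boundary and is handled exactly as in the proof of \autoref{lem:periodic_image} through \autoref{lem:exchange_squares}. The break at position $2|S|$ persists because $2|S|$ is a block end, far from the two altered letters of the second block. I would conclude that $w'[\ell_1,2|S|-1]$ is a product of minimal squares; combining this with $w'[\ell,\ell_1-1]=X_1^2\cdots X_{r_1}^2$ and invoking uniqueness of the minimal-square factorization, the factorization of $w$ reaches position $2|S|-\ell$, so there is $r_2$ with $|X_1^2\cdots X_{r_2}^2|=2|S|-\ell$ and $T^{2|S|}(w')=X_{r_2+1}^2X_{r_2+2}^2\cdots$.

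From here I would mimic the case $\ell_1=|S|$. Choose $z_1,z_2\in\{S,L\}$ with $z_1z_2w'\in\Omega$ (possible since $\Omega$ is minimal and long products of $S$ and $L$ synchronize); then $z_1z_2w'$ is a product of $S$ and $L$, so $\sqrt{z_1z_2w'}\in\Omega$ by \autoref{lem:product_sl_preserved}. Because the prefix $z_1z_2w'[0,4|S|-1]$ consists of two adjacent block-pairs, each in $\Pi(\oa,\ob)$ by \autoref{lem:product_of_squares}, its factorization reaches position $4|S|$, which is the image of the boundary $2|S|$ of $w'$; thus $z_1z_2w'=Y_1^2\cdots Y_M^2X_{r_2+1}^2X_{r_2+2}^2\cdots$. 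Now $X_1^2\cdots X_{r_2}^2$ and $Y_1^2\cdots Y_M^2$ are products of minimal squares sharing the right endpoint $4|S|$ of $z_1z_2w'$, so the \nameref{lem:backtracking} shows that $X_1\cdots X_{r_2}$ is a suffix of $Y_1\cdots Y_M$. Hence $\sqrt{w}=X_1\cdots X_{r_2}X_{r_2+1}\cdots$ is a suffix of $\sqrt{z_1z_2w'}=Y_1\cdots Y_MX_{r_2+1}\cdots$, whence $\Lang(\sqrt{w})\subseteq\Lang(\sqrt{z_1z_2w'})=\Lang(\Gamma)$ and $\sqrt{w}\in\Omega$.

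The main obstacle I anticipate is the transfer step: verifying that replacing the blocks of the abstract $S^3$ by the actual blocks of $w'$ (some equal to $L$) preserves both the property of being a product of minimal squares and the break at $2|S|$, landing the factorization on the intended block boundary. This is precisely the mechanism developed in \autoref{lem:periodic_image}, so I would lean on that argument rather than redo it. The secondary bookkeeping point is parity: because $w$'s clean boundary sits after two blocks of $w'$, I must prepend an \emph{even} number of blocks so that the prefix of $z_1z_2w'$ factors into adjacent pairs lying in $\Pi(\oa,\ob)$, which is exactly what makes the final backtracking alignment legitimate.
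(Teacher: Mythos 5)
Your proposal is correct and follows essentially the same route as the paper's proof: the same dichotomy arising from the failure of nice repetitiveness, the same elimination of the $|S|-\ell_1$ alternative via the maximality of $r_1$ (equivalently, $\ell_1 \in \B_S$), the same transfer to $w'$ via \autoref{lem:exchange_squares}, and the same conclusion via the \nameref{lem:backtracking} together with \autoref{lem:product_sl_preserved}. The only (harmless) difference is that you prepend two blocks $z_1 z_2$ before backtracking, whereas the paper compares $\sqrt{w}$ directly against $\sqrt{w'}$, since the prefix $w'[0,|S^2|-1]$ is already in $\Pi(\oa,\ob)$ and no prepending is needed.
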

    \begin{proof}
      Suppose that $\ell_1$ is a repetitive but not a nicely repetitive position of $S$. This means that either
      $S^3[\ell_1,|S|-1] \in \Pi(\oa,\ob)$ or $S^3[\ell_1,|S^2|-1] \in \Pi(\oa,\ob)$ (they both cannot be in
      $\Pi(\oa,\ob)$ as this would imply that $S$ is not primitive). Thus either $w'[\ell_1,|S|-1] \in \Pi(\oa,\ob)$ or
      $w'[\ell_1,|S^2|-1] \in \Pi(\oa,\ob)$ (in the latter case \autoref{lem:exchange_squares} ensures that
      $w'[\ell_1,|S^2|-1] \in \Pi(\oa,\ob)$). The former case is, however, not possible as it would contradict the
      maximality of $r_1$. Thus only the latter option is possible. Since $w'$ is a product of the words $S$ and $L$,
      the prefix $w'[0,|S^2|-1]$ of $w'$ is a product of minimal squares. Since
      $w'[\ell,\ell_1-1], w'[\ell_1,|S^2|-1] \in \Pi(\oa,\ob)$, the \nameref{lem:backtracking} implies that
      $\sqrt{w'[\ell,|S^2|-1]}$ is a suffix of $\sqrt{w'[0,|S^2|-1]}$. Thus $\sqrt{w}$ is a suffix of $\sqrt{w'}$. As
      $\sqrt{w'} \in \Omega$ by \autoref{lem:product_sl_preserved}, we conclude that $\sqrt{w} \in \Omega$.
    \end{proof}

    Now we may suppose that $\ell_1$ is not a repetitive position of $S$. We let
    \begin{align*}
      r_2 &= \max\{i \in \{r_1+1,r_1+2,\ldots\}\colon |X_1^2 \cdots X_i^2| \leq |S^2| - \ell\}, \\
      r_3 &= \max\{i \in \{r_2+1,r_2+2,\ldots\}\colon |X_1^2 \cdots X_i^2| \leq |S^3| - \ell\}, \text{ and } \\
      r_4 &= \max\{i \in \{r_3+1,r_3+2,\ldots\}\colon |X_1^2 \cdots X_i^2| \leq |S^4| - \ell\}.
    \end{align*}
    The numbers $r_2$, $r_3$, and $r_4$ are well-defined as the words $S$ and $L$ are not minimal squares. We set
    \begin{align*}
      \ell_2 &= \ell_1 + |X_{r_1+1}^2 \cdots X_{r_2}^2|, \\
      \ell_3 &= \ell_2 + |X_{r_2+1}^2 \cdots X_{r_3}^2|, \text{ and } \\
      \ell_4 &= \ell_3 + |X_{r_3+1}^2 \cdots X_{r_4}^2|.
    \end{align*}
    Intuitively, the positions $\ell_1$, $\ell_2$, $\ell_3$, and $\ell_4$ are the successive positions of $w$ which are
    closest from the left to the boundaries of the words $S$ and $L$ in the factorization of $w'$ as a product of the
    words $S$ and $L$ such that the prefix up to the position is a product of minimal squares; see
    \autoref{fig:l_positions}. Let $g_1 = \ell_1$, $g_2 = \ell_2 - |S|$, $g_3 = \ell_3 - |S^2|$, and
    $g_4 = \ell_4 - |S^3|$. It is clear by the definitions that $g_i \in B \cup \{|S|\}$ for all $i \in \{1,2,3,4\}$.
    
    \begin{claim}
      We have that $g_1, g_3 \neq |S|$. If $g_2$ or $g_4$ equals $|S|$, then $\sqrt{w} \in \Omega$.
    \end{claim}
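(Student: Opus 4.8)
The equality $g_1 = \ell_1$ together with the standing assumption $\ell_1 \in \B_S$ settles $g_1 \ne |S|$ at once, since every element of $\B_S = \{|S|-|S_6|,|S|-|S_4|,|S|-|S_3|,|S|-|S_1|\}$ is smaller than $|S|$ (because $|S| > |S_6|$). For the two conditional statements I would argue exactly as in the already-proved subclaim for $\ell_1 = |S|$. If $g_2 = |S|$ then $\ell_2 = 2|S|$, so the minimal-square factorization of $w$ has a breakpoint precisely on a copy boundary and the tail $T^{\ell_2-\ell}(w') = T^{2|S|}(w')$ is again a product of the words $S$ and $L$; its square root lies in $\Omega$ by \autoref{lem:product_sl_preserved}. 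Prepending two copies $z_1z_2 \in \{S,L\}^2$, whose prefix is a product of minimal squares by the choice of $S$ and \autoref{lem:product_of_squares}, the \nameref{lem:backtracking} shows $X_1\cdots X_{r_2}$ to be a suffix of the matching square roots, so $\sqrt w$ is a suffix of $\sqrt{z_1z_2w'} \in \Omega$; hence $\Lang(\sqrt w) \subseteq \Lang(\Gamma)$, i.e. $\sqrt w \in \Omega$. The case $g_4 = |S|$ is identical with three prepended copies.

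The real content is $g_3 \ne |S|$, and this is where I expect the difficulty to lie. The plan is to transport the discussion to the periodic word $S^\omega$: by the \nameref{lem:exchange_squares} the copies $L$ can be turned back into $S$ without changing square roots, so it suffices to understand the set $\mathcal P$ of breakpoints of the (never-stalling, since $\Lang(\oa,\ob)$ is squareful) greedy minimal-square factorization of $S^\omega$ started at $\ell_1$. The decisive input is a parity phenomenon for the boundaries $k|S|$. Since $S$ is a primitive solution of \eqref{eq:square}, $S^2 \in \Pi(\oa,\ob)$ and therefore $S^{2k} \in \Pi(\oa,\ob)$ for every $k$; on the other hand $S^{2k+1} \notin \Pi(\oa,\ob)$, for otherwise, $S^{2k}$ being already a product of minimal squares, the block $S$ sandwiched between two consecutive multiple-of-$|S|$ breakpoints would lie in $\Pi(\oa,\ob)$, so that $\sqrt S$ is defined and $S = \sqrt{S^2} = (\sqrt S)^2$, contradicting the primitivity of $S$. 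Consequently, in the factorization started on a copy boundary, the breakpoints that are multiples of $|S|$ are exactly the even multiples.

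From here I would deduce $g_3 \ne |S|$ as follows. First, $|S| \notin \mathcal P$: as $\ell_1 \in \B_S$, no minimal square of length at most $|S|-\ell_1$ begins at $\ell_1$, so the first square already straddles the boundary $|S|$ and the next breakpoint lies beyond $|S|$. Second, all boundary breakpoints in $\mathcal P$ share one parity: if $i|S|, j|S| \in \mathcal P$ with $i<j$, then by memorylessness $j|S|$ is a breakpoint of the factorization started at $i|S|$, so $(j-i)|S|$ is a multiple-of-$|S|$ breakpoint of a factorization started on a boundary, whence $j-i$ is even. Thus, assuming $3|S| \in \mathcal P$ for contradiction, the pair $2|S|,3|S|$ cannot both be breakpoints (their difference $|S|$ is an odd boundary offset), so $2|S| \notin \mathcal P$; and we already know $|S| \notin \mathcal P$.

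It remains to contradict $3|S| \in \mathcal P$ under $|S|,2|S| \notin \mathcal P$. Here I would compare two minimal-square factorizations ending at $3|S|$: the one started at $\ell_1$, and the factorization of the even power $S^2 = S^\omega[|S|,3|S|-1] \in \Pi(\oa,\ob)$ started at $|S|$. Backtracking from the common endpoint $3|S|$ by repeated use of the \nameref{lem:backtracking}, the minimal squares ending at each successive shared position are nested, and I would argue that the two factorizations must therefore agree down to the start $|S|$ of the shorter one, forcing $|S| \in \mathcal P$ and contradicting the previous paragraph. Carrying out this backtracking carefully — tracking the unique place where the $\ell_1$-square straddles a boundary and keeping every minimal square length below $2|S_6| < 2|S|$ so that the nesting cannot overshoot — is the step I expect to be the hard part, and it is precisely the situation for which the \nameref{lem:backtracking} and \autoref{lem:can_not_continue} were prepared.
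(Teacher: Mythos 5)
Your handling of $g_1$, $g_2$ and $g_4$ is essentially correct: $g_1 \neq |S|$ is immediate from $\ell_1 \in \B_S$, and for $g_2 = |S|$ (resp.\ $g_4 = |S|$) comparing the product $X_1^2 \cdots X_{r_2}^2$ (resp.\ $X_1^2 \cdots X_{r_4}^2$) with the two-copy (resp.\ four-copy) prefix of $w'$, both ending at the same position, and iterating the \nameref{lem:backtracking} does show that $\sqrt{w}$ is a suffix of a word in $\Omega$; your direct argument for $g_4$ is in fact a little simpler than the paper's, which instead proves that $g_4 = |S|$ forces $g_2 = |S|$ via \autoref{lem:can_not_continue}. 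The genuine gap is in the core case $g_3 \neq |S|$. You assert that backtracking from the common endpoint $3|S|$ forces the factorization $A$ started at $\ell_1$ and the factorization $B$ of $S^2 = S^\omega[|S|,3|S|-1]$ to \emph{agree} down to position $|S|$, so that $|S| \in \mathcal{P}$. Backtracking gives no such thing. It gives a dichotomy: either (i) the squares of $B$ are exactly the final squares of $A$ --- but then $A$ has a breakpoint at $|S|$, which is impossible outright because $\ell_1 \in \B_S$ forces the \emph{first} square of $A$ to straddle the boundary $|S|$; or (ii) the factorizations diverge, and then the \nameref{lem:backtracking} only yields that a single minimal square of $A$ straddles $|S|$ and absorbs an initial run of squares of $B$ (their roots forming a suffix of its root; since $A$ has exactly one straddling square, this square is $X_1^2$). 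Configuration (ii) is internally consistent inside $S^3$: your bound $2|S_6| < 2|S|$ prevents one square from swallowing all of $B$, but nothing prevents it from straddling $|S|$ and swallowing an initial segment of $B$, so no contradiction follows and your argument stops exactly where the real difficulty begins.

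To kill case (ii) one must extend $B$ \emph{backwards} past $|S|$ by a block of $\Pi(\oa,\ob)$ and use the length inequality of \autoref{lem:can_not_continue}; the obstruction is that inside $S^3$ the only candidate backward extension is the first copy of $S$ itself, which is \emph{not} in $\Pi(\oa,\ob)$ (you proved this yourself: $S \in \Pi(\oa,\ob)$ contradicts the primitivity of $S$). The paper's device is precisely to create room for a long backward extension: prepend $z \in \{S,L\}$ to $w'$, so that the backward extension of $B$ becomes the two-copy prefix $(zw')[0,2|S|-1] \in \Pi(\oa,\ob)$, whereupon \autoref{lem:can_not_continue} yields $4|S| < 3|S| - \ell_1$, which is absurd. (In your $S^\omega$-picture the same repair reads: by periodicity, work with the factorization started at $\ell_1 + |S|$, place $B$ on $[2|S|,4|S|)$, and take backward extension $S^2 = S^\omega[0,2|S|-1] \in \Pi(\oa,\ob)$.) Two smaller remarks: your parity lemma ($S^{2k+1} \notin \Pi(\oa,\ob)$) is a nice observation not in the paper, but as written it is also gapped --- you must first run a greedy-agreement argument to place a breakpoint of a factorization of $S^{2k+1}$ at $2k|S|$ before the ``sandwiched block $S$'' lies in $\Pi(\oa,\ob)$ --- and even once repaired, the parity machinery only reduces the problem to the same unresolved case (ii); and your transport of the whole situation from $w'$ to $S^\omega$ via the \nameref{lem:exchange_squares} needs its hypotheses checked at each swap, which is exactly why the paper states \autoref{lem:can_not_continue} for arbitrary products of $S$ and $L$ rather than reducing to the periodic word.
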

    \begin{proof}
      By our assumption that $\ell_1 \in \B_S$, we have that $g_1 \neq |S|$. If $g_2 = |S|$, then the factor
      $w'[\ell_1,|S^2|-1]$ would be a product of minimal squares. This case was already considered in
      \autoref{cl:repetitive_not_nicely} where we concluded that $\sqrt{w} \in \Omega$.
      
      Suppose that $g_3 = |S|$. Consider the positions $\ell_1$ and $|S|$ of $w'$. Both of the factors
      $u = w'[\ell_1,\ell_3-1]$ and $v = w'[|S|,\ell_3-1] = w'[\ell_3-|S^2|,\ell_3-1]$ are in $\Pi(\oa,\ob)$. Now
      $zw' \in \Omega$ for some $z \in \{S,L\}$. Since $SS, SL, LS, LL \in \Pi(\oa,\ob)$, the prefix of $zw'$ of length
      $|S^2|$ is in $\Pi(\oa,\ob)$. \autoref{lem:can_not_continue} applied to the word $(zw')[0,|S|+l_3-1]$ implies that
      $|(zw')[0,|S|+l_3-1]| < |u| < |S^3|$ which is nonsense. Therefore $g_3 \neq |S|$.

      Assume then that $g_4 = |S|$. Suppose for a contradiction that $g_2 \neq g_4$. Both of the factors
      $u' = w'[\ell_2,\ell_4-1]$ and $v' = w'[|S|^2,\ell_4-1] = w'[\ell_4-|S^2|,\ell_4-1]$ are in $\Pi(\oa,\ob)$. Since
      $g_2 \neq g_4$, also $\ell_2 \neq |S^2|$. Thus by the definition of $\ell_2$, we have that $\ell_2 < |S_2|$.
      \autoref{lem:can_not_continue} applied to the word $w'[0,\ell_4-1]$ shows that $|w'[0,l_4-1]| < |u'| < |S^3|$
      which is absurd. This contradiction shows that $g_2 = g_4 = |S|$, so $\sqrt{w} \in \Omega$.
    \end{proof}

    We may now assume that $g_i \in \B_S$ for all $i \in \{1,2,3,4\}$.
    
    \begin{claim}
      The position $g_2$ of $S$ is nicely repetitive.
    \end{claim}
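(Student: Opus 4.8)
The plan is to split the definition of \emph{nicely repetitive} into its repetitiveness clause and its two boundary clauses, dispose of the boundary clauses using facts already in hand, and reduce the repetitiveness clause to the single equality $g_4=g_2$, which is where the real work lies.

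First I would analyze the two boundary clauses for the position $g_2$ of $S$. If $S^3[g_2,g_2+|S^2|-1]=X_1^2\cdots X_n^2$, the definition demands that no partial sum $|X_1^2\cdots X_m^2|$ equal $|S|-g_2$ or $|S^2|-g_2$, i.e. that the factorization cross the boundaries $|S|$ and $2|S|$ of $S^3$ without landing on them. The first is automatic: since $g_2\in\B_S$, no minimal square of length at most $|S|-g_2$ starts at $g_2$, so already $|X_1^2|>|S|-g_2$ and no partial sum can hit $|S|-g_2$. The second says exactly that the touchdown in the middle block is not the boundary $2|S|$, which (using that the touchdowns of $w'$ and of $S^\omega$ coincide, see below) is the inequality $g_3\neq|S|$ proved in the preceding claim. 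So everything reduces to showing that $g_2$ is repetitive.

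For repetitiveness I would transport the factorization of $w'$ to $S^3$. By construction $w'[\ell_2,\ell_4-1]\in\Pi(\oa,\ob)$, this factor begins at position $g_2\in\B_S$ of the block $B_1$ and ends at position $g_4\in\B_S$ of the block $B_3$, and every block boundary it crosses is crossed at a position governed by $\B_S$, i.e. inside the $S_6$-suffix of a block. Repeated applications of \autoref{lem:exchange_squares} then let me turn each $L$ into $S$ without altering the length attained or the square root, giving $S^3[g_2,2|S|+g_4-1]\in\Pi(\oa,\ob)$ and matching its touchdowns with those of $w'$; together with the consequence of the definition of $\B_S$ that exactly one $\B_S$-touchdown occurs per block, this shows that the factorization from $g_2$ reaches the block-$2$ level exactly at $2|S|+g_4$. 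Hence $g_2$ is repetitive if and only if $g_4=g_2$, and the claim is equivalent to this equality.

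The hard part will be proving $g_4=g_2$, and I expect it to be the main obstacle. My route is a contradiction in the style of the already-treated case $g_4=|S|$: assuming $g_4\neq g_2$, both $w'[\ell_2,\ell_4-1]$ and its shorter tail $w'[\ell_3,\ell_4-1]$ lie in $\Pi(\oa,\ob)$, while $\ell_2-|S|=g_2\in\B_S$ and $\ell_2<|S^2|$, so feeding the triple of positions into \autoref{lem:can_not_continue}—with the \nameref{lem:backtracking} controlling how far a product of minimal squares can be extended to the left—should force a length inequality of the shape $|w'[0,\ell_4-1]|<|S^3|$, which is impossible. In this step the standing hypothesis that $g_1$ is not repetitive, equivalently $g_3\neq g_1$, is what excludes the competing alignment of touchdowns and leaves $g_4=g_2$ as the only possibility. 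The places demanding the most care are checking the hypotheses of \autoref{lem:exchange_squares} at each crossing (that the relevant prefix really is a nonempty suffix of $S_6$ and that the following block has length at least $|S_5S_6|$) and choosing the exact positions handed to \autoref{lem:can_not_continue} so that the required factors lie in $\Pi(\oa,\ob)$ and the $\B_S$- and length-hypotheses are met.
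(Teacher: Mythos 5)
Your opening reduction is sound in spirit and overlaps with what the paper actually does: the boundary clauses do come for free from $g_2,g_3\in\B_S$ once factorizations are transported between $w'$ and $S^3$ via \autoref{lem:exchange_squares}, and the equivalence ``$g_2$ repetitive $\iff g_4=g_2$'' is implicitly used in the paper as well. The genuine gap is in the step you yourself flag as the hard part. Your application of \autoref{lem:can_not_continue} to the triple $(\ell_2,\ell_3,\ell_4)$ is not legitimate: taking its first position to be $\ell_2=|S|+g_2$ forces $r=1$ in that lemma, so its third hypothesis requires the middle position to be at most $(r+1)|S|=2|S|$, whereas $\ell_3=2|S|+g_3>2|S|$. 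This failure is structural, not a fixable technicality: both of your $\Pi(\oa,\ob)$-factors, $w'[\ell_2,\ell_4-1]$ and $w'[\ell_3,\ell_4-1]$, are tails of one and the same factorization, and for such a pair the lemma can never hold --- its conclusion applied with $u=w'[\ell_2,\ell_3-1]$ would read $|w'[\ell_2,\ell_4-1]|<|w'[\ell_2,\ell_4-1]|$. \autoref{lem:can_not_continue} is built to play the canonical factorization of $w'$ off against a \emph{second} factorization, anchored at an offset already known to be repetitive (that is what supplies a differently aligned factor such as $w'[\ell_4-|S^2|,\ell_4-1]\in\Pi(\oa,\ob)$); producing that second factorization is precisely what your argument lacks. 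Your target inequality $|w'[0,\ell_4-1]|<|S^3|$ is also not something the lemma can output, since $w'[0,\ell_3-1]\in\Pi(\oa,\ob)$ is not known.

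Because of this, you have no route to $g_4=g_2$, and the paper has to work much harder exactly here. It first proves only the weaker dichotomy that $g_2$ or $g_3$ is repetitive, by assuming both fail and splitting on whether $g_1=g_4$: in the case $g_1\neq g_4$ it invokes \autoref{lem:nicely_repetitive_position} (the offset $|S|-|S_6|$ is always repetitive) together with a pigeonhole argument on the four-element set $\B_S$ to force $g_4=|S|-|S_6|$, which is what yields the second, transported factorization that makes \autoref{lem:can_not_continue} bite; in the case $g_1=g_4$ it uses a parity argument (products of minimal squares have even length, so $3|S|$ even forces $|S|$ even, while exactly two of $|S_1|,|S_3|,|S_4|$ are odd). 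Only then does it exclude ``$g_3$ repetitive'' by a further application of \autoref{lem:can_not_continue} against $w'[\ell_1,\ell_3-1]$, using the standing assumption that $g_1$ is not repetitive, and finally upgrade ``repetitive'' to ``nicely repetitive'' via $g_3\in\B_S$. None of these ingredients --- the repetitiveness of the offset $|S|-|S_6|$, the pigeonhole, the parity argument --- appears in your proposal, and your remark that $g_3\neq g_1$ ``excludes the competing alignment'' does not substitute for them.
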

    \begin{proof}
      Assume on the contrary that neither of the positions $g_2$ and $g_3$ is a repetitive position of $S$. First note
      that as $g_1$ is not repetitive, we have that $g_3 \neq g_1$. Similarly $g_2 \neq g_4$. If $g_1 = g_2$, then it
      follows from \autoref{lem:exchange_squares} and the definitions of the positions $l_2$ and $l_3$ that
      $g_2 = g_3$; a contradiction. Hence $g_1 \neq g_2$. Similarly $g_2 \neq g_3$ as otherwise the position $g_2$
      would be repetitive. Finally, $g_3 \neq g_4$ because $g_3$ is not repetitive. We have two cases: either
      $g_1 = g_4$ or $g_1 \neq g_4$.

      Assume that $g_4 \neq g_1$. By \autoref{lem:nicely_repetitive_position} the position $|S| - |S_6|$ of $S$ is
      repetitive, so $g_1, g_2, g_3 \in \B_S \setminus \{|S|-|S_6|\} = \{|S|-|S_1|,|S|-|S_3|,|S|-|S_4|\}$. Since all of
      the positions $g_1$, $g_2$, and $g_3$ are distinct, the only option is that $g_4 = |S|-|S_6|$. Since the position
      $|S|-|S_6|$ is repetitive, by \autoref{lem:exchange_squares} the factor $u = w'[\ell_4-|S^2|,\ell_4-1]$ is in
      $\Pi(\oa,\ob)$. By the definition of the positions $\ell_2$, $\ell_3$, and $\ell_4$ also
      $v = w'[\ell_2,\ell_4-1] \in \Pi(\oa,\ob)$. Since $g_2 \neq g_4$, also $\ell_2 \neq \ell_4-|S^2|$. Since
      $|S|-|S_6|$ is the smallest element of the set $\B_S$, we have that $\ell_2 > \ell_4-|S^2|$. As
      $w[l_1,l_2-1] \in \Pi(\oa,\ob)$, we obtain by \autoref{lem:can_not_continue} that $|w[l_1,l_4-1]| < |u| = |S^2|$.
      This is a contradiction.

      Hence we have that $g_1 = g_4$. Since the factor $w[\ell_1,\ell_2-1]$ is a product of minimal squares, the number
      $c_1 = \ell_2-\ell_1$ is even. Similarly the numbers $c_2 = \ell_3-\ell_2$ and $c_3 = \ell_4-\ell_3$ are even.
      Thus the number $c_1 + c_2 + c_3 = 3|S|$ is even, so $|S|$ is even. It follows that the numbers
      $d_1 = g_2 - g_1$, $d_2 = g_3 - g_2$, and $d_3 = g_4 - g_3 = g_1 - g_3$ are even. However, exactly two of the
      numbers $|S_1|$, $|S_3|$, and $|S_4|$ have odd length. Hence exactly two of the numbers $g_1$, $g_2$, and $g_3$
      are odd. Thus it is not possible that all of the numbers $d_1$, $d_2$, and $d_3$ are even. This is a
      contradiction.

      The previous contradiction shows that either of the positions $g_2$ and $g_3$ is a repetitive position of $S$.
      Suppose for a contradiction that $g_3$ is repetitive. We have that $w'[\ell_1,\ell_3-1] \in \Pi(\oa,\ob)$ and
      $w'[\ell_3-|S^2|,\ell_3-1] \in \Pi(\oa,\ob)$. Similar to the second paragraph of this subproof, using
      \autoref{lem:can_not_continue} we obtain a contradiction unless $g_1 = g_3$. Even this conclusion is
      contradictory as $g_1$ is not repetitive. Therefore $g_3$ can not be repetitive, so $g_2$ is a repetitive
      position of $S$. Now if $g_2$ would not be nicely repetitive, we would have by the maximality of $r_2$ that
      $w'[\ell_2,|S^3|-1] \in \Pi(\oa,\ob)$, that is, $g_3 = |S|$. However, since $g_3 \in \B_S$, we have that $g_2$ is a
      nicely repetitive position of $S$.
    \end{proof}

    We are now in the final stage of the proof. We will show that $\sqrt{w}$ is periodic with minimal period conjugate
    to $|S|$.

    We can now argue as in the proof of \autoref{cl:nicely_repetitive}. Since $g_2$ is a nicely repetitive position of
    $S$, by \autoref{lem:periodic_image} the word $\sqrt{T^{\ell_2}(w')}$ is periodic with minimal period $z$ conjugate
    to $S$. We have that $uw' \in \Omega$ for some $u \in \{S,L\}$. Since $g_2$ is a nicely repetitive position of $S$,
    the prefix of $T^{g_2}(uw')$ of length $|S^2|$ is a product of minimal squares and its square root equals $z$ by
    \autoref{lem:periodic_image}. Since $w'[\ell,\ell_2-1] \in \Pi(\oa,\ob)$, the \nameref{lem:backtracking} implies that
    $\sqrt{w'[\ell,\ell_2-1]}$ is a suffix of $z$. Now $\sqrt{w} = \sqrt{w'[\ell,\ell_2-1]}\sqrt{T^{\ell_2}(w')}$, so
    $\sqrt{w}$ is periodic with minimal period conjugate to $S$.

    By \autoref{lem:one_nicely_repetitive} the word $S$ always has at least one nicely repetitive position. It
    therefore follows that there exists a word in $\Omega$ having periodic square root.
  \end{proof}

  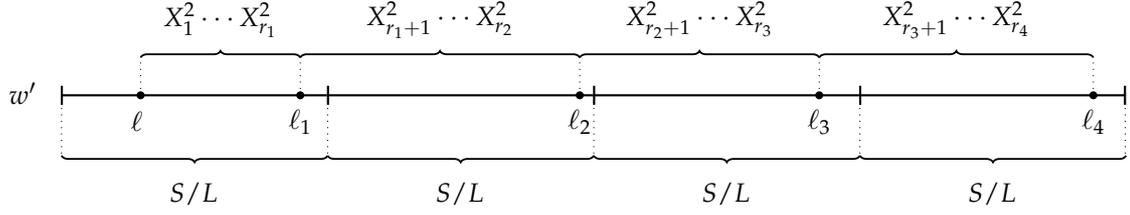
\begin{figure}
  \centering
  \begin{tikzpicture}
    \coordinate (O) at (0,0);   
    \coordinate (W) at (3.5,0); 
    \node (Z) at ($ (O) + (-0.5,0.03) $) {$w'$};
    \foreach \i in {0,...,3} {
      \pgfmathsetmacro\r{\i + 1}
      \draw[line width=0.8pt,black,|-] ($ (O) + \i*(W) $) -- ($ (O) + \r*(W) $);
    }
    \draw[line width=0.8pt,black,-|] ($ (O) + 4*(W) $) -- ($ (O) + 4*(W) + (0.01,0) $);

    \tikzstyle{tick}=[circle,fill,inner sep=0,minimum size=3pt]
    \node (X1) at ($ (O) + 0.3*(W) $) [tick] {};
    \foreach \i [count=\j from 2] in {0.9,0.95,0.85,0.88} {
      \pgfmathsetmacro\r{\j - 2}
      \node (X\j) at ($ (O) + \r*(W) + \i*(W) $) [tick] {};
    }
    \coordinate (D1) at (0.0,-0.35); 
    \node (L1) at ($ (X1) + (-0.05,-0.35) $) {$\ell$};
    \foreach \i in {2,...,5} {
      \pgfmathtruncatemacro\r{\i - 1} 
      \node (L\i) at ($ (X\i) + (D1) $) {$\ell_\r$};
    }
    \coordinate (U1) at (0,0.5); 
    \foreach \i in {1,...,4} {
      \pgfmathtruncatemacro\r{\i + 1}
      \draw[decorate,decoration={brace},line width=0.7pt] ($ (X\i) + (U1) $) -- ($ (X\r) + (U1) $);
    }
    \foreach \i in {1,...,5} {
      \draw[dotted] ($ (X\i) $) -- ($ (X\i) + (U1) $);
    }
    \coordinate (U2) at (0,1);
    \node (M1) at ($ 0.5*(X1) + 0.5*(X2) + (U2) $) {$X_1^2 \cdots X_{r_1}^2$};
    \node (M2) at ($ 0.5*(X2) + 0.5*(X3) + (U2) $) {$X_{r_1+1}^2 \cdots X_{r_2}^2$};
    \node (M3) at ($ 0.5*(X3) + 0.5*(X4) + (U2) $) {$X_{r_2+1}^2 \cdots X_{r_3}^2$};
    \node (M4) at ($ 0.5*(X4) + 0.5*(X5) + (U2) $) {$X_{r_3+1}^2 \cdots X_{r_4}^2$};
    \coordinate (D2) at (0,-0.8); 
    \coordinate (E) at (0.01,0);  
    \foreach \i in {0,...,3} {
      \pgfmathsetmacro\r{\i + 1}
      \draw[decorate,decoration={brace,mirror},line width=0.7pt] ($ (O) + \i*(W) + (D2) + (E) $) -- ($ (O) + \r*(W) + (D2) + (E) $);
      \draw[dotted] ($ (O) + \i*(W) + (E) $) -- ($ (O) + \i*(W) + (D2) + (E) $);
    }
    \draw[dotted] ($ (O) + 4*(W) $) -- ($ (O) + 4*(W) + (D2) $); 
    \coordinate (D3) at ($ (D2) + (0,-0.5) $);
    \foreach \i in {1,...,4} {
      \pgfmathtruncatemacro\r{\i}
      \pgfmathsetmacro\s{\i - 0.5}
      \node (S\i) at ($ (O) + \s*(W) + (D3) $) {$S/L$};
    }
  \end{tikzpicture}
  \caption{The positions $\ell$, $\ell_1$, $\ell_2$, $\ell_3$, and $\ell_4$ of $w'$ and the minimal squares between the positions.}
  \label{fig:l_positions}
  \end{figure}

  \section{Remarks on Generalizations}\label{sec:generalizations}
  It is natural to think that the square root map could be generalized to obtain a cube root map and, further, a
  $k^\text{th}$ root map. However, in \cite[Theorem~5.3.]{diss:kalle_saari} Saari proves the following reformulation of
  a result of Mignosi, Restivo, and Salemi.

  \begin{proposition}\label{prp:repetitivity_periodicity}
    If $w$ is an everywhere $\alpha$-repetitive word with $\alpha \geq \phi + 1$, where $\phi$ is the golden mean, then 
    $w$ is ultimately periodic.
  \end{proposition}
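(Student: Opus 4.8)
The plan is to treat this as the known Mignosi--Restivo--Salemi periodicity theorem in the form given by Saari, so in the paper itself we would simply cite Saari's dissertation \cite{diss:kalle_saari} together with the original work of Mignosi, Restivo, and Salemi. To sketch a self-contained argument, I would proceed by contradiction: assume $w$ is everywhere $\alpha$-repetitive with $\alpha \geq \phi + 1 = \phi^2$ but not ultimately periodic. For each position $i$ let $p_i$ be the period of a minimal $\alpha$-repetition starting at $i$, so the factor $w[i, i + \lceil \alpha p_i \rceil - 1]$ has period $p_i$. The central tool is the theorem of Fine and Wilf: a word of length $n$ having periods $p$ and $q$ with $p + q - \gcd(p,q) \leq n$ also has period $\gcd(p,q)$.

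First I would establish a period-propagation lemma. Consider a position $i$ carrying period $p$ over a window of length at least $\alpha p$, and the repetition starting at position $i + p$ with some period $q$. These two periodic windows overlap in a block of length at least $(\alpha - 1)p$. If $q \leq (\alpha - 2)p$, then $(\alpha - 1)p \geq p + q \geq p + q - \gcd(p,q)$, so Fine--Wilf forces period $\gcd(p,q) \leq p$ on the overlap, and the local period does not genuinely increase. Hence any strictly increasing chain of ``new'' periods $p_1 < p_2 < \cdots$—which aperiodicity of $w$ together with the finiteness of the minimal $\alpha$-repetitions would ultimately force—must satisfy $p_{k+1} > (\alpha - 2) p_k$ at each step, and a closer accounting that also uses the preceding period yields the Fibonacci-type inequality $p_{k+1} \geq p_k + p_{k-1}$ as the extremal slow-growth regime.

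The golden ratio enters precisely here: the recurrence $p_{k+1} = p_k + p_{k-1}$ has growth ratio tending to $\phi$, and the exponent needed to sustain such a chain without a Fine--Wilf collapse is exactly $\phi^2 = \phi + 1$. Thus for $\alpha \geq \phi + 1$ no such aperiodic chain survives: the periods must stabilize to a common value $p$, and a final Fine--Wilf argument then glues the overlapping $p$-periodic windows along the whole word into a single global period, so $w$ is ultimately periodic, contradicting the assumption.

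I expect the main obstacle to be the bookkeeping inside the period-propagation lemma—correctly tracking the lengths and starting positions of the overlapping $\alpha$-repetitions and pinning down the extremal Fibonacci-like growth that makes $\phi^2$ the \emph{sharp} threshold. Verifying that $\alpha \geq \phi + 1$ excludes even this tightest aperiodic configuration, rather than merely the generic one, is the delicate quantitative heart of the argument.
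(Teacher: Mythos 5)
Your primary move---citing Saari's dissertation (Theorem~5.3) together with the original Mignosi--Restivo--Salemi result---is exactly what the paper does: this proposition is stated there without any proof, purely as a quoted reformulation of known work, so your proposal matches the paper's approach. The Fine--Wilf/Fibonacci-growth sketch you append has the right flavor of the classical argument (though, as you note, the overlap bookkeeping is where the real work lies and your bound on the overlap length silently assumes $q$ is not too small relative to $p$), but since the operative step is the citation, nothing more is required.
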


  Generalizing the square root map to a cube root map would require everywhere $3$-repetitive words. By the above such
  words must be ultimately periodic, so we expect that this direction of research would not be fruitful.

  Another way to generalize the square root map is to use abelian powers instead of ordinary powers. For abelian powers
  a result like \autoref{prp:repetitivity_periodicity} does not exist. For instance, by
  \cite[Theorem~1.9.]{2011:abelian_complexity_of_minimal_subshifts} every position in a Sturmian word begins with an
  abelian $k^{th}$ power for all $k \geq 2$. Abelian square root can be defined for e.g. optimal squareful words as we
  will see shortly. However, abelian cubes in Sturmian words do not work. Consider again the Fibonacci word $f$. The
  minimal abelian cube prefix of $T(f)$ is $10\cdot01\cdot01$. This abelian cube is followed by the factor $00$, so the
  root of the next abelian cube must begin with $00$. Hence if we define the abelian cube root of $T(f)$ to be the
  product of the roots of the abelian cubes, the resulting word begins with $1000$ which is not a factor of $f$. Thus
  by defining an abelian cube root map in this way, we lose the main property that the mapping preserves the languages
  of Sturmian words.

  In \cite{2010:everywhere_alpha-repetitive_sequences_and_sturmian_words} Saari also considers optimal abelian
  squareful words. \emph{Optimal abelian squareful words} are defined by replacing minimal squares with minimal abelian
  squares in the definition of optimal squareful words. Let $w = X_1 X_1' X_2 X_2' \cdots$ be a product of minimal
  abelian squares $X_i X_i'$. We define its \emph{abelian square root} as the word $\sqrt[ab]{w} = X_1 X_2 \cdots$. It
  follows from \cite[Theorem~18]{2010:everywhere_alpha-repetitive_sequences_and_sturmian_words} that the six minimal
  squares are products of exactly five minimal abelian squares (this is straightforward to verify directly).
  Thus if $w$ is an optimal squareful word, then $\sqrt{w} = \sqrt[ab]{w}$.
  Thus by \autoref{thm:square_root} the abelian square root of a Sturmian word $s_{x,\alpha}$ is the Sturmian word
  $s_{\psi(x),\alpha}$. Also, by \autoref{thm:preserved_periodic} there exists a minimal subshift $\Omega$ such that
  for all $w \in \Omega$ either $\sqrt[ab]{w} \in \Omega$ or $\sqrt[ab]{w}$ is periodic. Saari proves in
  \cite[Theorem~19]{2010:everywhere_alpha-repetitive_sequences_and_sturmian_words} that an optimal abelian squareful
  word must have at least five distinct minimal abelian squares, but he leaves the characterization of these sets of
  minimal abelian squares open. Thus it is possible that there exists optimal abelian squareful words which contain other
  minimal abelian squares than those given by \cite[Theorem~18]{2010:everywhere_alpha-repetitive_sequences_and_sturmian_words}. For such words the abelian square
  root map could exhibit different behavior than the square root map (if the square root map is even defined for such
  words). We have not extended our research to this direction.

  We could also generalize the special function $\psi$. Divide the distance $D$ between $x$ and $1-\alpha$ into $k$
  parts and choose the image of $x$ to be $x + \frac{t}{k}D$ among the points
  \begin{align*}
    x + \frac{1}{k}D, x + \frac{2}{k}D, \ldots, x + \frac{k-1}{k}D
  \end{align*}
  to obtain the function
  \begin{align*}
    \psi_{k,t}: \T \to \T, x \mapsto \frac{1}{k}(tx + (k - t)(1-\alpha)).
  \end{align*}
  The map $\psi_{k,t}$ is a perfectly nice function on the circle $\T$, but to make things interesting we would need to
  find a symbolic interpretation for it. We have not figured out any such interpretation for these generalized
  functions.
  
  \section{Open Problems}\label{sec:open_problems}
  In the \autoref{sec:counter_example} we saw that there are non-Sturmian words whose language is preserved under the
  square root map. However, Sturmian words satisfy an even stronger property: by \autoref{thm:square_root} for the
  Sturmian subshift $\Omega_\alpha$ of slope $\alpha$ it holds that $\sqrt{\Omega_\alpha} \subseteq \Omega_\alpha$.
  This property is not satisfied by the aperiodic and minimal subshift $\Omega_\Gamma$ of the word $\Gamma$ constructed
  in \autoref{sec:counter_example} since by \autoref{thm:preserved_periodic} there is a word in $\Omega_\Gamma$ having
  periodic square root; since $\Omega_\Gamma$ is aperiodic and minimal, it cannot contain such words. We are thus led to
  ask the following question we could not answer:

  \begin{question}
    If $\Omega$ is a subshift containing optimal squareful words satisfying $\sqrt{\Omega} \subseteq \Omega$,
    does the subshift $\Omega$ only contain Sturmian words?
  \end{question}

  Let us briefly see that if we do not require all words in $\Omega$ to be aperiodic then the above question has a
  negative answer.

  \begin{proposition}
    There exists a non-minimal non-Sturmian subshift $\Omega$ containing squareful words such that
    $\sqrt{\Omega} \subseteq \Omega$.
  \end{proposition}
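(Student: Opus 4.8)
The plan is to take the union of the subshift generated by $\Gamma$ with the finite periodic subshift that carries all the periodic images produced by \autoref{thm:preserved_periodic}. Write $\Omega_\Gamma$ for the subshift generated by $\Gamma$, and let $\Omega_S$ be the orbit closure of $S^\omega$, that is, the finite set $\{T^i(S^\omega)\colon 0 \le i < |S|\}$ of all purely periodic words whose minimal period is a conjugate of $S$. I would then set $\Omega = \Omega_\Gamma \cup \Omega_S$. Being a finite union of closed shift-invariant sets, $\Omega$ is closed and shift-invariant, hence a subshift (a closed shift-invariant set is always of the form $\{w\colon \Lang(w)\subseteq\Lang\}$). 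It is non-Sturmian, since it contains the non-Sturmian word $\Gamma$ as well as periodic words, and it is non-minimal, since $\Omega_\Gamma$ is a proper nonempty subshift of $\Omega$: indeed $S^\omega \notin \Omega_\Gamma$ because $S^6 \notin \Lang(\Gamma)$ (between two consecutive occurrences of $L$ in $\Gamma$ one sees only $S^2$ or $S^5$). Every element of $\Omega$ is squareful: the words of $\Omega_\Gamma$ are optimal squareful, and each word $\tilde{S}^\omega$ with $\tilde{S}$ a conjugate of $S$ is squareful because every one of its positions starts a minimal square — a short window around any position is a factor of $S^3$, which lies in the language of a Sturmian word of a suitable slope $\beta=[0;\oa+1,\ob+1,\ldots]$, and Sturmian words are squareful. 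In particular $\sqrt{\,\cdot\,}$ is defined on all of $\Omega$.

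For the inclusion $\sqrt{\Omega}\subseteq\Omega$ on the piece $\Omega_\Gamma$ there is nothing new: \autoref{thm:preserved_periodic} gives, for every $w\in\Omega_\Gamma$, that either $\sqrt{w}\in\Omega_\Gamma$ or $\sqrt{w}$ is purely periodic with minimal period conjugate to $S$, i.e.\ $\sqrt{w}\in\Omega_S$; in both cases $\sqrt{w}\in\Omega$.

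The real content is the piece $\Omega_S$, for which I would prove $\sqrt{\Omega_S}\subseteq\Omega_S$. First, $S^\omega$ itself is a fixed point: since $S$ is a primitive solution to \eqref{eq:square}, we have $S^2 = X_1^2\cdots X_n^2$ with $X_1\cdots X_n = S$, and by uniqueness of the factorization into minimal squares this pattern repeats along $S^\omega=(S^2)^\omega$, whence $\sqrt{S^\omega}=(X_1\cdots X_n)^\omega=S^\omega\in\Omega_S$. For a genuine shift $w=T^i(S^\omega)$ I would re-run the proof of \autoref{thm:preserved_periodic}, taking for the ambient product of the words $S$ and $L$ the word $S^\omega$ itself (all factors equal to $S$). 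The combinatorial engine of \autoref{sec:counter_example} — \autoref{lem:backtracking}, \autoref{lem:can_not_continue}, \autoref{lem:periodic_image}, and the repetitive-position analysis — consists of statements about products of $S$ and $L$ and applies verbatim to $S^\omega$. The only step that must be re-examined is the repeated prepending of a letter $z\in\{S,L\}$ with $zw'\in\Omega$: for $w'=S^\omega$ the choice $z=L$ is inadmissible, as $LS^\omega\notin\Omega$ (it contains $S^6$ yet is not purely periodic), so one always prepends $z=S$, which merely reproduces $S^\omega$. Consequently, in every branch of that proof where the conclusion was originally $\sqrt{w}\in\Omega_\Gamma$ — obtained by displaying $\sqrt{w}$ as a suffix of $\sqrt{zw'}$ via \autoref{lem:backtracking} and \autoref{lem:product_sl_preserved} — we now obtain that $\sqrt{w}$ is a suffix of $\sqrt{S^\omega}=S^\omega$, hence $\sqrt{w}=T^j(S^\omega)\in\Omega_S$; in the complementary branches \autoref{lem:periodic_image} already yields that $\sqrt{w}$ is purely periodic with minimal period conjugate to $S$, again an element of $\Omega_S$. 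Either way $\sqrt{w}\in\Omega_S$.

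The main obstacle is precisely this transplant of the proof of \autoref{thm:preserved_periodic} from the aperiodic setting to the periodic word $S^\omega$: one must check that each implicit use of the hypothesis that the ambient word lies in $\Lang(\Gamma)$ either drops out or can be replaced by the direct identity $\sqrt{S^\omega}=S^\omega$, together with the observation that suffixes of $S^\omega$, and periodic words of minimal period conjugate to $S$ (identified through \autoref{prp:conjugate_square}), all belong to the finite subshift $\Omega_S$. Granting this bookkeeping, $\sqrt{\Omega}\subseteq\Omega$ follows, and $\Omega$ is a non-minimal, non-Sturmian subshift containing squareful words, as required.
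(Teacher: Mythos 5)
Your proposal is correct and follows essentially the same route as the paper: the paper likewise takes $\Omega = \Omega_\Gamma \cup \Omega_\Delta$ with $\Delta = S^\omega$, handles $\Omega_\Gamma$ via \autoref{thm:preserved_periodic}, and disposes of the periodic part by re-running the argument of that theorem's proof on $S^\omega$ (using $\sqrt{\Delta} = \Delta$ together with \autoref{lem:periodic_image} and the nicely-repetitive-position analysis) to conclude $\sqrt{\Omega_\Delta} \subseteq \Omega_\Delta$. The additional verifications you spell out (that the union is a subshift, that all words in the periodic orbit are squareful, and that $S^6 \notin \Lang(\Gamma)$ gives non-minimality) are details the paper's proof sketch leaves implicit.
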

  \begin{proof}[Proof Sketch]
    Let $S$ be a seed solution as in \autoref{sec:counter_example}, and let $\Gamma$ be a corresponding fixed point of
    the square root map generated by the seed $S$ as in \autoref{sec:counter_example}. Further, set
    $\Delta = S^\omega$, let $\Omega_\Delta$ be the subshift generated by $\Delta$, and let $\Omega_\Gamma$ be the
    subshift generated by $\Gamma$. If $w \in \Omega_\Gamma$, then by \autoref{thm:preserved_periodic} either $\sqrt{w}
    \in \Omega_\Gamma$ or $\sqrt{w} \in \Omega_\Delta$. Hence if we are able to show that
    $\sqrt{\Omega_\Delta} \subseteq \Omega_\Delta$, then the non-minimal and non-Sturmian subshift
    $\Omega_\Gamma \cup \Omega_\Delta$ has the desired properties.

    Let $w \in \Omega_\Delta$, so $w = T^\ell(\Delta)$ for some $0 \leq \ell < |S|$. Write $w$ as a product of minimal
    squares: $w = X_1^2 X_2^2 \cdots$. We can now argue as in the proof of \autoref{thm:preserved_periodic}. If
    $|X_1^2 \cdots X_n^2| = |S| - \ell$ for some $n \geq 1$ or $|X_1^2 \cdots X_m^2| = |S^2| - \ell$ for some
    $m \geq 1$, then using the fact that $\sqrt{\Delta} = \Delta$ it is straightforward to see that
    $\sqrt{w} \in \Omega_\Delta$. Otherwise either $\ell$ is a nicely repetitive position of $S$ or
    $\ell + |X_1^2 \cdots X_i^2| - |S|$ is a nicely repetitive position of $S$ where
    \begin{align*}
      i = \max\{j \in \{1,2,\ldots\}\colon |X_1^2 \cdots X_j^2| \leq |S^2| - \ell\}.
    \end{align*}
    In both of these cases we deduce with the help of \autoref{lem:periodic_image} that $\sqrt{w} \in \Omega_\Delta$.
  \end{proof}

  There are other interesting related questions. Consider the limit set
  \begin{align*}
    \Omega \cap \sqrt{\Omega} \cap \sqrt{\sqrt{\Omega}} \cap \ldots.
  \end{align*}
  We know very little about the limit set except in the Sturmian case when it contains the two fixed points
  $01c_\alpha$ and $10c_\alpha$. For the word $\Gamma$ of \autoref{sec:counter_example} we proved that the limit set
  contains at least two fixed points. We ask:

  \begin{question}
    When is the limit set nonempty? If it is nonempty, does it always contain fixed points? Can it contain points which
    are not fixed points?
  \end{question}

  It is a genuine possibility that the limit set is empty. Consider for instance the word $\zeta =
  \tau(\sigma^\omega(6))$, the morphic image of the fixed point of the morphism $\sigma: 6 \mapsto 656556, 5 \mapsto 5$
  under $\tau: 6 \mapsto S_6^2, 5 \mapsto S_5^2$ where $S_5 = 100$ and $S_6 = 10010$ are minimal square roots of slope
  $\alpha = [0;2,1,\ldots]$. It is straightforward to verify that $\zeta$ is optimal squareful and uniformly recurrent
  and that the returns to the factor $101$ in $\Lang(\zeta)$ are $10100$, $101(001)^2 00$ and $101(001)^4 00$. By
  considering all possible occurrences of the factor $w = \tau(56565) \in \Lang(\zeta)$ in any product of minimal
  squares of slope $\alpha$, it can be shown that the square root of the product always contains a return to the factor
  $101$ which is not in $\Lang(\zeta)$. Since the factor $w$ occurs in every point in the subshift $\Omega_\zeta$ generated
  by $\zeta$, we conclude that $\Omega_\zeta \cap \sqrt{\Omega_\zeta} = \emptyset$.

  In \autoref{sec:counter_example} we constructed infinite families of primitive solutions to \eqref{eq:square} using
  the recurrence $\gamma_{k+1} = L(\gamma_k)\gamma_k^2$. Why this construction worked was because the seed solution $S$
  and the word $L = L(S)$ satisfy $\sqrt{SS} = S$, $\sqrt{SL} = S$, $\sqrt{LS} = L$, and $\sqrt{LL} = L$, that is,
  $\sqrt{(LSS)^2} = \sqrt{LS\cdot SL \cdot SS} = LSS$. Similarly $\sqrt{(SLLLL)^2} = SLLLL$, so substituting for
  example $S = 01010010$ we obtain the primitive solution
  \begin{align*}
    S_2 S_1 S_4 S_3 S_5 S_4 S_3 S_5 S_6 S_5 S_4 S_3 S_5 S_4 S_3 = 0101001010010010100100101001001010010010 
  \end{align*}
  to \eqref{eq:square} in $\Lang(1,0)$. More solutions can be obtained with analogous constructions. Restricting to the
  languages of optimal squareful words, we ask:

  \begin{question}
    What are the primitive solutions $w$ of \eqref{eq:square} in $\Lang(a,b)$ such that $w$ or $w^2$ is not Sturmian
    and $w$ is not obtainable by the above construction?
  \end{question}

  \section*{Acknowledgments}
  The authors were supported by University of Turku Graduate School UTUGS Matti programme and by the FiDiPro grant
  (137991) from the Academy of Finland.

  We thank our supervisors Juhani Karhumäki and Luca Zamboni for suggesting that the square root map might preserve the
  language of a Sturmian word. We also thank Tero Harju for valuable comments.

  \printbibliography
\end{document}